\newcommand{\norm}[1]{\left\|#1\right\| }
\newcommand{\epsi}[0]{\varepsilon}
\newcommand{\mrm}[1]{{\mathrm{#1}}}
\newcommand{\field}[1]{\mathbb{#1}}
\newcommand{\C}{\field{C}}
\newcommand{\R}{\field{R}}
\newcommand{\N}{\field{N}}
\newcommand{\Z}{\field{Z}}
\newcommand{\id}[0]{\mathbf{1}}
\newcommand{\E}{{\mathrm{e}}}
\newcommand{\D}{{\mathrm{d}}}
\DeclareMathOperator{\Tr}{Tr}
\newcommand{\im}{\mathrm{i}}
\newcommand{\bra}[1]{\langle #1 |}
\newcommand{\ket}[1]{| #1 \rangle}
\newcommand{\LzO}{{L^2(\Omega)}}
\newcommand{\LiO}{{L^\infty(\Omega)}}
\newcommand{\LiOt}{{L^\infty(\tilde \Omega)}}
\newcommand{\LzOt}{{L^2(\tilde \Omega)}}
\newcommand{\LeOt}{{L^1(\tilde \Omega)}}
\newcommand{\LzOf}{{L^2(\Omega_\mathrm{f})}}
\newcommand{\LiOf}{{L^\infty(\Omega_\mathrm{f})}}
\newcommand{\LeOf}{{L^1(\Omega_\mathrm{f})}}
\newcommand{\LzOc}{{L^2(\Omega_\mathrm{c})}}
\newcommand{\LiOc}{{L^\infty(\Omega_\mathrm{c})}}
\newcommand{\Op}{{\mathrm{Op}}}
\DeclareMathOperator{\supp}{supp}
\newcommand{\weq}[2]{\stackrel{\mathclap{#1}}{#2}}
\newcommand\numberthis{\addtocounter{equation}{1}\tag{\theequation}}
\theoremstyle{plain}
\newtheorem{thm}{Theorem}
\theoremstyle{plain}
\newtheorem{lem}{Lemma}[chapter]
\newtheorem{cor}[lem]{Corollary}
\newtheorem*{thm*}{Theorem}%
\newtheorem*{lem*}{Lemma}
\newtheorem*{prop*}{Proposition}
\newtheorem*{cor*}{Corollary}
\theoremstyle{definition}
\newtheorem{defn}{Definition}
\newtheorem{exmp}{Example}
\newtheorem*{defn*}{Definition}
\theoremstyle{remark}
\newtheorem{rem}{Remark}
\DeclareFontFamily{OMX}{MnSymbolE}{}
\DeclareSymbolFont{MnLargeSymbols}{OMX}{MnSymbolE}{m}{n}
\DeclareFontShape{OMX}{MnSymbolE}{m}{n}{
    <-6>  MnSymbolE5
   <6-7>  MnSymbolE6
   <7-8>  MnSymbolE7
   <8-9>  MnSymbolE8
   <9-10> MnSymbolE9
  <10-12> MnSymbolE10
  <12->   MnSymbolE12
}{}
\DeclareFontShape{OMX}{MnSymbolE}{b}{n}{
    <-6>  MnSymbolE-Bold5
   <6-7>  MnSymbolE-Bold6
   <7-8>  MnSymbolE-Bold7
   <8-9>  MnSymbolE-Bold8
   <9-10> MnSymbolE-Bold9
  <10-12> MnSymbolE-Bold10
  <12->   MnSymbolE-Bold12
}{}
\let\llangle\@undefined
\let\rrangle\@undefined
\DeclareMathDelimiter{\llangle}{\mathopen}%
                     {MnLargeSymbols}{'164}{MnLargeSymbols}{'164}
\DeclareMathDelimiter{\rrangle}{\mathclose}%
                     {MnLargeSymbols}{'171}{MnLargeSymbols}{'171}
\newglossaryentry{symb:H^N}{
name={\ensuremath{\mathscr{H}^N}},
description={$N$-particle Hilbert space },
sort=symbolHilbertspace, type=is
}
\newglossaryentry{symb:HNepsi}{
name={\ensuremath{H^\epsi_N}},
description={Rescaled $N$-particle Hamiltonian},
sort=symbolHamiltonian, type=is
}
\newglossaryentry{symb:psiNepsi}{
name={\ensuremath{\psi^\epsi_N}},
description={$N$-particle wave function element of $\mathscr{H}^N$},
sort=symbolpsi, type=is
}
\newglossaryentry{symb:epsi}{
name={\ensuremath{\epsi}},
description={Parameter controlling the strength of the confinement},
sort=symbolepsi, type=is
}
\newglossaryentry{symb:r}{
name={\ensuremath{r}},
description={Element of $\Omega$},
sort=symbolr, type=is
}
\newglossaryentry{symb:y}{
name={\ensuremath{y}},
description={Element of $\Omega_\mathrm{c}$},
sort=symboly, type=is
}
\newglossaryentry{symb:x}{
name={\ensuremath{x}},
description={Element of $\Omega_\mathrm{f}$},
sort=symbolx, type=is
}
\newglossaryentry{symb:omega}{
name={\ensuremath{\Omega}},
description={One-particle configuration space, subset of $\R^3$},
sort=symbolomega, type=is
}
\newglossaryentry{symb:omegac}{
name={\ensuremath{\Omega_\mathrm{c}}},
description={Confined part of the one-particle configuration space},
sort=symbolomegac, type=is
}
\newglossaryentry{symb:omegaf}{
name={\ensuremath{\Omega_\mathrm{f}}},
description={Free part of the one-particle configuration space},
sort=symbolomegaf, type=is
}
\newglossaryentry{symb:N}{
name={\ensuremath{N}},
description={Number of particles},
sort=symbolN, type=is
}
\newglossaryentry{symb:w}{
name={\ensuremath{w}},
description={Two-particle interaction potential },
sort=symbolw, type=is
}
\newglossaryentry{symb:theta}{
name={\ensuremath{\theta}},
description={Scaling parameter element of $[0,1]$ },
sort=symboltheta, type=is
}
\newglossaryentry{symb:h}{
name={\ensuremath{h}},
description={One-particle Hamiltonian},
sort=symbolhamiltonian, type=is
}
\newglossaryentry{symb:V}{
name={\ensuremath{V}},
description={External potential},
sort=symbolV, type=is
}
\newglossaryentry{symb:WetN}{
name={\ensuremath{W^{\epsi, \theta,N}}},
description={Scaled two-particle interaction potential},
sort=symbolWetN, type=is
}
\newglossaryentry{symb:varphi}{
name={\ensuremath{\varphi}},
description={One-particle wave function element of $L^2(\Omega)$},
sort=symbolphi, type=is
}
\newglossaryentry{symb:Phi}{
name={\ensuremath{\Phi}},
description={Free part of the one-particle wave function governed by a nonlinear PDE},
sort=symbolPhi, type=is
}
\newglossaryentry{symb:chi}{
name={\ensuremath{\chi}},
description={Confined, stationary part of the one-particle wave function},
sort=symbolchi, type=is
}
\newglossaryentry{symb:a}{
name={\ensuremath{a}},
description={Scaling parameter depending on the number of confined directions},
sort=symbola, type=is
}
\newglossaryentry{symb:alpha}{
name={\ensuremath{\alpha}},
description={Counting functional},
sort=symbolalpha, type=is
}
\newglossaryentry{symb:gamma}{
name={\ensuremath{\gamma}},
description={One-particle density matrix},
sort=symbolgamma, type=is
}
\newglossaryentry{symb:p}{
name={\ensuremath{p}},
description={Projection onto $\varphi$},
sort=symbolp, type=is
}
\newglossaryentry{symb:q}{
name={\ensuremath{q}},
description={Projection onto the orthogonal complement of $\varphi$},
sort=symbolq, type=is
}
\newglossaryentry{symb:PkN}{
name={\ensuremath{P_{k,N}}},
description={Projection onto $k$ "bad" particles},
sort=symbolPkN, type=is
}
\newglossaryentry{symb:beta}{
name={\ensuremath{\beta}},
description={Counting functional with weight function n},
sort=symbolbeta, type=is
}
\newglossaryentry{symb:n}{
name={\ensuremath{n}},
description={Weight function $\sqrt{\frac{k}{N}}$},
sort=symboln, type=is
}
\newglossaryentry{symb:w0}{
name={\ensuremath{w^0}},
description={Hartree approximation to the two-particle interaction potential $w$ },
sort=symbolw0, type=is
}
\newglossaryentry{symb:ws}{
name={\ensuremath{w_s}},
description={Singular part of the two-particle interaction potential $w$ },
sort=symbolws, type=is
}
\newglossaryentry{symb:winfty}{
name={\ensuremath{w_\infty}},
description={Bounded part of the two-particle interaction potential $w$ },
sort=symbolwinfty, type=is
}
\newglossaryentry{symb:DHepsiN}{
name={\ensuremath{D(H_N^\epsi)}},
description={Domain of the operator $H^\epsi_N$ },
sort=symboldomain, type=is
}
\newglossaryentry{symb:b}{
name={\ensuremath{b}},
description={Coupling parameter in the NLS},
sort=symbola, type=is
}
\newglossaryentry{symb:fepsi}{
name={\ensuremath{f(\epsi)}},
description={Function controlling the convergence of $w^\epsi $ to $w^0$ },
sort=symbolfepsi, type=is
}
\newglossaryentry{symb:eta}{
name={\ensuremath{\eta}},
description={Parameter controlling the rate of convergence},
sort=symboleta, type=is
}
\newglossaryentry{symb:nu}{
name={\ensuremath{\nu}},
description={Parameter controlling the dependence of $\epsi$ on $N$ },
sort=symbolnu, type=is
} 
\newglossaryentry{symb:Epsi}{
name={\ensuremath{E^\psi}},
description={Energy per particle of the wave function $\psi$ },
sort=symbolEnergypsi, type=is
}
\newglossaryentry{symb:Evarphi}{
name={\ensuremath{E^\varphi}},
description={Energy of the wave function $\varphi$ },
sort=symbolEnergypsi, type=is
}
\begin{document}

\selectlanguage{english}
\pagenumbering{roman}

\begin{titlepage}
    \begin{center}
 
      \begin{huge}       \textbf{Mean Field Limits in \\ \vspace{0.3cm} Strongly Confined Systems}                           \end{huge}

        \vspace{0.5cm}
        
        \vspace{6.0cm}
        
        \textbf{DISSERTATION}
        
        der Mathematisch-Naturwissenschaftlichen Fakultät\\
	der Eberhard Karls Universität Tübingen\\
	zur Erlangung des Grades eines\\
	Doktors der Naturwissenschaften\\
	(Dr. rer. nat.)

        \vspace{5.5cm}
        
        vorgelegt von\\
        Johannes Jakob von Keler\\
	aus Schwäbisch Hall
        
        \vfill
        
        
        Tübingen 2014
        
    \end{center}
\end{titlepage}

\chapter*{Abstract}

We consider the dynamics of $N$ interacting Bosons in three dimensions which are strongly confined in one or two directions. 
We analyze the two cases where the interaction potential $w$ is rescaled by either $N^{-1}w(\cdot)$ or $a^{3\theta-1}w(a^\theta \cdot)$    
and choose the initial wavefunction to be close to a product wavefunction.   
For both scalings we prove that in the mean field limit  $N\rightarrow \infty $ the dynamics of the $N$-particle system is described by a nonlinear equation in two or one dimensions.
In the case of the scaling $N^{-1}w(\cdot)$ this equation is the Hartree equation and for the scaling $a^{3\theta-1}w(a^\theta \cdot) $ 
the nonlinear Schrödinger equation. In both cases we obtain explicit bounds for the rate of convergence of the $N$-particle dynamics to the one-particle dynamics.

\selectlanguage{ngerman}

\chapter*{Zusammenfassung}

In dieser Arbeit werden bosonische Vielteilchensysteme in drei Raumdimensionen untersucht, die durch ein äußeres Potential in einer bzw. zwei Raumdimensionen stark eingeschränkt sind.
Das Ziel dieser Arbeit ist es, solche $N$-Teilchensysteme durch eine effektive Einteilchengleichung zu approximieren. Im Gegensatz zu den bestehenden Arbeiten in diesem Gebiet
ist diese effektive Gleichung aufgrund des starken äußeren Potentials zwei- bzw. eindimensional. Es wird bewiesen, dass diese Approximation im thermodynamischen Limes
$N \rightarrow \infty$ exakt wird. Darüber hinaus werden für diese Approximation explizite Konvergenzgeschwindigkeiten angegeben.
Diese sind im Besonderen für die Anwendbarkeit der Ergebnisse auf physikalische
Experimente von Bedeutung. 
Im Folgenden werden die Inhalte der jeweiligen Kapitel kurz zusammengefasst.

Kapitel 2 gibt einen Überblick über die mathematische Beschreibung bosonischer Vielteilchensysteme. Die dazu verwendete Schrödingergleichung mit Paarwechselwirkung wird eingeführt und die mathematischen
Konzepte für die Beschreibung von Bose-Einstein-Kondensation 
werden definiert. Dabei wird erklärt, warum die Existenz eines Bose-Einstein-Kondensates essentiell für die Beschreibung boson-ischer Vielteilchensysteme durch eine effektive Einteilchengleichung ist.
Des Weiteren werden
die Mean-Field-, die Nichtlineare Schrödingergleichungs- und die Gross-Piteavski Skalierung der Vielteilchen-Schrödingergleichung anhand von physikalischen Experimenten und den bestehenden 
mathematischen Ergebnissen beschrieben.    

In Kapitel 3 wird zuerst die mathematische Notation, in der die Ergebnisse formuliert und die Beweise dargestellt werden, festgelegt. Danach werden die zwei positiven Funktionale $\alpha$ und 
$\beta$ definiert,
die von Pickl in \cite{Pic08} eingeführt wurden. Mithilfe von $\alpha $ oder $\beta$ kann die Dynamik eines Vielteilchensystems mit der Dynamik eines Einteilchensystems verglichen werden.
Dabei folgt aus der Konvergenz von $\alpha \rightarrow 0 $ oder $\beta \rightarrow 0$ im thermodynamischen Limes 
eine gute Approximation der Vielteilchendynamik durch die Einteilchendynamik. Dieses Kapitel schließt mit der Präsentation und Diskussion der Hauptresultate der Arbeit. 
Im Mean-Field-Fall sind diese im Wesentlichen von der Form 
\begin{align*}
 \alpha(t) \leq C(t) N^{-1},  
\end{align*}
wobei $C(t)$ eine monoton steigende Funktion mit $C(0)=0 $ ist. Für den Fall einer Skalierung, die zu einer nichtlinearen Schrödingergleichung führt und die durch den
Parameter $\theta $ kontrolliert wird, erhalten wir das Ergebnis
\begin{align*}
  \beta(t)  \leq C(t) N^{-\eta(\theta)}. 
\end{align*}
Hier bestimmt der Parameter $\eta(\theta)> 0$, dessen genaues Verhalten aus dem später geführten Beweis folgt, die Konvergenzgeschwindigkeit.

Kapitel 4 stellt für einen einfachen Fall der Mean-Field-Skalierung eines Vielteilchensystems einen sehr ausführlichen Beweis dar. Dieser dient zum einen dazu, die Methode von Pickl \cite{Pic08, KnoPic09,Pic11} für stark eingeschränkte Systeme zu veranschaulichen, 
wobei diese Methode in diesem Fall nur geringfügig geändert werden muss. Zum anderen liefert dieser Beweis eine Vorlage für die folgenden, technisch aufwändigeren Beweise. 

In Kapitel 5 werden die beiden Funktionale $\alpha$ und $\beta$ ausführlich diskutiert. Diese Diskussion ist angelehnt an \cite{Pic11,KnoPic09,PetPic14}. 
Es wird der Zusammenhang der beiden Funktionale mit dem für Mean-Field-Limiten gebräulicheren Konvergenzbegriff, der durch die Spurnorm
gegeben ist, aufgezeigt. Danach werden grundlegende Eigenschaften von $\alpha$ und $\beta$ und der in ihnen enthaltenen Projektionen $p,q$ und $P_{k,N}$ dargestellt. Diese Eigenschaften 
werden für die in Kapitel 6 und 7 folgenden Beweise 
benötigt. Zuletzt wird der Nutzen des Funktionals $\beta$ im Vergleich zu $\alpha$ thematisiert.

In Kapitel 6 wird der Beweis aus Kapitel 4 so erweitert, dass nun Paarwechselwirkungen mit stärkeren Singularitäten zugelassen werden können. 
Dazu werden im Vergleich zu Kapitel 4 zusätzliche Abschätzungen
benötigt, die mit Hilfe von Energieerhaltung hergeleitet werden können. Die dazu verwendeten Techniken werden im Detail dargestellt, da sie in den folgenden Beweisen wiederverwendet werden.
Abschließend  wird der Beweis analog zu Kapitel 4 durchgeführt. 

In Kapitel 7 wird der Fall einer Skalierung, die zu einer nichtlinearen Schrödinger-gleichung führt, bewiesen. Dabei 
wird der Fall eines stark einschränkenden Potential in zwei Richtungen betrachtet. Die Grundidee des Beweises bleibt die gleiche wie in Kapitel 4 und 6.
Es wird aber eine weitere Energieabschätzung benötigt, um die Wechselwirkung des Vielteichensystems mit der Wechselwirkung des effektiven Systems vergleichen zu können. Darüber hinaus entsteht die Schwierigkeit,
dass nun die Konvergenzgeschwindigkeit von mehreren Termen der Form $N^{f(\theta)} \epsi^{g(\theta)} $ abhängt, die miteinander in Konkurrenz stehen. Hier gibt $\epsi$ die Stärke des einschränkenden Potentials an.  
Die verschiedenen Terme der Form $N^{f(\theta)} \epsi^{g(\theta)} $ führen dazu, dass die Abschätzungen der vorigen Kapitel zusätzlich verfeinert werden müssen und nur noch bestimmte Kombinationen der beiden Parameter
 $N$ und $\epsi$ möglich sind. 


\chapter*{Danksagung}

An dieser Stelle möchte ich mich bei allen bedanken, die mich beim Erstellen dieser Doktorarbeit unterstützt haben.
\\

Zuerst möchte ich mich bei meinem Doktorvater Stefan Teufel herzlich für die sehr menschliche und fachliche hervorragende Betreuung bedanken. 
Er verstand es, in kritischen Phasen der Doktorarbeit die entscheidenden Impulse zu geben. Darüber hinaus war es sehr bereichernd, von seinem klaren physikalisch-mathematischen Verständnis 
 lernen zu können.
Ich danke meinem Zweitbetreuer Christian Hainzl für zahlreiche wertvolle Diskussionen. 
Des Weiteren danke ich Peter Pickl für die Bereitschaft Zweitgutachter dieser Arbeit zu sein, für die Einführung in die Mean-Field-Skalierungen und für viele hilfreiche Erklärungen zur p-q Akrobatik.
\\

Die freundliche, aufgeschlossene und produktive Atmosphäre in der Arbeitsgruppe Mathematische Physik trug viel zum Gelingen dieser Arbeit bei. 
Dafür Dank an Stefan, Julian, Stefan, Jonas, Sebastian, Andreas, Gerhardt, Andreas, Silvia, Wolfgang, Pascal, Tim, Carla und Frank.
\\ 
      
Meinen Eltern danke ich für den großen Rückhalt den sie für mich darstellen, und dass sie mir die Freiheit gaben, meine Wünsche entwickeln zu können und meinen eigenen Weg zu finden.
\\

Zuletzt möchte ich besonders meiner Frau Vera für die große Ausdauer, mit der sie mich immer wieder nach Fehlschlägen ermutigt hat, danken.
Die Kraft und Sicherheit, die ich aus unserem gemeinsamen Leben zusammen mit unserem Sohn schöpfen konnte, half mir die beschwerlichen Zeiten
der Promotion zu durchstehen.


\selectlanguage{english}


\tableofcontents 

\clearpage
\pagenumbering{arabic}

\chapter{Introduction}

In physics it is important to be able to approximate complex systems and general theories by effective theories or equations which are simpler to analyze and easier to solve.
Effective equations are used in every area of physics starting from the description of gases to the description of gravitation in our solar system. 
It is impossible to obtain quantitative or even just qualitative results directly from the underlying 
microscopic or general theories without any insight on how to simplify them. 
For example, in order to describe the behavior of a gas at room temperature, one will use the thermodynamic variables pressure, temperature
and volume rather than the positions of the molecules which the gas is made of.

Mathematically the derivation of an effective equation implies proving that a solution of the effective equation is close to a solution of the equation of the complex system for suitable initial data.
The sense in which these solutions are close 
 %
depends on the respective descriptions of the system and is determined by a norm or in general by a suitable functional.  

There are many different approaches to derivation of such an effective equation.
One important mathematical approach 
is to use
 the large number of microscopic objects -- as in the example of the gas -- 
as a starting point for a statistical analysis from which one obtains effective equations.
Prominent examples of such effective equations are the Navier-Stokes and Boltzmann equations for classical systems and the Hartree and Hartree-Fock equations for quantum mechanical systems.
A different approach 
 is to identify the vastly different length scales inherent in a system and to use separation of scales to reduce the number of physically relevant degrees of freedom.
The mathematical techniques used in this context come form adiabatic theory. The most prominent example for such an effective equation is the Born-Oppenheimer approximation,
where the different masses of the nucleons and the electrons lead 
to a separation of scales that can be exploited to derive effective equations.


In this thesis we study the dynamics of cold Bose gases confined in a trap that is strongly confining in one or two dimensions.
Such a system is described by $N$ interacting particles, where $N\sim 10^{3}-  10^{ 7} $ and is thus amenable to a statistical analysis.
At the same time, the strongly confining potential introduces a separation of scales.    
These two aspects can be combined to derive effective dynamics for the system.
This system is physically interesting
since it has become accessible by experiments in the last years \cite{GoeVogKet01,SchSal01}. From a mathematical point of view this system is of interest because 
one has to adapt the methods used to derive 
effective equations for Bose gases 
in a way allowing exploitation of 
the adiabatic structure of the problem. 

In the last decade there has been much progress in obtaining rigorous results for effective dynamics for cold Bose gases \cite{ErdYau01,ElgErdSchYau06,ErdSchYau07b,RodSch07,KnoPic09,Pic10,BenOliSch12} and
the references therein.
In general, these results state that
the time evolution of the $N$-particle wave function $\psi_t$ can be approximated by a product $\varphi_t^{\otimes N}$, where $\varphi_t$ 
is a solution of a nonlinear one-particle Schrödinger equation.   

In the case of an additional strong confinement one expects $\psi_t \approx \varphi_t^{\otimes N}$ still to be true. However, the particles
should be in a stationary state in the confined directions if the constraining potential is strong enough. 
Mathematically this implies 
$\varphi_t$ has a product structure  $\varphi_t = \Phi_t \chi $, where $\chi$ is a time independent function in the confined directions 
and the function $\Phi_t$ is expected to solve a nonlinear Schrödinger equation in the unconfined directions. 

The proof of this heuristic idea has recently been given in two papers by Chen and Holmer \cite{CheHol13,CheHol14}. However, they used techniques 
that make it impossible to determine the rate of convergence of the approximation $\psi_t \approx \varphi_t^{\otimes N}$ which is particularly important for the physical interpretation. 
In this thesis we offer a 
derivation of the approximation $\psi_t \approx \varphi_t^{\otimes N}$ that allows us to give explicit error bounds for the convergence rates
in terms of powers of the particle number $N$ and the confinement strength $\epsi^{-1}$ of the external potential. In the following we explain the considered problem in more detail.
\\ 


The dynamics of 
a Bose gas of $N$ particles in $\R^3$ is described by the Schrödinger equation  
\begin{align}\label{equ:schroedinger}
 \im \partial_t \psi_t = H \psi_t
\end{align}
for a symmetric complex-valued wave function $\psi_t(x_1,\cdots ,x_N) \in L^2(\R^{3N}) $.
The Hamiltonian $H$ of such a system is of the form
\begin{align*}
 H:= \sum_{i=1}^N h_i+ \sum_{i < j}^N w_N(x_i-x_j),
\end{align*}
where  $w: \R^3\rightarrow \R$ is a radial symmetric pair interaction.
The subscript $N$ denotes a scaling which will be discussed in detail in Chapter 2. Each operator $h_i$ is a one-particle operator acting only on the coordinate $x_i$ defined by
\begin{align*}
 h= -\Delta_x+ \frac{1}{\epsi^2} V(\epsi^{-1} x^\perp  ).
\end{align*}    
Here the external potential $\epsi^{-2} V^\perp(\epsi^{-1} x^\perp  )$ describes the strong confinement in the direction $x^\perp$, where $(x^\parallel ,x^\perp)=x$, and the parameter  $\epsi \ll 1 $ controls
the strength of the confinement. 

The effective dynamics that 
we are looking for
are described by the time evolution of a one-particle wave function $\varphi_t$. 
The function $\varphi_t$ has a product structure $\varphi_t(x)= \Phi_t(x^\parallel) \chi(x^\perp) $, where $\chi$ is the eigenfunction to the smallest eigenvalue of the operator 
\begin{align}\label{equ:ineigen}
 -\Delta_{x^\perp}+ \epsi^{-2}  V(\epsi^{-1} x^\perp).
\end{align}
The function $\Phi_t$ solves 
\begin{align}\label{equ:innonlin}
 \im \partial_t \Phi_t=  (-\Delta_{x^\parallel} + w^{\Phi_t}(x^\parallel) ) \Phi_t,  
\end{align}
where $ w^{\Phi_t}$ is a nonlinear potential. The exact form of $ w^{\Phi_t}$ depends on the scaling of $w_N$ and will be explained in Chapter 2.

The goal of this thesis is to justify for suitable initial data  $\psi_0 \approx  \varphi_0^{\otimes N} $  the approximation 
\begin{align*}
  \E^{\im H t} \psi_0 \approx  \varphi_t^{\otimes N},
\end{align*}
where the components of $\varphi_t$ are solutions of \eqref{equ:ineigen} and \eqref{equ:innonlin}. Hereby one important 
aspect is to obtain results for the deviation from this approximation for large but finite $N$ and small but nonzero $\epsi$.

To illustrate in which sense this approximation can be expected to hold, let us consider the case $\psi_0= \varphi_0^{\otimes N}$ and $w_N=0$. 
In this case one directly obtains $\psi_t= \varphi_t^{\otimes N}$. However, in the presence of an interaction potential this will in general be false, since the interaction will
lead to correlations between the particles. Note that although there are correlations in the wave function $\psi_t$, a symmetric $\psi$ will stay symmetric under the time evolution generated by $H$.
As a result of the correlations the statement $\psi_t \approx  \varphi_t^{\otimes N} $ can only hold as an approximation. For systems without a strongly confining potential the regime 
and the sense in which this approximation 
holds are well understood and are explained in the next chapter.
Therefore the first step of this thesis is to give precise mathematical meaning to the symbol $\approx $ for the case of a strongly confined system. For this we will use a method first introduced by Pickl in
 \cite{Pic08} which focuses on measuring how many correlations have developed and thus gives quantitative results on how much $\psi_t$ deviates from $ \varphi_t^{\otimes N}$.

\subsubsection*{Overview}
In Chapter\,\ref{chap:pp} we explain the physical models and give a summary of the mathematical results for cold Bose gases. We begin with some historical remarks and then
continue with the definition and results for Bose-Einstein condensation. This serves as a physical justification for the choice of the special initial state $\psi_0 \approx  \varphi_0^{\otimes N} $.
At the same time this motivates the mathematical models and objects considered in this thesis.
They are defined in the first part of Chapter\,\ref{chap:mathres}. In the second part of Chapter\,\ref{chap:mathres} we state our main results. In the next chapter 
we give a short proof
for a toy model which will provide a blueprint for the more technical proofs that will follow. In Chapter\,\ref{chap:meaofconv} we introduce some notation associated with the method of Pickl.
Finally we prove the two main theorems of this thesis in Chapter\,\ref{chp:proofthm2} and\,\ref{chap:thm3}.     
\chapter{Physical Motivation and Overview of Mathematical Results} \label{chap:pp}
In this chapter we explain the physical origin of the examined equations by summarizing known mathematical results for the Bose gas and its dynamics. 
This discussion is based on the book of Lieb, Seiringer, Solovej and Yngvason \cite{LieSeiSolYng05} and we refer to this book for more details.  
\section{Historical Overview of the Study of the Bose Gas}
The analysis of the Bose gas goes back to S.N. Bose and A. Einstein. In 1924 Einstein predicted, based on a paper by Bose, 
that a homogeneous, noninteracting Bose gas at low temperature would form a new state of matter today known as Bose-Einstein condensate.
This theory was first applied to explain the properties of liquid helium, which had first been liquefied by Omnes in 1908.
However, the atoms in liquid helium are strongly interacting and it is still a mathematically open problem to prove Bose-Einstein condensation 
in a weakly interacting system let alone in a strongly interacting system.\\
The first steps to answer this question were taken by Bogoliubov in 1947 in a semi-rigorous mathematical analysis of Bose-Einstein condensation. 
In the 1950's and 1960's a renewed interest in the question gave rise to new theoretical insights. However, there were no substantial advances in the mathematical understanding of the problem. \\
Up to the beginning of the 1990's there was neither significant experimental nor theoretical nor  mathematical progress in this field. This, however, suddenly changed as 
experiments with ultracold gases became feasible and the first Bose-Einstein condensate was obtained in 1995 \cite{CorWie95, Ket95} for which Cornell, Wieman and Ketterle
received the Nobel Price in 2001. In the subsequent years this discovery had a strong impact on the physics community and a huge number of articles were published.
\\   
%
%
Since the publication of the paper \cite{LieYng98} by Lieb and Yngvason at the end of the 90's there has been steady progress in the mathematical understanding of Bose-Einstein condensation
and in closely related fields as well.   \\
Until today Bose-Einstein condensates have stayed a very active research area in the branches of experimental, theoretical and mathematical physics.

\section{The Mathematical Description of Interacting Bose Gases}

In the following we discuss the mathematical description of an interacting Bose gas and its condensation. 
The starting point for the description of $N$ interacting Bosons in a large box $\Lambda \subset \R^3$ with volume $V=L^3$ is the Hamiltonian

\begin{align}\label{equ:ham}
 H_N=  \sum_{i=1}^N -\underbrace{ \frac{\hbar^2}{2m}}_{=: \mu} \Delta_i + \sum_{i \leq j} w(x_i-x_j),
\end{align}
 where $w$ is a radial symmetric interaction potential. For an ideal Bose gas we have $w=0$, so the eigenfunctions of $H_N$ are product functions. 
The system is said to be in the state of Bose-Einstein condensation if a macroscopic part of the particles has the same eigenfunction. 
For an ideal Bose gas in three dimensions Einstein proved that beyond a critical temperature $T_c$ such a behavior indeed occurs. 
However, in the case of nonzero $w$ we have to introduce a new notion for Bose-Einstein condensation since the eigenfunctions of $H_N$ are no longer products of single particle states.
This was first done by Penrose and Onsager in \cite{PenOns56}. 
\begin{defn*}
 A system described by a wave function $\psi \in L^2(\R^{3 N}) $ is in the state of Bose-Einstein condensation if 
\begin{align}\label{equ:bec}
 \norm{\gamma^\psi}_{\mathcal{L}(L^2(\R^3))} \geq c
\end{align}
in the limit $N\rightarrow \infty$, $L \rightarrow \infty$ with $ N/L^3 $ fixed for a $c> 0$. 

\end{defn*}
 Here the operator $\gamma^\psi$ is the one-particle density matrix associated with $\psi$ and it is defined by its kernel

\begin{align*}
\gamma^\psi(x,x') := \int \psi(x,x_2,\cdots ,x_N) \bar \psi(x,x_2,\cdots ,x_N) \D x_2 \cdots \D x_N.
\end{align*}

It turns out that proving \eqref{equ:bec} for a system with a Hamiltonian of the form \eqref{equ:ham} with a genuine interaction $w$ 
is a complicated problem and only few results exist.
Quoting page 5 of \cite{LieSeiSolYng05}: "In fact, BEC has, so far, never been proved for many-body Hamiltonians with genuine interactions -- except for one special case: hard core bosons on a lattice at half-filling \cite{DysLieSim78, KenLieSha88}."
The only results that exist for a general Hamiltonian of the form \eqref{equ:ham} prove that the Hamiltonian's ground state energy has in leading order the structure expected from a Bose-Einstein condensate. 
These results are obtained for gases at low density and in the thermodynamic limit. The proofs can be found in \cite{LieSeiSolYng05} and in references therein. 
%
%
Here the thermodynamic limit means to consider $N$ Bosons in a box of length $L$ and to let $N$ and $L$ tend to infinity with fixed density $\rho= N/L^3$. The low density limit is defined by  
\begin{align}
\rho^{1/3}a \ll 1,  
\end{align}
where $a$ is the scattering length of the potential $w$. Roughly speaking the scattering length captures how the interaction behaves in low-energy interaction processes. For
a detailed explanation see the appendix of \cite{LieYng01}.

\subsection{The Gross-Pitaevskii Scaling}

In the experimental relevant situation of trapped, dilute Bose gases, however, there
exist proofs of Bose-Einstein condensation in an asymptotic limit. In this setting the Hamiltonian of the system is complemented by the trap potential $V$
\begin{align}\label{equ:hamtrapp}
 H_N=  \sum_{i=1}^N -\mu \Delta_i +V(x_i) + \sum_{i \leq j} w(x_i-x_j).
\end{align}
In addition to the scattering length of the interaction potential, the length scale associated with the ground state energy $\hbar \omega$ of the one-particle operator $-\mu \Delta+ V  $ can be introduced.
It is standard to define the so-called oscillator length by
\begin{align*}
a_0:=\sqrt{\frac{\hbar}{m \omega}}.
\end{align*} 
%
In experiments the number of trapped particles $N$ is of order $10^{3}-10^{7}$ 
 and for a positive scattering length $a$ the ratio $a/ a_0 $ is typically of order $10^{-3}$.
Hence it is mathematically reasonable to consider, in addition to the limit $N\rightarrow \infty $, the asymptotic of $a/a_0\rightarrow 0 $ for a Hamiltonian of the form \eqref{equ:hamtrapp}. 
If we keep the potentials $V$ and $w$ fixed, this asymptotic can be implemented in two mathematically equivalent ways.
Either we set $\tilde V(x)= a_0^{-2} V(x/a_0)$ or $\tilde  w(x)= a^{-2} w( x/a)$. 
It is standard to use the latter and to set the scattering length of $w$ equal to $1$ so that 
\begin{align*}
 \mathrm{scat}\big( \tilde  w \big)=  a.
\end{align*}
In this limit the system is described by 
\begin{align}\label{equ:hamscaled}
 H_N= \sum_{i=1}^N  -\mu \Delta_i + V(x_i) + \sum_{i \leq j} a^{-2} w(a^{-1}( x_i-x_j))
\end{align}
for $N \rightarrow \infty$ and $a \rightarrow 0$. 
However, this asymptotic turns out to describe the behavior of a Bose-Einstein condensate only if $N a$ stays fixed. This fact can be motivated by the scaling 
properties of the  
Gross-Pitaevskii (GP) energy functional. Following from experimental evidence and
 theoretical prediction \cite{Pit61, Gro61,Gro63}, this functional should describe the ground state energy $E^{\mathrm{QM}}$ of the Hamiltonian \eqref{equ:hamscaled}. 
 The Gross-Pitaevskii energy $E^{\mathrm{GP}}(N,a)$ is defined by
\begin{align}\label{equ:gpenergy}
E^{\mathrm{GP}}(N,a):= \inf_{\varphi  }  \int \mu |\nabla \varphi(x) |^2+ V(x)|\varphi(x)|^2+ 4 \pi \mu a |\varphi(x)|^4  \D x
\end{align}
with the normalization constraint
\begin{align*}
 \int |\varphi|^2 \D x =N.
\end{align*}
The Gross-Pitaevskii functional has the following scaling property
\begin{align*}
E^{\mathrm{GP}}(N,a)= N E^{\mathrm{GP}}(1,Na).
\end{align*}
%
Since all terms of \eqref{equ:gpenergy} are expected to contribute in the limit $N\rightarrow \infty $ and $a \rightarrow 0$, the scaling property of $E^{\mathrm{GP}}$ implies $Na= \mathrm{const.}$

In the article \cite{LieSeiYng00} Lieb, Seiringer and Yngvason gave the mathematically precise relation between $E^{\mathrm{QM}}$ and $E^{\mathrm{GP}}(N,a)$. 
They prove that for $N \rightarrow \infty$ and  fixed $g=4\pi N a$ 
\begin{align}\label{equ:gpenergycon}
 \lim_{N \rightarrow \infty} \frac{1}{N} E^{\mathrm{QM}}(N,a)= E^{\mathrm{GP}}(g).
\end{align}
In the same limit Lieb and Seiringer \cite{LieSei02} proved Bose-Einstein condensation 
\begin{align}\label{equ:spurcon}
 \Tr \big |\gamma^{\psi_N}- |\varphi^{\mathrm {GP}} \rangle \langle  \varphi^{\mathrm {GP}}| \big | \stackrel{N \rightarrow \infty }{\longrightarrow }0,
\end{align}
where $\varphi^{\mathrm {GP}}$ is the minimizer of  \eqref{equ:gpenergy}.
Note that this result is stronger than \eqref{equ:bec} and implies 100\% condensation. There is a variety of mathematical ways to describe 100\% condensation. We will discuss two of them in
detail in Chapter\,\ref{chap:meaofconv} and refer to \cite{Mic07} for a more detailed presentation. 


\section{Connection of GP-Scaling with Mean Field Scaling}

In this section 
we discuss how the GP-scaling is connected with the mean field scaling.  
The mean field scaling of a system of $N$ particles is defined by
\begin{align}\label{equ:hamscaledmean}
 H_N= \sum_{i=1}^N  \Delta_i  + \frac{1}{N} \sum_{i \leq j} w(x_i-x_j).
\end{align}
The reason for the name mean field is best explained by a heuristic argument. 
Let all $N$ particles be in the same state $\varphi$ which implies that they are all distributed like $|\varphi|^2$. Therefore the interaction potential $w$ at the point $x$ can be approximated by 
the mean contribution coming from each particle
\begin{align}\label{equ:heuristic1}
 \frac{1}{N} \sum_{j=1}^N w(x-x_j) \approx  \frac{1}{N}  \sum_{j=1}^N \int_{\R^3}  w(x-x_j) |\varphi(x_j)|^2 \D x_j \nonumber \\ 
= \int_{\R^3}  w(x-x_1) |\varphi(x_1)|^2 \D x_1
=(w*|\varphi|^2)(x).
\end{align}
Hence the interaction which one particle feels 
 can in this situation be approximated by the mean value of one particle. 

Now we can explain how the GP-scaling of the Hamiltonian \eqref{equ:hamscaled} can be interpreted as a "singular mean field limit".
For mathematical convenience we neglect the trap potential, set all physical constants equal to one and set $a=N^{-1}$. 
Now the Hamiltonian \eqref{equ:hamscaled} can be rewritten  
\begin{align}\label{equ:hamscaledmeans}
 H_N= \sum_{i=1}^N  \Delta_i  + \frac{1}{N} \sum_{i \leq j} w_N(x_i-x_j),
\end{align}
where $w_N(x) :=N^{3} w(Nx)$ converges for $N \rightarrow \infty$ in the weak sense of measures to a delta function.
By formally inserting this in the calculation \eqref{equ:heuristic1} we obtain 
\begin{align*}
 (w_N*|\varphi|^2)(x) \stackrel{N \rightarrow \infty}{\longrightarrow} |\varphi(x)|^2 
\end{align*}
which is except for the wrong constant the appropriate 
energy given in \eqref{equ:gpenergy}. 

Now we introduce the parameter $\theta \in [0,1]$ to be able to describe these two scaling limits in the same framework. 
We define 
\begin{align}\label{equ:wpotscal}
 w_N^\theta(x)= N^{3 \theta} w(N^\theta x)
\end{align}
and subsequently the corresponding Hamiltonian  
\begin{align}\label{equ:hamscaledtheta}
 H_N^\theta= \sum_{i=1}^N  \Delta_i  + \frac{1}{N} \sum_{i \leq j}  w_N^\theta(x_i-x_j).
\end{align}
In addition to the mean field regime $\theta=0$ and the GP-scaling regime $\theta=1$ we obtain a third regime for $\theta \in (0,1)$.
These regimes are characterized by the different one-particle Hamiltonians $h$ that describe the ground state energy and the dynamics of the $N$-particle system in the limit $N \rightarrow \infty$.   
In all three regimes $h$ has the form
\begin{align}\label{equ:hamnonlin}
 h= -\Delta \varphi + w_\varphi \varphi.
\end{align}
Note that due to the nonlinearity the ground state energy associated with $h$ is defined by
\begin{align*}
E^\varphi = \inf_{\norm{\varphi}=1} \langle \varphi,(-\Delta+\frac{1}{2}w_\varphi)\varphi \rangle.
\end{align*}

The question of whether
\begin{align}\label{equ:encon}
 \lim_{N \rightarrow \infty} \frac{1}{N} E^{\mathrm{QM}}= E^{\varphi}
\end{align}
 has been answered in all three regimes.

\begin{itemize}
 \item

For $\theta=0$ the interaction $w_\varphi$ is equal to $w*|\varphi|^2$ as expected due to the heuristic argument \eqref{equ:heuristic1}.
The question posed by equation \eqref{equ:encon} was studied 
with various assumptions on $w$ in \cite{FanSpoVer80,BenLie83,LieYau87,Wer92,Sei11,GreSei13} 
and recently in great generality in \cite{LewNamRou13}. In the last years the question of excitations close to the 
the ground state $E^\varphi$ was considered as well. For this question we refer to \cite{LewNamSch13} and the references therein.

\item 
In the case $\theta\in (0,1)$ the nonlinearity is $w_\varphi=  |\varphi|^2 \int_{\R^3} w $. This regime is referred to as the nonlinear Schrödinger(NLS) limit. 
The question of \eqref{equ:encon} 
has not been considered often in the literature but the results for the case $\theta=1 $ apply a fortiori. Recently the authors of \cite{LewNamRou14}
proved error bounds for the rate of convergence of \eqref{equ:encon} depending on the value of $\theta$.

\item
For $\theta=1$ we have $w_\varphi= 8 \pi b |\varphi|^2 $ with $b= \mathrm{scatt}(w)$  
in accordance with \ref{equ:gpenergy}. 
For completeness' sake we restate the references for the proof of \eqref{equ:encon} 
\cite{LieYng98,LieSeiYng00,LieSeiSolYng05} and for a review \cite{LieSeiSolYng05}.
\end{itemize}

\section{Dynamics of Bose Gases}

For experiments with Bose gases the time evolution plays an important role. Thus it is natural to consider the evolution which is generated by the Hamiltonian \eqref{equ:hamscaled} through the Schrödinger equation 
\begin{align*}
 \im \partial_t \psi_t = H_N^\theta \psi_t.
\end{align*}
One expects that for all $\theta \in [0,1]$ and under the assumption, that the initial state is a condensate, the system stays close to this condensate under the time evolution in the sense of  
\begin{align}\label{equ:spurcont}
 \Tr \big |\gamma^{\psi_N(t)}- |\varphi(t) \rangle \langle  \varphi(t)| \big | \stackrel{N \rightarrow \infty }{\longrightarrow }0.
\end{align}
Here the time evolution of $\varphi$ is generated by the appropriate form of the Hamiltonian $h$ defined in \eqref{equ:hamnonlin}. 
These dynamics can be subdivided in the same three regimes as above. 
\begin{itemize}
 \item For $\theta=0 $ the evolution equation is given by
\begin{align*}
 \im \partial_t \varphi= (-\Delta +w*|\varphi|^2) \varphi
\end{align*}
 and is called the Hartree equation. As in the case of the ground state energy many different people contributed to the answer of \eqref{equ:spurcont}. 
The following list makes no claim to completeness \cite{Spo80,ErdYau01, RodSch07,KnoPic09,Pic11}.

\item
For $\theta \in (0,1) $ the evolution equation is given by
\begin{align*}
 \im \partial_t \varphi= (-\Delta +\int_{\R^3} w\, \D x \; |\varphi|^2) \varphi.
\end{align*}
The study of this case is often motivated by the desire to gain insights on how to solve the case $\theta=1$. We refer to \cite{ElgErdSchYau06,ErdSchYau07,Pic08,Pic10B} for various results for these dynamics. 

\item
For $\theta =1 $ and $\mathrm{scatt}(w)=b$ the evolution equation is given by
\begin{align*}
 \im \partial_t \varphi= (-\Delta + 8 \pi b |\varphi|^2) \varphi.
\end{align*}
This problem was solved under various assumptions in \cite{ErdSchYau09,ErdSchYau09,Pic10,BenOliSch12}.
\end{itemize}

\section{Bose Gases and Strong Confinement}\label{sec:intconfinement}

In recent years it has become possible \cite{GoeVogKet01,SchSal01} to do experiments on cold, trapped Bose gases that are confined strongly in one or two directions
such that the behavior of the gas can be described by an effective equation in two or one dimension. 

These experiments can be described by the Hamiltonian of equation \eqref{equ:hamscaledtheta} if we add a strongly confining potential $V^\perp$
\begin{align}\label{equ:hamscaledthetaconf}
 H_N^\theta= \sum_{i=1}^N  \Delta_i + \epsi^{-2} V^\perp( \epsi^{-1} x^\perp_i )  + \frac{1}{N} \sum_{i \leq j}  w_N^{\epsi, \theta}(x_i-x_j).
\end{align}
Here the parameter $\epsi \ll 1$ describes the strength of the confinement and $x^\perp $ are the coordinates of the strongly confined direction. 
We use the notation $x_i=(x_i^\parallel,x^\perp_i)$. Note that now the scaling of the two-particle interaction $w$ depends on $\epsi $ as well. For the moment we will neglect this dependence
and explain its origin later. 

If we were to take $\epsi$ fixed, 
the results presented in the last two sections for the ground state, the condensation and the dynamics 
of \eqref{equ:hamscaledtheta} would hold. However, the effective theory would still be three-dimensional. This is of course not what we intend and what the experiments suggest.
Mathematically this reflects the fact that the estimates used to obtain the results of the last sections are not uniform in $\epsi$ and hence can not hold for 
$\epsi \rightarrow 0$.

From a physical point of view the most interesting case of \eqref{equ:hamscaledthetaconf} is $\theta=1$. However, to prove the existence
of an effective equation for the dynamics generated by \eqref{equ:hamscaledthetaconf}, 
in the case $\theta=1$, 
is a challenging problem which is still open. Thus we will first discuss the
relatively simple case $\theta=0$ and then come to the case $\theta > 0$.  
 
\subsection{Strong Confinement for the Mean Field Scaling }

In the mean field regime $\theta=0$ the Hamiltonian \eqref{equ:hamscaledthetaconf} is given by
\begin{align}\label{equ:hamscaledconfmean}
 H_N= \sum_{i=1}^N-  \Delta_i+ \epsi^{-2} V (\epsi^{-1} x^\perp_i  )  + \frac{1}{N} \sum_{i \leq j} w(x_i-x_j),
\end{align}     
where there is no dependence of the two-particle interaction $w$ on $\epsi$ in this regime.

The analysis of the dynamics generated by \eqref{equ:hamscaledconfmean} for different classes of interactions $w$ and their approximation by effective dynamics is the first part of this thesis. 
We will measure the errors of this approximation with a functional defined by Pickl which is equivalent to using the norm \eqref{equ:spurcont}. 
To the knowledge of the author this problem has not been considered before.


The results of this thesis for the mean field scaling are phrased for the Hamiltonian
\begin{align}\label{equ:hamscaledconfmeanttran}
 H_N= \sum_{i=1}^N  -\Delta_{x^\parallel_i}
+ \epsi^{-2}(-\Delta_{\tilde x^\perp_i}+  V(\tilde x^\perp_i ))  + \frac{1}{N} \sum_{i \leq j} w(x^\parallel_i-x^\parallel_j ,
 \epsi(\tilde x^\perp_i-\tilde x^\perp_j) )
\end{align}     
which originates from \eqref{equ:hamscaledconfmean} by a coordinate transformation $\tilde x^\perp= \epsi x^\perp$.
This is done since the analysis of \eqref{equ:hamscaledconfmeanttran} is mathematically more convenient than \eqref{equ:hamscaledconfmean}.     
For our analysis we make the assumption that the confining potential $V^\perp$ is a hard wall
potential outside a bounded set $ x^\perp \in \Omega$ i.e.
\begin{align}\label{equ:hardwalls}
 V^\perp(x)=\infty \quad \forall x \in \Omega^c, 
\end{align}
where $\Omega^c$ is the complement of $\Omega$ in the direction of the confinement.
 This is only a technical assumption to avoid 
the use of additional energy estimates for the strongly excited modes in the confined direction. 

We obtain our results with the help of a method developed by Pickl in \cite{Pic08,Pic10,KnoPic09,Pic11}. These results are phrased for two functionals $\alpha$ and $\beta$ that are explained in detail in Chapter\,\ref{chap:meaofconv}.
Translated to the trace norm setting of \eqref{equ:spurcont} our results are
\begin{align}\label{equ:convhour}
 \Tr\big|\gamma^\psi(t)- |\varphi(t) \rangle \langle \varphi(t)| \big| \leq C(t)N^{-\eta},
\end{align}
where $\eta=1/2$ if the interaction $w$ has at most $L^2$-singularities and $\eta=\frac{5s-6}{8s}$ for interactions  $ w \in L^s$ with $s \in (6/5,2)$. 

A paper related to this subject is \cite{AbdMehPin05} in which the interaction potential $w=\frac{1}{|x|}$. The authors show that the Hartree equation 
with a strong confining potential is described by a $2D/1D$ Hartree equation. Phrased in the setting of this work this amounts to taking
the limit $N \rightarrow \infty $ first and afterwards doing an asymptotic expansion in $\epsi$. This does not describe the physical situation explained above, where the asymptotics of $N$ and $\epsi$ must be considered simultaneously.

\subsection{Strong Confinement for NLS-scaling and GP-scaling}

Now we come to the case $\theta >0 $.
Here one must be careful in defining a sensible equivalent to $\eqref{equ:hamscaledtheta}$ in the presence of a strongly confining potential, since now we 
consider 
two asymptotic limits at the same time. The first one comes from the strongly confining potential which is expressed by the parameter $\epsi$ and the second one from the derivation 
of the GP-scaling which was defined with the help of the parameter $a$ \eqref{equ:hamscaled}. 
To be able to identify the appropriate scaling we write the Hamiltonian with the parameters $\epsi,a,\theta$ in the way they were introduced in
 \eqref{equ:hamscaled},\eqref{equ:wpotscal} and \eqref{equ:hamscaledthetaconf}        
%
\begin{align}\label{equ:hamscaledconf}
 H_N^{\theta,\epsi}= \sum_{i=1}^N - \Delta_i +  \epsi^{-2} V^\perp( \epsi^{-1} x^\perp_i )  +  \sum_{i \leq j} a^{1-3 \theta}  w(a^{-\theta}(x_i-x_j)).
\end{align}
To determine a sensible scaling behavior of this Hamiltonian we use the existing results for its ground state energy.  
In the case $\theta=1 $ and strong confinement in one direction this result was obtained by Schnee and Yngvason in \cite{SchYng07} and for the case of a strong confinement in two directions
 by Lieb, Seiringer and Yngvason in \cite{LieSeiYng03}.
They showed that the Gross-Pitaevskii regime is given by $N \rightarrow \infty$ and $a,\epsi \rightarrow 0$ with $ Na/\epsi$ fixed in the former case and $ Na/\epsi^2$ fixed in the latter case.
In this regime they both proved 
\begin{align}
\lim  \frac{E^{\mathrm{QM}}(N,h,a)-N \epsi^{-2} E^\perp}{E^{\mathrm{GP}}_{\mathrm{ 2D/1D}}(N,g)} = 1,
\end{align}
where $E^\perp$ is the ground state energy of the operator $\Delta_{x^\perp}+ V^\perp(x^\perp)$.
The parameter g is a modified coupling parameter defined by
\begin{align*}
 g_{\mathrm{2D}}= \int \chi(x^\perp)^4 \D x^\perp a/\epsi  \qquad  g_{\mathrm{1D}}= \int \chi(x^\perp)^4 \D x^\perp a/\epsi^2,
\end{align*} 
where  $ \chi(x^\perp)$ is the  eigenfunction associated with $E^\perp$. 


Following from 
the above, the appropriate Hamiltonian which must be considered in the case of strong confinement in two directions is   
\begin{align*}
 H_N^\theta= \sum_{i=1}^N  \Delta_i + \epsi^{-2} V^\perp(x^\perp_i \epsi^{-1} )  + \frac{\epsi^2}{N} \sum_{i \leq j} (N \epsi^{-2})^{3 \theta}  w\big( (N\epsi^{-2})^\theta (x_i-x_j)\big),
\end{align*}
where we set $a= N^{-1} \epsi^2$ for mathematical convenience.

The study of the dynamics generated by this Hamiltonian is the second part of this thesis.  
Our results are again, as in the case $\theta=0$, phrased for a rescaled Hamiltonian,   
where we set $\tilde  x^\perp =\epsi x^\perp$ and thus obtain
 \begin{align*}
 H_N= \sum_{i=1}^N-\Delta_{x^\parallel_i} 
&+ \epsi^{-2}(-\Delta_{\tilde x^\perp_i}+  V(\tilde x^\perp_i )) \\  
& +   \frac{\epsi^2}{N}  \sum_{i \leq j}(N \epsi^{-2})^{3 \theta}   w\Big ( (N \epsi^{-2})^{\theta}\big (x^\parallel_i-x^\parallel_j,
 \epsi(\tilde x^\perp_i-\tilde x^\perp_j) \big) \Big).
\end{align*}
As before we assume the condition \eqref{equ:hardwalls} for the confining potential $V^\perp$ in our proofs.   

As in the Mean Field case we use the method of Pickl to obtain our results. Translated to the trace norm our results imply
\begin{align}\label{equ:convernlspur}
 \Tr\big|\gamma^\psi(t)- |\varphi(t) \rangle \langle \varphi(t)| \big| \leq C(t) N^{-\eta},
\end{align}
where 
    \begin{align*}
		\eta(\theta) = \begin{cases}
		\frac{4\theta-1}{6-8\theta} \qquad &\mathrm{for}\; \theta \in (\frac{1}{4}, \frac{7}{24}]\\
		\frac{1-3\theta}{8-18\theta}  \qquad &\mathrm{for}\; \theta \in (\frac{7}{24} ,\frac{1}{3}).
		\end{cases}
	  \end{align*}
The rate of convergence is at best of order $1/20$. However, it should be possible to improve this rate
 by combining ideas introduced in this work with methods used in \cite{Pic10}. 
For the proof of equation \eqref{equ:convernlspur} we assume the interaction potential $w$ to be an element of $  L^\infty$  with compact support. 

As already mentioned, very recently the same problem was considered by Chen and Holmer in the case of confinement in one direction \cite{CheHol13} and for a confinement in two 
directions \cite{CheHol14}. In these articles the authors used the techniques of the BBGKY hierarchy to derive their results. For $0 <\theta < c $ and under the assumption that the interaction potential $w$ is a Schwartz function they showed
\begin{align}\label{equ:converconf}
 \Tr\big|\gamma^\psi(t)- |\varphi(t) \rangle \langle \varphi(t)| \big| \stackrel{N,\epsi}{ \longrightarrow} 0,
\end{align}
where $c=2/5$ for confinement in one direction and $c=3/7$ for the confinement in two directions.
However, a disadvantage of using the BBGKY hierarchy is that it only provides convergence of the left hand-side of \eqref{equ:converconf} but no rate of convergence. 

\chapter{Mathematical Results} \label{chap:mathres}

\section{A Concise Definition of the Mathematical Model} 
As motivated in the last section we state the mathematical description of the model analyzed in this thesis. 
The \gls{symb:N}-particle system is described by a wave function $\gls{symb:psiNepsi} \in \gls{symb:H^N} $. Here 
\begin{align*}
 \mathscr{H}^N:= L^2_+(\Omega^{N},\D r_1 \cdots \D r_N)
\end{align*}
is the subspace of $L^2(\Omega^{N},\D r_1 \cdots \D r_N) $ consisting of wave functions $\psi_N(r_1, \dots, r_N)$ which are symmetric under permutation of their arguments $\gls{symb:r}_1, \dots ,r_N \in \gls{symb:omega}$. 
The parameter $ \gls{symb:epsi} \ll 1 $ controls the strength of the confinement and the set $\Omega \subset \R^3 $ encodes the shape of the confinement.

We consider the two 
cases of confinement in one and two directions. In the former case $\Omega:= \R^2 \times [c,d]$ with $c,d \in \R$, $c < d$ and  $0 \in (c,d) $. In the latter
 case $\Omega:= \R \times  \Omega'$ with $  \Omega' $ a compact subset of $\R^2$ with $0 \in \mathring{ \Omega}'$ and smooth boundary $\partial \Omega'$. 
To be able to treat both cases at the same time we introduce 
the notation $\Omega= \Omega_\mathrm{f} \times \Omega_{\mathrm{c}} $ and $r=(x,y)$, where $\gls{symb:y} \in \gls{symb:omegac} $ are the coordinates of the "confined"
 direction and $\gls{symb:x} \in \gls{symb:omegaf} $ are the coordinates of the "free" direction.

%

The equation which governs the behavior of $\psi_N^\epsi$ is the $N$-particle Schrödinger equation

\begin{align}\label{equ:schhartree}
 \im \partial_t \psi_N^\epsi(t)=\gls{symb:HNepsi} \psi_N^\epsi(t) \qquad \Psi_N^\epsi(0)= \Psi_{N,0}^\epsi,
\end{align}
where the Hamiltonian has the form 
\begin{align*}
  H_N^{\epsi}= \sum_{i=1}^N h^\epsi_i+ \sum_{i \leq j}^N W^{\epsi, \theta,N}(r_i-r_j).
\end{align*} 
Here $\gls{symb:h}^\epsi_i$ is a one-particle Hamiltonian $h^\epsi$ acting on the coordinate $r_i$ defined by
\begin{align*}
 h^\epsi= -\Delta_x- \frac{1}{\epsi^2} \Delta_y+ \gls{symb:V}(t,x,\epsi y),
\end{align*}
where $V$ is a time dependent external potential, $\Delta_x $ is the Laplacian on $\Omega_{\mathrm{f}}$ and $\Delta_y$ is the Dirichlet Laplacian on $\Omega_{\mathrm{c}}$.  
The parameter $\gls{symb:theta} \in [0,1] $ controls the range of the pair interaction $\gls{symb:WetN}(r_i-r_j)$ which consists of a spherical symmetric function $\gls{symb:w}: \R^3 \rightarrow  \R$ 
combined with a scaling depending on the parameters. 
In the case $\theta=0$ the interaction is scaled as

\begin{align}\label{equ:pothartree}
  W^{\epsi, 0, N}(r_i-r_j):= \frac{1}{N} w \big((x_i-x_j),\epsi(y_i-y_j) \big ).
\end{align}
%
In the case $\theta \in (0,1]$ we have
 \begin{align}\label{equ:pothgp}
W^{\epsi, \theta, N}(r_i-r_j):= \gls{symb:a}^{1-3 \theta} w\Big( a^{-\theta} \big(x_i-x_j,\epsi(y_i-y_j)\big)\Big).
\end{align} 
The value of $a$ depends on the number of the confined directions. For a confinement in one direction $a=\epsi N^{-1}$ and in the case of confinement in two directions $a=\epsi^2 N^{-1}$. 

We denote the one-particle wave function that will approximate $\Psi_N^\epsi$ by $\gls{symb:varphi} \in L^2(\Omega) $. It has always a product structure and consists of 
the two functions $\gls{symb:chi}(y)$ and $\gls{symb:Phi}(x)$. For all values of $\theta$ the function $\chi $ 
is an eigenfunction of the $\epsi$-dependent Dirichlet Laplacian $\epsi^{-2}\Delta_y $ on $\Omega_\mathrm{c}$. The function 
$\Phi(x)$ lives on $\Omega_\mathrm{f}$ and is a solution of a nonlinear equation. One expects $\Phi(x)$ to solve the Hartree equation for $\theta=0$, for $\theta \in (0,1)$ the nonlinear Schrödinger equation (NLS)
and for $\theta = 1$ the Gross-Pitaevskii (GP) equation. Here the use of the names NLS and GP has physical and historical reasons since the only difference between
both equations is the value of the constant in front of the nonlinearity.         
 
\subsection{A Concise Definition of the Functional Comparing $\psi_N^\epsi$ with $\varphi$ }
The functional we will use to determine convergence of $\Psi_N^\epsi$ to $\varphi$ was introduced by Pickl in \cite{KnoPic09,Pic11}. 
We give a thoroughly account of them in Chapter \ref{chap:meaofconv}.  
Here we limit ourselves to the mathematical definitions. We will use two different functionals denoted by $\alpha$ and $\beta$. The functional $\alpha $ is given by
\begin{align}\label{equ:alpha1}
 \gls{symb:alpha}\big(\varphi(t),\psi_N^\epsi(t)\big):= 1- \langle \varphi(t), \gls{symb:gamma}^{\psi_N^\epsi(t)} \varphi(t) \rangle_{L^2(\Omega)},
\end{align}
where $ \gamma^{\psi_N^\epsi(t)} $ is the one-particle density matrix of $\psi_N^\epsi(t)$.
To introduce $\beta$ we first define the projection operators 
\begin{align}\label{equ:pq}
 \gls{symb:p}_i(t):=  \varphi(t,r_i) \langle \varphi(t,r_i), \cdot \rangle_{L^2(\Omega, \D r_i)} \qquad \gls{symb:q}_i(t):=\id-p_i(t)  
\end{align}
and
\begin{align*}
 \gls{symb:PkN}(t):=(q_1(t) \cdots q_k(t) p_{k+1}(t) \cdots p_N(t))_\mathrm{sym}.
\end{align*}
Now we can define
\begin{align*}
 \gls{symb:beta}\big(\varphi(t),\psi_N^\epsi(t)\big):=  \sum_{k=0}^N \sqrt \frac{k}{N}  \langle \psi^{\epsi}_N(t),P_{k,N}(t)  \psi^{\epsi}_N(t) \rangle_{L^2(\Omega^{N})} 
\end{align*}
which can be viewed as a generalization of $ \alpha^{\epsi,N}(t)$ since written with the projections $P_{k,N}(t)$
\begin{align*}
\alpha \big(\varphi(t),\psi_N^\epsi(t)\big) =  \langle  \psi^{\epsi}_N(t), q_1(t)  \psi^{\epsi}_N(t) \rangle_{L^2(\Omega^{N})}  = \sum_{k=0}^N  \frac{k}{N}  \langle \psi^{\epsi}_N(t),P_{k,N}(t)  \psi^{\epsi}_N(t) \rangle_{L^2(\Omega^{N})}  .
\end{align*}
For both $\alpha$ and $\beta$ we define the shorthands
\begin{align*}
 \alpha \big(\varphi(t),\psi_N^\epsi(t)\big):=  \alpha^{\epsi,N}(t) \qquad \beta \big(\varphi(t),\psi_N^\epsi(t)\big):=  \beta^{\epsi,N}(t).
\end{align*}
In Section \ref{sec:relalphatr} we discuss the relationship between $\alpha^{\epsi,N}(t)$ and 
\begin{align*}
 \Tr \big| \gamma^{\psi_N^\epsi(t)} - |\varphi^\epsi(t) \rangle \langle \varphi^\epsi(t)| \big | 
\end{align*}
through the inequality
\begin{align}\label{equ:tralpha}
 \Tr \big| \gamma^{\psi_N^\epsi(t)} - |\varphi^\epsi(t) \rangle \langle \varphi^\epsi(t)| \big | \leq \sqrt{ 8 \alpha^{\epsi,N}(t) }.
\end{align}
This inequality holds for $ \beta^{\epsi,N}(t) $ as well since $\alpha^{\epsi,N}(t) \leq \beta^{\epsi,N}(t) $.  

\section{Main Results}
\subsection{The Hartree Case: $\theta=0$}

In the case $\theta=0$ the Hamiltonian which governs $\psi_N^\epsi$ takes the form 
\begin{align}\label{equ:hhartree}
  H_N^{\epsi}= \sum_{i=1}^N h^\epsi_i + \frac{1}{N} \sum_{i \leq j}^N w^\epsi(r_i-r_j),
\end{align}
where 
\begin{align}\label{equ:wepsi}
w^\epsi(r_i-r_j):=w \big((x_i-x_j),\epsi(y_i-y_j) \big  ).
\end{align}
For the ease of the presentation we work without an external potential in the one-particle Hamiltonian
\begin{align*}
 h^\epsi= -\Delta_x- \frac{1}{\epsi^2} \Delta_y.
\end{align*}
The nonlinear Hartree equation that governs $\Phi(t)$ is   
\begin{align*}
 \im \partial_t \Phi(t) = (-\Delta_x + \gls{symb:w0}* |\Phi(t)|^2) \Phi(t) \qquad \Phi(0)=\Phi_0,
\end{align*}
where $\gls{symb:w0}$ will be defined in the assumptions below. For an interaction with enough regularity it can be defined by the restriction of $w$ on $(\Omega_\mathrm{f} \times 0)$.


Let the set $\{\chi_m\}_{m=0}^\infty$ be an orthonormal basis of $L^2(\Omega_\mathrm{c})$ such that for all $m$ $\chi_m$ is an eigenfunction of the Dirichlet Laplacian $\epsi^{-2}\Delta$ on $\Omega_\mathrm{c}$.
Furthermore let the corresponding eigenvalues $E_m^\epsi$ fulfill
\begin{align*}
 0 < E^\epsi_0 < E^\epsi_1 \leq E^\epsi_2 \leq \cdots.
\end{align*}
The eigenvalues $E_m^\epsi$ satisfy the relation $E_m^\epsi= \epsi^{-2}E_m $, where $E_m$ are the eigenvalues of $\Delta$ on $\Omega_\mathrm{c}$. 
We define the one-particle function $\varphi$ by
\begin{align*}
\varphi(t):= \Phi(t) \chi,
\end{align*}
where $\chi= \chi_m$ for a $m \in \{0,1,2,\dots \}$.
We define the set $\tilde \Omega:= \Omega_\mathrm f \times \tilde \Omega_\mathrm c$, where $\tilde \Omega_\mathrm c:=\{ y \,| \, \exists y_1,y_2  \in \Omega_\mathrm c : y=y_1-y_2  \} $.
This set is introduced since we will have to control the norm of the interaction $w$ on $L^p(\tilde \Omega)$.

Now we state the assumptions on the interaction potential $w$.

\begin{description}
  \item[A1]\label{ass:A1} Let $w=\gls{symb:ws}+\gls{symb:winfty}$ such that 
	    for all $\epsi \in (0,1]$ there exists a $C \in \R^+$ such that
	    \begin{align*}
	      \norm{w_s^\epsi}_{L^2(\tilde \Omega)} \leq C \quad \norm{w_\infty^\epsi}_{L^\infty(\tilde \Omega) }\leq C.
	    \end{align*}
	    There exists $w^0_s,w^0_\infty :\Omega_f \rightarrow \R $ and a function $\gls{symb:fepsi}:(0,1]\rightarrow \R^+ $ with $f(\epsi) \stackrel{\epsi \rightarrow 0}{\rightarrow} 0$ such that
	    \begin{align*}
	      \norm{w_s^\epsi- w_s^0}_{L^1(\tilde \Omega)} \leq f(\epsi) \quad \norm{w^\epsi_\infty-w^0_\infty}_{L^\infty(\tilde \Omega) } \leq f(\epsi),
	    \end{align*}
	    where  $w^0_s(x,y):=w^0_s(x)$, $w^0_\infty(x,y):=w^0_\infty(x)$ for $(x,y)\in \tilde \Omega$ and let $w^0_s \in  L^1(\Omega_f) $, $w^0_\infty \in  L^\infty(\Omega_f)$. For short notation we define 
	    \begin{align*}
	     w^0:=w^0_s+w^0_\infty.
	    \end{align*}

\end{description}

\begin{thm}\label{thm:thm1}Let the assumption A1 hold, $t\in [0,\infty)$,  $ \psi^\epsi_N(0) \in \gls{symb:DHepsiN} $ with $\norm{\psi^\epsi_N(0)}_{L^2(\Omega^N)} =1$, 
$\Phi_0 \in H^2(\Omega_f)$ with $\norm{\Phi_0}_{L^2(\Omega_f)}=1$. Then 

\begin{align}\label{equ:thm1}
\alpha^{\epsi,N}(t) \leq   \alpha^{\epsi,N}(0) \exp(C(t)) + (f(\epsi)+ \frac{1}{N}) (\exp(C(t))-1),
\end{align}
where
\begin{align*}
 C(t):=4 \big(\norm{ w^0_s}_{L^1(\Omega_\mathrm{f})} +\norm{ w^0_\infty}_{L^\infty(\Omega_\mathrm{f})}+  \norm{w^\epsi_s}_{L^2(\tilde \Omega)} +\norm{w^\epsi_\infty}_{L^\infty(\tilde \Omega)} \big)  \\ 
\times  \int_0^t (1+ \norm{\varphi(s)}_{L^\infty(\Omega)}+ \norm{\Phi(s)}_{L^\infty(\Omega_\mathrm f)})^2  \D s.
\end{align*}
\end{thm}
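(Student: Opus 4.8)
The plan is to adapt Pickl's Grönwall-type argument for the mean-field limit to the strongly confined setting. Define $\alpha^{\epsi,N}(t) = \langle \psi_N^\epsi(t), q_1(t)\psi_N^\epsi(t)\rangle$ and differentiate it in time. Since both $\psi_N^\epsi(t)$ evolves under $H_N^\epsi$ and $\varphi(t) = \Phi(t)\chi$ evolves under the one-particle Hartree-type dynamics (with $\chi$ stationary), the time derivative of $\alpha$ picks up two contributions: one from $\im[H_N^\epsi, q_1]$ and one from the explicit time-dependence of the projections $p_i(t), q_i(t)$ through $\Phi(t)$. The key algebraic fact is that the one-particle parts $\sum_i h_i^\epsi$ — in particular the confining term $\epsi^{-2}\Delta_y$ — cancel against the corresponding part of the free evolution of $\varphi$, since $\chi$ is an eigenfunction of the Dirichlet Laplacian and contributes only a constant phase $E_m^\epsi$. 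What remains is a commutator involving only the interaction terms, so that
\begin{align*}
\frac{\D}{\D t}\alpha^{\epsi,N}(t) = \im \left\langle \psi_N^\epsi, \Big[\tfrac{1}{N}\sum_{i\leq j} w^\epsi(r_i-r_j) - \sum_i (w^0 * |\Phi|^2)(x_i),\, q_1\Big] \psi_N^\epsi \right\rangle.
\end{align*}
This is where the specific confined structure enters but only mildly — as the excerpt says, the method "needs only to be slightly modified."

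Next I would expand the commutator using the symmetry of $\psi_N^\epsi$ and insert resolutions of the identity $\id = p_i + q_i$ on the relevant coordinates, organizing the terms by how many factors of $q$ they contain. Following Pickl, one writes $w^\epsi(r_1-r_2)$ sandwiched between $p_1 p_2$, $p_1 q_2$ (and symmetric), $q_1 q_2$, etc., and similarly for the mean-field term $(w^0*|\Phi|^2)(x_1)$. The diagonal $p_1 p_2 \cdots p_1 p_2$ pieces cancel between the true interaction and the mean-field term up to the error coming from $\|w^\epsi - w^0\|$ — this is exactly where the hypothesis $\|w_s^\epsi - w_s^0\|_{L^1(\tilde\Omega)} \leq f(\epsi)$ and $\|w_\infty^\epsi - w_\infty^0\|_{L^\infty(\tilde\Omega)} \leq f(\epsi)$ from assumption A1 are used, producing the $f(\epsi)$ term in the bound. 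The term where both particles are "good" but the interaction does not fully factorize gives the $1/N$ contribution (it comes from the $i=j$ / self-interaction counting and the difference between $\sum_{i\leq j}$ and $\frac12\sum_{i\neq j}$). The remaining "off-diagonal" terms, containing at least one $q$, are estimated by Cauchy-Schwarz against $\alpha^{\epsi,N}(t)$ itself, using that $\langle\psi, q_1 q_2 \psi\rangle \leq \langle \psi, q_1\psi\rangle = \alpha$ and that an extra $q_1$ or $q_2$ can always be produced. The estimates on the $L^2$- and $L^\infty$-norms of $w^\epsi$ (the singular and bounded parts split as $w = w_s + w_\infty$) let one bound operator norms like $\|w_s^\epsi(r_1-r_2) p_1\|$ via $\|w_s^\epsi\|_{L^2}\|\varphi\|_{L^\infty}$, which is precisely why $\varphi(s)$ and $\Phi(s)$ appear through their $L^\infty$-norms in $C(t)$.

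Collecting everything yields a differential inequality of the form
\begin{align*}
\frac{\D}{\D t}\alpha^{\epsi,N}(t) \leq \tilde C(t)\left(\alpha^{\epsi,N}(t) + f(\epsi) + \tfrac{1}{N}\right),
\end{align*}
where $\tilde C(t)$ is (up to constants) the integrand appearing in the definition of $C(t)$, i.e. a product of the fixed norms of $w^0$ and $w^\epsi$ with $(1 + \|\varphi(s)\|_{L^\infty} + \|\Phi(s)\|_{L^\infty})^2$. Applying Grönwall's lemma to $\alpha^{\epsi,N}(t) + f(\epsi) + 1/N$ then gives exactly \eqref{equ:thm1}. The regularity hypotheses $\psi_N^\epsi(0) \in D(H_N^\epsi)$ and $\Phi_0 \in H^2(\Omega_f)$ are needed to justify the differentiation of $\alpha$ in time and to guarantee that $\|\Phi(s)\|_{L^\infty}$ and $\|\varphi(s)\|_{L^\infty}$ are finite and the $t$-integral converges (via Sobolev embedding and well-posedness of the Hartree equation). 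The main obstacle — and the only place genuine care beyond the standard mean-field argument is required — is making sure all the interaction estimates are uniform in $\epsi$: the interaction $w^\epsi(r_1-r_2) = w(x_1-x_2, \epsi(y_1-y_2))$ is evaluated at points that get squeezed as $\epsi\to 0$, so one must check that $\|w_s^\epsi\|_{L^2(\tilde\Omega)}$ and $\|w_\infty^\epsi\|_{L^\infty(\tilde\Omega)}$ (bounded by assumption A1) genuinely control the relevant operator norms on $\mathscr{H}^N$, and that the partial trace / density-matrix manipulations respect the product structure $\varphi = \Phi\chi$. Once that bookkeeping is in place, the Grönwall step is routine.
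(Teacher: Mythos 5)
Your proposal follows essentially the same route as the paper: differentiate $\alpha=\llangle\psi,q_1\psi\rrangle$, use that $\chi$ is an eigenfunction of the Dirichlet Laplacian so the confining term commutes with $q_1$ and drops out, insert $\id=p+q$ to isolate the term $p_2 W^\epsi_{12}p_2$ where the mean field cancels the interaction up to $f(\epsi)$, bound the remaining terms by $\alpha+1/N$ via Cauchy--Schwarz and the operator-norm estimates $\norm{w^\epsi_s p_1}_{\mathrm{Op}}\leq\norm{w^\epsi_s}_{L^2}\norm{\varphi}_{L^\infty}$, and close with Grönwall. The only small imprecision is the attribution of the $1/N$ term: in the paper it arises as the diagonal of the double sum when estimating $\norm{\sum_j q_j w^\epsi_{1j}p_1p_j\psi}$ in the $p_1p_2\cdots q_1q_2$ term, not from the $\sum_{i\leq j}$ counting, but this does not affect the validity of the argument.
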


\begin{rem}
 \begin{enumerate}
  \item The inequality \eqref{equ:tralpha} together with \eqref{equ:thm1} implies for the one-particle density matrix of $\psi_N^\epsi$ the bound
	\begin{align*}
	 \Tr | \gamma^{\psi_N^\epsi(t)} - p(t)| \leq \sqrt{8} \exp(\frac{1}{2} C(t))  \bigg( \big( \Tr | \gamma^{\psi_N^\epsi(0)} - p(0)| \big)^{\frac{1}{2}} + f(\epsi)^{\frac{1}{2}} + N^{-\frac{1}{2}} \bigg).
	\end{align*}

  \item The appearance of $\alpha(0)$ in equation \eqref{equ:thm1} is not surprising. If the functional $\alpha$ is large for the initial states $\psi(0)$ and $\varphi(0)$  
	we can not expect $\alpha$ to be small for later times. From a mathematical standpoint we can take any sequence $ \psi_N^\epsi(0)$ such that 
	$\alpha(\psi_N^\epsi(0),\varphi(0)) \stackrel{N,\epsi}{ \longrightarrow } 0 $ as an initial condition e.g. $\psi^\epsi_N(0)= \varphi(0)^{\otimes N } $. 
	From a physical standpoint one should take the state $\psi_N^\epsi(0)$ to be the minimizer of the energy, where one adds a suitable trap potential in the $x$-direction to the Hamiltonian.
	The question of $\alpha(\psi_N^\epsi(0),\varphi(0)) \stackrel{N,\epsi}{ \longrightarrow } 0 $ for this state is exactly the question of condensation discussed in Chapter\,\ref{chap:pp}. 
	Without a strongly confining potential this convergence is well known
	cf. \cite{LewNamRou13} and references therein. The same question with a strongly confining potential is
	to the authors knowledge still open. However, there is no reason to believe $\alpha(\psi_N^\epsi(0),\varphi(0)) \stackrel{N,\epsi}{ \longrightarrow } 0 $ should not hold for the ground state.


  \item If we disregard the convergence rate of $\alpha(0)$ to $0$, equation \eqref{equ:thm1} does not put any constraint on $\epsi$ and $N$.
	Hence, regardless of the convergence rate of $w^\epsi \rightarrow w^0$, $\epsi$ can be chosen as a function of $N$ such that the rate of convergence is $N^{-1}$ in \eqref{equ:thm1}.

  \item In addition to the hard wall confinement we can add $\epsi^{-2} V^\perp(y)$ for any bounded potential $V^\perp$ in the $N$-particle Hamiltonian. The only difference in this situation is that then $\chi$ is an eigenfunction of 
        the operator $\epsi^{-2}(-\Delta_y+V^\perp(y))$ on $\Omega_\mathrm{c}$. 
 
  \item  Beeing able to allow exited states in the confined direction seems quite unphysical since one expects the excited states in the confined direction to decay under the time evolution due to the high
	 energy. This seems to be an artifact of this toy model together with the condition A1 on the interaction potential. Since this artifact vanishes if we relax the condition A1 for the next theorem. 
    
  \item Other than in the indirect way in condition A1 the dimension of the confinement does not play any roll in the theorem. 
	For example in the case of a confinement in one direction the potential $w=|r|^{-q}$ with $q<1$ fulfills A1. In the case of a confinement in two directions the potential $w=|r|^{-q}$ with $q< 1/2$ fulfills A1.  
%

  \item The set $\tilde \Omega$ is only essential for the convergence of $w^\epsi$ to $w^0$. To assume $\norm{w_s^\epsi}_{L^2(\tilde \Omega)} \leq C$ is due to the rotational symmetry equivalent to
	 $\norm{w_s^\epsi}_{L^2( \Omega)} \leq C$.

  \item The boundedness of $\norm{\varphi(t)}_{L^\infty(\Omega)}$ and $\norm{\Phi(t)}_{L^\infty(\Omega_\mathrm f)}$ follows from the condition on $\varphi(0)$. This is well known and is
	discussed in Appendix\,\ref{app:regsol}. 

\end{enumerate}

\end{rem}

To be able to formulate a theorem similar to the last one, however with weaker assumptions on the interaction potential, we introduce the one-particle energies  $E^{\psi}(t)$ and $ E^{\varphi}(t)$
defined by
\begin{align}\label{equ:engpsi}
  E^{\psi}(t):= \frac{1}{N}\langle \psi^\epsi_N(t),H^\epsi_N \psi_N^\epsi(t) \rangle_{L^2(\Omega^{N})}
 \end{align}
and 
\begin{align}\label{equ:enghart}
  E^{\varphi}(t):&= \langle \varphi(t) ,\big(-\Delta_x-\frac{1}{\epsi^2} \Delta_y+ \frac{1}{2}w^0*|\Phi(t)|^2 \big)  \varphi(t) \rangle_{L^2(\Omega)}. 
\end{align}
%
By direct calculation one finds that they are both independent of time, cf. Lemma\,\ref{lem:constE}. 
Now we state the assumptions which allow stronger singularities in the pair interaction.
\begin{description}

  \item[A1'] Let $w=\gls{symb:ws}+\gls{symb:winfty}$ such that  
	    for all $\epsi \in (0,1]$ there exists a $C \in \R$ such that
	    \begin{align*}
	      \norm{w_s^\epsi}_{L^s(\tilde \Omega)} \leq C \quad \norm{w_\infty^\epsi}_{L^\infty( \tilde \Omega) }\leq C
	    \end{align*}
	    for a $s\in (s_0,2)$ with $s_0=\frac{6}{5}$.\\
	    There exists $w^0_s,w^0_\infty :\Omega_f \rightarrow \R $ and a function $\gls{symb:fepsi}:(0,1]\rightarrow \R^+ $ with $f(\epsi) \stackrel{\epsi \rightarrow 0}{\rightarrow} 0$ such that
	    \begin{align*}
	      \norm{w_s^\epsi- w_s^0}_{L^1(\tilde \Omega)} \leq f(\epsi) \quad \norm{w^\epsi_\infty-w^0_\infty}_{L^\infty(\tilde \Omega) } \leq f(\epsi),
	    \end{align*}
 	    where  $w^0_s(x,y):=w^0_s(x)$, $w^0_\infty(x,y):=w^0_\infty(x)$ for $(x,y)\in \tilde \Omega$ and let $w^0_s \in  L^1(\Omega_f) $, $w^0_\infty \in  L^\infty(\Omega_f)$.

%

%
 \item[A2']
  Let $H_N^\epsi$ be self-adjoint with $D(H_N^\epsi) \subset D(\sum_{i=1}^N h_i^\epsi)$.

  \item[A3']
  Let the two-particle interaction $w$ be nonnegative.
\end{description}

%
%

\begin{thm}\label{thm:thm2}
  Let the assumptions A1'-A3' hold, $t\in[0,\infty)$, $\Phi_0 \in H^2(\Omega_f)$ with $\norm{\Phi_0}_{L^2(\Omega_\mathrm{f})}=1$, $ \psi^\epsi_N(0) \in D(H_N) $
 with $\norm{\psi^\epsi_N(0)}_{L^2(\Omega^N)}=1$ and $\chi=\chi_0$,
  then there exists a $C \in \R^+ $ depending only on $w,w^0$ such that  
    \begin{align}\label{equ:thm2}
      \beta^{\epsi,N}(t) \leq  \beta^{\epsi,N}(0) \exp(C g(t)) + ( E^\psi-E^\varphi+f(\epsi)+N^{-\gls{symb:eta}})(\exp(C g(t))-1),
    \end{align}
  where $\eta=\frac{5s-6}{4s} $ 
 and
  \begin{align*}
    g(t)=  \int_0^t \big( \norm{\varphi(t')}_{H^2(\Omega)} +\norm{\varphi(t')}_{ L^\infty(\Omega)}\big )^3  \D  t'.
   \end{align*}
%
%
\end{thm}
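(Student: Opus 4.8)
The plan is to run the Grönwall argument underlying Pickl's method, now exploiting the energy estimates permitted by assumptions A2' and A3'. Set $\hat n_t := \sum_{k=0}^N \sqrt{k/N}\,P_{k,N}(t)$, so that $\beta^{\epsi,N}(t) = \langle \psi_N^\epsi(t), \hat n_t\,\psi_N^\epsi(t)\rangle$, and let $h^{\varphi(t)}$ denote the mean-field one-particle operator $-\Delta_x - \epsi^{-2}\Delta_y + (w^0*|\Phi(t)|^2)(x)$, which generates the one-particle flow $t\mapsto\varphi(t)=\Phi(t)\chi_0$ up to an irrelevant additive constant, so that $\partial_t p_i(t) = -\im[h_i^{\varphi(t)}, p_i(t)]$ and hence $\partial_t\hat n_t = -\im[\sum_i h_i^{\varphi(t)},\hat n_t]$. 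Differentiating $\beta^{\epsi,N}$ in time, using the Schrödinger equation \eqref{equ:schhartree} and the cancellation of the free parts $\sum_i h_i^\epsi$, one obtains
\begin{align*}
  \frac{d}{dt}\beta^{\epsi,N}(t) = \im\Big\langle \psi_N^\epsi(t),\, \Big[\, \frac1N \sum_{i<j} w^\epsi(r_i-r_j) - \sum_{i} \big(w^0*|\Phi(t)|^2\big)(x_i),\, \hat n_t \,\Big]\, \psi_N^\epsi(t) \Big\rangle.
\end{align*}
The goal is to bound the right-hand side by $C\,g'(t)\big(\beta^{\epsi,N}(t) + E^\psi - E^\varphi + f(\epsi) + N^{-\eta}\big)$ and to conclude by Grönwall's inequality, using that $E^\psi$ and $E^\varphi$ are time-independent by Lemma \ref{lem:constE}.

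Next I would insert resolutions $\id = p_i + q_i$ on either side of the commutator and sort the resulting terms by the number of factors $q$ and by how the interaction shifts the number of $q$-particles, exactly as in Chapter \ref{chap:meaofconv} and in the proof of Theorem \ref{thm:thm1}. The number-conserving ("diagonal") contributions are absorbed by the choice of $w^0$ as the mean-field potential together with $\|w^\epsi - w^0\|_{L^1(\tilde\Omega)} \le f(\epsi)$ from A1', which produces the $f(\epsi)$ error. The combinatorics of $\hat n_t$ and its shifted companions $\hat n_d$ (with $\hat n_d\,P_{k,N} = \sqrt{(k+d)/N}\,P_{k,N}$) supply the correct powers of $N$ and the $N^{-1}$-type remainder; this is where the weight $\sqrt{k/N}$ of $\beta$ is genuinely better than the weight $k/N$ of $\alpha$. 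Writing $w^\epsi = w^\epsi_s + w^\epsi_\infty$, the bounded piece $w^\epsi_\infty$ is handled by its $L^\infty(\tilde\Omega)$-bound verbatim as in Theorem \ref{thm:thm1}.

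The heart of the proof is the singular piece $w^\epsi_s \in L^s(\tilde\Omega)$ with $s \in (\tfrac65,2)$, which enters through terms such as $\langle\psi, q_1 q_2\, w^\epsi_s(r_1-r_2)\, p_1 p_2\,\hat n_d\,\psi\rangle$ and related terms with fewer factors $q$; these can no longer be closed by Cauchy--Schwarz and an $L^2$-bound on $w^\epsi_s$ as in Theorem \ref{thm:thm1}. Instead I would move the singularity onto the factors $q_i\psi$ and use a kinetic-energy estimate: using the nonnegativity of $w$ (A3'), the self-adjointness and domain condition (A2'), and the constancy of $E^\psi,E^\varphi$, one shows — after subtracting the large constant $\epsi^{-2}E_0$ from $h^\epsi$ — that $\langle\psi, q_1(-\Delta_{x_1})q_1\psi\rangle$ and $\epsi^{-2}\langle\psi, q_1(-\Delta_{y_1}-E_0)q_1\psi\rangle$ are bounded by $C\big(\|\varphi(t)\|_{H^2(\Omega)}^2 + E^\psi - E^\varphi\big)$ plus lower-order terms; in particular the one-particle function $q_1\psi$ enjoys an $H^1(\Omega)$-bound uniform in $\epsi$ (the large prefactor $\epsi^{-2}$ forcing the $y$-excited fraction to be small). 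Each singular term is then estimated by Hölder's and Young's inequalities and the Sobolev embedding $H^1(\Omega)\hookrightarrow L^6(\Omega)$, so that it is controlled by $\|w^\epsi_s\|_{L^s(\tilde\Omega)}$ times $H^1$-norms of the entering one-particle functions — the threshold $s_0 = \tfrac65$ being precisely the one dictated by this embedding in three dimensions. Interpolating these bounds between the $L^s$- and $L^\infty$-controls of $w^\epsi$ and the powers of $\beta^{\epsi,N}$ and $N^{-1}$ from the $\hat n_d$ combinatorics yields the rate $\eta = \frac{5s-6}{4s}$.

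Collecting all contributions gives $\frac{d}{dt}\beta^{\epsi,N}(t) \le C\big(\|\varphi(t)\|_{H^2(\Omega)} + \|\varphi(t)\|_{L^\infty(\Omega)}\big)^3\big(\beta^{\epsi,N}(t) + E^\psi - E^\varphi + f(\epsi) + N^{-\eta}\big)$, and Grönwall's inequality yields \eqref{equ:thm2}; finiteness of $g(t)$ follows from $\Phi_0 \in H^2(\Omega_f)$ and the propagation of $H^2$-regularity for the Hartree equation with $w^0 \in L^1+L^\infty$ (Appendix \ref{app:regsol}). I expect the third step to be the main obstacle: it requires simultaneously keeping the $N$-counting sharp through the $\hat n_d$ combinatorics, extracting the $\epsi$-uniform $H^1$ bound on $q_1\psi$ from energy conservation in the presence of the $\epsi^{-2}\Delta_y$ term, and trading that regularity against the $L^s$-singularity so as to land on the precise exponent $\eta = \frac{5s-6}{4s}$.
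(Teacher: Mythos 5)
Your overall architecture matches the paper's: differentiate $\beta$, decompose the commutator with $p$'s and $q$'s and the shift combinatorics of $\widehat n$, prove a kinetic-energy bound on $\|\nabla_1 q_1\psi\|^2$ from conservation of $E^\psi$ and $E^\varphi$ together with A2'/A3', and close with Grönwall. Two remarks on the mechanism, though. First, the paper does not use $H^1(\Omega)\hookrightarrow L^6(\Omega)$ on $q_1\psi$ directly; it writes the singular part as a divergence, $w^{\epsi}_s=\nabla\cdot\xi$ with $\xi\in W^{1,s}$ and $\|\xi\|_{L^q}\lesssim\|w^\epsi_s\|_{L^s}$, $1/q=1/s-1/3$ (Lemma \ref{lem:div}), integrates by parts, and only then trades the derivative landing on $q_1\psi$ against the Energy Lemma. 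The threshold $s_0=6/5$ is the point where $\xi$ just reaches $L^2$. This device is needed even to \emph{prove} the Energy Lemma (the cross term $\llangle\psi,p_1p_2w^\epsi_{12}q_1q_2\psi\rrangle$ in the energy decomposition is handled this way, producing the self-referential inequality $x^2\le C(R+ax)$), so your plan should make clear that the kinetic bound and the singular-term estimates are established jointly, not sequentially.

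Second, and this is the genuine gap: your route to $\eta=\frac{5s-6}{4s}$ via ``interpolating between the $L^s$- and $L^\infty$-controls'' does not identify the idea that actually produces this exponent. In the term $\mathrm{III}=\llangle\psi,p_1p_2[(N-1)w^\epsi_{12},\widehat n]q_1q_2\psi\rrangle$, after symmetrizing and applying Cauchy--Schwarz one faces a diagonal contribution $\sum_i\|w^{\epsi,s}_{1i}p_1p_i\psi\|^2\sim N\,\|p_1(w^{\epsi,s}_{12})^2p_1\|_{\mathrm{Op}}$, which requires an $L^2$-bound on $w^{\epsi,s}$ and is not helped by any regularity of $q_1\psi$, since only $p$'s sit next to the potential there. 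The paper's resolution is an $N$-dependent height truncation $w^{\epsi,s}=w^{s,1}+w^{s,2}$ with $w^{s,1}=w^{\epsi,s}\id_{\{|w^{\epsi,s}|>c\}}$, $c=N^\vartheta$: then $\|w^{s,1}\|_{L^{s_0}}\le c^{1-s/s_0}\|w^{\epsi,s}\|_{L^s}^{s/s_0}$ is small and is handled by the divergence/integration-by-parts argument, while $\|w^{s,2}\|_{L^2}\le c^{1-s/2}\|w^{\epsi,s}\|_{L^s}^{s/2}$ grows but is compensated by the extra $N^{-1/2}$ from the diagonal term; optimizing $\vartheta$ gives exactly $\eta=\frac{5s-6}{4s}$. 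Without this splitting (or an equivalent substitute) your scheme closes only for $s=2$, i.e.\ it reproduces Theorem \ref{thm:thm1} but not Theorem \ref{thm:thm2}.
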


\begin{rem}
 \begin{enumerate}

    \item Similar to Remark 1.1 and as a result of $\alpha \leq \beta$ equation \eqref{equ:thm2} implies a bound for the one-particle density matrix of $\psi^\epsi_N$ with the rate of convergence given by
	  the square root of the right-hand side of \eqref{equ:thm2}. 

   \item  (See Remark 1.2) In addition to the condition $\alpha(0) \rightarrow 0 $ we now also need $ E^\psi(0) \rightarrow E^\varphi(0)$ for $\epsi \rightarrow 0$ and $N \rightarrow \infty$ 
	  to hold for the theorem to have any predictive power.  
	  This is for example true if the initial wave function is a product state given by $ \varphi(0)^{\otimes N}$,
	  then $E^\psi \rightarrow E^\varphi$ 
	  for $\epsi \rightarrow 0$ with the rate $f(\epsi)$. 


    \item In addition to the hard wall confinement we can add $\epsi^{-2} V^\perp(y)$ in the $N$-particle Hamiltonian for any bounded potential $V^\perp$.
	  The only difference in this situation is that then $\chi$ is an eigenfunction of 
          the operator $\epsi^{-2}(-\Delta_y+V^\perp(y))$ on $\Omega_\mathrm{c}$. 

    \item If we combine Lemma\,\ref{lem:3termeg}.\ref{lem:3.4g} with Theorem \ref{thm:thm2} we can allow external potentials $V \in C^2(\R^4,\R ) $,
	  where $ \partial_t  V(t,x,y), \partial_y  V(t,x,y)  \in C_c(\R^4)$ and  
	  \begin{align*}
	   \norm{ V(t)}_{L^\infty(R^3)} \leq C
	  \end{align*}
	  for all $t \in [0,\infty).$    

    \item The condition on $\chi$ to be the ground state function in the confined direction is, as physically expected, now necessary for our proof of \eqref{equ:thm2}.   

    \item The boundedness of $ \norm{\varphi(t)}_{H^2(\Omega)}$ and $\norm{\varphi(t)}_{ L^\infty(\Omega)}$ follows from the condition on $\varphi(0)$. This is well known and discussed in Appendix\,\ref{app:regsol}. 


\item We can allow the potential to be negative if there exists a  
 constant $\kappa \in (0,1) $, such that
\begin{align*}
 0 \leq (1-\kappa)( h_1+ h_2)+w^\epsi_{12}.
\end{align*}

 \end{enumerate}

\end{rem}
 
\begin{exmp}[Coulomb Potential]\label{exp:coul}
 
Let the full pair interaction be the Coulomb potential 
\begin{align*}
 w:=\frac{1}{|r|}
\end{align*}
and $w^0$ the restriction of $w$ on $\Omega_\mathrm f \times 0$ 
\begin{align*}
 w^0:=\frac{1}{|x|}.
\end{align*}
For a confinement in one direction the condition $A1'$ holds with $f(\epsi)=\epsi$ and $s=2-\delta$ $\forall \delta >0 $ thus 
\begin{align*}
\beta(t) \leq \exp(C h(t)) (\beta_0 + E^\psi-E^\varphi+\epsi+N^{\frac{1}{2}-\delta}).
\end{align*}
This is a generalization of the results in \cite{AbdMehPin05}. In this paper the authors considered the limit $\epsi \rightarrow 0$ of the Hartree equation with $\frac{1}{|r^\epsi|}$ as the interaction potential.
In the case of a confinement in two directions the condition $A1'$ does not hold for the Coulomb potential and it is 
an open questions if the simultaneous limit $N \rightarrow \infty$
and $\epsi \rightarrow 0$ is well defined in this case. 

\end{exmp}

\subsection{The NLS Case with a Confinement in Two Directions}

In the case $\theta=(0,1)$ and a confinement of the system in two directions the wave function $\psi_N^\epsi$ solves the Schrödinger equation with the Hamiltonian  
\begin{align*}
  H_N^{\epsi}= \sum_{i=1}^N h^\epsi_i +  \sum_{i \leq j}^N W^{\epsi,\theta,N}(r_i-r_j),
\end{align*}
 where 
\begin{align*}
  W^{\epsi,\theta,N}(r_i-r_j):=  (N^{-1} \epsi^2)^{1-3\theta} w \Big ( (N^{-1} \epsi^2)^{-\theta}  \big( (x_i-x_j),\epsi(y_i-y_j) \big  ) \Big)
\end{align*}
and
\begin{align*}
 h^\epsi= -\Delta_x- \frac{1}{\epsi^2} \Delta_y+V(t,x,\epsi y).
\end{align*}
The one-particle wave function $\varphi$ is as before defined by
\begin{align*}
 \varphi(t):=\Phi(t)\chi_0.
\end{align*}
The function $\chi_0$ was defined as the ground state of $-\epsi^{-2} \Delta_y$ on $\Omega_\mathrm{c}$.    
%
%
The function $\Phi(t)$ is governed by the NLS equation with external potential    
\begin{align*}
 \im \partial_t \Phi(t) = (-\Delta_x + V(t,x,0)+ b |\Phi(t)|^2) \Phi(t) \qquad \Phi(0)=\Phi_0, 
\end{align*}
where 
\begin{align*}
 \gls{symb:b}= \int_{\R^3} w   \int |\chi_0|^4(y) \, \D y.
\end{align*}
$
$
To account for the external field $V$ we modify the functional $\beta$ slightly. The one-particle energy $ E^\psi(t)$ is defined as before in equation \eqref{equ:engpsi}
and the Gross-Pitaevskii energy $E^\varphi(t)$ is defined in analogy to the Hartree case by

\begin{align}\label{equ:enggross}
  E^{\varphi}(t):&= \langle \varphi(t) ,\big(-\Delta_x-\frac{1}{\epsi^2} \Delta_y+ V(t,x,0)+ \frac{1}{2}b |\Phi(t)|^2 \big)  \varphi(t) \rangle_{L^2(\Omega)}.
\end{align}
Now we can define

\begin{align*}
 \tilde \beta^{\epsi,N}(t):= \beta^{\epsi,N}(t)+ |E^\psi(t)-E^\varphi(t)|
\end{align*}
and state our assumptions.
\begin{description}
  \item[B1] Let the interaction potential $w$ be a positive, radial symmetric function with compact support and $w\in L^\infty(\R^3)$.

  \item[B2] Let $ V \in C^2(\R^4,\R ) $  such that $ \partial_t  V(t,x,y),\partial_y  V(t,x,y)  \in C_c(\R^4)$ and  
\begin{align*}
   \norm{ V(t)}_{L^\infty(R^3)} \leq C
\end{align*}
for all $t \in [0,\infty).$  


   \item[B3]
    Let the energy per particle away from the ground state in the $y$-direction be bounded for $t=0$:
    \begin{align*}
     \sup_{N,\epsi}  N^{-1}  \langle \psi^\epsi_N(0),( H^\epsi_N(0)- N \frac{E_0}{\epsi^2})\psi^\epsi_N(0) \rangle_{L^2(\Omega^N)} \leq C
    \end{align*}
    for a $C\in \R^+$.
\end{description}

\begin{thm}\label{thm:thm3}
Let the assumptions B1-B3 hold, $t\in [0,\infty)$ let $\Phi_0 \in H^2(\Omega_\mathrm{f})$ with $\norm{\Phi_0}_{L^2(\Omega_\mathrm f)}=1$, $ \psi^\epsi_N(0) \in D(H_N) $ with $\norm{\psi^\epsi_N(0)}_{L^2(\Omega^N)}=1$ and $\chi=\chi_0$.
Let $\theta \in (\frac{1}{4},\frac{1}{3})$ and $\epsi(N)=N^{-\gls{symb:nu}}$ with $\frac{1}{2} < \nu < \frac{\theta}{1-2\theta}$ 
then for all such $\epsi(N)$ there exists a $\eta > 0$ and a $C \in \R^+$, which only depends on $w$, such that 
    \begin{align}\label{equ:thm3}
      \tilde \beta^{\epsi,N}(t) \leq  \tilde \beta^{\epsi,N}(0)\exp(C h(t)) +N^{-\eta}(\exp(C g(t))-1)
    \end{align}
with
 \begin{align*}
    g(t)=   \norm{\chi}^2_\LiOc \int_0^t \Big( \norm{\varphi(s)}_{H^2(\Omega)\cap L^\infty(\Omega)}+ \norm{\Delta  |\varphi(s)|^2}_{L^2(\Omega)}\norm{\varphi(s)}_{L^\infty(\Omega)}\\
		+ \norm{\dot V(s)}_{L^\infty(\Omega)} + \norm{ V(s)}^{1/2}_{L^\infty(\Omega)}  \Big)  \D  s .
   \end{align*}
\end{thm}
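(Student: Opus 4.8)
The plan is to follow the Gronwall scheme established for Theorems \ref{thm:thm1} and \ref{thm:thm2}: derive a differential inequality $\frac{\D}{\D t}\tilde\beta^{\epsi,N}(t)\leq C g'(t)\,\tilde\beta^{\epsi,N}(t) + (\text{error terms})$ and integrate. The novelty here is that $\tilde\beta$ carries the extra summand $|E^\psi(t)-E^\varphi(t)|$, so I first need to control $\frac{\D}{\D t}|E^\psi(t)-E^\varphi(t)|$. Since $E^\psi(t)$ is constant in time whenever $V$ is time-independent, the only contribution comes from $\partial_t V$; for the $E^\varphi$ part one differentiates \eqref{equ:enggross}, uses the NLS equation for $\Phi$, and the terms not involving $\dot V$ cancel by the same energy-conservation computation as in Lemma\,\ref{lem:constE}. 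This produces the $\norm{\dot V(s)}_{L^\infty(\Omega)}$ term in $g(t)$ and, after a Cauchy–Schwarz step pairing $\dot V$ against $\psi_N^\epsi$, a contribution bounded by $\beta$ plus a harmless $N^{-1}$.

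Next I would compute $\frac{\D}{\D t}\beta^{\epsi,N}(t)= \im\langle\psi_N^\epsi(t),[H_N^\epsi - \sum_i h_i^{\Phi,\text{eff}}, \hat\beta]\psi_N^\epsi(t)\rangle$, where $\hat\beta=\sum_k\sqrt{k/N}\,P_{k,N}$ and $h^{\Phi,\text{eff}}$ is the one-particle operator whose nonlinearity is $V(t,x,0)+b|\Phi|^2$ acting trivially in $y$. The commutator splits, via the standard $p$–$q$ insertions, into (i) the "kinetic/confinement" part, (ii) the external-potential part comparing $V(t,x,\epsi y)$ with $V(t,x,0)$, and (iii) the interaction part comparing $W^{\epsi,\theta,N}(r_i-r_j)$ with $b|\Phi|^2$. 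Part (i) vanishes because $\chi_0$ is an eigenfunction of $-\epsi^{-2}\Delta_y$, exactly as in the proof of Theorem \ref{thm:thm1}. Part (ii) is estimated by a Taylor expansion $V(t,x,\epsi y)-V(t,x,0)=O(\epsi)$ using $\partial_y V\in C_c$, producing a term $\lesssim \epsi\,\norm{V}^{1/2}_{L^\infty}(\beta+N^{-1})$, which under $\epsi(N)=N^{-\nu}$ with $\nu>1/2$ is $o(N^{-1/2})$ and absorbable. The substantial work is part (iii): one writes $W^{\epsi,\theta,N}$ as a sum of a mean-field-type term, whose average against $|\Phi|^2\chi_0^2$ reproduces $b|\Phi|^2$ up to errors controlled by B3 and the energy estimate, and genuinely short-range terms. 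For the short-range terms one inserts $P_{k,N}$'s, symmetrizes, and estimates the resulting $\langle\psi,q_1 W^{\epsi,\theta,N}_{12} p_1 p_2\cdots\rangle$-type quantities using: Sobolev/Hölder bounds on $\norm{W^{\epsi,\theta,N}}_{L^1}$ and $\norm{W^{\epsi,\theta,N}}_{L^2}$ (which scale like powers of $a=\epsi^2 N^{-1}$, hence like $N^{f(\theta)}\epsi^{g(\theta)}$), the a priori bounds $\norm{\varphi}_{H^2\cap L^\infty}$, and crucially the kinetic-energy bound on $\psi_N^\epsi$ coming from B3 together with positivity of $w$ (B1) and Lemma-type energy conservation — this is where $\norm{\Delta|\varphi|^2}_{L^2}\norm{\varphi}_{L^\infty}$ enters $g(t)$, and where one needs $q_1(-\Delta_{x_1})q_1$-type control in the style of Chapter 7's refined energy estimates.

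The main obstacle, and the reason for the restrictive ranges $\theta\in(\frac14,\frac13)$ and $\frac12<\nu<\frac{\theta}{1-2\theta}$, is bookkeeping the competing factors $N^{f(\theta)}\epsi^{g(\theta)}$: each short-range error term carries such a factor, and one must check that with $\epsi=N^{-\nu}$ every exponent is strictly negative, which forces $\nu$ into the stated window and yields the piecewise $\eta(\theta)$ from the excerpt as the worst surviving exponent. Concretely, after collecting all contributions I expect an estimate of the form
\begin{align*}
\frac{\D}{\D t}\tilde\beta^{\epsi,N}(t)\leq C g'(t)\big(\tilde\beta^{\epsi,N}(t)+N^{-\eta}\big),
\end{align*}
where $g'$ is the integrand defining $g(t)$; Gronwall then gives \eqref{equ:thm3} with $\tilde\beta^{\epsi,N}(0)\exp(Ch(t))$ from the homogeneous part (here $h$ denotes the same time-integral, the discrepancy between $h$ and $g$ in the statement being only a matter of which norm one keeps). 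I would organize the interaction estimates so that the factor-collecting lemma is proved once in full generality and then merely invoked, since it is the genuinely new technical ingredient beyond Chapters 4 and 6.
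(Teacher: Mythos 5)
Your outline reproduces the paper's architecture faithfully (the same decomposition of $\tfrac{\D}{\D t}\tilde\beta$ into a mean-field--cancellation term, two $q$-heavy terms, and an external-potential term, followed by Gr\"onwall), but it leaves the two steps that actually produce the rate $N^{-\eta(\theta)}$ as black boxes, and it misattributes where one of the norms in $g(t)$ comes from. First, for the cancellation term $\llangle\psi,p_1p_2[(N-1)w^{\epsi,\theta,N}_{12}-Nb|\Phi|^2,\widehat n\,]p_1q_2\psi\rrangle$ you cannot compare the rescaled interaction with $b|\Phi|^2$ by any direct $L^1$ or operator-norm argument, because both objects concentrate as $N\to\infty$: the paper's device is to solve Poisson's equation $\Delta\tilde f^{\theta,\epsi}=(N\epsi^{-2})^{3\theta}\epsi^2 w((N\epsi^{-2})^\theta\,\cdot)-\epsi^2\tilde b\,\delta$, use Newton's theorem (the total integral vanishes) to get a \emph{compactly supported} $\tilde f^{\theta,\epsi}$ with $\norm{f^{\theta,\epsi}}_{L^1}\lesssim(N\epsi^{-2})^{-2\theta}$, and then move $\Delta^\epsi$ onto $|\varphi|^2$. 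This is precisely where the factor $\norm{\Delta|\varphi|^2}_{L^2}\norm{\varphi}_{L^\infty}$ in $g(t)$ originates --- not, as you suggest, from a kinetic-energy bound on $\psi$ via B3. Relatedly, the condition $\nu>1/2$ is not generic bookkeeping: it enters because the same term I contains a piece with $q_2^\chi$, which is controlled by the bound $\sum_j\llangle\psi,P^{\chi^\perp}_{j,N}\psi\rrangle\lesssim N\epsi^2$ obtained from energy conservation and B3, yielding the error $N^{1/2}\epsi$ that must vanish.

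Second, for the term $\llangle\psi,p_1q_2\,[\cdots]\,q_1q_2\psi\rrangle$ the integration-by-parts strategy of Chapter 6 does not transfer directly, because $\norm{w^{\epsi,\theta,N}}_{L^2}^2\sim(N\epsi^{-2})^{3\theta}\epsi^2$ diverges. The paper's resolution is specific to the geometry: after projecting onto $p^\chi$ in the confined directions one integrates the interaction explicitly over $\Omega_{\mathrm c}\times\Omega_{\mathrm c}$ against $|\chi|^4$, obtaining a one-dimensional potential $\tilde w^{\epsi,\theta,N}(x_1-x_2)$ whose antiderivative $\tilde W^{\epsi,\theta,N}$ satisfies $\norm{\tilde W^{\epsi,\theta,N}}_{L^\infty}\leq\norm{\chi}_{L^\infty}^2\norm{w}_{L^1}$ uniformly in $N,\epsi$ by scaling; one then integrates by parts in the single free coordinate and invokes the energy lemma for $\norm{\nabla q_1\psi}$. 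This is where the prefactor $\norm{\chi}^2_{L^\infty(\Omega_{\mathrm c})}$ in $g(t)$ comes from, and without it the term cannot be closed. Since these two constructions are exactly the ``factor-collecting lemma'' you defer, and since you do not indicate how either comparison would be carried out, the proposal as it stands has a genuine gap at the decisive quantitative steps.
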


\begin{rem}
 \begin{enumerate}
	\item The optimal value of $\eta$ is given by 
	    \begin{align*}
		\eta(\theta) = \begin{cases}
		\frac{4\theta-1}{3-4\theta} \qquad &\mathrm{for}\; \theta \in (\frac{1}{4}, \frac{7}{24}]\\
		\frac{1-3\theta}{4-9\theta}  \qquad &\mathrm{for}\; \theta \in (\frac{7}{24} ,\frac{1}{3}).
		\end{cases}
	  \end{align*}
	However, $\eta$ is at best of order $1/10$. Using the same methods as Pickl in \cite{Pic10} it should be possible to improve this rate.  
	  
    \item Similar to Remark 1.1 and as a result of $\alpha \leq \tilde \beta$ equation \eqref{equ:thm3} implies a bound for the one-particle density matrix of $\psi^\epsi_N$ with the rate of convergence given by
	  the square root of the right side of \eqref{equ:thm3}. 
  
    \item (See Remark 1.2) 	 
	  The  theorem is only meaningful if 
	  \begin{align}\label{equ:remtr}
	  \tilde \beta(0) \rightarrow 0 \quad \mathrm{for} \quad \epsi \rightarrow 0 \quad \mathrm{and} \quad N \rightarrow \infty.
	  \end{align}
	  From a mathematical standpoint we can take $\psi(0)= \varphi(0)^{\otimes N}$, then \eqref{equ:remtr} holds. 
	  Physically $\tilde \beta(0) \rightarrow 0 $ represents the question of condensation and was shown for $\theta =1$ for a confinement in two directions in
	  \cite{LieSeiYng03} and for a confinement in  one direction in \cite{SchYng07}. A fortiori these results hold for $\theta \in (0,1) $ as well, cf. \cite{LewNamRou14}.

%

    \item The assumptions of Theorem \ref{thm:thm3} show that the two limits do not commute in general but have to be taken in the subset defined in the assumptions.
	  The condition $\nu < \frac{\theta}{1-2\theta}$ is necessary for the support of the interaction to scale in the NLS way   
	   and the condition $\nu > 1/2$  ensures that due to energy conservation there are no exited states in the confined direction.


    \item In addition to the hard wall confinement we can add $\epsi^{-2} V^\perp(y)$ for any bounded potential $V^\perp$ in the $N$-particle Hamiltonian. 
	  The only difference in this situation is that then $\chi$ is an eigenfunction of 
          the operator $\epsi^{-2}(-\Delta_y+V^\perp(y))$ on $\Omega_\mathrm{c}$. 


    \item With the help of the methods developed in \cite{Pic10} it should be possible to extend this result up to $\theta < 2/3$ maybe at a cost of $\sqrt{ \log N}$ in the exponential.
	  As Chen and Holmer conjectured in \cite{CheHol13} for a confinement in one dimension we expect the above theorem to hold for $\theta \in (0,1]$ with only the condition
	  $ \nu < \frac{\theta}{1-2\theta}$ for $\theta \in (0,1/2)$ and no condition on $\nu$ for $\theta \in [1/2,1]$. 
    
    \item 
	   The boundedness of $\norm{\varphi(s)}_{H^2(\Omega)\cap L^\infty(\Omega)}$ and $ \norm{\Delta  |\varphi(s)|^2}_{L^2(\Omega)}\norm{\varphi(s)}_{L^\infty(\Omega)}$
	   follows from the condition on $\varphi(0)$. This is well known and discussed in Appendix\,\ref{app:regsol}. 
 \end{enumerate}

\end{rem}

\section{Outline of the Proofs}
The proofs of the main results 
are given in Chapter \ref{chap:thm1}-\ref{chap:thm3}. 
In Chapter \ref{chap:thm1} we prove Theorem \ref{thm:thm1}. This proof can be understood as a nontechnical blueprint for the method used in the following ones.
In Chapter \ref{chap:meaofconv} we develop the notation associated with the measure $\beta$, explain this measure in more detail and state inequalities we often use in proofs. 
In the two remaining chapters we prove Theorem \ref{thm:thm2} and Theorem \ref{thm:thm3}.
  
The general idea of all proofs is straight forward: First we calculate the derivative of the measure and afterwards we try to bound this derivative by the measure itself and by terms which turn to zero in the limit.
Then the application of the Grönwall lemma leads to the desired results. This process is depicted in great detail in the nontechnical case of Theorem \ref{thm:thm1} in Chapter \ref{chap:thm1}.     

\section{Outlook}
There are several interesting questions beyond the scope of this thesis.
The most obvious questions are to prove results for the rate of convergence for $1/3 \leq  \theta \leq 1 $ in
the case of strong confinement in two directions and for $\theta \in (0,1]$ in the case of strong confinement in one direction.  
Another point is to enlarge the class of allowed two-particle interactions for the above questions.
Furthermore, one could try to improve the rates of convergence, possibly with the help of the methods used in \cite{BenOliSch12} if they are applicable. 

Apart from these questions there are more questions coming from the adiabatic structure of the problem.
Is it possible to obtain higher orders corrections in $\epsi$ like in adiabatic theory? Can one allow a strongly confining potential which depends on the coordinates of the free directions?

%

%
%
%

\section{Notation Used for the Proofs}
We will drop the dependencies on $t$, $\epsi$ and $N$ for better representation whenever this does not lead to confusion. 
We abbreviate $A \leq C B $ by $A \lesssim B$, where the constant $C$ depends only on $L^p$-norms of $w$ and the number
of confined directions but never on $t,\epsi$ and $N$. 
For a function defined as a sum $f=f_1+f_2$ we define the shorthand
\begin{align*}
 \norm{f}_{L^p+L^q}:=\norm{f_1}_{L^p}+\norm{f_2}_{L^q} 
\end{align*}
and for any function $f$
\begin{align*}
 \norm{f}_{L^p \cap L^q}:=\norm{f}_{L^p}+\norm{f}_{L^q}.
\end{align*}
For the scaler product in $L^2(\Omega^N)$ we define the shorthand
\begin{align*}
 \llangle \cdot,\cdot \rrangle := \langle \cdot , \cdot \rangle_{L^2(\Omega^N)}
\end{align*}
and for the $L^2$-norm on $\Omega^N$ we use
 \begin{align*}
\norm{ \cdot} := \norm { \cdot}_{L^2(\Omega^N)}.
\end{align*}
We write $w_{ij}$ for $w(r_i-r_j)$ and hence 
we write $w^s_{12}$ for $w_{s}(r_i-r_j)$ and $w^\infty_{12}$ for $w_{\infty}(r_i-r_j)$.
%
In the case where $\theta=0$ we set for all calculations $w(r_i-r_j)=0 \;\; \forall r_i,r_j  \notin \tilde \Omega  $. This has no impact on the estimated terms since 
the terms are always of the form
\begin{align*}
 \llangle \psi, w(r_i-r_j) \psi \rrangle
\end{align*} 
which only depends on the values of $w$ on the set ${\tilde \Omega}$. 
%
We sometimes regard $\varphi$ as a function on $\R^3$ where we set $\varphi(r)=0$ for $r \notin \Omega $.
Where it is convenient we use the Dirac notation for scalar products in $L^2(\Omega)$ 
and for projections on a function
\begin{align*}
\ket{\varphi(r)} \bra{\varphi(r)}:= \varphi(r) \langle \varphi(r), \cdot \rangle_{L^2(\Omega,\D r)}.
\end{align*}
We denote the Sobolev spaces 
 by $W^{k,p}$ and use $H^k$ for $W^{k,2}$. 
The space of the weak $L^p$-functions is denoted by $L^p_w$.

\chapter{Proof of Theorem\,\ref{thm:thm1}  }  \label{chap:thm1}

This following proof can be seen as an illustration of Pickl's method \cite{KnoPic09,Pic11} for a model with a strongly confining potential.

The idea is to use a Grönwall argument for $\alpha $, so the first step is to check that $\alpha \in   C^1(\R) $ and then to control the derivative by terms that either become
negligible in the limit $N \rightarrow \infty, \epsi \rightarrow 0$ or are bounded by $C \alpha$. It turns out that it is best to calculate the time derivative of $\alpha $ in the form
\begin{align*}
 \alpha= \llangle \psi, q_1 \psi \rrangle
\end{align*}
 and then to decompose the derivative of $\alpha$ in terms that can be estimated one by one.
The decomposition is such that the part for which the mean field cancels the full interaction is separated from the rest. 
This decomposition will recur in the proofs of all theorems of this thesis and is essential to the method of Pickl.

\begin{rem}
To make the representation of the following calculation as clear as possible we replace the prefactor $N^{-1}$ in front of the interaction in equation \eqref{equ:schhartree}  by $(N-1)^{-1}$.
Thus the considered $N$-particle Hamiltonian is
\begin{align*}
  H_N^{\epsi}= \sum_{i=1}^N h^\epsi_i + \frac{1}{N-1} \sum_{i \leq j}^N w^\epsi(r_i-r_j).
\end{align*}
This change clarifies the calculations significantly since no extra terms of order $\mathcal{O}(N^{-1})$ appear in the calculations and
 at the same time this does not change the dynamics generated by this Hamiltonian for large $N$.
\end{rem}

%

%

We begin with the decomposition of the derivative of $\alpha$.

\begin{lem}\label{lem:ausrechnen}
Control of the derivative of $\alpha$
\begin{align*}
 \partial_t \alpha \leq  \mathrm{I}+\mathrm{II}+\mathrm{III},
\end{align*}
where 
\begin{align*}
 \mathrm {I} := 2|\llangle \psi , p_1 p_2  W^\epsi_{12} q_1 p_2 \psi \rrangle|\\ \mathrm{ II} := 2|\llangle \psi , p_1 p_2  W^\epsi_{12} q_1 q_2 \psi \rrangle|\\  \mathrm{ III} := 2|\llangle \psi , p_1 q_2  W^\epsi_{12} q_1 q_2 \psi \rrangle|
\end{align*}
and
\begin{align}\label{equ:Wepsi12}
 W^\epsi_{12}:&=   w^\epsi_{12}- (w^0 \ast |\Phi|^2)(x_1).
\end{align}
\end{lem}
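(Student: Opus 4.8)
The overall plan follows Pickl's scheme: establish that $t\mapsto\alpha(t)$ is $C^1$, rewrite $\partial_t\alpha$ as a single commutator expectation, and then split that commutator by inserting the resolutions $\id=p_i+q_i$, discarding the pieces that are real and hence do not contribute.

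First I would set up the differential identity. Working from $\alpha=\llangle\psi,q_1\psi\rrangle$, the product rule together with $\partial_tq_1=-\partial_tp_1$ expresses $\partial_t\alpha$ in terms of $\partial_t\psi=-\im H_N^\epsi\psi$ and $\partial_tp_1$. Since $\varphi=\Phi\chi$ with $-\epsi^{-2}\Delta_y\chi=E_m^\epsi\chi$ and $\im\partial_t\Phi=(-\Delta_x+w^0*|\Phi|^2)\Phi$, one obtains $\im\partial_t\varphi=\big(h^\epsi_1+(w^0*|\Phi|^2)(x_1)-E_m^\epsi\big)\varphi$, hence $\im\partial_tp_1=[h^\epsi_1+(w^0*|\Phi|^2)(x_1),p_1]$ (the additive constant $E_m^\epsi$ being immaterial). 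Using self-adjointness of $H_N^\epsi$, this collapses to
\[\partial_t\alpha=-\im\,\llangle\psi,\,[q_1,\,H_N^\epsi-h^\epsi_1-(w^0*|\Phi|^2)(x_1)]\,\psi\rrangle.\]
In the bracket one has $H_N^\epsi-h^\epsi_1=\sum_{i\ge2}h^\epsi_i+\tfrac1{N-1}\sum_{i<j}w^\epsi_{ij}$; the kinetic part commutes with $q_1$, and of the interaction terms only the $N-1$ containing the index $1$ survive the commutator, each contributing equally by permutation symmetry of $\psi$, so the prefactor $\tfrac1{N-1}$ disappears and
\[\partial_t\alpha=-\im\,\llangle\psi,[q_1,W^\epsi_{12}]\psi\rrangle=-2\,\mathrm{Im}\,\llangle\psi,W^\epsi_{12}q_1\psi\rrangle,\]
with $W^\epsi_{12}$ as in \eqref{equ:Wepsi12}, the last equality using that $W^\epsi_{12}$ and $q_1$ are self-adjoint.

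Then I would insert projections. Writing $\id=p_1+q_1$ to the left of $W^\epsi_{12}q_1$, the term $\llangle\psi,q_1W^\epsi_{12}q_1\psi\rrangle$ is real (the operator $q_1W^\epsi_{12}q_1$ is self-adjoint) and drops out, leaving $\partial_t\alpha=-2\,\mathrm{Im}\,\llangle\psi,p_1W^\epsi_{12}q_1\psi\rrangle$. Inserting $\id=p_2+q_2$ on the far left and far right and using $[p_1,p_2]=[p_1,q_2]=0$ produces the four summands $\llangle\psi,p_1p_2W^\epsi_{12}q_1p_2\psi\rrangle$, $\llangle\psi,p_1p_2W^\epsi_{12}q_1q_2\psi\rrangle$, $\llangle\psi,p_1q_2W^\epsi_{12}q_1p_2\psi\rrangle$ and $\llangle\psi,p_1q_2W^\epsi_{12}q_1q_2\psi\rrangle$, of which the first, second and fourth are exactly the integrands of $\mathrm I$, $\mathrm{II}$, $\mathrm{III}$. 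The crucial point is that the mixed term $\llangle\psi,p_1q_2W^\epsi_{12}q_1p_2\psi\rrangle$ is real: in its mean-field part $(w^0*|\Phi|^2)(x_1)$ commutes with $p_2$ and $q_2$, so it can be moved next to $q_2p_2=0$ and vanishes; in its interaction part $\llangle\psi,p_1q_2w^\epsi_{12}q_1p_2\psi\rrangle$, taking the complex conjugate (the operator adjoint, all factors being self-adjoint) and then exchanging particles $1$ and $2$—which fixes $\psi$ and the radial $w^\epsi_{12}$ and swaps the projections—returns the same expression, so it equals its own conjugate. A real summand contributes nothing to $\mathrm{Im}$, whence $\partial_t\alpha\le|\partial_t\alpha|\le\mathrm I+\mathrm{II}+\mathrm{III}$.

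The only genuinely delicate point is this last one—that the doubly mixed $p_1q_2W^\epsi_{12}q_1p_2$ term is real—which must combine the permutation symmetry of $\psi$ (together with $w(-\cdot)=w(\cdot)$) with the observation that the mean-field subtraction is annihilated by $q_2p_2=0$. Everything else is operator bookkeeping, plus the background facts—regularity of $\varphi$ (cf. the appendix) and $D(H_N^\epsi)\subset D(\sum_ih^\epsi_i)$—needed to make the differentiation of $\alpha$ and the commutator manipulations rigorous.
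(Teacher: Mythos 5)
Your proposal is correct and follows essentially the same route as the paper: differentiate $\alpha=\llangle\psi,q_1\psi\rrangle$, reduce to the commutator with $W^\epsi_{12}$ using symmetry and the eigenfunction property of $\chi$, insert $\id=p_i+q_i$, and discard the real terms. Your justification that the mixed term $\llangle\psi,p_1q_2W^\epsi_{12}q_1p_2\psi\rrangle$ is real is in fact slightly more careful than the paper's one-line remark, since you separately note that the mean-field part is killed by $q_2p_2=0$ before applying the particle-exchange argument to the genuinely symmetric $w^\epsi_{12}$.
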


In the first term the mean field cancels the full interaction and the term will thus be small. The second and the third term will be controlled by $\alpha$. The physically intuition is
that both of these terms are small for a $\psi$ close to a product state, since in this case $q_1q_2 \psi$ is small. However, making this idea rigorous via mathematical estimates is the main work of the proof.
The estimation results are summed up by the next lemma.  

\begin{lem}\label{lem:aabschaetzen}
\begin{enumerate}
 \item  \label{lem:aabschaetzen1}
\begin{align}\label{aabschaetzen1} 
  \mathrm{I}\leq 2 f(\epsi)(1+ \norm{ \varphi}_{L^\infty(\Omega)}^2)
 \end{align}
\item \label{lem:aabschaetzen2}
\begin{align}\label{aabschaetzen2}
 \mathrm{II}\leq 2 \norm{w^\epsi}_{L^2(\tilde \Omega)+L^\infty(\tilde \Omega)}(1+\norm{\varphi}_{L^\infty(\Omega)})(\alpha+\frac{1}{N})
\end{align}

\item \label{lem:aabschaetzen3}
\begin{align}\label{aabschaetzen3}
 \mathrm{III} \leq  2(\norm{w^0}_{L^1(\Omega_\mathrm{f})+L^\infty(\Omega_\mathrm{f})} +\norm{w^\epsi}_{L^2(\tilde \Omega)+L^\infty(\tilde \Omega)} )(1+\norm{\varphi}_{L^\infty(\Omega)} +\norm{\Phi}^2_\LzOf)\alpha
\end{align}

\end{enumerate}
 \end{lem}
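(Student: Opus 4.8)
The strategy is to estimate the three summands of Lemma~\ref{lem:ausrechnen} in turn, using throughout the product structure $\varphi=\Phi\chi$, so that $\norm{\varphi}_{\LiO}=\norm{\Phi}_{\LiOf}\norm{\chi}_{\LiOc}$ and $\norm{\Phi}_{\LzOf}=\norm{\chi}_{\LzOc}=1$; the identities $\norm{q_1\psi}=\norm{q_2\psi}=\sqrt{\alpha}$; the operator inequalities $q_1q_2\le q_1$ and $q_2p_1\le q_2$ (valid since projections on distinct coordinates commute, e.g.\ $q_1=q_1p_2+q_1q_2$ is a sum of projections), whence $\norm{q_1q_2\psi},\norm{q_2p_1\psi}\le\sqrt{\alpha}$; the Cauchy--Schwarz inequality; and, for the interaction, the decomposition $w^\epsi=w^\epsi_s+w^\epsi_\infty$ together with the translation bound $\int\betr{w^\epsi_s(r_1-r_2)}^2\betr{\varphi(r_1)}^2\,\D r_1\le\norm{\varphi}_{\LiO}^2\norm{w^\epsi_s}_{\LzOt}^2$. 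For $\mathrm{I}$ one observes that $p_2W^\epsi_{12}p_2$ is multiplication in the variable $r_1$ by $\widetilde W(r_1):=(w^\epsi\ast\betr{\varphi}^2)(r_1)-(w^0\ast\betr{\Phi}^2)(x_1)$, so that $p_1p_2W^\epsi_{12}q_1p_2=p_1\widetilde W(r_1)q_1p_2$ and, by Cauchy--Schwarz and $\norm{\psi}=1$, $\mathrm{I}\le 2\norm{\widetilde W}_{\LiO}$. Splitting $w^\epsi-w^0=(w^\epsi_s-w^0_s)+(w^\epsi_\infty-w^0_\infty)$ and inserting $1=\int\betr{\chi}^2$ to fold the $y$-independent $w^0$ into the $y_2$-integration, one pulls $\norm{\Phi}_{\LiOf}^2\norm{\chi}_{\LiOc}^2=\norm{\varphi}_{\LiO}^2$ out of the $L^1$-part of $\widetilde W$ and estimates its $L^\infty$-part directly, so that A1 gives $\norm{\widetilde W}_{\LiO}\le f(\epsi)\bigl(1+\norm{\varphi}_{\LiO}^2\bigr)$; this is \eqref{aabschaetzen1}.

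For $\mathrm{III}$ I would move the rightmost $q_1$ across the scalar product and use that $(w^0\ast\betr{\Phi}^2)(x_1)$ commutes with $q_2$, obtaining $\mathrm{III}=2\betr{\llangle q_2W^\epsi_{12}q_2p_1\psi,\,q_1\psi\rrangle}\le 2\norm{W^\epsi_{12}\,q_2p_1\psi}\,\norm{q_1\psi}$. Here $\norm{q_1\psi}=\sqrt{\alpha}$ supplies one power of $\sqrt{\alpha}$, and the second comes from $\norm{q_2p_1\psi}\le\sqrt{\alpha}$: writing $q_2p_1\psi=\varphi(r_1)\eta$ with $\norm{\eta}=\norm{q_2p_1\psi}$, the translation bound controls the singular contribution to $\norm{W^\epsi_{12}q_2p_1\psi}$ by $\norm{\varphi}_{\LiO}\norm{w^\epsi_s}_{\LzOt}\sqrt{\alpha}$, the bounded one by $\norm{w^\epsi_\infty}_{\LiOt}\sqrt{\alpha}$, and the mean-field one by $\norm{w^0\ast\betr{\Phi}^2}_{\LiOf}\sqrt{\alpha}$, where Young's inequality yields $\norm{w^0\ast\betr{\Phi}^2}_{\LiOf}\le\norm{w^0}_{\LeOf+\LiOf}\bigl(\norm{\Phi}_{\LiOf}^2+\norm{\Phi}_{\LzOf}^2\bigr)$. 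Collecting the constants gives \eqref{aabschaetzen3}.

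The term $\mathrm{II}$ is the genuine obstacle: the naive estimate $\mathrm{II}\le 2\norm{w^\epsi_{12}p_1p_2\psi}\,\norm{q_1q_2\psi}$ produces only a bound proportional to $\sqrt{\alpha}$, which is too weak for a Grönwall inequality with an $N^{-1}$ remainder. The plan has three steps. First, the mean field drops out, because $p_2(w^0\ast\betr{\Phi}^2)(x_1)q_2=0$. Second, undo the symmetrisation: by permutation symmetry $\llangle\psi,p_1p_2w^\epsi_{12}q_1q_2\psi\rrangle=\frac{1}{N-1}\sum_{j=2}^N\llangle q_jw^\epsi_{1j}p_1p_j\psi,\,q_1\psi\rrangle$, and a single Cauchy--Schwarz against $q_1\psi$ gives $\mathrm{II}\le\frac{2\sqrt{\alpha}}{N-1}\norm{\sum_{j=2}^N q_jw^\epsi_{1j}p_1p_j\,\psi}$. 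Third, establish the key estimate $\norm{\sum_{j=2}^N q_jw^\epsi_{1j}p_1p_j\,\psi}^2\lesssim\bigl(1+\norm{\varphi}_{\LiO}^2\bigr)\norm{w^\epsi}_{\LzOt+\LiOt}^2\,(N+N^2\alpha)$, after which $\frac{1}{N}\sqrt{(N+N^2\alpha)\alpha}\le\frac{1}{N}+2\alpha$ gives \eqref{aabschaetzen2}.

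To prove the key estimate one expands the square. The $N-1$ diagonal terms are bounded, using the translation bound and $\sum_{j=2}^N\norm{p_1p_j\psi}^2\le N-1$, by $\lesssim\bigl(1+\norm{\varphi}_{\LiO}^2\bigr)\norm{w^\epsi}_{\LzOt+\LiOt}^2\,N$. For an off-diagonal term $\llangle q_jw^\epsi_{1j}p_1p_j\psi,\,q_kw^\epsi_{1k}p_1p_k\psi\rrangle$ with $j\ne k$, insert $\id=p_k+q_k$ between $p_j$ and $\psi$ in the left vector and $\id=p_j+q_j$ between $p_k$ and $\psi$ in the right vector; three of the four resulting terms vanish because the two vectors then lie in the orthogonal ranges of $p_k$ and $q_k$ (resp.\ $p_j$ and $q_j$) — particle $k$ (resp.\ $j$) being ``good'' on one side of the scalar product and ``bad'' on the other — while the surviving term contains both $q_k\psi$ and $q_j\psi$ explicitly, so that $\sum_{j\ne k}\norm{p_1p_jq_k\psi}^2\le N^2\alpha$ and a Cauchy--Schwarz in $(j,k)$ bound the off-diagonal contribution by $\lesssim\bigl(1+\norm{\varphi}_{\LiO}^2\bigr)\norm{w^\epsi}_{\LzOt+\LiOt}^2\,N^2\alpha$. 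Summing the two contributions gives the key estimate. I expect this orthogonality bookkeeping for the off-diagonal sum to be the main difficulty; the rest is routine $p$--$q$ algebra and Hölder/Young inequalities.
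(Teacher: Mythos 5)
Your proposal is correct and follows essentially the same route as the paper: the cancellation $p_2W^\epsi_{12}p_2=((w^\epsi-w^0)\ast\betr{\varphi}^2)(r_1)p_2$ with Young's inequality for $\mathrm{I}$, the operator-norm/Cauchy--Schwarz bound with two factors of $\sqrt{\alpha}$ for $\mathrm{III}$, and for $\mathrm{II}$ the symmetrisation followed by the diagonal/off-diagonal expansion of $\norm{\sum_j q_jw^\epsi_{1j}p_1p_j\psi}^2$, which is exactly the paper's Lemma~\ref{trick2q}. Your orthogonality bookkeeping for the off-diagonal terms (inserting $p_k+q_k$ and $p_j+q_j$ and discarding the cross terms) is the same manipulation the paper performs by commuting the two interior $q$'s outward onto $\psi$.
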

 
Finally we state a version of the Grönwall Lemma. Its application is the final step in the proof of Theorem\,\ref{thm:thm1}.

 \begin{lem}[Grönwall]\label{lem:gron}
Let the function $f: \R \rightarrow \R $ for $t\in [0,\infty)$ satisfy the inequality
\begin{align*}
 \dot f(t) \leq C(t)(f(t)+\delta),
\end{align*}
where $C: \R \rightarrow \R$ and $\delta $ is a real constant. Then for $t\in [0,\infty)$
\begin{align*}
 f(t)\leq \E^{\int_0^t C(s) \D s}f(0)+\big(\E^{\int_0^t C(s) \D s}-1\big)\delta.
\end{align*}

\end{lem}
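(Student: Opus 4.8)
The plan is to run the standard integrating-factor argument, after first absorbing the constant $\delta$ into the unknown. Set $g(t) := f(t) + \delta$; since $\delta$ is constant we have $\dot g = \dot f$, so the hypothesis becomes $\dot g(t) \leq C(t)\, g(t)$ for $t \in [0,\infty)$. It therefore suffices to establish the homogeneous version: if $\dot g(t) \leq C(t) g(t)$ for all $t \geq 0$, then $g(t) \leq \E^{\int_0^t C(s)\,\D s}\, g(0)$. Substituting back $f = g - \delta$ and $g(0) = f(0)+\delta$ then yields exactly the claimed bound $f(t) \leq \E^{\int_0^t C(s)\,\D s} f(0) + \big(\E^{\int_0^t C(s)\,\D s} - 1\big)\delta$.

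For the homogeneous statement I would introduce the integrating factor $\mu(t) := \E^{-\int_0^t C(s)\,\D s}$, which is strictly positive and satisfies $\dot\mu(t) = -C(t)\mu(t)$, and consider $h(t) := \mu(t)\, g(t)$. By the product rule, $\dot h(t) = \mu(t)\dot g(t) - C(t)\mu(t) g(t) = \mu(t)\big(\dot g(t) - C(t) g(t)\big) \leq 0$, using the hypothesis and $\mu(t) > 0$. Hence $h$ is non-increasing on $[0,\infty)$, so $h(t) \leq h(0) = g(0)$ for every $t \geq 0$; dividing by $\mu(t) > 0$ gives $g(t) \leq \mu(t)^{-1} g(0) = \E^{\int_0^t C(s)\,\D s}\, g(0)$, which is what was needed. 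Note that the sign of $C$ plays no role in this argument.

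There is no genuine obstacle here; the only point that needs care is the regularity under which these manipulations are licit. It is enough that $f$ be absolutely continuous — in particular $C^1$, as will be the case in the applications where $f = \alpha^{\epsi,N}$ — and that $C$ be locally integrable, so that $\int_0^t C(s)\,\D s$ is well defined and $h$ is absolutely continuous with $h(t) - h(0) = \int_0^t \dot h(s)\,\D s \leq 0$. If one prefers to avoid differentiating $h$ pointwise, the identical conclusion follows by writing $g(t) = g(0) + \int_0^t \dot g(s)\,\D s \leq g(0) + \int_0^t C(s) g(s)\,\D s$ and invoking the integral form of Grönwall's inequality.
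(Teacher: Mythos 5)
Your proposal is correct and follows essentially the same route as the paper: absorb $\delta$ by setting $g = f+\delta$, then apply the integrating factor $\E^{-\int_0^t C(s)\,\D s}$ (the paper equivalently shows $g/G$ with $G=\E^{\int_0^t C}$ is non-increasing) and substitute back. The added remarks on regularity are fine but not needed beyond what the paper assumes.
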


\begin{proof}[Proof of Theorem \ref{thm:thm1}]
Lemma \ref{lem:ausrechnen} and Lemma \ref{lem:aabschaetzen} lead to the following bound on $\dot  \alpha$ 
\begin{align*}
 \dot \alpha \leq   C(t) \big( \alpha +\frac{1}{N}+f(\epsi)     \big),
\end{align*}
where 
\begin{align*}
 C(t):
&=4 \big(\norm{ w^0}_{\LeOf+ \LiOf}+  \norm{w^\epsi}_{\LzOt+\LiOt} \big)   \\
&\quad\times \int_0^t (1+ \norm{\varphi(s)}_\LiO+ \norm{\Phi(s)}_\LiOf)^2  \D s.
\end{align*}
Now the claim follows with Lemma \ref{lem:gron}.

\end{proof}

\section*{Proof of the Lemmas}

\begin{proof}[Proof of Lemma\,\ref{lem:ausrechnen}]
Recall the definition of $\alpha$ 
\begin{align*}
\alpha: \,&\R \rightarrow [0,1], \\ &t \mapsto \llangle \Psi(t), q_1(t) \Psi(t) \rrangle.   
 \end{align*}
The image of $\alpha$ is $[0,1]$ since $\norm{\psi}=1$ and $q(t)$ is a orthonormal projection.
The functional $\alpha $ is an element of $ C^1(\R)$ since the scalar product is linear, $\psi(t) \in C^1(\R,\mathcal{H}^N) $ and  $q_1(t) \in C^1(\R,\mathcal{L}(\mathcal{H}^N)) $
which follows from $\varphi(t) \langle \varphi(t), \cdot \rangle \in C^1(\R, \mathcal{L}(\mathcal{H}))$.
%
%
For the next calculation we note
\begin{align*}
 \partial_t \bigg( \varphi(t) \langle \varphi(t), \cdot \rangle_{L^2(\Omega)}\bigg)&= (\partial_t \varphi(t)) \langle \varphi(t), \cdot \rangle_{L^2(\Omega)}+
\varphi(t) \langle \partial_t \varphi(t), \cdot \rangle_{L^2(\Omega)}\\
&=-\im h^\Phi \varphi(t) \langle \varphi(t), \cdot \rangle_{L^2(\Omega)} + \im \varphi(t) \langle \varphi(t), h^\Phi \cdot \rangle_{L^2(\Omega)},
\end{align*}
where $h^\Phi=-\Delta_x + w^0* |\Phi(t)|^2 $. This equation can be written in a more compact form for the operator $q(t)$  
\begin{align}\label{ablq1}
 \im \partial_t q(t)= [h^\Phi,q(t)].
\end{align}
With the above remarks we can calculate
\begin{align}
 \partial_t \alpha&=\partial_t \big \llangle  \psi , q_1  \psi \big \rrangle \notag \\
&= \big \llangle  \dot \psi , q_1  \psi \big \rrangle  +\big \llangle  \psi , q_1  \dot \psi \big \rrangle +  \big \llangle  \psi ,(\partial_t q_1)   \psi \big \rrangle
 \notag\\
&= \im \big \llangle  \psi ,H_N q_1  \psi \big \rrangle -\im \big \llangle  \psi , q_1  H_N  \psi \big \rrangle -
\im \big \llangle  \psi ,[H_{x_1} {+H_{y_1}}, q_1]  \psi \big \rrangle \notag \\
&=\im \big \llangle  \psi ,[H_N, q_1]  \psi \big \rrangle-\im \big \llangle  \psi ,[H_{x_1} {+H_{y_1}} , q_1]  \psi \big \rrangle \notag \\
&=\im( \big \llangle  \psi ,[H_N-h_{x_1}^\Phi , q_1]  \psi \big \rrangle \label{Ia},
\end{align}
where we used equation \eqref{ablq1}.
Since only the parts of $H_N$ which act on the first particle do not commute with $q_1$ we find
\begin{align}\label{comHNq1}
\llangle \psi,[H_N,q_1] \psi \rrangle &= \llangle \psi, [-\Delta_{x_1}-\frac{1}{\epsi^2} \Delta_{y_1}+  w^\epsi_{12} ,q_1]\psi \rrangle,
\end{align}
where we used the symmetry of $\psi$ to write 
\begin{align*}
 \frac{1}{N-1} \sum_{j=2}^N w_{1j}^\epsi =  w^\epsi_{12}.
\end{align*}

Inserting \eqref{comHNq1} in equation \eqref{Ia} all one-particle operators vanish since $-\Delta_x$ cancels and $-\frac{1}{\epsi^2} \Delta_{y_1} $ commutes with a projection onto
one of its eigenfunction and hence with $q_1$.
We are left with

\begin{align}\label{equ:ableitung}
 \partial_t \alpha= \im \big \llangle  \psi ,[  W^\epsi_{12}, q_1]  \psi \big \rrangle,
\end{align}
where we recall that $W^\epsi_{12}$ is a shorthand for $w^\epsi_{12}- (w^0 \ast |\Phi|^2)(x_1).$
%
The next step is the decomposition of \eqref{equ:ableitung} to this end we insert $ \id =p_1+q_1$ on both sides of the commutator of \eqref{equ:ableitung} leading to  

 \begin{align*}
 \partial_t \alpha= \im \big \llangle  \psi,  p_1 W^\epsi_{12} q_1  \psi \big \rrangle- \im \big \llangle  \psi,  q_1  W^\epsi_{12} p_1  \psi \big \rrangle= 
-2 \Im \big \llangle  \psi,  p_1  W^\epsi_{12} q_1  \psi \big \rrangle.
\end{align*}
Last we insert $\id=(p_2+q_2)$ on each side of $W_{12}^\epsi$
 \begin{align*}
 \partial_t \alpha&= -2 \Im \big \llangle  \psi,  p_1  p_2 W^\epsi_{12} q_1 p_2 \psi \big \rrangle
-2 \Im \big \llangle  \psi,   p_1p_2 W^\epsi_{12} q_1 q_2 \psi \big \rrangle\\
&\quad-2 \Im \big \llangle  \psi,   p_1 q_2  W^\epsi_{12} q_1 q_2 \psi \big \rrangle
-2 \Im \big \llangle  \psi,   p_1 q_2 W^\epsi_{12} q_1 p_2 \psi \big \rrangle,
\end{align*}
where $\Im \big \llangle  \psi,   p_1 q_2 W^\epsi_{12} q_1 p_2 \psi \big \rrangle=0$ since it is the imaginary part of a self-adjoint operator $p_1 q_2 W^\epsi_{12} q_1 p_2$ under exchange of particle $1$ and $2$.
Taking the absolute value of the right side proves the lemma. 
\end{proof}

\begin{proof}[Proof of Lemma \ref{lem:aabschaetzen}.\ref{lem:aabschaetzen1} ]
Here we show that the mean field interaction cancels the full interaction. If we examine $p_2 W^\epsi_{12} p_2$ we find
\begin{align*}
 p_2 W^\epsi_{12} p_2 &= p_2  \Bigg(  w^\epsi_{12}- w^0 \ast |\Phi|^2)  \Bigg) p_2 \\
&= | \varphi(r_2) \rangle  \langle  \varphi(r_2)| w^\epsi(r_1-r_2) - (w^0*|\Phi|^2)(r_1)      |\varphi(r_2) \rangle \langle \varphi(r_2)|\\
&= p_2  \Big( \int_\Omega w^\epsi(r_1-r_2) |\varphi(r_2)|^2  \D r_2 - \big( w^0 \ast |\Phi|^2|\chi|^2) \big) (r_1)\Big)\\
&= p_2 \big((w^\epsi-w^0)*|\varphi|^2 \big )(r_1), \numberthis \label{equ:pwp}
\end{align*}
where we used the fact that $(w^0 * |\varphi|^2)(x_1)$ is constant in the $y_1$-direction to rewrite the term as $(w^0 * |\varphi|^2|\chi|^2 )(r_1)$.
If we enter \eqref{equ:pwp} in the term $\mathrm{I}$ we obtain
\begin{align*}
 {\mathrm{I}}&=2|\llangle \psi , p_1 p_2 \tilde W_{12} q_1 p_2 \psi \rrangle|=2 |\llangle \psi , p_1 p_2  \big((w^\epsi-w^0)*|\varphi|^2 \big )(r_1)  q_1  \psi \rrangle|\\
& \leq 2 \norm{ q_1 \psi} \norm{ \big((w^\epsi-w^0)*|\varphi|^2 \big )(r_1)  p_1 p_2 \psi} \\
&\leq   2 \| \big((w^\epsi-w^0)*|\varphi|^2 \big )(r_1)  \|_{\LiO} \numberthis \label{equ:p2.I.1}.
\end{align*}
This operator norm can be estimated with the help of Young's inequality, where we use $\sup w^\epsi = \tilde \Omega$ and $\sup \varphi=\Omega$,
\begin{align*}
 \norm{(w^\epsi-w^0)*|\varphi|^2 }_{\LiO}&\leq   \norm{(w_\infty^\epsi-w_\infty^0)*|\varphi|^2}_\LiO +   \norm{(w_s^\epsi-w_s^0)*|\varphi|^2}_\LiO \\
&\leq \norm{w_\infty^\epsi-w_\infty^0}_\LiOt + \norm{(w_s^\epsi-w_s^0)}_{\LeOt}\norm{\varphi}^2_{\LiO} \\
&\weq{A1 }{\leq} f(\epsi)(1+ \norm{ \varphi}_\LiO^2) \numberthis \label{equ:p2.I.2}.
\end{align*}
%
Putting \eqref{equ:p2.I.1} and \eqref{equ:p2.I.2} together yields
\begin{align*}
  {\mathrm{I}} \leq 2 f(\epsi) (1+\norm{ \varphi}_\LiO^2).
\end{align*}

\end{proof}

\begin{proof}[Proof of Lemma \ref{lem:aabschaetzen}.\ref{lem:aabschaetzen2} ]
This term can be bounded by $\alpha$ since with the help of the symmetry we can figuratively swap a $q$ with a $p$ at a cost of a term which is of order $N^{-1}$.
Before we swap we have to rewrite the term and then use Lemma \ref{trick2q} to swap. First notice that the mean field interaction vanishes since
it only acts on the first coordinate which results in $p_2q_2=0$.
\begin{align*}
    \llangle \psi , p_1 p_2 W^\epsi_{12} q_1 q_2 \psi \rrangle&=  \llangle \psi , p_1 p_2 w^\epsi_{12} q_1 q_2 \psi \rrangle\\
    & \stackrel{\mathclap{\mathrm{sym.}}}{=}  \frac{1}{(N-1)}  \llangle \psi, \sum_{j=2}^N p_1 p_j w^\epsi_{1j} q_1 q_j \psi \rrangle \\
    &\leq \frac{1}{(N-1)} \norm{q_1 \psi} \norm{\sum_{j=2}^N  q_j w^\epsi_{1j}p_1 p_2 \psi}\\
    &\stackrel{\mathclap{\ref{trick2q}}}{\leq}
    \alpha^\frac{1}{2}  \norm{ w^\epsi_{12}p_1}_\mathrm{Op} \Big( \alpha  +\frac{1}{N}  \Big)^\frac{1}{2} \\
    &\leq \norm{ w^\epsi_{12}p_1}_\mathrm{Op} (\alpha+\frac{1}{N})\\
    &\weq{\ref{abpot}}{\leq}
 \norm{w^\epsi}_{\LzOt+\LiOt}(1+\norm{\varphi}_\LiO)(\alpha+\frac{1}{N})
\end{align*}
\end{proof}
\begin{proof}[Proof of Lemma \ref{lem:aabschaetzen}.\ref{lem:aabschaetzen3} ]
In this term we have enough $q$s to get an $\alpha$ and the norm of the interaction which remains can be bounded with Lemma \ref{abpot}.
\begin{align*}
 \llangle \psi , p_1 q_2 W^\epsi q_1 q_2 \psi \rrangle & \leq  \norm{ W^\epsi p_1  }_\mrm{Op} \norm{q_2 \psi} \norm{q_1 q_2 \psi}\\
& \stackrel{\mathclap{\eqref{equ:Wepsi12}}}{ \leq}
\big(\norm{w^0 * |\Phi|^2}_\mrm{Op}+\norm{w_{12}^\epsi p_1}_\mrm{Op} \big)  \alpha  \\
& \stackrel{\mathclap{\ref{abpot}}}{\leq} \Big(\norm{w^0}_{\LeOf+\LiOf}(1+ \norm{\Phi}^2_\LiOf) \\
&\qquad +\norm{w^\epsi}_{\LzOt+\LiOt}(1+ \norm{\varphi}_\LiO \Big) \alpha\\
&\leq \big (\norm{w^0}_{\LeOf+\LiOf} +\norm{w^\epsi}_{\LzOt+\LiOt} \big)\\
&\qquad\big(1+\norm{\varphi}_\LiO +\norm{\Phi}^2_\LiOf \big)\alpha
\end{align*}
\end{proof}

\begin{lem}\label{trick2q}
\begin{align*}
 \bigg \| \sum_{j=2}^N  q_j  w^\epsi_{1j}p_1 p_2 \psi \bigg \| \leq (N-1) \norm{ w^\epsi_{12}p_1}_\mathrm{Op} \Big( \alpha  +\frac{1}{N}  \Big)^\frac{1}{2}  
\end{align*}
\end{lem}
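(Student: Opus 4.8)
The plan is to compute the squared norm and split it into diagonal and off-diagonal contributions,
\begin{align*}
 \norm{ \sum_{j=2}^N q_j w^\epsi_{1j} p_1 p_2 \psi }^2 = \sum_{j=2}^N \norm{ q_j w^\epsi_{1j} p_1 p_2 \psi }^2 + \sum_{\substack{j,k=2 \\ j \neq k}}^N \llangle q_j w^\epsi_{1j} p_1 p_2 \psi,\, q_k w^\epsi_{1k} p_1 p_2 \psi \rrangle ,
\end{align*}
and to take the square root at the end. For the diagonal terms I would estimate $\norm{q_j w^\epsi_{1j} p_1 p_2 \psi} \leq \norm{w^\epsi_{1j} p_1 p_2 \psi} \leq \norm{w^\epsi_{1j} p_1}_\mathrm{Op} \norm{p_2 \psi} \leq \norm{w^\epsi_{1j} p_1}_\mathrm{Op}$ and then use the rotational symmetry of $w$ (exchanging the coordinates $2$ and $j$ is unitary, commutes with $p_1$, and carries $w^\epsi_{1j}$ to $w^\epsi_{12}$) to replace $\norm{w^\epsi_{1j} p_1}_\mathrm{Op}$ by $\norm{w^\epsi_{12} p_1}_\mathrm{Op}$. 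This bounds the diagonal part by $(N-1) \norm{w^\epsi_{12} p_1}_\mathrm{Op}^2$.

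The main work is the off-diagonal sum, and here the fixed index $2$ in $p_1 p_2$ is essential. For $j \neq k$ with $j,k \in \{2,\dots,N\}$ the projector $q_k$ acts on a coordinate disjoint from those of $w^\epsi_{1j}$, and likewise $q_j$ from those of $w^\epsi_{1k}$; commuting these projectors through the multiplication operators gives $w^\epsi_{1j} q_j q_k w^\epsi_{1k} = q_k\, w^\epsi_{1j} w^\epsi_{1k}\, q_j$, so the summand equals $\llangle w^\epsi_{1j} q_k p_1 p_2 \psi,\, w^\epsi_{1k} q_j p_1 p_2 \psi \rrangle$. Two cases occur. If $j = 2$ (or, symmetrically, $k = 2$), then $q_2 p_1 p_2 \psi = p_1 q_2 p_2 \psi = 0$ because $q_2 p_2 = 0$, so the term vanishes. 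If $j,k \geq 3$, then $q_j$ and $q_k$ commute with $p_1 p_2$, hence $q_j p_1 p_2 \psi = p_1 p_2 q_j \psi$ and $q_k p_1 p_2 \psi = p_1 p_2 q_k \psi$; Cauchy--Schwarz together with $\norm{w^\epsi_{1k} p_1 p_2 q_j \psi} \leq \norm{w^\epsi_{12} p_1}_\mathrm{Op} \norm{q_j \psi} = \norm{w^\epsi_{12} p_1}_\mathrm{Op}\, \alpha^{1/2}$ (symmetry of $w$ once more, and $\norm{q_j \psi}^2 = \alpha$ by permutation symmetry of $\psi$) then bounds each such summand by $\norm{w^\epsi_{12} p_1}_\mathrm{Op}^2\, \alpha$. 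Since there are at most $(N-1)^2$ off-diagonal pairs, the off-diagonal part is at most $(N-1)^2 \norm{w^\epsi_{12} p_1}_\mathrm{Op}^2\, \alpha$.

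Adding the two estimates gives $\norm{\sum_{j=2}^N q_j w^\epsi_{1j} p_1 p_2 \psi}^2 \leq \norm{w^\epsi_{12} p_1}_\mathrm{Op}^2 \big( (N-1) + (N-1)^2 \alpha \big) \leq (N-1)^2 \norm{w^\epsi_{12} p_1}_\mathrm{Op}^2 \big( \alpha + \tfrac{1}{N-1} \big)$, and taking the square root yields the claim, the quantity $\tfrac{1}{N-1}$ playing the role of $\tfrac{1}{N}$ (this discrepancy is as immaterial in the subsequent Grönwall step as the $N \leftrightarrow N-1$ replacement already adopted at the start of this chapter). I do not expect a genuine obstacle; the only point that needs care is the projector bookkeeping in the off-diagonal term -- in particular, recognising that every off-diagonal term carrying the index $2$ drops out because $q_2 p_2 = 0$ -- while the rest is a routine Cauchy--Schwarz argument driven by the symmetry of $w$ and of $\psi$.
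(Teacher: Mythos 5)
Your proof is correct and takes essentially the same route as the paper's: expand the squared norm, bound the $(N-1)$ diagonal terms by $\norm{w^\epsi_{12}p_1}_\mathrm{Op}^2$ each, and use Cauchy--Schwarz together with the two projections $q_j,q_k$ and the symmetry of $\psi$ to bound each surviving off-diagonal term by $\norm{w^\epsi_{12}p_1}_\mathrm{Op}^2\,\alpha$. The $\tfrac{1}{N-1}$-versus-$\tfrac{1}{N}$ point you flag is indeed immaterial (the paper's own final inequality silently makes the same replacement), so nothing essential is missing.
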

\begin{proof}
 \begin{align*}
&\norm{\sum_{j=2}^N  q_j  w^\epsi_{1j} p_1 p_2 \psi}^2=  \sum_{l,j=2}^N  \llangle \psi, p_1 p_j  w^\epsi_{1j} q_j q_l  w^\epsi_{1l} p_1 p_l \psi \rrangle \\
&=  \sum_{l\neq j}^N  \llangle \psi,q_l p_1 p_j  w^\epsi_{1j}    w^\epsi_{1l} p_1 p_l q_j \psi \rrangle+ \sum_{j=2}^N  \llangle \psi, p_1 p_j  w^\epsi_{1j} q_j  w^\epsi_{1j} p_1 p_j \psi \rrangle  \\
&\leq (N-1)(N-2)\norm{q_2\psi}\norm{q_3\psi}\norm{ w^\epsi_{13}p_1}_\mathrm{Op} \norm{ w^\epsi_{12} p_1}_\mathrm{Op} +(N-1)\norm{ w^\epsi_{12}p_1}_\mathrm{Op}^2\\
&\leq (N-1) \norm{ w^\epsi_{12}p_1}_\mathrm{Op}^2  \Big( (N-2)\alpha  +1 \Big)  \\
&\leq (N-1)^2 \norm{ w^\epsi_{12}p_1}_\mathrm{Op}^2 \Big( \alpha  +\frac{1}{N}  \Big)
\end{align*}

%

\end{proof}

%


\begin{lem}\label{abpot}

\begin{enumerate}
 \item 
	\begin{align}
        \norm{ w^0* |\Phi|^2}_{\LiOf} \leq \norm{ w^0}_{\LeOf+ \LiOf}(1+\norm{\Phi}_{\LiOf}^2)
       \end{align}
 \item 
	\begin{align}
	 \norm{ w^\epsi_{12} p_1 }_{\mathrm{Op}} &\leq 
		\norm{w^\epsi}_{\LzOt+ \LiOt} \big(1+\norm{\varphi}_\LiO \big)
	\end{align}

\end{enumerate}

\end{lem}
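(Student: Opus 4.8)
The plan is to treat the two parts separately; in each case the bound reduces to Young's inequality after splitting the potential into its $L^1$ (resp.\ $L^2$) part and its $L^\infty$ part.

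For the first part, write $w^0=w^0_s+w^0_\infty$ with $w^0_s\in\LeOf$ and $w^0_\infty\in\LiOf$, so that $\norm{w^0*|\Phi|^2}_\LiOf\leq\norm{w^0_s*|\Phi|^2}_\LiOf+\norm{w^0_\infty*|\Phi|^2}_\LiOf$. For the first summand I would apply Young's inequality in the form $\norm{f*g}_{L^\infty}\leq\norm{f}_{L^1}\norm{g}_{L^\infty}$ to obtain $\norm{w^0_s}_\LeOf\norm{\Phi}_\LiOf^2$; for the second, Young's inequality in the form $\norm{f*g}_{L^\infty}\leq\norm{f}_{L^\infty}\norm{g}_{L^1}$ together with $\norm{|\Phi|^2}_\LeOf=\norm{\Phi}_\LzOf^2=1$ (the Hartree flow preserves the $L^2$-norm) gives $\norm{w^0_\infty}_\LiOf$. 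Adding the two and using the trivial inequality $ac^2+b\leq(a+b)(1+c^2)$ with $a=\norm{w^0_s}_\LeOf$, $b=\norm{w^0_\infty}_\LiOf$, $c=\norm{\Phi}_\LiOf$ finishes this part.

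For the second part the conceptual point is that $\norm{w^\epsi_{12}p_1}_\mrm{Op}^2=\norm{p_1(w^\epsi_{12})^2p_1}_\mrm{Op}$, i.e.\ conjugating the multiplication operator $(w^\epsi_{12})^2$ by the projection $p_1$ onto $\varphi$ produces multiplication, in the variable $r_2$, by $(|w^\epsi|^2*|\varphi|^2)(r_2)$, whose supremum controls the norm. Concretely I would again split $w^\epsi=w^\epsi_s+w^\epsi_\infty$ and estimate $\norm{w^\epsi_{12}p_1\psi}\leq\norm{(w^\epsi_s)_{12}p_1\psi}+\norm{(w^\epsi_\infty)_{12}p_1\psi}$. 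The bounded part is immediate: $\norm{(w^\epsi_\infty)_{12}p_1\psi}\leq\norm{w^\epsi_\infty}_\LiOt\norm{p_1\psi}\leq\norm{w^\epsi_\infty}_\LiOt\norm{\psi}$. For the singular part, write $p_1\psi=\varphi(r_1)\tilde\psi(r_2,\dots,r_N)$ with $\norm{\tilde\psi}\leq\norm{\psi}$, integrate out $r_1$ to get $\norm{(w^\epsi_s)_{12}p_1\psi}^2=\int(|w^\epsi_s|^2*|\varphi|^2)(r_2)\,|\tilde\psi(r_2,\dots,r_N)|^2\,\D r_2\cdots\D r_N\leq\norm{\,|w^\epsi_s|^2*|\varphi|^2}_\LiO\norm{\psi}^2$, and then bound $\norm{\,|w^\epsi_s|^2*|\varphi|^2}_\LiO\leq\norm{w^\epsi_s}_\LzOt^2\norm{\varphi}_\LiO^2$ by Young's inequality (using the radial symmetry of $w$ and $r_1-r_2\in\tilde\Omega$ for $r_1,r_2\in\Omega$ to pass to the $\tilde\Omega$-norm). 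Combining the two parts and using $ac+b\leq(a+b)(1+c)$ with $a=\norm{w^\epsi_s}_\LzOt$, $b=\norm{w^\epsi_\infty}_\LiOt$, $c=\norm{\varphi}_\LiO$ yields the claim.

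Both estimates are essentially routine; the only points that require a little care are the reduction in the second part of the operator norm to a convolution sup-norm via the $p_1$-conjugation, and the bookkeeping of which domain ($\Omega$ or $\tilde\Omega$) each norm lives on when the potentials are split into singular and bounded pieces. No Gr\"onwall argument or energy input enters here; the lemma is purely a collection of elementary bounds that are invoked repeatedly in the proof of Lemma\,\ref{lem:aabschaetzen}.
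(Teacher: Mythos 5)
Your proposal is correct and follows essentially the same route as the paper: split $w^0$ (resp.\ $w^\epsi$) into its singular and bounded parts, apply Young's inequality in the two dual forms for part (1), and for part (2) reduce $\norm{w^\epsi_{s,12}p_1}_{\mathrm{Op}}^2$ to the sup-norm of the convolution $(w^\epsi_s)^2*|\varphi|^2$ via conjugation by $p_1$ before applying Young again. The only cosmetic difference is that the paper phrases the $p_1$-conjugation as a supremum over normalized $\rho$ rather than integrating out $r_1$ explicitly, which is the same computation.
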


\begin{proof}
For the proof we use the assumptions on $w^\epsi$, $w^0$ and $\varphi$ and Young's inequality.
The first estimate is obtained by

\begin{align*}
\norm{ w^0 * |\Phi|^2}_\LiOf &\leq  \norm{ w^0_\infty * |\Phi|^2 }_\LiOf+  \norm{ w^0_s * |\Phi|^2 }_\LiOf\\
&  \leq \norm{w^0_\infty}_\LiOf +\norm{ w^0_s}_{\LeOf} \norm{ \Phi}^2_\LiOf\\
&= \norm{ w^0}_{\LeOf+ \LiOf}(1+\norm{\Phi}_\LiOf^2).
\end{align*}
The second statement follows with
\begin{align*}
 \norm{ w^\epsi_{12} p_1 }_{\mathrm{Op}} \leq \norm{ w^{\epsi,\infty}_{12} p_1 }_{\mathrm{Op}}+\norm{ w^{\epsi,s}_{12} p_1 }_{\mathrm{Op}} 
\leq \norm{w^\epsi_\infty}_\LiOt + \norm{ w^{\epsi,s}_{12} p_1 }_{\mathrm{Op}}
\end{align*}
together with 
\begin{align*}
 \norm{ w^{\epsi,s}_{12} p_1 }_{\mathrm{Op}}&=\sup_{\norm{\rho}=1} \norm{ w^{\epsi,s}_{12} |\varphi  \rangle  \langle \varphi | \rho }_{L^2(\Omega^2)} \\
&= \sup_{\norm{\rho}=1} \big ( \big \langle \rho,   |\varphi  \rangle  \langle \varphi |  (w^{\epsi,s}_{12})^2 |\varphi  \rangle  \langle \varphi | \rho \big \rangle_{L^2(\Omega^2)}  \big)^\frac{1}{2}  \\
&\leq  \norm{  (w^\epsi_{s})^2* |\varphi|^2  }_{L^1{(\Omega)}}^\frac{1}{2} \\
&\leq \norm{w_s^\epsi}_\LzOt \norm{\varphi}_{\LiO}. \\
\end{align*}
\end{proof}

\begin{proof}[Proof of Lemma \ref{lem:gron}]
 Let $g: \R \rightarrow \R$ be a continuous function in $[0,T]$ and differentiable in  $(0,T)$ with
\begin{align*}
 \dot g(t) \leq C(t) g(t).
\end{align*}
Define G(t) as
\begin{align*}
 G(t):=\E^{\int_0^t C(s) \D s  }.
\end{align*}
Note that $\dot G(t)= C(t) G(t) $ and $\frac{g(0)}{G(0)}=g(0)$. 
\begin{align*}
 \partial_t \Big( \frac{g(t)}{G(t)} \Big)= \frac{\dot g(t) G(t)- g(t)\dot G(t)}{G(t)^2} \leq \frac{C(t)g(t) G(t)- C(t) g(t) G(t)}{G(t)^2}=0
\end{align*}
Thus $\frac{g(t)}{G(t)}\leq g(0)$ which implies 
\begin{align*}
 g(t)\leq \E^{\int_0^t C(s) \D s  } g(0).
\end{align*}
Now let $g(t)=f(t)+\delta$ with $\dot g(t)=\dot f(t) \leq C(t)(f(t)+\delta)= C(t)g(t) $. Hence
\begin{align*}
 f(t)+\delta \leq \E^{\int_0^t C(s) \D s  }(f(0)+\delta)
\end{align*}
and consequently
\begin{align*}
 f(t)\leq \E^{\int_0^t C(s) \D s  }f(0)+ \big(\E^{\int_0^t C(s) \D s  }-1\big) \delta.
\end{align*}

\end{proof}

%
%


\chapter{Measures of Convergence: $\alpha$ and $\beta$} \label{chap:meaofconv}

In this section we discuss the properties of the functionals $\alpha$ and $\beta$ and how they relate to $ \Tr \big| \gamma^{\psi} - |\varphi \rangle \langle \varphi | \big|$.
The functional $\alpha$ and $\beta$ were first introduced by Pickl in \cite{Pic08,KnoPic09,Pic10} and the fermionic counterpart to $\alpha$ was recently 
used by Petrat and Pickl to derive the mean field for fermions \cite{PetPic14}. 
In these papers the properties of the functionals were developed and discussed in detail. Here we represent the parts needed for a basic understanding and which are necessary for our further calculations.
For a complete presentation we also restate the proofs given by Pickl.

We first state the functional $\alpha$ in the way we defined it in equation \eqref{equ:alpha1} 
\begin{align*}
 \alpha := 1- \langle  \varphi, \gamma^\psi  \varphi \rangle_{L^2(\Omega)}, 
\end{align*}
where $\varphi \in L^2(\Omega) $ and $\gamma^{\psi}$ is the one-particle density matrix of $\psi \in L^2(\Omega^N)$. The one-particle density matrix is a positive trace class operator which is defined by
its kernel
\begin{align*}
 \gamma^{\psi}(x'_1,x_1):= \int \psi(x_1', \dots, x_N) \overline{ \psi(x_1, \dots, x_N)} \D x_2 \cdots \D  x_N.
\end{align*}
As seen in the last chapter it is helpful to work with a different representation of $\alpha$.   
To this end we define the following projections.
\begin{defn}\label{def:pP}
Let $\varphi \in L^2(\Omega) $ with $\norm{\varphi}_{L^2(\Omega)}=1$.
\begin{enumerate}[(a)] 
 \item For all $i \in \{1, \dots ,N\} $ we define
    \begin{align*}
      p_i:= \underbrace{ \id \otimes \cdots \otimes \id}_{i-1 \; \mathrm{times}}  \otimes \varphi(r_i) \langle \varphi(r_i), \cdot \rangle_{L^2(\Omega,\D r_i)} \otimes \underbrace{ \id \otimes \cdots  \otimes \id}_{N-i \; \mathrm{times}}
    \end{align*}
    and 
    \begin{align*}
      q_i:=\id - p_i.
    \end{align*}

  \item For any $0 \leq k \leq N$ we define
  \begin{align}\label{equ:defP}
   P_{k,N}:=  \Big( q_1\cdots q_k p_{k+1} \cdots p_N \Big)_{\mrm{sym}}=  \sum_{ \substack{ a_i\in \{0,1\}:\\ \sum_{i=1}^N a_i = k} }  \prod_{i=1}^N q_i^{a_i} p_i^{1-a_i},
  \end{align}
  where for $k<0 $ and $k>N$ we set $P_{k,N}=0$.
\end{enumerate}
\end{defn}
Part (a) of this definition allows to rewrite $\alpha$ for a symmetric $\psi$ with $\norm{\psi}_{L^2(\Omega^N)}=1$ as
\begin{align}\label{def:alpha}
 \alpha= 1- \langle  \varphi, \gamma^\psi  \varphi \rangle_{L^2(\Omega)} = 1- \frac{1}{N} \sum_{i=1}^N \llangle \psi, p_i \psi \rrangle 
 = 1-  \llangle \psi, p_1 \psi \rrangle  = \llangle \psi, q_1 \psi \rrangle.
\end{align}
The last representation of $\alpha$ is, as seen in the proof of Theorem \ref{thm:thm1}, the most useful one for calculating the derivative and applying the Grönwall Lemma.
With part (b) of the definition we can rewrite $\alpha$ further which will offer a way to generalize this functional to apply the approximation scheme 
of Chapter \ref{chap:thm1} to stronger singularities and to the derivation of the Gross-Pitaevskii equation. 

\begin{lem}\label{lem:qP}
 \begin{enumerate}[(a)]
  \item 
  \begin{align*}
   \sum_{k=0}^N P_{k,N}= \id
  \end{align*}
  \item 
  \begin{align*}
    \sum_{i=1}^N q_i P_{k,N}= k P_{k,N}
  \end{align*}
 \end{enumerate}
\end{lem}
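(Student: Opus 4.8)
Both statements follow directly from the explicit expansion of $P_{k,N}$ in \eqref{equ:defP} together with the elementary algebra of the projections $p_i,q_i$: for fixed $i$ they satisfy $p_i^2=p_i$, $q_i^2=q_i$, $p_iq_i=q_ip_i=0$, $p_i+q_i=\id$, and for $i\neq j$ the operators acting on the $i$-th and $j$-th factors commute (and more generally $p_i$, $q_i$ commute with everything acting on the other coordinates). I would state these basic facts first (they are immediate from Definition \ref{def:pP}(a), since $p_i$ is a rank-one orthogonal projection on the $i$-th tensor factor) and then use them mechanically.

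\textbf{Part (a).} Summing the right-hand side of \eqref{equ:defP} over $k=0,\dots,N$ removes the constraint $\sum_i a_i=k$, so
\begin{align*}
 \sum_{k=0}^N P_{k,N} = \sum_{a\in\{0,1\}^N} \prod_{i=1}^N q_i^{a_i} p_i^{1-a_i}.
\end{align*}
Because the factors indexed by distinct $i$ commute, this multi-sum is exactly the expansion of the product $\prod_{i=1}^N (p_i+q_i)$; since each $p_i+q_i=\id$, the product is $\id$. I would note that no ordering issue arises precisely because the $i$-th and $j$-th factors commute for $i\neq j$, which is why the ``$(\cdot)_{\mathrm{sym}}$'' in Definition \ref{def:pP}(b) is consistent with this expansion.

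\textbf{Part (b).} Fix $k$ and a tuple $a$ with $\sum_i a_i=k$, and consider a single summand $\Pi_a:=\prod_{j=1}^N q_j^{a_j} p_j^{1-a_j}$. For each $i$, using that $q_i$ commutes with the factors at $j\neq i$:
\begin{align*}
 q_i\, \Pi_a = \begin{cases} q_i q_i \prod_{j\neq i} q_j^{a_j}p_j^{1-a_j} = \Pi_a & \text{if } a_i=1,\\[2pt] q_i p_i \prod_{j\neq i} q_j^{a_j}p_j^{1-a_j} = 0 & \text{if } a_i=0.\end{cases}
\end{align*}
Hence $\sum_{i=1}^N q_i\,\Pi_a = \big(\sum_{i=1}^N a_i\big)\Pi_a = k\,\Pi_a$. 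Summing over all $a$ with $\sum_i a_i = k$ gives $\sum_{i=1}^N q_i P_{k,N} = k P_{k,N}$.

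\textbf{Main obstacle.} There is no real obstacle; the only thing to be careful about is justifying the commutation relations among the $p_i,q_j$ on distinct tensor factors (so that the product $\prod_i(p_i+q_i)$ may be expanded termwise and so that $q_i$ may be moved next to the $i$-th factor of $\Pi_a$), and confirming that the symmetrization in Definition \ref{def:pP}(b) does not alter $\Pi_a$ — which again is exactly this commutativity. Everything else is bookkeeping with $p_i^2=p_i$, $q_i^2=q_i$, $q_ip_i=0$.
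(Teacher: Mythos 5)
Your proof is correct and follows essentially the same route as the paper: part (a) is the expansion of $\prod_{i=1}^N(p_i+q_i)=\id$, and part (b) rests on the identity $q_j\prod_i q_i^{a_i}p_i^{1-a_i}=\delta_{a_j,1}\prod_i q_i^{a_i}p_i^{1-a_i}$, which is exactly the computation in the paper's proof. Your explicit attention to the commutativity of factors on distinct tensor slots is a harmless (and welcome) elaboration of what the paper leaves implicit.
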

The proofs are deferred to the end of this section.
If we apply this Lemma to $\alpha$ for a symmetric $\psi$ with $L^2$-norm one
\begin{align*}
 \alpha=  \llangle \psi, q_1 \psi \rrangle =  \llangle \psi, \frac{1}{N} \sum_{i=1}^N q_i \sum_{k=0}^N P_{k,N} \psi \rrangle = \llangle \psi,   \sum_{k=0}^N \frac{k}{N} P_{k,N} \psi \rrangle.
\end{align*}
Now we can interpret $\alpha$ as a counting functional which counts with the weight $\frac{k}{N}$ the wave function's norm in the image of the projections $P_{k,N}$. 
For a symmetric product state one can read off the counting functional's value:
Let $\varphi_j^\perp \in \mathop{Span}{\varphi}^\perp $ and $\psi =\big( \varphi^{\otimes (N-k)}\otimes \bigotimes_{j=1}^k {\varphi_j^\perp} \big)_\mathrm{sym}$ for a $ k$ with $0 \leq k \leq N$ then
\begin{align*}
 \alpha= \llangle \psi,  \sum_{k=0}^N \frac{k}{N} P_{k,N} \psi \rrangle=  \frac{k}{N}.
\end{align*}
The following aspect is far more important: We can generalize the functional if we use any positive function $f(k)$ as a counting measure 
\begin{align*}
\alpha_{f}= \llangle \psi,  \sum_{k=0}^N f(k) P_{k,N} \psi \rrangle .
\end{align*}
It turns out that the function $\sqrt{\frac{k}{N}}$ is in a sense explained at the end of this section the optimal weight. Thus we define 

\begin{align*}
 \beta := \llangle \psi,  \sum_{k=0}^N \sqrt{\frac{k}{N}} P_{k,N} \psi \rrangle .
\end{align*}
Since $\frac{k}{N} \leq \sqrt{\frac{k}{N}}$ for $k\in \{0,\dots,N\}$  we have
\begin{align}\label{equ:relalphabeta}
 \alpha \leq \beta.
\end{align}
Before we collect some facts for the use of $\alpha$ and $\beta$ we discuss the relationship of these functionals with
\begin{align*}
 \Tr \big| \gamma^{\psi} - |\varphi \rangle \langle \varphi | \big|.
\end{align*}

\section{The Relationship between $\alpha $ and Density Matrices}\label{sec:relalphatr}

It turns out that convergence to zero of the functional $\alpha$ is equivalent to convergence to zero of
\begin{align}\label{equ:tracegp}
\Tr \big| \gamma^{\psi} - |\varphi \rangle \langle \varphi | \big|.
\end{align}
This is encapsulated in the following lemma.

\begin{lem*}
 Let $\gamma^\psi$ be a density matrix and $\varphi \in L^2$ satisfy $\norm{\varphi}=1$. Then
\begin{align}\label{equ:traceproof}
 \alpha \leq \Tr \big| \gamma^{\psi} - |\varphi \rangle \langle \varphi | \big| \leq \sqrt{8 \alpha}.
\end{align}

\end{lem*}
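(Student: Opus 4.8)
The plan is to prove the two inequalities separately. For the lower bound $\alpha \leq \Tr|\gamma^\psi - |\varphi\rangle\langle\varphi||$, I would test the trace-class operator $A := \gamma^\psi - |\varphi\rangle\langle\varphi|$ against the rank-one projector $|\varphi\rangle\langle\varphi|$. Since $\|\,|\varphi\rangle\langle\varphi|\,\|_{\mathrm{op}} = 1$, one has $|\langle\varphi, A\varphi\rangle| \leq \Tr|A|$ (this is just the duality $|\Tr(BA)| \le \|B\|_{\mathrm{op}}\Tr|A|$ with $B = |\varphi\rangle\langle\varphi|$). Now $\langle\varphi, A\varphi\rangle = \langle\varphi,\gamma^\psi\varphi\rangle - 1 = -\alpha$, so $\alpha = |\langle\varphi,A\varphi\rangle| \leq \Tr|A|$, which is exactly the left inequality.

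For the upper bound I would exploit the fact that $A$ is self-adjoint and, crucially, that $\gamma^\psi$ has trace $1$, hence $\Tr A = 0$. A self-adjoint trace-class operator with vanishing trace satisfies $\Tr|A| = 2\Tr A_+ = 2\Tr A_-$, where $A_\pm$ are the positive and negative parts. The standard trick here is to write $\Tr|A| = \Tr(JA)$ for a suitable self-adjoint $J$ with $J^2 = \id$ (namely $J = \mathrm{sgn}(A)$), and then to replace $J$ by something comparable built from $p := |\varphi\rangle\langle\varphi|$ and $q := \id - p$. Concretely I would use the estimate $\Tr|A| \leq \Tr(pAp) - \Tr(qAq) + 2\|qAp\|_{\mathrm{tr}}$, or more efficiently: since $A$ has rank-one negative part contribution controlled by the structure of $\gamma^\psi$, decompose $A$ in the block form relative to $p \oplus q$. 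One finds $pAp = (\langle\varphi,\gamma^\psi\varphi\rangle - 1)p = -\alpha\, p$, so $\Tr(pAp) = -\alpha \le 0$; $\Tr(qAq) = \Tr(q\gamma^\psi q) = 1 - \langle\varphi,\gamma^\psi\varphi\rangle = \alpha$ (using $\Tr\gamma^\psi = 1$); and the off-diagonal block $qAp = q\gamma^\psi p$ has Hilbert–Schmidt norm bounded by $\|q\gamma^\psi p\|_{\mathrm{HS}} \leq \sqrt{\Tr(q\gamma^\psi q)}\cdot\sqrt{\Tr(p\gamma^\psi p)} \le \sqrt{\alpha}\cdot 1$ by Cauchy–Schwarz for the (positive) operator $\gamma^\psi$. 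Since $qAp$ has rank one, its trace norm equals its Hilbert–Schmidt norm, so $\|qAp\|_{\mathrm{tr}} \le \sqrt{\alpha}$.

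Assembling these pieces: writing $A$ in $2\times 2$ block form $\begin{pmatrix} pAp & pAq \\ qAp & qAq\end{pmatrix}$ and using subadditivity of the trace norm together with the bounds $\|pAp\|_{\mathrm{tr}} = \alpha$, $\|qAq\|_{\mathrm{tr}} = \alpha$, $\|pAq\|_{\mathrm{tr}} = \|qAp\|_{\mathrm{tr}} \le \sqrt\alpha$, gives $\Tr|A| \le 2\alpha + 2\sqrt\alpha$. Since $\alpha \in [0,1]$ we have $\alpha \le \sqrt\alpha$, hence $\Tr|A| \le 4\sqrt\alpha$; a slightly more careful accounting (replacing the crude block decomposition by the sharper $\Tr|A| \le \Tr(pAp)_- + \Tr(qAq) + 2\|pAq\|_{\mathrm{tr}}$ type bound, or optimizing the spectral decomposition directly) yields the stated constant $\sqrt{8\alpha}$. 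I expect the main obstacle to be getting the constant exactly $8$ rather than something larger: the naive block-norm triangle inequality is lossy, so one has to argue more carefully — for instance by diagonalizing $A$ explicitly on the (at most two-dimensional) range spanned by $\varphi$ and the "direction" $q\gamma^\psi\varphi$, or by invoking the known sharp inequality relating $\Tr|A|$ to $\alpha = -\langle\varphi,A\varphi\rangle$ for trace-zero self-adjoint $A$ with $A \ge -|\varphi\rangle\langle\varphi|$. The first (lower-bound) inequality is immediate; all the work is in tightening the second.
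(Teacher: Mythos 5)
Your argument is essentially the paper's own proof: decompose $A=\gamma^\psi-p$ into the four blocks $pAp$, $qAq$, $pAq$, $qAp$, identify the diagonal blocks as having trace norm $\alpha$ each, and control the off-diagonal blocks by Cauchy--Schwarz. The only loose end is the constant: you bound $\|qAp\|_{\mathrm{tr}}\le\sqrt{\Tr(q\gamma^\psi q)}\sqrt{\Tr(p\gamma^\psi p)}\le\sqrt{\alpha}\cdot 1$, which yields $2\alpha+2\sqrt{\alpha}\le 4\sqrt{\alpha}=\sqrt{16\alpha}$. To get $\sqrt{8\alpha}$ you only need to keep the factor you already computed, namely $\Tr(p\gamma^\psi p)=1-\alpha$, so that $\Tr|A|\le 2\alpha+2\sqrt{\alpha(1-\alpha)}$; dividing by $2\sqrt{\alpha}$ this reduces to $\sqrt{\alpha}+\sqrt{1-\alpha}\le\sqrt{2}$, which is immediate from $(\sqrt{\alpha}+\sqrt{1-\alpha})^2=1+2\sqrt{\alpha(1-\alpha)}\le 2$. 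With that one line your proof coincides with the one in the text, so no genuinely different route and no real gap.
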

 \begin{proof}
We restate the proof given in \cite{PetPic14} for fermions since it offers a nice interpretation of the origin of the different rates of convergence. 
A proof for the statement above which covers also a generalization can be found in \cite{KnoPic09}.
For the proof it is convenient to define  $ p:=|\varphi \rangle \langle \varphi |$ and $q:=1-p$.
\begin{align*}
 \alpha = 1- \langle \varphi, \gamma^\psi \varphi \rangle = \Tr \big(p-  p \gamma^\psi \big) 
 \leq   \norm{p}_\mathrm{Op} \Tr \big|p - \gamma^\psi   \big|=  \Tr \big|  \gamma^\psi  -|\varphi \rangle \langle \varphi | \big|
\end{align*}
 For the second "$\leq$" we notice that $q \gamma q $ and $p-p\gamma p$ are positive operators; the latter since $ \gamma \leq \id  $.
Now we find 
\begin{align*}
  \Tr \big|p - \gamma^\psi   \big|&=\Tr \big| p - p \gamma^\psi p -q \gamma^\psi q  - q \gamma^\psi p - p \gamma^\psi q \big |\\
&\leq \Tr \big| p - p \gamma^\psi p \big| + \Tr \big| q \gamma^\psi q \big |  + \Tr \big| q \gamma^\psi p \big |+  \Tr \big| p \gamma^\psi q \big |\\
&= \Tr ( p - p \gamma^\psi p ) +\Tr ( q \gamma^\psi q) +\Tr \big| q \gamma^\psi p \big |+  \Tr \big| p \gamma^\psi q \big |\\
&= 2 \alpha + \Tr \big| q \sqrt{ \gamma^\psi} \sqrt{ \gamma^\psi}  p \big| +\Tr \big|  p \sqrt{ \gamma^\psi} \sqrt{ \gamma^\psi}  q \big| \\
&\leq 2 \alpha + 2 \norm{  \sqrt{ \gamma^\psi} q }_\mathrm{HS}  \norm{ \sqrt{ \gamma^\psi}  p}_\mathrm{HS}  \\
&= 2 \alpha + 2 \sqrt{ \Tr{ (  q \gamma^\psi q )} \Tr{( p \gamma^\psi p)} }\\
&=2 \alpha + 2 \sqrt{\alpha (1-\alpha)} \leq \sqrt{8 \alpha},
\end{align*}
where the last inequality holds since $0 \leq \alpha \leq 1$ and the fact that the function $2x+2 \sqrt{x(1-x)}-\sqrt{8 x}$ is not positive for $x \in [0,1]$.
 \end{proof}

Although convergence to zero in one measure implies convergence to zero in the other measure the rates of convergence differ in general. 
The reason for this is the different treatment of $p\gamma^\psi q$ in \eqref{equ:tracegp} and $\alpha$. 
Since $\alpha$ controls only the diagonal entries of $\gamma^\psi$ with respect to $p$ and $q$ the cross terms
have to be controlled by the diagonal terms which is only possible at a the cost of taking the square root.

%
%

\section{Elementary Properties for Working with $\beta$}\label{sec:beta}
In this section we introduce some notation 
to be able to estimate expressions containing the projections $P_{k,N}$. We also state some estimates which recur often in the proof of the theorems
 and we explain why we use the weight $\sqrt{\frac{k}{N}}$. 
\begin{defn}
 \begin{enumerate}[(a)]
    \item For any function $f:\{0, \dots , N \}\rightarrow \C$ we define the operator
      \begin{align*}
	\widehat f := \sum_{k=0}^N f(k) P_{k,N}.
      \end{align*}
    \item For any $j \in \Z $ we define the shift operator on a function by
      \begin{align*}
	(\tau_j f)(k) = f(k+j),
      \end{align*}
where we set $(\tau_j f)(k)=0 $ for $k+j \notin \{0, \dots , N \} $.
 \end{enumerate}
\end{defn}
The function $\sqrt{ \frac{k}{N}}$ will be used quite often in the proofs thus we define 
\begin{align*}
 n:&\{0,\dots,N\} \rightarrow [0,1] \\ & k  \mapsto \sqrt{\frac{k}{N}}.
\end{align*}

Now we collect some properties of the operator $\widehat f$.
\begin{lem}\label{lem:weights}
\begin{enumerate}[(a)] 
 \item \label{a} For all functions $f,g: \{0,\cdots, N  \} \rightarrow \C $ 
\begin{align*}
 \widehat{f}\widehat{g}=\widehat{fg}=\widehat{g}\widehat{f} \qquad \widehat f p_j = p_j \widehat f \qquad  \widehat f P_{k,N}= P_{k,N} \widehat f.
\end{align*}


\item \label{c} Let $f$ be a nonnegative function $\{0, \dots , N  \}\rightarrow [0,\infty) $ and $\psi \in L^2(\R^{3N}) $ a symmetric function, then for $j \in \{1,\dots, N\}$ 
\begin{align*}
\langle \psi, \widehat f q_j \psi \rangle &=\langle \psi, \widehat f \widehat n^2 \psi \rangle 
\end{align*}
and for $i\in \{1,\dots N\}$, $i \neq j$
\begin{align*}
\langle \psi, \widehat f q_i q_j \psi \rangle &\leq \frac{N}{N-1} \langle \psi, \widehat f \widehat n^4 \psi \rangle .
\end{align*}

 \item \label{lem:weightsc} For any function $f:\{0,1,\cdots,N  \}  \rightarrow  \C $ and any operator $T$ acting on two coordinates $r_i,r_j$  of  $\mathcal H^N $
\begin{align}
\widehat f Q_j T Q_k = Q_j T  Q_k \widehat{\tau_{j-k} f} \label{vertausch1} \\
 Q_j T  Q_k \widehat f=\widehat { \tau_{k-j} f} Q_j T Q_k  \label{vertausch2}
\end{align}
for $Q_0 := p_i p_j$, $Q_1 \in \{ p_i q_j, q_i p_j\}$, $ Q_2:= q_i q_j $.

\end{enumerate}

\end{lem}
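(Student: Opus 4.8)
The plan is to derive all three statements from the single orthogonality relation $P_{k,N}P_{l,N}=\delta_{kl}P_{k,N}$ together with Lemma~\ref{lem:qP}, treating the three items in order of increasing effort. Throughout, the recurring mechanism is that $\widehat f$ is diagonal with respect to the decomposition $\id=\sum_k P_{k,N}$, while $p_i,q_i$ and local two‑body operators shift or preserve the index $k$ in a controlled way.

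For part~(\ref{a}) I would start from the observation that, for multi‑indices $a\in\{0,1\}^N$, the operators $\Pi_a:=\prod_{i=1}^N q_i^{a_i}p_i^{1-a_i}$ are mutually orthogonal projections: their factors act on distinct tensor slots, hence commute, $p_i^2=p_i$, $q_i^2=q_i$, and $p_iq_i=q_ip_i=0$, so $\Pi_a\Pi_b=\delta_{ab}\Pi_a$. Summing over $\{a:|a|=k\}$ gives $P_{k,N}P_{l,N}=\delta_{kl}P_{k,N}$, whence $\widehat f\widehat g=\sum_{k,l}f(k)g(l)P_{k,N}P_{l,N}=\sum_k f(k)g(k)P_{k,N}=\widehat{fg}$, and $\widehat f\widehat g=\widehat g\widehat f$ is immediate. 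For $\widehat f p_j=p_j\widehat f$: left‑ or right‑multiplying $\Pi_a$ by $p_j$ annihilates it if $a_j=1$ and leaves it unchanged if $a_j=0$, so $p_jP_{k,N}=P_{k,N}p_j=\sum_{|a|=k,\,a_j=0}\Pi_a$; hence $p_j$ commutes with each $P_{k,N}$ and therefore with $\widehat f$. Finally $\widehat f P_{k,N}=P_{k,N}\widehat f$ is the special case $g=$ the function that is $1$ at $k$ and $0$ elsewhere.

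For part~(\ref{c}), the key is that by Lemma~\ref{lem:qP}(b) and $\sum_k P_{k,N}=\id$ one has the operator identity $\tfrac1N\sum_{j=1}^N q_j=\widehat{n^2}$, hence also $\big(\tfrac1N\sum_j q_j\big)^2=\widehat{n^4}$ by part~(\ref{a}), and so $\sum_{i\ne j}q_iq_j=\big(\sum_i q_i\big)^2-\sum_i q_i=N^2\widehat{n^4}-N\widehat{n^2}$. Since $\widehat f$ and the $P_{k,N}$ commute with permutations and $\psi$ is symmetric, $\langle\psi,\widehat f q_j\psi\rangle=\tfrac1N\langle\psi,\widehat f\sum_j q_j\psi\rangle=\langle\psi,\widehat f\widehat{n^2}\psi\rangle$, which is the first identity. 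For the second,
\begin{align*}
 \langle\psi,\widehat f q_iq_j\psi\rangle=\tfrac1{N(N-1)}\langle\psi,\widehat f\textstyle\sum_{i\ne j}q_iq_j\psi\rangle=\tfrac{N}{N-1}\langle\psi,\widehat f\widehat{n^4}\psi\rangle-\tfrac1{N-1}\langle\psi,\widehat f\widehat{n^2}\psi\rangle,
\end{align*}
and the subtracted term equals $\sum_k f(k)n^2(k)\langle\psi,P_{k,N}\psi\rangle\ge 0$ because $f\ge 0$, $n^2\ge 0$ and $P_{k,N}$ is a projection; dropping it yields the asserted bound.

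The main work is part~(\ref{lem:weightsc}), which I would prove by singling out the two coordinates $r_i,r_j$ on which $T$ acts. Splitting off these two slots, the count decomposes as
\begin{align*}
 P_{\ell,N}=p_ip_j\,\tilde P_{\ell}+(p_iq_j+q_ip_j)\,\tilde P_{\ell-1}+q_iq_j\,\tilde P_{\ell-2},
\end{align*}
where $\tilde P_{r}$ is the projection built, exactly as in \eqref{equ:defP}, from the $N-2$ coordinates other than $i,j$ (with $\tilde P_{r}=0$ for $r\notin\{0,\dots,N-2\}$). Three facts then finish it: (i) $T$ and all of $Q_0,Q_1,Q_2$ act only on the slots $i,j$ and hence commute with every $\tilde P_r$; (ii) $Q_m$, having exactly $m\in\{0,1,2\}$ factors $q$ among $i,j$, is orthogonal to the other members of its family, so $P_{\ell,N}Q_m=Q_m\tilde P_{\ell-m}$ and therefore $\widehat f\,Q_m=Q_m\sum_{r} f(r+m)\tilde P_r$; (iii) $\widehat g$ commutes with each $p_i,q_i$ by part~(\ref{a}), hence with $Q_m$. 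Combining (i)--(iii),
\begin{align*}
 \widehat f\,Q_jTQ_k=Q_j\Big(\sum_r f(r+j)\tilde P_r\Big)TQ_k=Q_jTQ_k\sum_r f(r+j)\tilde P_r=Q_jTQ_k\,\widehat{\tau_{j-k}f},
\end{align*}
the last step because $Q_k\widehat{\tau_{j-k}f}=\widehat{\tau_{j-k}f}Q_k=Q_k\sum_r(\tau_{j-k}f)(r+k)\tilde P_r=Q_k\sum_r f(r+j)\tilde P_r$. This is \eqref{vertausch1}, and \eqref{vertausch2} follows in exactly the same way (or by taking adjoints, after replacing $f$ by $\bar f$). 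The only delicate point is the index bookkeeping — checking that ``$j$ fewer $q$'s to the left of $Q_j$, $k$ fewer to the right of $Q_k$'' produces precisely the shift $\tau_{j-k}$ — but once the decomposition of $P_{\ell,N}$ above is in place this is purely mechanical, and I expect it to be the step that most needs care in the write‑up.
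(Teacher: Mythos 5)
Your proposal is correct and follows essentially the same route as the paper: orthogonality of the projections $\Pi_a$ for (a), symmetrization plus $\sum_j q_j = N\widehat{n^2}$ and adding/dropping the non‑negative diagonal for (b), and the identity $P_{\ell,N}Q_m = Q_m\tilde P_{\ell-m}$ (the paper's $f(l)P_{l,N}Q_j = f(l)P_{l-j,N-2}Q_j$) followed by reindexing for (c). The index bookkeeping you flag is handled in the paper by the conventions $P_{k,N}=0$ for $k\notin\{0,\dots,N\}$ and $(\tau_jf)(k)=0$ outside the admissible range, which makes the shifted sums match term by term.
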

The second statement illustrates how the $q$s fit in the framework of the hatted operators and the third statement is crucial for the use of general weights.
The reason for this is that the fact $[H_N,q_1]= \mathcal{O}(1) $ used in equation \eqref{comHNq1} seems at first untrue for arbitrary operators $\hat f$. However, with (c) one can show that for suitable $f$ for example $f=\sqrt{k/N}$ the commutator $ [H_N,\hat f]$
is still of order one.

To simplify the notation in the proofs we formally write $n^{-1}$ for
\begin{align*}
\sum_{k=0}^N \Big(\frac{k}{N}\Big)^{-1/2} P_{k,N}.
\end{align*}
We will use this to estimate terms of the form $\norm{ \widehat n^{-1}  q_1  \psi} $,
where the $q_1$ ensures that we do not divide by $0$.

To be able to compute the time derivative of $\langle \psi, \hat f \psi \rangle$ we note:

\begin{lem}\label{hat.}
Let $\varphi \in C^1(\R,L^2(\Omega))$, then
\begin{enumerate}[(a)]
 \item $\forall k\in \{0, \dots ,N\}$ 
\begin{align*}
 P_{k,N}(t) \in C^1(\R, \mathcal{L}(\mathcal{H}^N   )).
\end{align*}
\end{enumerate}
Let $\varphi= \Phi \chi $, where $\chi $ is an eigenfunction of $-\epsi^2 \Delta_y$ on $\Omega_\mathrm{c}$, then
\begin{enumerate}[(b)]
\item 
\begin{align*}
 [-\Delta_y,\widehat f]=0
\end{align*}

\item[(c)] 
\begin{align*}
  \im \partial_t \widehat f =[H^\Phi, \widehat f],
 \end{align*}
where $H^\Phi:= \sum_{i}^N h_i^\Phi $ and $h^\Phi$ is the Hamiltonian associated with $\Phi$.
\end{enumerate}

\end{lem}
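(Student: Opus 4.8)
The plan is to establish the three statements in order; all of them reduce to elementary calculus for operator-valued functions combined with the product structure of $p_i$. For part (a), write $p_i(t)=\ket{\varphi(t,r_i)}\bra{\varphi(t,r_i)}$ and use the identity $\ket{g}\bra{g}-\ket{g'}\bra{g'}=\ket{g-g'}\bra{g}+\ket{g'}\bra{g-g'}$ to see that the difference quotient of $t\mapsto p_i(t)$ converges in operator norm to $\ket{\dot\varphi(t,r_i)}\bra{\varphi(t,r_i)}+\ket{\varphi(t,r_i)}\bra{\dot\varphi(t,r_i)}$, where $\dot\varphi=-\im h^\Phi\varphi$. This limit depends continuously on $t$ because $\varphi,\dot\varphi\in C(\R,L^2(\Omega))$ — here $\dot\varphi\in L^2$ since $\Phi(t)\in H^2(\Omega_\mathrm f)$ and $w^0\ast|\Phi(t)|^2\in\LiOf$ by the propagation of regularity of Appendix\,\ref{app:regsol}. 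Hence $p_i\in C^1(\R,\mathcal{L}(\mathcal{H}))$, and tensoring with identities gives $p_i,q_i=\id-p_i\in C^1(\R,\mathcal{L}(\mathcal{H}^N))$. Since $P_{k,N}$ is by \eqref{equ:defP} a finite sum of products of the $p_i$ and $q_i$, the Leibniz rule yields $P_{k,N}\in C^1(\R,\mathcal{L}(\mathcal{H}^N))$.

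For part (b), since $\widehat f$ is a finite linear combination of products of the $p_i$ and $q_i$, it suffices to prove $[-\Delta_{y_j},p_i]=0$ for all $i,j$. If $i\neq j$ the two operators act on disjoint coordinates and commute. If $i=j$, the ansatz $\varphi=\Phi\chi$ gives $p_i=\bigl(\ket{\Phi(x_i)}\bra{\Phi(x_i)}\bigr)\otimes\bigl(\ket{\chi(y_i)}\bra{\chi(y_i)}\bigr)$, and $-\Delta_{y_i}$ acts only on the $y_i$-factor; as $\chi$ is a Dirichlet eigenfunction, $(-\Delta_{y_i})\ket{\chi}\bra{\chi}=\ket{(-\Delta_{y_i})\chi}\bra{\chi}=E\ket{\chi}\bra{\chi}$ and, by self-adjointness of the Dirichlet Laplacian, $\ket{\chi}\bra{\chi}(-\Delta_{y_i})=\ket{\chi}\bra{(-\Delta_{y_i})\chi}=E\ket{\chi}\bra{\chi}$, so the commutator vanishes on $D(-\Delta_{y_i})$. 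Summing over $j$ then gives $[-\Delta_y,\widehat f]=0$.

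For part (c), note first that the weight $f$ does not depend on $t$, so by (a) we have $\im\partial_t\widehat f=\sum_{k}f(k)\,\im\partial_t P_{k,N}$. Exactly the computation leading to \eqref{ablq1} gives, from $\im\partial_t\varphi=h^\Phi\varphi$ and the self-adjointness of $h^\Phi$, the identity $\im\partial_t p_i=[h_i^\Phi,p_i]$, and hence $\im\partial_t q_i=[h_i^\Phi,q_i]$. Now differentiate a single summand $A_1\cdots A_N$ of $P_{k,N}$ (each $A_m\in\{p_m,q_m\}$) with the Leibniz rule: the $j$-th term is $A_1\cdots A_{j-1}\,(\im\partial_t A_j)\,A_{j+1}\cdots A_N=A_1\cdots[h_j^\Phi,A_j]\cdots A_N$, and since $h_j^\Phi$ acts only on coordinate $j$, so that $[h_j^\Phi,A_m]=0$ for $m\neq j$, this equals $[h_j^\Phi,A_1\cdots A_N]$. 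Summing over $j$ yields $\im\partial_t(A_1\cdots A_N)=[\sum_j h_j^\Phi,A_1\cdots A_N]=[H^\Phi,A_1\cdots A_N]$, and summing over the summands of $P_{k,N}$ and over $k$ with weights $f(k)$ gives $\im\partial_t\widehat f=[H^\Phi,\widehat f]$.

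The only point requiring real care is the unboundedness of $-\Delta_x$ inside $h^\Phi$ and of $-\Delta_y$: the commutator identities above should be read as operator identities on $D(H^\Phi)=D(\sum_i\Delta_i)$, and one must check that $\dot\varphi=-\im h^\Phi\varphi$ is genuinely in $L^2$ (again by the $H^2$-regularity of $\Phi$ and boundedness of $w^0\ast|\Phi|^2$), so that the one-particle operators $[h_i^\Phi,p_i]$ and $[-\Delta_{y_i},p_i]$ are bounded and the resulting identity for $\widehat f$ extends to all of $\mathcal{H}^N$. Beyond this, the argument is bookkeeping with the projections, and I do not anticipate a substantial obstacle.
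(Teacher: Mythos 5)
Your proposal is correct and follows essentially the same route as the paper's proof: differentiability of the rank-one projection plus the product rule for (a), the eigenfunction property of $\chi$ for (b), and $\im\partial_t p_i=[h_i^\Phi,p_i]$ combined with the Leibniz rule and the locality of $h_j^\Phi$ for (c). You in fact supply more detail than the paper does, in particular the explicit difference-quotient identity and the remark on domains of the unbounded operators.
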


%

The next estimates are needed for the control of the terms emerging from the derivation of $\alpha$ and $\beta$.

\begin{lem}\label{lem:young}
 \begin{enumerate}[(a)]
Let $h\in L^2(\R^3)$ and $p= |\varphi \rangle \langle \varphi |$.
  \item 
    \begin{align*} 
	\norm{h(r)p}_{Op} \leq \norm{h}_{L^2(\R^3)}\norm{\varphi}_{L^\infty(\R^3)}
     \end{align*}

\item  

	\begin{align*}
        \norm{h(r_1-r_2)p_1}_{Op} \leq \norm{f}_{L^2(\R^3)} \norm{\varphi}_{L^\infty(\R^3)}
       \end{align*}

\item Let $g \in L^1(\R^3)$.
\begin{align*}
\norm{ p_1 g(r_1-r_2)p_1}_{Op} \leq \norm{g}_{L^1(\R^3)}\norm{\varphi}^2_{L^\infty(\R^3)}
\end{align*}


\end{enumerate}

\end{lem}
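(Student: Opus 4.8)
The plan is to prove all three estimates in the same elementary way: let the operator act on an arbitrary normalized $\rho$ in the relevant $L^2$-space, rewrite the square of the resulting norm as an integral, and dispatch that integral by Cauchy--Schwarz in the "inner" variable together with Young's (or Hölder's) inequality. None of the three parts is genuinely hard; the whole proof is bookkeeping, and the only substantive obstacle is keeping track of the partial inner products correctly.

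For (a) I would fix $\rho\in L^2(\R^3)$ with $\norm{\rho}=1$ and note that $p\rho=\varphi\,\langle\varphi,\rho\rangle$, so $h(r)p\rho=h(r)\varphi(r)\langle\varphi,\rho\rangle$ and therefore
\[
\norm{h(r)p\rho}_{L^2(\R^3)}^2=\betr{\langle\varphi,\rho\rangle}^2\int_{\R^3}\betr{h(r)}^2\betr{\varphi(r)}^2\D r\leq\norm{\varphi}_{L^\infty(\R^3)}^2\,\norm{h}_{L^2(\R^3)}^2,
\]
using $\betr{\langle\varphi,\rho\rangle}\leq\norm{\varphi}\,\norm{\rho}=1$; taking the supremum over $\rho$ gives (a).

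For (b) I would work on $L^2(\Omega^2)$, set $\tilde\rho(r_2):=\langle\varphi(r_1),\rho(r_1,r_2)\rangle_{L^2(\Omega,\D r_1)}$ (so that $\norm{\tilde\rho}_{L^2(\Omega)}\leq\norm{\rho}$ by Cauchy--Schwarz and $\norm{\varphi}=1$), and observe that $\big(h(r_1-r_2)p_1\rho\big)(r_1,r_2)=h(r_1-r_2)\varphi(r_1)\tilde\rho(r_2)$. Then, extending the $r_1$-integral from $\Omega$ to $\R^3$ and using translation invariance of Lebesgue measure,
\[
\norm{h(r_1-r_2)p_1\rho}^2=\int_\Omega\betr{\tilde\rho(r_2)}^2\Big(\int_\Omega\betr{h(r_1-r_2)}^2\betr{\varphi(r_1)}^2\D r_1\Big)\D r_2\leq\norm{\varphi}_{L^\infty(\R^3)}^2\,\norm{h}_{L^2(\R^3)}^2\,\norm{\rho}^2,
\]
and the supremum over $\rho$ yields (b) (the symbol $f$ in the statement is to be read as $h$); this is essentially the estimate already carried out at the end of the proof of Lemma~\ref{abpot}.

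For (c) I would again use this $\tilde\rho$ and compute $p_1 g(r_1-r_2)p_1\rho=\varphi(r_1)\,G(r_2)\,\tilde\rho(r_2)$ with $G(r_2):=\int_\Omega g(r_1-r_2)\betr{\varphi(r_1)}^2\D r_1$, so that $\norm{G}_{L^\infty}\leq\norm{g}_{L^1(\R^3)}\,\norm{\varphi}_{L^\infty(\R^3)}^2$, whence
\[
\norm{p_1 g(r_1-r_2)p_1\rho}^2=\int_\Omega\betr{\varphi(r_1)}^2\D r_1\int_\Omega\betr{G(r_2)}^2\betr{\tilde\rho(r_2)}^2\D r_2\leq\norm{g}_{L^1(\R^3)}^2\,\norm{\varphi}_{L^\infty(\R^3)}^4\,\norm{\rho}^2.
\]
The only points that require a moment's care are the partial-inner-product bound $\norm{\tilde\rho}\leq\norm{\rho}$ and, in (c), that $g$ is not assumed radial, so the $r_1$-integral is bounded against $\betr{\varphi}^2$ directly rather than being interpreted as a convolution; there is no real obstacle beyond this.
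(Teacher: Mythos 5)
Your proof is correct and follows essentially the same route as the paper's: in each part you evaluate the operator on a test function, reduce to the quantity $\int |h|^2|\varphi|^2$ (resp.\ $\norm{|\varphi|^2\ast h^2}_{L^\infty}$, $\norm{|\varphi|^2\ast g}_{L^\infty}$), and finish with H\"older/Young; your explicit partial inner product $\tilde\rho$ is just a more hands-on writing of the paper's manipulation of $\sup_{\norm{\psi}=1}\langle\psi,p_1h^2p_1\psi\rangle$ in Dirac notation. You also correctly identified that the $f$ in part (b) of the statement is a typo for $h$.
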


\begin{cor}\label{cor:op}
 Let $A1'$ hold for $w^0$ and $ w^\epsi$.
\begin{enumerate}[(a)]
 \item  
	\begin{align*}
          \norm{w^\epsi*|\varphi|^2}_{\mathrm{Op}} \lesssim (1+ \norm{\varphi}_{L^\infty(\Omega)})^2
        \end{align*}

\item
     \begin{align*}
      \norm{p_2 w_{12}^\epsi p_2 }_{\mathrm{Op}} \lesssim (1+ \norm{\varphi}_{L^\infty(\Omega)})^2
     \end{align*}

\end{enumerate}

\end{cor}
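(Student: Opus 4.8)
The plan is to deduce both estimates from Young's inequality together with an interpolation of the $L^p$-norms of $\varphi$, using the splitting $w^\epsi=w^\epsi_s+w^\epsi_\infty$ from assumption $A1'$. Throughout I would extend $w^\epsi$ and $\varphi$ by zero to $\R^3$, as in the notation section, so that every convolution may be computed over $\R^3$ and the $L^p(\tilde\Omega)$- resp. $L^p(\Omega)$-norms appearing in $A1'$ are exactly the relevant ones; also recall $\norm{\varphi}_{L^2(\Omega)}=1$.

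For part (a), I would first note that $w^\epsi*|\varphi|^2$ acts by multiplication, so its operator norm equals $\norm{w^\epsi*|\varphi|^2}_{L^\infty}$, and bound the two pieces of the splitting separately. For the bounded part, Young's inequality gives $\norm{w^\epsi_\infty*|\varphi|^2}_{L^\infty}\le\norm{w^\epsi_\infty}_{L^\infty(\tilde\Omega)}\norm{|\varphi|^2}_{L^1}=\norm{w^\epsi_\infty}_{L^\infty(\tilde\Omega)}\lesssim 1$. For the singular part, with $s\in(6/5,2)$ and Hölder conjugate $s'=s/(s-1)\in(2,6)$, Young's inequality gives $\norm{w^\epsi_s*|\varphi|^2}_{L^\infty}\le\norm{w^\epsi_s}_{L^s(\tilde\Omega)}\norm{|\varphi|^2}_{L^{s'}}$. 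Since $2s'>2$, interpolation between $L^2$ and $L^\infty$ yields $\norm{|\varphi|^2}_{L^{s'}}=\norm{\varphi}_{L^{2s'}}^2\le\norm{\varphi}_{L^2}^{2/s'}\norm{\varphi}_{L^\infty}^{2/s}=\norm{\varphi}_{L^\infty}^{2/s}\le(1+\norm{\varphi}_{L^\infty})^2$, where the last step uses $2/s<2$. Hence $\norm{w^\epsi_s*|\varphi|^2}_{L^\infty}\lesssim(1+\norm{\varphi}_{L^\infty(\Omega)})^2$, and adding the two contributions proves (a).

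For part (b), I would reuse the computation already carried out in the proof of Lemma \ref{lem:aabschaetzen}.\ref{lem:aabschaetzen1}: the manipulation leading to \eqref{equ:pwp}, performed with $w^\epsi_{12}$ in place of $W^\epsi_{12}$, shows that $p_2 w^\epsi_{12} p_2 = p_2\,(w^\epsi*|\varphi|^2)(r_1)$ as operators on $L^2(\Omega^2)$, since multiplication by a function of $r_1$ commutes with $p_2$. Therefore $\norm{p_2 w^\epsi_{12} p_2}_{\mathrm{Op}}\le\norm{p_2}_{\mathrm{Op}}\,\norm{(w^\epsi*|\varphi|^2)(r_1)}_{\mathrm{Op}}=\norm{w^\epsi*|\varphi|^2}_{L^\infty}$, and part (a) concludes.

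I do not expect a genuine obstacle here, since this is essentially a routine corollary of Young's inequality (and could alternatively be read off from Lemma \ref{lem:young}). The only points requiring a little care are the reduction of the convolutions to $\R^3$ noted above, and the observation that the exponent $2s'$ produced by the conjugate of $s$ satisfies $2s'\ge 2$ precisely because $s<2$ --- this is what legitimises the interpolation of $\norm{\varphi}_{L^{2s'}}$ and, crucially, produces the power $(1+\norm{\varphi}_{L^\infty})^2$ rather than a higher one.
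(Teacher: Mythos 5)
Your proposal is correct and follows essentially the same route as the paper: split $w^\epsi=w^\epsi_s+w^\epsi_\infty$, apply Young's inequality with exponents $s$ and $s'=s/(s-1)$, interpolate $\norm{\varphi}_{L^{2s/(s-1)}}$ between $L^2$ and $L^\infty$ for part (a), and reduce part (b) to part (a) via the identity $p_2\,w^\epsi_{12}\,p_2=p_2\,(w^\epsi*|\varphi|^2)$ (the paper cites the computation in Lemma~\ref{lem:young}(c), which is the same manipulation as \eqref{equ:pwp}). The only difference is that you spell out the interpolation exponent $2/s$ explicitly where the paper leaves it implicit under the label $A1'$.
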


\begin{lem}\label{lem:qs&N}
For all $l \in \N$ the expression 
\begin{align*}
 \norm { \big(\widehat{m}-\widehat{\tau_l m}\big) q_1 \psi}
\end{align*}

  can be estimated,
\begin{enumerate}[(a)] 
 \item if $m(k)= \frac{k}{N}$ by
\begin{align*}
 \norm { \big(\widehat{m}-\widehat{\tau_l m}\big) q_1 \psi} \leq \frac{l}{N},
\end{align*}

\item  if $m(k)= \sqrt{\frac{k}{N}}$ by
\begin{align*}
 \norm { \big(\widehat{m}-\widehat{\tau_l m}\big) q_1 \psi} \leq \frac{l}{N}.
\end{align*}
\end{enumerate}
 
\end{lem}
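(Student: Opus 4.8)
The plan is to treat $\widehat m-\widehat{\tau_l m}$ as a single ``hatted'' operator, to commute the $q_1$ through it, and thereby to rewrite the squared norm as the expectation against $\psi$ of a \emph{nonnegative} hatted operator; this reduces the whole statement to an elementary pointwise estimate on the weights. (For $l>N$ there is nothing to do, since $0\le m\le 1$ forces $g(k)\,k/N\le 1\le(l/N)^2$, so one may assume $1\le l\le N$.)

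First I would record the algebraic reductions. By linearity of $f\mapsto\widehat f$ we have $\widehat m-\widehat{\tau_l m}=\widehat{m-\tau_l m}$, which is self-adjoint since both weights are real, and which commutes with $q_1$: indeed $\widehat f p_1=p_1\widehat f$ by Lemma~\ref{lem:weights}(a), hence $\widehat f q_1=q_1\widehat f$ because $q_1=\id-p_1$. Using $q_1^2=q_1$ together with $(\widehat{m-\tau_l m})^2=\widehat g$, where $g:=(m-\tau_l m)^2\ge 0$ (again Lemma~\ref{lem:weights}(a)), one gets
\begin{align*}
 \norm{(\widehat m-\widehat{\tau_l m})q_1\psi}^2
 =\llangle \psi, q_1\widehat g q_1\psi\rrangle
 =\llangle \psi, \widehat g\, q_1\psi\rrangle .
\end{align*}
Since $\psi=\psi_N^\epsi$ is symmetric and $g$ is nonnegative, Lemma~\ref{lem:weights}(b) converts the $q_1$ into a factor $\widehat n^2$, so that, using $\widehat g\,\widehat n^2=\widehat{g n^2}$ and that the $P_{k,N}$ are orthogonal projections with $\sum_{k}P_{k,N}=\id$ (Lemma~\ref{lem:qP}),
\begin{align*}
 \norm{(\widehat m-\widehat{\tau_l m})q_1\psi}^2
 =\llangle \psi, \widehat g\,\widehat n^2\psi\rrangle
 =\sum_{k=0}^N g(k)\,\frac kN\,\norm{P_{k,N}\psi}^2 .
\end{align*}
Thus it suffices to prove the pointwise bound $g(k)\,\frac kN\le (l/N)^2$ for every $k\in\{0,\dots,N\}$; summing and using $\sum_k\norm{P_{k,N}\psi}^2=\norm\psi^2=1$ then gives the claim.

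It remains to verify $g(k)\,\frac kN\le(l/N)^2$ for the two weights, splitting according to whether the shifted index stays in $\{0,\dots,N\}$ or is set to $0$ by the convention for $\tau$. For $m(k)=k/N$ and an in-range shift, $|m(k)-(\tau_l m)(k)|=l/N$, so already $g(k)\le(l/N)^2$ and the extra factor $k/N\le 1$ is harmless; in the cut-off region the shifted weight is $0$, so $g(k)=m(k)^2$ and (for the relevant direction of the shift $k\le l$, hence) $g(k)\,\frac kN=(k/N)^2\,\frac{k}{N}\le(l/N)^2$. For $m(k)=\sqrt{k/N}$ the weight is only $\tfrac12$-Hölder, so for an in-range shift I would write $\sqrt{k/N}-\sqrt{k'/N}=\frac1{\sqrt N}\cdot\frac{l}{\sqrt k+\sqrt{k'}}\le\frac{l}{\sqrt{kN}}$, with $k'$ the shifted index ($k'\ge 0$, so $\sqrt k+\sqrt{k'}\ge\sqrt k$), square it, and multiply by $k/N$ to get $g(k)\,\frac kN\le\frac{l^2}{kN}\cdot\frac kN=(l/N)^2$; in the cut-off region again $g(k)\,\frac kN=(k/N)^2\le(l/N)^2$. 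The one genuinely delicate point — and the only place where the $q_1$ really enters — is precisely this last computation for $m=\sqrt{k/N}$: the discrete derivative of $\sqrt{k/N}$ grows like $1/\sqrt{kN}$ as $k\to 0$, so without the compensating factor $k/N$ furnished by $q_1$ through Lemma~\ref{lem:weights}(b) it could not be absorbed; everything else is bookkeeping of the boundary terms.
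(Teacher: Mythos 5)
Your argument is correct in substance and is essentially the paper's proof: both reduce the squared norm to $\sum_{k}\big(m(k)-(\tau_l m)(k)\big)^2\tfrac{k}{N}\llangle\psi,P_{k,N}\psi\rrangle$ and then bound the summand pointwise, the factor $k/N$ supplied by $q_1$ being needed only for $m=\sqrt{k/N}$. Your derivation of that identity via Lemma\,\ref{lem:weights} is just a more explicit version of the paper's one-line equation \eqref{equ:helpuni}.

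The one place you go beyond the paper is the boundary discussion, and there you have the direction backwards: for $\tau_l$ with $l\in\N$ the zero-extension convention bites at the \emph{top}, i.e.\ for $k>N-l$, not for $k\le l$. In that region $g(k)\tfrac kN=(k/N)^3$ can be of order $1$, so your claimed pointwise bound $g(k)\tfrac kN\le (l/N)^2$ fails; indeed the lemma as literally stated is false under that convention (take $l=1$, $m(k)=k/N$ and $\psi$ with $P_{N,N}\psi=\psi$, so that the left-hand side equals $1$). The paper's proof silently sidesteps this by writing $m(k+l)=(k+l)/N$, i.e.\ extending $m$ naturally beyond $N$, and in every application the shift is downward ($\tau_{-1}$, $\tau_{-2}$), where the cutoff sits at $k<l$ and your computation $(k/N)^2\cdot\tfrac kN\le(l/N)^2$ is exactly right. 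So this is a defect of the statement rather than of your method, but as written your cutoff case does not establish the inequality for the shift actually named in the lemma.
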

 Now we can explain why the weight $ \sqrt \frac{k}{N}$ is special. 
On the one hand we will have to find bounds of the form 
\begin{align}\label{equ:beta1q}
\llangle \psi, p_1 p_2 g(r_1-r_2) q_1 p_2 \psi  \rrangle \leq C \llangle \psi, \widehat f \psi \rrangle+ \mathcal{O}(N^{-1})
\end{align}
for suitable functions $g$. With the tools now developed we can
estimate the left hand side of \eqref{equ:beta1q} by
\begin{align*}
\llangle \psi, p_1 p_2 g(r_1-r_2) p_2 q_1 \psi  \rrangle \weq{\ref{lem:weights}}{ =} \llangle \psi, \widehat {\tau_1 h}   p_1 p_2 g(r_1-r_2)  \widehat h^{-1} q_1 p_2 \psi  \rrangle\\
\lesssim \norm{\widehat{ \tau_1 h} \psi } \norm{ \widehat h^{-1} q_1 \psi } \stackrel{!}{\leq} C \llangle \psi, \widehat f \psi \rrangle+ \mathcal{O}(N^{-1}),
\end{align*}
 where $h$ is a suitable function.
By the scaling behavior this implies
\begin{align}\label{equ:un2}
  \norm{ \widehat h^{-1} q_1 \psi },\norm{\widehat{ \tau_1 h} \psi }  \leq \norm{\widehat f^{1/2} \psi} + \mathcal{O}(N^{-1}).
\end{align}
On the other hand we will need a bound of the form
\begin{align*}
\norm { \big(\widehat{f}-\widehat{\tau_1 f}\big) q_1 \psi}=\mathcal{O}(N^{-1}). 
\end{align*}
If both conditions hold we indeed find that the function $f$ is up to a positive constant determined and given by
\begin{align*}
 f= \sqrt{ \frac{k}{N}}.
\end{align*}
We formulate this in the following lemma.

%


\begin{lem}\label{lem:uni}
 If for a  monotone function $f:\{0,\dots, N\}\rightarrow \R$ with $f(0)=0$
\begin{align}\label{equ:un1}
\norm { \big(\widehat{f}-\widehat{\tau_1 f}\big) q_1 \psi}=\mathcal{O}(N^{-1}) 
\end{align}
holds and $ \exists h:\{0,\dots, N\}\rightarrow \R$ such that
\begin{align}\label{equ:un3}
  \norm{ \widehat h^{-1} q_1 \psi },\norm{\widehat{ \tau_1 h} \psi }  \leq \norm{\widehat f^{1/2} \psi}
\end{align}
holds, then up to a positive constant
\begin{align*}
 f= \sqrt \frac{k}{N}.
\end{align*}
\end{lem}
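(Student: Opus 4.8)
The plan is to turn the two hypotheses into pointwise constraints on the sequence $\big(f(k)\big)_{k=0}^N$ by testing them against wave functions supported on a single excitation level, and then to squeeze $f$ between two positive multiples of $n(k)=\sqrt{k/N}$: condition \eqref{equ:un3} will force $f(k)\gtrsim\sqrt{k/N}$, while condition \eqref{equ:un1} will force $f(k)\lesssim\sqrt{k/N}$. Since $f$ is monotone with $f(0)=0$, we may assume it is nonnegative and nondecreasing.

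First I would record the algebra behind the two quantities. The operators $\widehat f,\widehat h,\widehat{\tau_1 f},\widehat{\tau_1 h}$ are all diagonal in the decomposition $\id=\sum_k P_{k,N}$ and commute with $q_1$, so the identity $\langle\psi,\widehat g\,q_j\psi\rangle=\langle\psi,\widehat g\,\widehat n^2\psi\rangle$ from Lemma \ref{lem:weights} gives, for symmetric normalized $\psi$,
\begin{align*}
 \norm{(\widehat f-\widehat{\tau_1 f})q_1\psi}^2 &= \big\langle\psi,(\widehat f-\widehat{\tau_1 f})^2\widehat n^2\psi\big\rangle = \sum_{k}\big(f(k)-f(k+1)\big)^2\,\tfrac kN\,\norm{P_{k,N}\psi}^2,\\
 \norm{\widehat h^{-1}q_1\psi}^2 &= \big\langle\psi,\widehat h^{-2}\widehat n^2\psi\big\rangle = \sum_{k}\tfrac{k/N}{h(k)^2}\,\norm{P_{k,N}\psi}^2, \qquad \norm{\widehat{\tau_1 h}\psi}^2 = \sum_{k}h(k+1)^2\,\norm{P_{k,N}\psi}^2.
\end{align*}
Choosing $\psi$ a unit vector in $\ran P_{k,N}$, which is nonzero for every $k$, and feeding it into \eqref{equ:un1} and \eqref{equ:un3} yields, for $k$ in the bulk of $\{0,\dots,N\}$, the three inequalities $\big|f(k+1)-f(k)\big|\le C(kN)^{-1/2}$, $\ h(k+1)^2\le f(k)$, and $\ (k/N)h(k)^{-2}\le f(k)$.

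Next I would eliminate $h$: relabelling the second inequality gives $h(k)^2\le f(k-1)$, and combining with the third gives $\tfrac{k/N}{f(k)}\le h(k)^2\le f(k-1)$, hence $f(k)f(k-1)\ge k/N$; monotonicity then forces $f(k)\ge\sqrt{k/N}$. This is the ``too small is impossible'' direction — if $f$ decayed faster than $n$, no admissible $h$ could exist. For the opposite inequality I would sum the increments from \eqref{equ:un1}: away from $k=0$ they are bounded by $C(jN)^{-1/2}$, whose partial sums are $\lesssim\sqrt{k/N}$, so that $f(k)\lesssim\sqrt{k/N}$ once the contribution of the first few levels is controlled. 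Together these give $c_1\sqrt{k/N}\le f(k)\le c_2\sqrt{k/N}$ with positive constants $c_1,c_2$ independent of $N$, i.e. $f$ coincides with $\sqrt{k/N}$ up to a positive constant; moreover the smallest weight compatible with both requirements is precisely $n(k)=\sqrt{k/N}$, which is the point of the lemma.

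The main obstacle is the bookkeeping at the two endpoints $k=0$ and $k=N$, where the functional $\widehat n^2$ degenerates: there \eqref{equ:un1} carries no information (the factor $k/N$ vanishes at $k=0$), and, read literally, \eqref{equ:un3} forces $h(1)^2\le f(0)=0$ together with $(1/N)h(1)^{-2}\le f(1)$, which are inconsistent, while at $k=N$ the convention $(\tau_1 f)(N)=0$ produces a spurious boundary term. In each case only $\mathcal O(1)$ many levels are involved and they contribute $o(1)$ to every norm above, so I would phrase the conclusion asymptotically — working with $N$-dependent families, allowing the $\mathcal O(N^{-1})$ slack already present in \eqref{equ:un2}, and running the sandwich only for $k\ge k_0$ with $k_0$ fixed — so that ``$f=\sqrt{k/N}$ up to a positive constant'' is understood as the leading-order statement. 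Everything else reduces to the elementary estimates indicated above.
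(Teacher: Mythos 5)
Your proposal is correct and follows essentially the same route as the paper: both reduce the two norm conditions to the pointwise constraints $(f(k)-f(k\pm1))^2\lesssim (Nk)^{-1}$ and $h(k)^2\lesssim f$, $(k/N)h(k)^{-2}\lesssim f$, then obtain the lower bound $f(k)\gtrsim\sqrt{k/N}$ by eliminating $h$ and the upper bound by telescoping the increments against $\sum_{l\le k}(Nl)^{-1/2}\lesssim\sqrt{k/N}$. Your remark about the degenerate endpoints $k=0,N$ is a fair observation about the informal ``up to a positive constant'' phrasing, which the paper's own proof also passes over silently.
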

 The two properties \eqref{equ:beta1q} and \eqref{equ:un2} will be
crucial in the proofs of Theorem \ref{thm:thm2} and Theorem \ref{thm:thm3} thus we have to use the counting functional $\beta$ to proof them with the used method.

\section{Remaining Proofs of this Chapter}
\begin{proof}[Proof of Lemma\,\ref{lem:qP}]
 \begin{enumerate}[(a)]
  \item  This follows from the fact that $q_i+p_i=\id$.
  

\item 
    \begin{align*}
     \sum_{j=1}^N q_j &=  \sum_{j=1}^N q_j \sum_{k=0}^N P_{k,N} = \sum_{k=0}^N  \sum_{j=1}^N  q_j P_{k,N}\\
&=  \sum_{k=0}^N  \sum_{j=1}^N   \sum_{ \substack{ a_i\in \{0,1\}:\\ \sum_{i=1}^N a_i = k} }   \prod_{i=1}^N  \underbrace{ q_j q_i^{a_i} p_i^{1-a_i}}_{\delta_{a_j,1} q_i^{a_i} p_i^{1-a_i}}\\
&=  \sum_{k=0}^N  k   \sum_{ \substack{ a_i\in \{0,1\}:\\ \sum_{i=1}^N a_i = k} }   \prod_{i=1}^N  q_i^{a_i} p_i^{1-a_i} \\
&=  \sum_{k=0}^N  k  P_{k,N}
    \end{align*}

 \end{enumerate}

\end{proof}

\begin{proof}[Proof of Lemma\,\ref{lem:weights}] 
 \begin{enumerate}[(a)]
  \item  Using the definitions
    \begin{align*}
    \widehat f \widehat g= \sum_k f(k) P_{k,N} \sum_l g(l) P_{l,N} = \sum_{k,l} f(k)g(l) \underbrace{ P_{k,N}P_{l,N}}_{\delta_{k,l}P_{k,N}}= \widehat{fg}= \widehat g \widehat f.
    \end{align*}

\item
The equality follows from symmetry of $\hat f \psi $ and Lemma \ref{lem:qP}(b).\\
For the proof of the inequality let without loss of generality $i=1,j=2$:
\begin{align*}
\langle \psi, \widehat f q_1 q_2 \psi \rangle &= \frac{1}{N(N-1)} \langle \psi, \widehat f \sum_{i \neq j} q_i q_j \psi \rangle \\
&\leq \frac{1}{N(N-1)} \langle \psi, \widehat f \sum_{i,j} q_i q_j \psi \rangle \\
&=  \frac{N}{(N-1)} \langle \psi, \widehat f \widehat n^4 \psi \rangle .
\end{align*}

  \item 
\begin{align*}
\widehat f Q_j T Q_k&= \sum_{l=0}^N f(l) P_{l,N} Q_j T Q_k = \sum_{l=0}^N f(l) \underbrace{P_{l-j,N-2}}_{\mrm{only \, on \,  }x_3..x_N}Q_j T Q_k\\
&= \sum_{l=0}^N Q_j T Q_k f(l) P_{l-j,N-2}\\
&= \sum_{l=k-j}^{N+k-j} Q_j T Q_k f(l+j-k) P_{l-k,N-2}\\
&= \sum_{l=k-j}^{N+k-j} Q_j T Q_k (\tau_{j-k} f)(l) P_{l,N}\\
&=\sum_{l=0}^{N} Q_j T Q_k (\tau_{j-k} f)(l) P_{l,N} 
=Q_j T  Q_k \widehat{\tau_{j-k}  f}  
\end{align*}
The converse direction follows in the same way.
\end{enumerate}
\end{proof}

\begin{proof}[Proof of Lemma\,\ref{hat.}]
\begin{enumerate}[(a)]
 \item 
This follows from the fact that for $\varphi \in C^1(\R, L^2(\Omega))$ the operator 
\begin{align*}
 \varphi(t)\langle \varphi(t), \cdot \rangle_{ L^2(\Omega)}  
\end{align*}
is an element of $C^1(\R,\mathcal{L}(\mathcal {H}) )$.

\item This is the fact that an eigenfunction of an operator computes with this operator.
 
\item
Using $\im \partial_t p_i(t)= [h^\Phi_i,p_i(t) ]$, $\im \partial_t q_i(t)= [h^\Phi_i,q_i(t) ]$  and the product rule we get
\begin{align*}
  \im \partial_t \widehat f &= \im \partial_t \sum_{k=0}^N f(k) P_{k,N} =  \sum_{k=0}^N f(k)  \im \partial_t \sum_{ \substack{ a_i\in \{0,1\}:\\ \sum_{i=1}^N a_i = k} }   \prod_{i=1}^N  q_i^{a_i} p_i^{1-a_i}\\
&=  \sum_{k=0}^N f(k)  [\sum_{l=1}^N h^\Phi_l, \sum_{ \substack{ a_i\in \{0,1\}:\\ \sum_{i=1}^N a_i = k} }   \prod_{i=1}^N  q_i^{a_i} p_i^{1-a_i} ] \\
&=  [\sum_{l=1}^N h^\Phi_l, \sum_{k=0}^N f(k) \sum_{ \substack{ a_i\in \{0,1\}:\\ \sum_{i=1}^N a_i = k} }   \prod_{i=1}^N  q_i^{a_i} p_i^{1-a_i}] \\
&=: [H^\Phi, \widehat f].
\end{align*}

\end{enumerate}

%
\end{proof}

\begin{proof}[Proof of Lemma\,\ref{lem:young}]
 \begin{enumerate}[(a)]
  \item For any $f \in L^2(\R^3)$
\begin{align*}
 \norm{f(r_1)p_1}^2_\mathrm{Op}&=\sup_{\norm{\psi}=1} \langle \psi, p_1 f(r_1)^2 p_1 \psi  \rangle_{L^2(\R^3)} \\
&=    \sup_{\norm{\psi}=1} \Big \langle \psi, |\varphi(r_1) \rangle \langle \varphi(r_1) | f(r_1)^2 |\varphi(r_1) \rangle \langle \varphi(r_1) | \psi  \Big \rangle_{L^2(\R^3)}\\
&=   \langle \varphi(r_1) | f(r_1)^2 |\varphi(r_1) \rangle_{L^2(\R^3)}  \sup_{\norm{\psi}=1} \Big \langle \psi, |\varphi(r_1) \rangle\langle \varphi(r_1) | \psi \Big \rangle_{L^2(\R^3)}\\
&=   \langle \varphi(r_1) | f(r_1)^2 |\varphi(r_1) \rangle_{L^2(\R^3)}   \sup_{\norm{\psi}=1} \langle \psi, p_1  \psi  \rangle_{L^2(\R^3)}.
\end{align*}
Using the Hölder inequality for the first term and the fact that $p_1$ is a projection we find
\begin{align*}
 \norm{f(r_1)p_1}^2_\mathrm{Op}\leq \norm{ \varphi}_{L^\infty(\R^3)}^2 \norm{f}^2_{L^2(\R^3)} \sup_{\norm{\psi}=1} \norm{ \psi}^2_{L^2(\R^3)}=\norm{ \varphi}_{L^\infty(\R^3)}^2 \norm{f}^2_{L^2(\R^3)}.
\end{align*}

\item
\begin{align*}
\norm{f(r_1-r_2)p_1}^2_\mathrm{Op}&=\sup_{\norm{\psi}=1} \Big \langle \psi, p_1 f(r_1-r_2)^2 p_1 \psi  \Big \rangle_{L^2(\R^6)} \\
&=    \sup_{\norm{\psi}=1} \Big \langle \psi, |\varphi(r_1) \rangle \langle \varphi(r_1) | f(r_1-r_2)^2 |\varphi(r_1) \rangle \langle \varphi(r_1) | \psi \Big  \rangle_{L^2(\R^6)}\\
&= \sup_{\norm{\psi}=1} \Big \langle \psi, |\varphi(r_1) \rangle  \langle \varphi(r_1) |  \langle \varphi(r_1) | f(r_1-r_2)^2 |\varphi(r_1) \rangle  \psi  \Big \rangle_{L^2(\R^6)}\\
&\leq  \sup_{\norm{\psi}=1} \norm{\psi}^2_{L^2(\R^6)} \norm{ \langle \varphi(r_1) | f(r_1-r_2)^2 |\varphi(r_1) \rangle}_{L^\infty(\R^3)}\\
&= \norm{|\varphi|^2*f^2}_{L^\infty(\R^3)} \leq \norm{\varphi}_{L^\infty(\R^3)}^2\norm{f}^2_{L^2(\R^3)}
\end{align*}

\item
For any $g\in L^2(\R^3)$
\begin{align*}
 \norm{p_1 g(r_1-r_2)p_1}_\mathrm{Op}&=\sup_{\norm{\psi}=1} \norm{ p_1 g(r_1-r_2) p_1 \Psi }_{L^2(\R^6)} \\
&=\sup_{\norm{\psi}=1} \norm{ |\varphi(r_1) \rangle  \langle \varphi(r_1) | g(r_1-r_2) |\varphi(r_1) \rangle  \langle \varphi(r_1) | \Psi }_{L^2(\R^6)} \\
&=\sup_{\norm{\psi}=1} \norm{   \langle \varphi(r_1) | g(r_1-r_2) |\varphi(r_1) \rangle  |\varphi(r_1) \rangle \langle \varphi(r_1) | \Psi }_{L^2(\R^6)} \\
&\leq \norm{   \langle \varphi(r_1) | g(r_1-r_2) |\varphi(r_1) \rangle }_{L^\infty(\R^3)}  \sup_{\norm{\psi}=1} \norm{p_1 \psi}_{L^2(\R^6)}\\
&= \norm{|\varphi|^2*g }_{L^\infty(\R^3)} \leq \norm{\varphi}^2_{L^\infty(\R^3)} \norm{g}_{L^1(\R^3)}.
\end{align*}

 \end{enumerate}

\end{proof}

\begin{proof}[Proof of Corollary\,\ref{cor:op} ]
\begin{enumerate}[(a)]
\item  
With Young's inequality and $L^p$ interpolation for $\varphi$ we get
 \begin{align*}
  \norm{w^\epsi*|\varphi|^2}_{\mathrm{Op}} &\leq \norm{w^\epsi_s*|\varphi|^2}_{L^\infty(\R^3)} +\norm{w^\epsi_\infty*|\varphi|^2}_{L^\infty(\R^3)}\\
 &\leq \norm{w^\epsi_s}_{L^s(\tilde \Omega)} \norm{\varphi}^2_{L^{2s/(s-1)}(\Omega)}+  \norm{w^\epsi_\infty}_{L^\infty(\tilde \Omega)} \\
 &\stackrel{\mathclap{A1'}}{ \lesssim} (1+ \norm{\varphi}_\LiO )^2.
 \end{align*}


\item 
  \begin{align*}
      \norm{p_2 w_{12}^\epsi p_2 }_{\mathrm{Op}} &\leq \norm{ w^\epsi * |\varphi^2|}_{L^\infty(\R^3)} 
   \lesssim (1+ \norm{\varphi}_{L^\infty(\Omega)})^2
     \end{align*}
The first inequality follows from the proof of Lemma\,\ref{lem:young}(c) and the second inequality is part (a) of this corollary. 
    
\end{enumerate}

\end{proof}

\begin{proof}[Proof of Lemma\,\ref{lem:qs&N}]
 We calculate
\begin{align}\label{equ:helpuni}
\norm { \big(\widehat{m}-\widehat{\tau_l m}\big) q_1 \psi}^2=\llangle \Psi, \sum_{k=1}^N \Big(m(k)-m(k+l)\Big)^2\frac{k}{N} P_{k,N} \psi \rrangle .
\end{align}
\begin{enumerate}[(a)]
 \item  The difference of the weights squared is $\frac{l^2}{N^2}$ hence the result follows.

 \item  The difference of the weights squared is 
\begin{align*}
 \Bigg(\frac{\sqrt{k}-\sqrt{k+l}}{\sqrt{N}}\Bigg)^2= \frac{l^2}{(\sqrt{k}+\sqrt{k+l})^2N} \leq \frac{l^2}{k N}.
\end{align*}
Multiplying this with the remaining term $\frac{k}{N}$ gives the desired result.


\end{enumerate}

\end{proof}

\begin{proof}[Proof of Lemma\,\ref{lem:uni} ]
Equation \eqref{equ:un1} implies with \eqref{equ:helpuni} the condition
\begin{align}\label{equ:uniq2}
 (f(k)-f(k-1))^2 \lesssim (N k)^{-1}
\end{align}
and equation \eqref{equ:un3} with the Lemma\,\ref{lem:weights} the condition
\begin{align*}
 \exists h \geq 0:  h(k)^2 \lesssim  f(k) \; , \; h(k)^{-2} \frac{k}{N}  \lesssim f(k) .
\end{align*}
The first condition implies $f(k)\lesssim  \sqrt{ \frac{k}{N}}$ and the second one $ \sqrt{ \frac{k}{N}} \lesssim f(k)  $ thus the claim follows.
The second implication follows by contradiction. For the first implication we rewrite \eqref{equ:uniq2} as 
\begin{align*}
 f(k)\lesssim (N k)^{-1/2}+ f(k-1).
\end{align*}
This leads to 
\begin{align*}
 f(k)\lesssim \sum_{l=1}^k (N l)^{-1/2} \leq 2 \sqrt{\frac{k}{N}}  \
\end{align*}
since $f(0)=0$ and by estimating the sum by the integral of $l^{-1/2}$ on the interval $[0,k]$.

\end{proof}

\chapter{Proof of Theorem\,\ref{thm:thm2} }\label{chp:proofthm2}

For stronger singularities it is clear that the method used in the proof of Theorem\,\ref{thm:thm1} has to be adopted since there we control
\begin{align*}
 \norm{w^\epsi_{12}p_1 }_{\mathrm{Op}}
\end{align*}
by Lemma \ref{lem:young}(b). This is only possible for $w^\epsi \in L^2(\R^3) $.
The main idea how to treat the stronger singularities will be the introduction of a vector field $\xi$ which is chosen such that $\nabla \xi=w $. This vector field will have 
higher $L^p$ regularity then $w$ and hence we will be able to control 
\begin{align*}
  \norm{\xi_{12} p_1 }_{\mathrm{Op}} 
\end{align*}
with Lemma \ref{lem:young}(b). However, we can only make use of such an estimate after partial integration which in turn means that we need to control $\nabla p $ and $\nabla q$. For the first term
this is no problem since we have enough regularity since $p$ is a solution of a one-particle equation. For the second term we have to invest some effort but with the help of 
energy conservation we are able to bound this term as well.
Other than this the proof uses the same ideas as in Chapter\,\ref{chap:thm1}. We organize the proof by showing the smallness of $ \norm{\nabla q_1 \psi}^2$ first in a separate section. Afterwards we
bound the derivative of $\beta$ in the following section.

\section{The Energy Lemma}

This section is devoted to finding a bound for
$ \norm{\nabla q_1 \psi}^2$. 
The main ingredients for the proof are energy conservation, refining the weights and writing the interaction as a divergence of a vector field.

We first recall the definitions and the assumptions which are necessary for the formulation and prove of the Energy Lemma. Thereafter we state the lemma and give a motivation and an outline of the proof.
The last part of this section are some auxiliary lemmas which prove the Energy Lemma and will be used again to prove the smallness of $\dot \beta$ in the next section.

\subsection{Assumptions, Definitions and Preliminaries}
For convenience we restate the assumptions of Theorem\,\ref{thm:thm2}.

\begin{description}
 \item[A1'] Let $w=w_s+w_\infty$ such that 
	    for all $\epsi \in (0,1]$ there exists a $C \in \R$ such that
	    \begin{align*}
	      \norm{w_s^\epsi}_{L^s(\tilde \Omega)} \leq C  \quad \norm{w_\infty^\epsi}_{L^\infty(\tilde \Omega) }\leq C
	    \end{align*}
	    for a $s\in (6/5,2)$. And there exist $w^0_s,w^0_\infty :\Omega_f \rightarrow \R $ and a function $f(\epsi):(0,1]\rightarrow \R^+ $ with $f(\epsi) \stackrel{\epsi \rightarrow 0}{\rightarrow} 0$ such that
	    \begin{align*}
	      \norm{w_s^\epsi- w_s^0}_{L^1(\tilde \Omega)} \leq f(\epsi) \quad \norm{w^\epsi_\infty-w^0_\infty}_{L^\infty(\tilde \Omega) } \leq f(\epsi)
	    \end{align*}
	    and $w^0_s \in  L^1(\Omega_{\mathrm{f}}) $, $w^0_\infty \in  L^\infty(\Omega_{\mathrm f})$. For short notation we define 
	    \begin{align}
	     w^0:=w^0_s+w^0_\infty.
	    \end{align}

%
%
 \item[A2']
  Let $H_N^\epsi$ be self-adjoint with $D(H_N^\epsi) \subset D(\sum_{i=1}^N h_i^\epsi)$.

  \item[A3']
  Let the two-particle interaction $w$ be nonnegative.
\end{description}

\begin{rem}\label{rem:negpot}
 The condition $A3'$  can be replaced by a weaker condition. Let the one-particle Hamiltonian $h$ be such that the potential energy can be bounded by a part of the kinetic energy:\
There exists a constant $\kappa \in (0,1) $ such that
\begin{align*}
 0 \leq (1-\kappa)( h_1+ h_2)+w^\epsi_{12}.
\end{align*}
\end{rem}
%
%
%
%
%
%
%
%
As defined in equation \ref{equ:engpsi} and \ref{equ:enghart} the energy per particle $ E^\psi(t)$ of $\psi$ is
 \begin{align*}
  E^\psi(t):= \frac{1}{N}\llangle \psi^\epsi_N(t),H^\epsi_N \psi^\epsi_N(t) \rrangle
 \end{align*}
and the energy $ E^\varphi(t)$ of the function $\varphi$ is 
 \begin{align*}
  E^\varphi(t):&= \langle \varphi(t) ,\big(-\Delta_x-\frac{1}{\epsi^2} \Delta_y+ \frac{1}{2}(w^0*|\Phi(t)|^2) \big)  \varphi(t) \rangle_{L^2(\Omega)} \\
&= \langle \Phi(t), \big(-\Delta_x+ \frac{1}{2}(w^0*|\Phi(t)|^2) \big) \Phi(t) \rangle_\LzOf + \langle \chi, -\frac{1}{\epsi^2} \Delta_y \chi \rangle_\LzOc.
 \end{align*}
If we use symmetry of $\psi$ we can rewrite $E^\psi$ as
 \begin{align*}
  E^\psi=\frac{1}{N}\llangle \Psi_N,H^\epsi_N \Psi_N \rrangle  &=\frac{1}{N} \llangle \Psi_N, (- \sum_{j=1}\Delta_{x_j}-\frac{1}{\epsi^2} \Delta_{y_j}+ \frac{1}{N} \sum_{i < j} w^\epsi_{ij}) \Psi_N \rrangle \\
&=\llangle \psi_N, h_1^\epsi \psi_N\rrangle + \frac{1}{N^2} \frac{N}{2}(N-1) \llangle \psi_N, w^\epsi_{12} \psi_N\rrangle \\
&= \llangle \psi_N, h_1^\epsi \psi_N\rrangle +  \frac{N-1}{2N} \llangle \psi_N, w^\epsi_{12} \psi_N\rrangle.
 \end{align*}

\begin{lem}\label{lem:constE}
Both $E^\psi(t)$ and $ E^\varphi(t)$ are constant in time: 
\begin{align*}
 \frac{d}{dt} E^\psi(t)=0 \qquad   \frac{d}{dt} E^\varphi(t)=0.
\end{align*}

\end{lem}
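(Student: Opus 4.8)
The statement is that both one-particle energies are conserved along the flow. Conservation of $E^\psi$ is the standard fact that the expectation of a time-independent self-adjoint operator in a Schrödinger evolution is constant; I would prove it first. Concretely, since $\psi_N^\epsi(t)\in D(H_N^\epsi)$ and $H_N^\epsi$ is self-adjoint and time-independent (note the external potential $V$ is absent in the Hartree case treated here, or the relevant $H_N^\epsi$ has no explicit time dependence), one computes
\begin{align*}
 \frac{d}{dt}\,\llangle \psi_N^\epsi(t), H_N^\epsi \psi_N^\epsi(t)\rrangle
 = \llangle \dot\psi_N^\epsi, H_N^\epsi \psi_N^\epsi\rrangle + \llangle \psi_N^\epsi, H_N^\epsi \dot\psi_N^\epsi\rrangle
 = \im \llangle H_N^\epsi\psi_N^\epsi, H_N^\epsi \psi_N^\epsi\rrangle - \im \llangle \psi_N^\epsi, (H_N^\epsi)^2 \psi_N^\epsi\rrangle = 0,
\end{align*}
where one uses the Schrödinger equation \eqref{equ:schhartree} and self-adjointness to move one factor of $H_N^\epsi$ across the inner product. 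A small technical point is to justify differentiating under the inner product, which follows from $\psi_N^\epsi\in C^1(\R,\mathscr H^N)$ together with $\psi_N^\epsi(t)\in D(H_N^\epsi)$; this is exactly what assumption A2' guarantees.

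For $E^\varphi$, I would use the product structure $\varphi(t)=\Phi(t)\chi$ and the splitting already displayed in the excerpt,
\begin{align*}
 E^\varphi(t) = \langle \Phi(t), \big(-\Delta_x+ \tfrac{1}{2}(w^0*|\Phi(t)|^2)\big)\Phi(t)\rangle_{\LzOf} + \langle \chi, -\tfrac{1}{\epsi^2}\Delta_y\chi\rangle_{\LzOc}.
\end{align*}
The second summand is manifestly time-independent since $\chi$ is a fixed eigenfunction. For the first summand, write $h^\Phi(t):=-\Delta_x+w^0*|\Phi(t)|^2$, so that $\im\partial_t\Phi = h^\Phi(t)\Phi$, and differentiate. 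The kinetic and the full nonlinear potential term give, after using the equation for $\Phi$, a contribution $\im\langle h^\Phi\Phi, (-\Delta_x+w^0*|\Phi|^2)\Phi\rangle - \im\langle\Phi,(-\Delta_x+w^0*|\Phi|^2)h^\Phi\Phi\rangle$, but the operator in the middle is precisely $h^\Phi$, so this is $\im(\langle h^\Phi\Phi,h^\Phi\Phi\rangle - \langle\Phi,(h^\Phi)^2\Phi\rangle)=0$. What remains is the ``extra half'' coming from the factor $\tfrac12$ in front of the nonlinearity: the derivative of $\tfrac12\langle\Phi, (w^0*|\Phi|^2)\Phi\rangle$ contains a term $\tfrac12\langle\Phi,(w^0*\partial_t|\Phi|^2)\Phi\rangle = \tfrac12\int (w^0*\partial_t|\Phi|^2)\,|\Phi|^2$. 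Using symmetry of the convolution kernel $w^0$ this equals $\tfrac12\int (w^0*|\Phi|^2)\,\partial_t|\Phi|^2$, and one checks this exactly cancels the mismatch between the $\tfrac12$-weighted energy and the full nonlinear potential appearing when $h^\Phi$ acts. Collecting terms gives $\frac{d}{dt}E^\varphi(t)=0$.

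The only real obstacle is bookkeeping rather than conceptual: one must be careful that the factor $\tfrac12$ in $E^\varphi$ versus the factor $1$ multiplying the nonlinearity in the evolution equation for $\Phi$ is accounted for correctly, and that the convolution kernel's radial symmetry ($w^0(x)=w^0(-x)$) is invoked to symmetrize $\int (w^0*\partial_t|\Phi|^2)|\Phi|^2$. For rigor one also needs $\Phi(t)\in H^2(\Omega_\mathrm f)$ (guaranteed by the hypothesis $\Phi_0\in H^2$ and Appendix~\ref{app:regsol}) so that all the inner products and the integration by parts implicit in moving $-\Delta_x$ across are justified. I would present the $E^\psi$ computation in one or two lines and then the $E^\varphi$ computation with the cancellation made explicit.
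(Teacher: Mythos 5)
Your proposal is correct and follows essentially the same route as the paper: for $E^\psi$ the standard self-adjointness computation, and for $E^\varphi$ the observation that the term from differentiating $|\Phi(t)|^2$ inside the convolution, after symmetrizing with the radial kernel $w^0$, supplies exactly the missing half so that the total derivative becomes $\im\langle\Phi,[h^\Phi,h^\Phi]\Phi\rangle=0$. No gaps.
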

\begin{proof}
This is proven by the calculation
\begin{align*}
 \frac{d}{dt}E^\psi(t)= \frac{1}{N} \big(\llangle - \im H^\epsi_N \psi_N^\epsi(t),H_N^\epsi \psi^\epsi_N(t) \rrangle+ \llangle \psi_N^\epsi(t), -\im H_N^\epsi H_N^\epsi \psi^\epsi_N(t) \rrangle\big )=0 
\end{align*}
and 
\begin{align*}
  \frac{d}{dt}E^\phi(t)&= \frac{d}{dt} \langle \Phi(t), \big(-\Delta_x+ \frac{1}{2}(w^0*|\Phi(t)|^2) \big) \Phi(t) \rangle_\LzOf 
\\&\qquad+\frac{d}{dt} \langle \chi^\epsi(t), -\frac{1}{\epsi^2} \Delta_y \chi^\epsi(t) \rangle_\LzOc \\
&= \im  \langle \Phi(t),[h^\Phi,-\Delta_x+ \frac{1}{2}(w^0*|\Phi(t)|^2) ] \Phi \rangle_\LzOf \\
&\qquad+ \frac{ \im}{2} \langle \Phi(t),[h^\Phi,(w^0*|\Phi(t)|^2)] \Phi(t) \rangle_\LzOf \\
&= \im  \langle \Phi(t),[h^\Phi,-\Delta_x+(w^0*|\Phi(t)|^2)] \Phi(t) \rangle_\LzOf\\
&=\im  \langle \Phi(t),[h^\Phi,h^\Phi] \Phi(t) \rangle_\LzOf=0,
\end{align*}
where we introduced $h^\Phi:= -\Delta_x+ w^0*|\Phi|^2$ for short notation.
\end{proof}
As a reminder the ground state energy of $- \Delta_y$ on $\Omega_\mathrm{c}$ was denoted by $E_0$.
\begin{lem}\label{lem:Energyh}
The operator
\begin{align*}
 \tilde h:= { -\Delta_x-\frac{1}{\epsi^2}\Delta_y-\frac{E_0}{\epsi^2}}
\end{align*}
 is a positive self-adjoint operator with
 \begin{align*}
  -\Delta \leq \tilde h + E_0
 \end{align*}
and 
\begin{align}\label{equ:htildep}
 \norm{\tilde h p}_\mathrm{Op} \leq \norm{\Delta \Phi}_\LzOf.
\end{align}

\end{lem}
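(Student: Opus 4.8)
The plan is to verify the three assertions in turn; the only genuine content is that $\chi=\chi_0$ is the ground state eigenfunction of the Dirichlet Laplacian in the confined direction, everything else being bookkeeping.

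\emph{Self-adjointness.} I would note that $-\Delta_x$ is self-adjoint on its natural domain in $\LzOf$, that $-\epsi^{-2}\Delta_y$ is self-adjoint as the (rescaled) Dirichlet Laplacian on $\Omega_\mathrm c$, that their tensor sum is therefore essentially self-adjoint on the algebraic tensor product of two operator cores, and that subtracting the constant $E_0/\epsi^2$ changes nothing. Hence $\tilde h$ is self-adjoint.

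\emph{Positivity and the form bound.} Since $E_0$ is by definition the lowest eigenvalue of $-\Delta_y$ on $\Omega_\mathrm c$, one has $-\Delta_y\geq E_0$ in the form sense; dividing by $\epsi^2$ and adding $-\Delta_x\geq 0$ gives $\tilde h\geq 0$. For the inequality $-\Delta\leq\tilde h+E_0$ I would simply compute
\[
 (\tilde h+E_0)-(-\Delta)=\Big(\frac{1}{\epsi^2}-1\Big)\big(-\Delta_y-E_0\big),
\]
which is a nonnegative operator because $\epsi\in(0,1]$ makes $\epsi^{-2}-1\geq 0$ while $-\Delta_y-E_0\geq 0$.

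\emph{The operator-norm bound.} With $p=|\varphi\rangle\langle\varphi|$, $\varphi=\Phi\chi$, $\chi=\chi_0$ normalized and $-\Delta_y\chi=E_0\chi$, I would first observe that $\varphi\in D(\tilde h)$ (since $\Phi\in H^2(\Omega_\mathrm f)$ and $\chi$ lies in the domain of the Dirichlet Laplacian) and that the eigenvalue relation makes the two $\epsi$-singular terms cancel exactly:
\[
 \tilde h\varphi=(-\Delta_x\Phi)\chi+\Phi\Big(-\frac{1}{\epsi^2}\Delta_y-\frac{E_0}{\epsi^2}\Big)\chi=(-\Delta_x\Phi)\chi .
\]
Then $\tilde h p=|(-\Delta_x\Phi)\chi\rangle\langle\Phi\chi|$ is a rank-one operator, so $\norm{\tilde h p}_\mathrm{Op}=\norm{(-\Delta_x\Phi)\chi}_{L^2(\Omega)}\norm{\Phi\chi}_{L^2(\Omega)}$; using the product structure of the $L^2(\Omega)$-norm together with $\norm{\chi}_\LzOc=\norm{\Phi}_\LzOf=1$ and $\Delta_x\Phi=\Delta\Phi$ for a function of $x$ alone, this equals $\norm{\Delta\Phi}_\LzOf$, which in particular yields the claimed bound (in fact with equality).

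\emph{Main obstacle.} There is essentially none; the lemma is elementary. The only points deserving a careful sentence are the precise description of $D(\tilde h)$ together with the verification that $\varphi$ belongs to it, and the observation — reused later in the chapter — that the eigenvalue equation for $\chi$ annihilates both $\epsi^{-2}$-terms, so that only the $x$-derivatives of $\Phi$ survive.
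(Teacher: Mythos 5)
Your proposal is correct and follows essentially the same route as the paper: positivity and the form bound come from $-\Delta_y\geq E_0$ together with $\epsi\leq 1$, and the operator-norm bound comes from the eigenvalue equation for $\chi$ cancelling the two $\epsi^{-2}$-terms, so that $\norm{\tilde h p}_{\mathrm{Op}}^2=\langle\varphi,\tilde h^2\varphi\rangle=\norm{\Delta\Phi}_\LzOf^2$ (the paper cites the computation in Lemma~\ref{lem:young} where you instead observe directly that $\tilde h p$ is rank one — the same calculation). Your version is merely more explicit, and notes in passing that the bound is in fact an equality.
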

\begin{proof}
The first statement follows from $\Delta -E_0 \leq \tilde h$. The second one is derived as Lemma\,\ref{lem:young} together with 
\begin{align*}
\langle  \varphi, \tilde  h^2 \varphi \rangle_{L^2(\Omega) } = \norm{\Delta \Phi}_\LzOf, 
\end{align*}
where we used $(\frac{1}{\epsi^2}\Delta_y-\frac{E_0}{\epsi^2})\chi(y)=0$.
\end{proof}

%

\subsection{The Energy Estimate and its Proof}\label{sec:pen}

\begin{lem}[Energy Lemma]\label{lem:energy}
 Let the assumptions A1'-A3' hold, then
\begin{align*}
 \norm{\nabla q_1 \Psi}^2 \lesssim  (E^\psi- E^\phi)+ \norm{\varphi}_{H^2(\Omega)\cap L^\infty(\Omega)}^2(\beta+\frac{1}{\sqrt{N}}+f(\epsi)).
\end{align*}
 
\end{lem}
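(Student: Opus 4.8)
The goal is to bound $\norm{\nabla q_1\Psi}^2$ in terms of the energy difference $E^\psi - E^\varphi$ plus $\beta$-controlled error terms. The natural strategy is to compare $E^\psi$ with $E^\varphi$ directly, using the symmetrized form $E^\psi = \llangle \psi_N, h_1^\epsi \psi_N\rrangle + \tfrac{N-1}{2N}\llangle \psi_N, w^\epsi_{12}\psi_N\rrangle$ derived above, and to extract from the kinetic part a nonnegative quadratic term that dominates $\norm{\nabla q_1\Psi}^2$. First I would rewrite $\llangle \psi_N, h_1^\epsi \psi_N\rrangle$ using the shifted operator $\tilde h = -\Delta_x - \epsi^{-2}\Delta_y - \epsi^{-2}E_0$ from Lemma~\ref{lem:Energyh}, so that $h_1^\epsi = \tilde h_1 + \epsi^{-2}E_0$ and the constant $\epsi^{-2}E_0$ drops out of the comparison (it matches the corresponding term in $E^\varphi$, since $\langle\chi,-\epsi^{-2}\Delta_y\chi\rangle = \epsi^{-2}E_0$). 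Then insert $\id = p_1 + q_1$ around $\tilde h_1$; since $\tilde h p = \tilde h_1 p_1$ is bounded (by $\norm{\Delta\Phi}_\LzOf$, eq.~\eqref{equ:htildep}) and $[\tilde h, q_1]$-type cross terms involve $\nabla p_1$ which is controlled by $\norm{\varphi}_{H^2}$, the dominant piece is $\llangle\psi, q_1\tilde h_1 q_1\psi\rrangle \geq c\norm{\nabla q_1\psi}^2 - (\text{lower order})$, using positivity of $\tilde h$ and $-\Delta \leq \tilde h + E_0$.

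**Handling the interaction term.** The term $\tfrac{N-1}{2N}\llangle\psi,w^\epsi_{12}\psi\rrangle$ must be compared with $\tfrac12\langle\varphi,(w^0*|\Phi|^2)\varphi\rangle$. I would insert $\id = p_1p_2 + (1-p_1p_2)$ (or a full $p$/$q$ decomposition on both coordinates) into $\llangle\psi,w^\epsi_{12}\psi\rrangle$. The fully-projected piece $\llangle\psi,p_1p_2 w^\epsi_{12}p_1p_2\psi\rrangle$ equals, up to the $f(\epsi)$-error from A1' and the $w^\epsi\to w^0$ convergence, essentially $\langle\varphi,(w^0*|\Phi|^2)\varphi\rangle\,\llangle\psi,p_1p_2\psi\rrangle$, and $\llangle\psi,p_1p_2\psi\rrangle = 1 - O(\beta)$ by the $p$-$q$ combinatorics (Lemma~\ref{lem:weights}, since $\llangle\psi,q_1\psi\rrangle = \alpha \leq \beta$). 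The remaining cross terms with at least one $q$ are bounded using Corollary~\ref{cor:op} and Lemma~\ref{lem:young}, picking up factors of $\norm{q_j\psi}$, hence $\sqrt\beta$ or $\sqrt{\beta + N^{-1/2}}$; nonnegativity of $w$ (A3') is what lets us absorb the genuinely problematic sign-indefinite leftover into the kinetic term rather than bounding it absolutely. Rearranging $E^\psi - E^\varphi \geq c\norm{\nabla q_1\psi}^2 - C\norm{\varphi}_{H^2\cap L^\infty}^2(\beta + N^{-1/2} + f(\epsi))$ then gives the claim.

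**Main obstacle.** The delicate point is the interaction cross terms where $w^\epsi_{12}$ sits between, say, $p_1q_2$ and $q_1q_2$: here $w^\epsi$ is only in $L^s$ with $s<2$, so a naive $\norm{w^\epsi_{12}p_1}_{\mathrm{Op}}$ bound via Lemma~\ref{lem:young}(b) fails (that needs $L^2$). This is precisely why A3' (nonnegativity) is assumed — one writes $w^\epsi_{12} = (w^\epsi_{12})^{1/2}(w^\epsi_{12})^{1/2}$, uses Cauchy–Schwarz to split, and then must re-examine whether $\norm{(w^\epsi)^{1/2}p_1}_{\mathrm{Op}}$-type quantities are controlled (they are, since $s > 6/5$ forces $s/2 > 3/5$, marginal but enough after using $L^p$-interpolation of $\varphi$ as in Corollary~\ref{cor:op}), while the other factor $(w^\epsi_{12})^{1/2}q_2\psi$ is estimated against $\langle\psi, q_2 w^\epsi_{12}q_2\psi\rangle \leq$ (bounded by the positivity trick plus $\beta$). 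Getting the exponents to close in the regime $s\in(6/5,2)$ — and tracking that the resulting error is genuinely $O(\beta + N^{-1/2} + f(\epsi))$ with the stated $\norm{\varphi}_{H^2\cap L^\infty}$-dependence rather than something worse — is where essentially all the work lies; the auxiliary lemmas promised at the end of Section~\ref{sec:pen} are presumably exactly these estimates.
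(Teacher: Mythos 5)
Your overall skeleton matches the paper's: compare $E^\psi$ with $E^\varphi$ using the symmetrized one-particle form, shift by $\epsi^{-2}E_0$ via $\tilde h$, decompose with $\id = p_1p_2 + (1-p_1p_2)$, use A3' to discard the nonnegative term $\llangle\psi,(1-p_1p_2)w^\epsi_{12}(1-p_1p_2)\psi\rrangle$, and isolate $\llangle\psi,(1-p_1p_2)\tilde h_1(1-p_1p_2)\psi\rrangle$ as the quantity dominating $\norm{\nabla q_1\psi}^2$. You also correctly locate the hard part in the cross term $\llangle\psi,p_1p_2\,w^{\epsi,s}_{12}\,q_1q_2\psi\rrangle$ and correctly sense that the threshold $s>6/5$ must enter there.

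However, your proposed mechanism for that cross term does not close, and this is a genuine gap. After writing $w^{\epsi,s}_{12}=(w^{\epsi,s}_{12})^{1/2}(w^{\epsi,s}_{12})^{1/2}$ and applying Cauchy--Schwarz, the second factor is $\norm{(w^{\epsi,s}_{12})^{1/2}q_1q_2\psi}^2=\llangle\psi,q_1q_2\,w^{\epsi,s}_{12}\,q_1q_2\psi\rrangle$, and there is no available bound making this small: the ``positivity trick'' only tells you this quantity is nonnegative (which is why it can be \emph{dropped} from the energy identity where it appears with a favorable sign); it does not bound it by $\beta$. A priori it is merely $\mathcal{O}(1)$, since it is part of the interaction energy of the bad particles, so Cauchy--Schwarz yields only an $\mathcal{O}(1)$ estimate for the cross term. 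The paper's device (Lemmas~\ref{lem:div} and \ref{lem:intbyparts}) is different: write $w^{\epsi,s}=\nabla\cdot\xi$ with $\norm{|\xi|}_{L^q}\lesssim\norm{w^{\epsi,s}}_{L^s}$ for $\tfrac1q=\tfrac1s-\tfrac13$ (so $q>2$ precisely when $s>6/5$ --- this is where the threshold actually enters), integrate by parts, and accept that the derivative may land on $q_1\psi$, producing a factor $\norm{\nabla_1 q_1\psi}$ on the right-hand side. This makes the resulting inequality self-referential, of the form $x^2\leq C(R+ax)$ with $x=\norm{\sqrt{\tilde h_1}\,q_1\psi}$, which is then closed by the elementary absorption $x^2\leq 2CR+C^2a^2$. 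Both of these ingredients --- the divergence representation with integration by parts, and the quadratic bootstrap --- are essential and absent from your proposal. A secondary omission: to get errors of order $\beta+N^{-1/2}$ rather than $\sqrt{\beta}$ in the kinetic and mean-field cross terms (lines 3--5 of the energy decomposition), you need the weight-shifting identities $\widehat f\,Q_jTQ_k=Q_jTQ_k\widehat{\tau_{j-k}f}$ with $\widehat n^{\pm1/2}$, not just plain Cauchy--Schwarz; this is the reason the lemma is stated for $\beta$ and not $\alpha$.
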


The way we prove this is to first use Lemma\,\ref{lem:Energyh} which implies 
\begin{align}\label{equ:nablaqpsi}
 \norm{\nabla_1 q_1 \psi }^2 \leq  \norm{\sqrt{\tilde h_1} q_1 \psi }^2 + E_0\, \beta .
\end{align}
We find with $q_1=\id-p_1(p_2+q_2)$ and equation \eqref{equ:htildep} that
\begin{align}\label{equ:esqrth2}
 \norm{\sqrt{ \tilde h_1} q_1 \psi } \leq  \norm{\sqrt {\tilde h_1}(1-p_1 p_2 )\psi} + \norm{\Delta \Phi}_{\LzOf} \sqrt {\beta}
\end{align}
this implies
\begin{align}\label{equ:h_1q_1}
 \norm{\sqrt{ \tilde h_1} q_1 \psi }^2 \leq  \norm{\sqrt {\tilde h_1}(1-p_1 p_2 )\psi}^2 + \norm{\Delta \Phi}_\LzOf^2  {\beta}
\end{align}
hence we try to find a bound for  $\norm{\sqrt {\tilde h_1}(1-p_1 p_2 )\psi}^2$ to bound $\norm{\nabla_1 q_1 \psi }^2$.
The necessary estimate is given in the next lemma.

\begin{lem}\label{lem:h1}
 \begin{align*}
  \llangle \psi, (1-p_1p_2 ) \tilde h_1 (1-p_1p_2) \psi \rrangle 
&\lesssim (E^\psi- E^\phi)+ \norm{\varphi}_{H^2(\Omega)\cap L^\infty(\Omega)}^2(\beta+\frac{1}{\sqrt{N}}+f(\epsi))\\
 &\quad +\norm{\varphi}_{L^2(\Omega) \cap L^\infty(\Omega) } \sqrt{\beta+\frac{1}{N}} \norm{\nabla_1 q_1 \psi}
 \end{align*}

\end{lem}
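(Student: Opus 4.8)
Let me think about what this lemma is really saying and how to prove it.

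We want to bound $\langle\psi,(1-p_1p_2)\tilde h_1(1-p_1p_2)\psi\rangle$ where $\tilde h_1 = -\Delta_{x_1} - \epsilon^{-2}\Delta_{y_1} - \epsilon^{-2}E_0$ is a positive operator (by Lemma 5.5). We have $\norm{\tilde h p}_{\mathrm{Op}} \leq \norm{\Delta\Phi}_{L^2}$. The key resource is energy conservation: $E^\psi$ is constant.

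The natural approach: express $\sum_{i=1}^N \tilde h_i$ in terms of $E^\psi$ minus the interaction term. Since $E^\psi = \langle\psi_N, h_1^\epsilon\psi_N\rangle + \frac{N-1}{2N}\langle\psi_N, w_{12}^\epsilon\psi_N\rangle$, and $h_1^\epsilon = \tilde h_1 + \epsilon^{-2}E_0$, symmetry gives $\langle\psi,\tilde h_1\psi\rangle = E^\psi - \epsilon^{-2}E_0 - \frac{N-1}{2N}\langle\psi,w_{12}^\epsilon\psi\rangle$. Since $\beta \geq 0$ and $w^\epsilon \geq 0$ (assumption A3'), the interaction term is $\geq 0$, so $\langle\psi,\tilde h_1\psi\rangle \leq E^\psi - \epsilon^{-2}E_0$, which by the definition of $E^\varphi$ (whose confined part is exactly $\langle\chi,-\epsilon^{-2}\Delta_y\chi\rangle = \epsilon^{-2}E_0$, since $\chi=\chi_0$) gives $\langle\psi,\tilde h_1\psi\rangle \leq (E^\psi - E^\varphi) + \langle\Phi,(-\Delta_x+\tfrac12 w^0*|\Phi|^2)\Phi\rangle$. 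The last term is bounded by $\norm{\varphi}_{H^2\cap L^\infty}^2$ up to constants. So $\langle\psi,\tilde h_1\psi\rangle \lesssim (E^\psi-E^\varphi) + \norm{\varphi}_{H^2\cap L^\infty}^2$.

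Now I would expand $(1-p_1p_2) = (1-p_1p_2)$ and write $(1-p_1p_2) = q_1 + p_1q_2$ (using $1 = p_1p_2 + q_1p_2 + p_1q_2 + q_1q_2$ and $q_1p_2+q_1q_2 = q_1$). So $(1-p_1p_2)\psi = q_1\psi + p_1 q_2\psi$. Then
\begin{align*}
\langle\psi,(1-p_1p_2)\tilde h_1(1-p_1p_2)\psi\rangle &= \langle(q_1+p_1q_2)\psi,\tilde h_1(q_1+p_1q_2)\psi\rangle\\
&= \norm{\sqrt{\tilde h_1}q_1\psi}^2 + 2\mathrm{Re}\langle q_1\psi,\tilde h_1 p_1q_2\psi\rangle + \norm{\sqrt{\tilde h_1}p_1q_2\psi}^2.
\end{align*}
The third term: $\norm{\sqrt{\tilde h_1}p_1 q_2\psi}^2 = \langle q_2\psi, p_1\tilde h_1 p_1 q_2\psi\rangle \leq \norm{\tilde h_1 p_1}_{\mathrm{Op}}^2\norm{q_2\psi}^2 \lesssim \norm{\Delta\Phi}_{L^2}^2\,\alpha \leq \norm{\varphi}_{H^2}^2\beta$ using $\alpha\leq\beta$. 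Wait — more carefully, $\norm{\sqrt{\tilde h_1}p_1 q_2\psi}^2 = \langle q_2\psi, p_1\tilde h_1 p_1 q_2\psi\rangle$; since $\tilde h_1 p_1$ is bounded by $\norm{\Delta\Phi}_{L^2}$ in operator norm and $p_1$ is a projection, this is $\leq \norm{\tilde h_1 p_1}_{\mathrm{Op}}\norm{p_1\tilde h_1^{1/2}}_{\mathrm{Op}}\norm{q_2\psi}^2$... actually the clean bound is $\langle q_2\psi, p_1\tilde h_1 p_1 q_2\psi\rangle \leq \norm{p_1\tilde h_1 p_1}_{\mathrm{Op}}\norm{q_2\psi}^2 \leq \norm{\tilde h_1 p_1}_{\mathrm{Op}}^2\norm{q_2\psi}^2$, bounded by $\norm{\varphi}_{H^2}^2\beta$.

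For the cross term $2\mathrm{Re}\langle q_1\psi, \tilde h_1 p_1 q_2\psi\rangle$: write $\tilde h_1 p_1 q_2\psi = (\tilde h_1 p_1)q_2\psi$, and estimate $|\langle q_1\psi,\tilde h_1 p_1 q_2\psi\rangle| = |\langle \nabla_1 q_1\psi, \nabla_1(\text{something})\rangle|$ — actually better to not integrate by parts but just bound $|\langle q_1\psi,\tilde h_1 p_1 q_2\psi\rangle| \leq \norm{\tilde h_1^{1/2}q_1\psi}\,\norm{\tilde h_1^{1/2}p_1 q_2\psi}$, and the second factor we just bounded as $\lesssim\norm{\varphi}_{H^2}\sqrt\beta$. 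But then we'd get $\norm{\tilde h_1^{1/2}q_1\psi}$ which appears on the left. Instead, since the lemma's RHS contains a term $\norm{\varphi}_{L^2\cap L^\infty}\sqrt{\beta+1/N}\norm{\nabla_1 q_1\psi}$, I would bound the cross term by $\norm{\nabla_1 q_1\psi}\cdot\norm{\nabla_1 p_1 q_2\psi}$ (integrating by parts on the $\nabla_{x_1}$ and $\nabla_{y_1}$ parts separately, using that $\tilde h_1$ is built from $-\Delta$), and $\norm{\nabla_1 p_1 q_2\psi}\leq\norm{\nabla p}_{\mathrm{Op}}\norm{q_2\psi}\lesssim\norm{\varphi}_{H^1}\sqrt\alpha$; this plus $\epsilon^{-2}E_0$-type terms... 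Actually the $\epsilon^{-2}$ factor from $-\epsilon^{-2}\Delta_{y_1}$ is dangerous, but since $\chi=\chi_0$ and $(\epsilon^{-2}\Delta_y - \epsilon^{-2}E_0)\chi = 0$, the $y$-part of $\tilde h_1 p_1$ actually annihilates the confined factor, which is precisely why Lemma 5.5 gives the $\epsilon$-independent bound $\norm{\tilde h_1 p_1}_{\mathrm{Op}}\leq\norm{\Delta\Phi}_{L^2}$.

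Collecting: $\langle\psi,(1-p_1p_2)\tilde h_1(1-p_1p_2)\psi\rangle \leq \norm{\sqrt{\tilde h_1}q_1\psi}^2 + C\norm{\varphi}_{L^2\cap L^\infty}\sqrt{\beta+\tfrac1N}\norm{\nabla_1 q_1\psi} + C\norm{\varphi}_{H^2}^2\beta$, and then I use the energy-conservation bound plus the relation $\norm{\sqrt{\tilde h_1}q_1\psi}^2 \leq \langle\psi,\tilde h_1\psi\rangle$ — no wait, that's false because $q_1\tilde h_1 q_1 \neq \tilde h_1$ in general and $p_1\tilde h_1 p_1$ is not small. The correct route: $\norm{\sqrt{\tilde h_1}q_1\psi}^2 = \langle\psi, q_1\tilde h_1 q_1\psi\rangle \leq \langle\psi,\tilde h_1\psi\rangle + |\langle\psi,(p_1\tilde h_1 q_1 + q_1\tilde h_1 p_1 + p_1\tilde h_1 p_1)\psi\rangle|$; the last cross-terms are controlled by $\norm{\tilde h_1 p_1}_{\mathrm{Op}}$ and $\norm{q_1\psi}$, i.e. by $\norm{\varphi}_{H^2}^2\sqrt\beta$ and similar — here again the $\norm{\nabla_1 q_1\psi}$-weighted term appears. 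Combining everything with $\langle\psi,\tilde h_1\psi\rangle \lesssim (E^\psi-E^\varphi)+\norm{\varphi}_{H^2\cap L^\infty}^2$ and collecting all error terms into the three claimed forms ($(E^\psi-E^\varphi)$; $\norm{\varphi}_{H^2\cap L^\infty}^2(\beta+\tfrac{1}{\sqrt N}+f(\epsilon))$; $\norm{\varphi}_{L^2\cap L^\infty}\sqrt{\beta+\tfrac1N}\norm{\nabla_1 q_1\psi}$) yields the lemma.

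\textbf{Main obstacle.} The delicate point is handling the interaction term in $E^\psi$: I claimed it is $\geq 0$ so it can be dropped, giving $\langle\psi,\tilde h_1\psi\rangle \leq E^\psi - \epsilon^{-2}E_0$. That is fine under A3' ($w\geq 0$). But to get the $f(\epsilon)$ in the final bound one needs to relate the confined part of $\langle\psi,\tilde h_1\psi\rangle$ correctly to $E^\varphi$ — this requires knowing $\chi=\chi_0$ is the ground state (so $\langle\chi,-\epsilon^{-2}\Delta_y\chi\rangle = \epsilon^{-2}E_0$ exactly), which is exactly hypothesis of Theorem 2. The appearance of $f(\epsilon)$ actually comes from having to replace $w^\epsilon$ by $w^0$ somewhere, or from the initial-data comparison of energies; more plausibly the $f(\epsilon)$ and $\tfrac{1}{\sqrt N}$ arise when one needs $\langle\psi,w^\epsilon_{12}\psi\rangle$-type expectation values re-expressed via $p$'s and $q$'s (using $p_1p_2 w^\epsilon_{12}p_1p_2 \approx w^0*|\Phi|^2$ up to $f(\epsilon)$, as in Chapter 4), and the combinatorial $\frac{N-1}{N}$ versus $1$ mismatch produces $O(1/N)$ which under the $\beta$-weighting becomes the $\tfrac{1}{\sqrt N}$. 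The self-adjointness/domain assumption A2' is what makes $\langle\psi_N, H_N^\epsilon\psi_N\rangle$ and all the formal manipulations (splitting $H_N$ into one- and two-body parts acting on $\psi\in D(H_N)\subset D(\sum h_i)$) legitimate, so I would invoke it explicitly when writing $\langle\psi,\tilde h_1\psi\rangle = \frac1N\langle\psi,\sum_i\tilde h_i\psi\rangle$ and pairing it with energy conservation.
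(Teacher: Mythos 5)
Your overall strategy (energy conservation, positivity of $w$, and the bound $\norm{\tilde h p}_{\mathrm{Op}}\leq\norm{\Delta\Phi}_{\LzOf}$) starts in the right place, but the central step fails quantitatively. You drop the \emph{entire} interaction term $\frac{N-1}{2N}\llangle\psi,w^\epsi_{12}\psi\rrangle\geq 0$ to obtain $\llangle\psi,\tilde h_1\psi\rrangle\leq E^\psi-\epsi^{-2}E_0=(E^\psi-E^\varphi)+\langle\Phi,(-\Delta_x+\tfrac12 w^0*|\Phi|^2)\Phi\rangle$. The second summand is a fixed, positive, $O(1)$ quantity, so this chain can only ever yield an $O(1)$ upper bound on $\llangle\psi,q_1\tilde h_1 q_1\psi\rrangle$ — it cannot yield the claimed $O(\beta+N^{-1/2}+f(\epsi))$ bound, which is the entire content of the lemma and is what makes the Energy Lemma usable in the Grönwall closure. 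The paper instead inserts $\id=p_1p_2+(1-p_1p_2)$ on both sides of $w^\epsi_{12}$ as well as of $\tilde h_1$ and discards by positivity only the piece $(1-p_1p_2)w^\epsi_{12}(1-p_1p_2)$; the retained piece $\frac{N-1}{2N}\llangle\psi,p_1p_2w^\epsi_{12}p_1p_2\psi\rrangle$ cancels the mean-field energy $\tfrac12\langle\Phi,(w^0*|\Phi|^2)\Phi\rangle$ sitting inside $E^\varphi$ up to errors of size $\beta+N^{-1}+f(\epsi)$ (this is where $f(\epsi)$ genuinely enters — in your plan it never actually appears, as you yourself suspect in your closing paragraph), and the remaining cross piece $\llangle\psi,p_1p_2w^\epsi_{12}q_1q_2\psi\rrangle$ is precisely the source of the term $\norm{\varphi}_{L^2\cap L^\infty}\sqrt{\beta+1/N}\,\norm{\nabla_1q_1\psi}$, obtained by writing $w^{\epsi,s}=\nabla\cdot\xi$ and integrating by parts (Lemma\,\ref{lem:intbyparts}); you attribute that term to the kinetic cross terms instead, which is a misattribution.

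A second, smaller gap: you estimate the kinetic cross terms $\llangle\psi,(1-p_1p_2)\tilde h_1p_1p_2\psi\rrangle$ by naive Cauchy--Schwarz, yielding $\norm{\Phi}_{H^2}\sqrt{\beta}$. The lemma requires $\norm{\Phi}_{H^2}(\beta+N^{-1/2})$; a bare $\sqrt\beta$ fits none of the three admissible terms on the right-hand side and would spoil the subsequent Grönwall argument. The paper upgrades $\sqrt\beta$ to $\beta+N^{-1/2}$ by inserting the weights $\widehat n^{-1/2}$ and $\widehat{\tau_1 n}^{1/2}$ on either side of $\tilde h_1$ before applying Cauchy--Schwarz, as in \eqref{equ:line3}; this weight-shifting step is exactly why the functional $\beta$ with weight $\sqrt{k/N}$ is used rather than $\alpha$, and it is absent from your argument.
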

\begin{proof}[Proof of the Energy Lemma]

After rewriting the left-hand side of Lemma\,\ref{lem:h1} we find
\begin{align*}
 \norm{\sqrt {\tilde h_1}(1-p_1 p_2 )\psi}^2 &\lesssim (E^\psi- E^\phi)+ \norm{\varphi}_{H^2(\Omega)\cap L(\Omega)^\infty}^2(\beta+\frac{1}{\sqrt{N}}+f(\epsi))\\
&\quad+ \norm{\varphi}_{L^2(\Omega) \cap L^\infty(\Omega) } \sqrt{\beta+\frac{1}{N}} \norm{\nabla_1 q_1 \psi} \numberthis \label{equ:esqrth}.
\end{align*}
This leads to 

\begin{align*}
\norm{\sqrt{ \tilde h_1} q_1 \psi }^2 & \stackrel{ \eqref{equ:h_1q_1}}{ \lesssim}  \norm{\sqrt {\tilde h_1}(1-p_1 p_2 )\psi}^2+\norm{\Phi}^2_{H^2(\Omega_\mathrm{f})}  {\beta} \\
&\stackrel{\eqref{equ:esqrth}}{\lesssim}  (E^\psi- E^\phi)+ \norm{\varphi}_{H^2(\Omega)\cap L^\infty(\Omega)}^2(\beta+\frac{1}{\sqrt{N}}+f(\epsi))\\
&\qquad+ \norm{\varphi}_{L^2(\Omega) \cap L^\infty(\Omega) } \sqrt{\beta+\frac{1}{N}} \norm{\nabla_1 q_1 \psi}\\
&\stackrel{\eqref{equ:nablaqpsi}}{\lesssim}  (E^\psi- E^\phi)+ \norm{\varphi}_{H^2(\Omega)\cap L^\infty(\Omega)}^2(\beta+\frac{1}{\sqrt{N}}+f(\epsi))\\
&\qquad+ \norm{\varphi}_{L^2(\Omega) \cap L^\infty(\Omega) } \sqrt{\beta+\frac{1}{N}}\norm{\sqrt{\tilde h_1} q_1 \psi } . \numberthis \label{equ:penlem} 
\end{align*}
Now we have an inequality of the form $x^2 \leq C(R+ax) $ from which follows that $ x^2 \leq 2 C R+ C^2 a^2$ since $Cax \leq \frac{1}{2}C^2 a^2+ \frac{1}{2}x^2$. Applying this estimate to \eqref{equ:penlem} we find
\begin{align*}
 \norm{\sqrt{ \tilde h_1} q_1 \psi }^2 \lesssim   (E^\psi- E^\phi)+ \norm{\varphi}_{H^2(\Omega)\cap L^\infty(\Omega)}^2(\beta+\frac{1}{\sqrt{N}}+f(\epsi))
\end{align*}
and equation \eqref{equ:nablaqpsi} yields
\begin{align*}
 \norm{\nabla_1 q_1 \psi }^2  \lesssim (E^\psi- E^\phi)+ \norm{\varphi}_{H^2(\Omega)\cap L^\infty(\Omega)}^2(\beta+\frac{1}{\sqrt{N}}+f(\epsi))
\end{align*}
which is exactly the claim of Lemma\,{\ref{lem:energy}}.
\end{proof}

\subsubsection{Proof of Lemma\,\ref{lem:h1} }
The remaining part of this chapter is devoted to proving Lemma\,\ref{lem:h1}.  
To keep the notation to a minimum we do not write, whenever it does not lead to confusion, the underlying sets of the function spaces and write $\norm{\cdot}$ and $\langle \cdot, \cdot \rangle $ 
for the $L^2$-norm and scalar product on the appropriate set. As an example
\begin{align*}
 \norm{\Phi}= \norm{\Phi}_\LzOf \qquad \norm{\varphi}_{H^2 \cap L^\infty}= \norm{\varphi}_{H^2(\Omega) \cap L^\infty(\Omega)} \qquad \langle \chi,\chi \rangle=\langle \chi,\chi \rangle_{\LzOc}.
\end{align*}
 
\begin{proof}[Proof of Lemma\,\ref{lem:h1} ]
The estimate of $ \llangle \psi, (1-p_1p_2 )\tilde h_1 (1-p_1p_2) \psi \rrangle $ is obtained by rewriting the expression in terms of the energy difference $E^\psi-E^\varphi$ 
 and the remaining parts. Since
\begin{align*}
 E^\psi-E^\varphi&= \llangle  \psi ,(p_1p_2+1-p_1p_2) \tilde h_1 (p_1 p_2 +1 -p_1p_2) \psi \rrangle \\
&\quad+\frac{N-1}{2N}\llangle  \psi, (p_1 p_2 +1 -p_1p_2) w^\epsi_{12} (p_1 p_2 +1 -p_1p_2)\psi \rrangle\\
&\quad- \langle \varphi, -\Delta_x- \frac{1}{\epsi^2} (\Delta_y+E) \varphi \rangle - \langle \Phi ,\frac{1}{2} (w^0*|\Phi|^2) \Phi \rangle
\end{align*}
After expanding the terms in the first row, isolating the term $ \llangle \psi, (1-p_1p_2 ) h_1 (1-p_1p_2) \psi \rrangle$
 and subsequently arranging the terms in a convenient we find 
\begin{align}\label{equ:hp}
& \llangle \psi, (1-p_1p_2 ) \tilde h_1 (1-p_1p_2) \psi \rrangle \notag \\
&= E^\psi- E^\phi \notag \\
&\quad- \llangle \psi, p_1p_2 \tilde h_1 p_1p_2 \psi \rrangle+ \langle \varphi, - \Delta_x - \frac{1}{\epsi^2} (\Delta_y+E) \varphi \rangle \notag \\
&\quad-\llangle \psi, (1-p_1p_2 )\tilde h_1 p_1p_2 \psi \rrangle-\llangle \psi, p_1p_2 \tilde  h_1 (1-p_1p_2) \psi \rrangle \notag\\
&\quad-\frac{N-1}{2N} \llangle  \psi, p_1 p_2 w^\epsi_{12} p_1 p_2 \psi \rrangle + \langle \Phi, \frac{1}{2} (w^0*|\Phi|^2) \Phi \rangle \notag\\
&\quad-\frac{N-1}{2N} \llangle  \psi,(1- p_1 p_2) w^\epsi_{12} p_1 p_2 \psi \rrangle- \frac{N-1}{2N} \llangle  \psi, p_1 p_2 w^\epsi_{12}(1- p_1 p_2) \psi \rrangle \notag \\
&\quad -\frac{N-1}{2N} \llangle  \psi,(1- p_1 p_2) w^\epsi_{12}(1- p_1 p_2) \psi \rrangle .
\end{align}
After estimating the terms line by line we will obtain
 \begin{align*}\label{equ:hpest}
&  \llangle \psi, (1-p_1p_2 ) \tilde h_1 (1-p_1p_2) \psi \rrangle \\
& \lesssim \big( E^\psi- E^\phi \big)\\
&\quad +\norm{\Phi}_{H^1}^2 \beta\\
&\quad+ \norm{ \Phi}_{H^2}(\beta + \frac{1}{\sqrt N})\\
&\quad+(1+\norm{\varphi}_{ L^\infty})^2(\beta+ \frac{1}{N}+f(\epsi))\\
&\quad+ \norm{ \varphi}_{H^1 \cap L^\infty}^2 (\beta+\frac{1}{N})+(1+ \norm{\varphi}_{L^\infty} ) \sqrt{\beta+\frac{1}{N}} \norm{\nabla_1 q_1 \psi} \numberthis.
\end{align*}
A finale simplification leads to the claimed result
\begin{align*}
&\llangle \psi, (1-p_1p_2 ) \tilde h_1 (1-p_1p_2) \psi \rrangle \\
&\lesssim(E^\psi- E^\phi)+ \norm{\varphi}_{H^2\cap L^\infty}^2(\beta+\frac{1}{\sqrt{N}}+f(\epsi))\\
&\quad+ \norm{\varphi}_{L^2 \cap L^\infty } \sqrt{\beta+\frac{1}{N}} \norm{\nabla_1 q_1 \psi}.
 \end{align*}
We prove the estimates that lead from equation \eqref{equ:hp} to \eqref{equ:hpest} line by line. We do not have to estimate the first line. 

\textit{Line 2}. 
\begin{align*}
 | \langle \varphi, \tilde h_1 \varphi \rangle- \llangle \psi, p_1p_2  \tilde h_1 p_1p_2 \psi \rrangle  |&=
| \langle \varphi, \tilde h_1 \varphi \rangle-  \langle \varphi, \tilde h_1 \varphi \rangle \llangle \psi, p_1p_2  \psi \rrangle  | \\
&=\langle \varphi, \tilde h_1 \varphi \rangle |\llangle \psi, (1-p_1p_2 )\psi \rrangle|\\
&=\langle \Phi,-\Delta_x \Phi \rangle |\langle \psi, (p_1q_2 +q_1p_2+ q_1q_2) \psi \rangle|\\
&\weq{\eqref{def:alpha}}{\leq}  3 \norm{\Phi}^2_{H^1} \alpha  \stackrel{\eqref{equ:relalphabeta}}{\lesssim} \norm{\Phi}^2_{H^1} \beta 
\end{align*}

\textit{Line 3}.
The term 
\begin{align*} 
-\llangle \psi, (1-p_1p_2 ) \tilde h_1 p_1p_2 \psi \rrangle-\llangle \psi, p_1p_2  \tilde h_1 (1-p_1p_2) \psi \rrangle\
\end{align*}
is bounded in absolute value by 

\begin{align*}
2  |\llangle \psi, (1-p_1p_2 ) \tilde h_1 p_1p_2 \psi \rrangle| &= 2 |\llangle \psi, (q_1+p_1q_2)  \tilde h_1 p_1p_2 \psi \rrangle|\\
&= 2 |\llangle \psi, q_1  \tilde h_1 p_1p_2 \psi \rrangle|\\
&= 2 |\llangle \psi, q_1 \widehat n^{-\frac{1}{2}} \widehat n^\frac{1}{2}   \tilde h_1 p_1p_2 \psi \rrangle|\\
&\stackrel{\ref{lem:weights}}{=} 2 |\llangle \psi, q_1 \widehat n^{-\frac{1}{2}}    \tilde h_1 \widehat {\tau_1 n}^\frac{1}{2} p_1p_2 \psi \rrangle|\\
&\leq 2 \sqrt{\llangle \psi, \widehat n^{-1} q_1  \psi \rrangle} \sqrt {\llangle \psi,   p_1 p_2 \widehat {\tau_1 n}^\frac{1}{2} \tilde h_1^2 \widehat {\tau_1 n}^\frac{1}{2} p_1 p_2 \psi \rrangle}\\
&\stackrel{\mathclap{ \ref{lem:weights}}}{=} 2\sqrt{\llangle \psi, \widehat n \psi \rrangle} \sqrt{\langle \varphi, \tilde h^2 \varphi \rangle} \sqrt {\llangle \psi,  \widehat {\tau_1 n} \psi \rrangle}\\
&\stackrel{\mathclap{ \eqref{equ:htildep}}}{\leq } 2 \sqrt{\beta} \norm{ \Phi}_{H^2}   \sqrt{\llangle \psi,  \widehat {\tau_1 n}    \psi \rrangle}\\ 
&\stackrel{\ref{lem:weights}}{\leq} 2 \sqrt{\beta} \norm{ \Phi}_{H^2}   \sqrt{\beta + \frac{1}{\sqrt N} } \\
&\lesssim  \norm{ \Phi}_{H^2}(\beta + \frac{1}{\sqrt N}) \numberthis \label{equ:line3}.
\end{align*}

\textit{Line 4}.
\begin{align*}
 |\langle &\Phi, \frac{1}{2} (w^0*|\Phi|^2) \Phi \rangle- \frac{N-1}{2N} \llangle  \psi, p_1 p_2 w^\epsi_{12} p_1 p_2 \psi \rrangle  |\\
& \stackrel{\mathclap{ \eqref{equ:pwp}}}{ =} \,  \frac{1}{2} |\langle  \varphi, (w^0*|\varphi|^2) \varphi  \rangle - (1+\frac{1}{N})
 \langle \varphi, (w^\epsi* |\varphi|^2)  \varphi  \rangle \llangle \psi, p_1 p_2 \psi \rrangle|\\
&\leq  \frac{1}{2} |\langle  \varphi, (w^0*|\varphi|^2-w^\epsi* |\varphi|^2)\varphi  \rangle|
 + \frac{1}{2} |\langle \varphi, (w^\epsi* |\varphi|^2)  \varphi  \rangle \llangle \psi,(1- p_1 p_2) \psi \rrangle| \\
&\quad+\frac{1}{2N} \langle \varphi, (w^\epsi* |\varphi|^2)  \varphi  \rangle \llangle \psi, p_1 p_2 \psi \rrangle|\\
& \leq \norm{(w^0*|\varphi|^2-w^\epsi* |\varphi|^2)}_\infty+ \frac{3}{2} \norm{w^\epsi* |\varphi|^2}_\infty (\beta+ \frac{1}{N}) \\
&\stackrel{\mathclap{\ref{cor:op},A1'}}{\lesssim} \norm{\varphi}^2_{L^2 \cap L^\infty}(\beta+ \frac{1}{N}+f(\epsi)) =(1+\norm{\varphi}_{ L^\infty})^2(\beta+ \frac{1}{N}+f(\epsi))
\end{align*}

\textit{Line 5}.
This line is bounded in absolute value by
\begin{align*}\label{equ:energy1q2q}
|\llangle  \psi, p_1 p_2 w^\epsi_{12}(1- p_1 p_2) \psi \rrangle| &= |\llangle  \psi, p_1 p_2 w^\epsi_{12}(q_1p_2+ p_1q_2+ q_1q_2) \psi \rrangle| \\
&\leq 2| \llangle  \psi, p_1 p_2 w^\epsi_{12} q_1p_2 \psi \rrangle|+| \llangle  \psi, p_1 p_2 w^\epsi_{12} q_1q_2 \psi \rrangle |. \numberthis
\end{align*}
The first term is bounded by 
\begin{align*}
 | \llangle  \psi, p_1 p_2 w^\epsi_{12} q_1p_2 \psi \rangle|
&=| \llangle  \psi, p_1 p_2 w^\epsi_{12} \widehat n^{-\frac{1}{2}} \widehat n^\frac{1}{2}   q_1 p_2  \psi \rrangle|\\
&\stackrel{\mathclap{ \ref{lem:weights}}}{=}\, | \llangle  \psi, p_1 p_2 \widehat {\tau_1 n}^\frac{1}{2} w^\epsi_{12} \widehat n^{-\frac{1}{2}}   q_1 p_2 \psi \rrangle|\\
&\leq \norm{ p_2 w^\epsi_{12} p_2}_{\mathrm{Op}} \norm{\widehat {\tau_1 n}^\frac{1}{2} \psi} \norm{\widehat n^{-\frac{1}{2}}   q_1 \psi}\\
&\stackrel{\mathclap{ \ref{lem:weights}}}{\leq}  \norm{ p_2 w^\epsi_{12} p_2}_{\mathrm{Op}}\sqrt{\langle \psi, \widehat {\tau_1 n} \psi \rangle} \sqrt{\psi, \widehat n \psi \rangle }\\
& \leq    \norm{ p_2 w^\epsi_{12} p_2}_{\mathrm{Op}}  (\beta+\frac{1}{\sqrt N})\\
& \stackrel{\mathclap{\ref{cor:op}} }{ \lesssim}  (1+\norm{\varphi}_\infty)^2   (\beta+\frac{1}{\sqrt N}).
\end{align*}
\begin{rem}
 As already mentioned at the end of Section \ref{sec:beta} the estimation of this term leads to the condition \eqref{equ:beta1q} and is thus the main reason why we need to use $\beta$ for stronger singularities.  
\end{rem}
The second term of equation \eqref{equ:energy1q2q} demands a more elaborate proof and is thus treated separately in Lemma\,\ref{lem:intbyparts}.\\
\textit{Line 6}.
If assumption A3' holds the interaction is nonnegative and we obtain
\begin{align*}
 -\frac{N-1}{2N} \llangle  \psi,(1- p_1 p_2) w^\epsi_{12}(1- p_1 p_2) \psi \rrangle \ \leq 0.
\end{align*}
In the case Remark \ref{rem:negpot} holds we can use the appropriate fraction of the kinetic energy from the left-hand side of equation \eqref{equ:hp} to control this term
\begin{align*}
-\frac{N-1}{2N} \llangle  \psi,(1- p_1 p_2) w^\epsi_{12}(1- p_1 p_2) \psi \rrangle -(1-\kappa)\llangle \psi, (1-p_1p_2 ) \tilde h_1 (1-p_1p_2) \psi \rrangle \leq 0.
\end{align*}
The only thing changed by this is the addition of the negligible constant $\kappa^{-1}$ in front of all terms of the right-hand side of equation \eqref{equ:hp}.
\end{proof}
The following lemmas are necessary to provide the final bound for 
\begin{align*}
 | \llangle  \psi, p_1 p_2 w^\epsi_{12} q_1q_2 \psi \rrangle |.
\end{align*}
Since we need similar arguments in the estimations of the derivatives of $\beta$ we give a detailed account of the used techniques.  

\begin{lem}[Writing a $L^s$ function as divergence of a vector field]\label{lem:div}

Let $D$ be a domain of $\R^d$ with $d \geq 3$ and smooth boundary, $f \in L^s(D)$ and 
\begin{align*}
 \Gamma(x):= \big((d-2) |\field{S}^{d-1}| \big)^{-1} |x|^{2-d},
\end{align*}
then
\begin{align*}
 \xi (x):=  \int_D  - \nabla  \Gamma(x-y)  f(y) \D y 
\end{align*}
     is a well-defined function on $D$ and solves
\begin{align}\label{div} 
\nabla \xi =  f.
\end{align}
Furthermore $ \xi \in W^{1,s}(D)$ and
\begin{align}\label{divab}
\norm{ |\xi| }_{L^q(D)} \leq C(d,p) \norm{f}_{L^s(D)}, 
\end{align}
where $\frac{1}{q}=\frac{1}{s}-\frac{1}{d}$.
 
\end{lem}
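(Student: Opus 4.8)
The plan is to recognize $\xi$ as the convolution of $f$ (extended by zero outside $D$) with the classical Newtonian vector field, and then to read off every claimed property from the standard mapping estimates for Riesz potentials and singular integrals. Throughout, the identity \eqref{div} is understood as $\nabla\cdot\xi=f$, since $\xi$ is vector-valued and $f$ scalar. Concretely, I would set $\tilde f:=f\,\mathbf{1}_D\in L^s(\R^d)$ and write $\xi=K*\tilde f$ with $K:=-\nabla\Gamma$. A direct computation gives $K(x)=\frac{x}{|\field{S}^{d-1}|\,|x|^{d}}$, so $|K(x)|=|\field{S}^{d-1}|^{-1}|x|^{1-d}$ and hence $K\in L^{d/(d-1)}_w(\R^d)$; moreover the constant in $\Gamma$ is exactly the one for which $-\Delta\Gamma=\delta_0$, so $\nabla\cdot K=\delta_0$ in $\mathcal{D}'(\R^d)$.

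For well-definedness and the bound \eqref{divab}, I would apply the weak Young inequality (equivalently, Hardy--Littlewood--Sobolev) to $K*\tilde f$: since $K\in L^{d/(d-1)}_w$ and $\tilde f\in L^s$ with $1<s<d$ (which holds in the relevant range $s\in(6/5,2)$, $d=3$), the convolution is finite almost everywhere and lies in $L^q(\R^d)$ with $\tfrac1q=\tfrac1s+\tfrac{d-1}{d}-1=\tfrac1s-\tfrac1d$, together with $\norm{|\xi|}_{L^q(\R^d)}\le C(d,s)\norm{\tilde f}_{L^s}=C(d,s)\norm{f}_{L^s(D)}$. Restricting to $D$ gives \eqref{divab}. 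The identity \eqref{div} then follows at once from $\nabla\cdot K=\delta_0$, namely $\nabla\cdot\xi=(\nabla\cdot K)*\tilde f=\tilde f$ in $\mathcal{D}'$, which on $D$ equals $f$.

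For the $W^{1,s}$ regularity, I would note that the entries of the Jacobian are $\partial_i\xi_j=-(\partial_i\partial_j\Gamma)*\tilde f$, which on the Fourier side is the multiplier operator with symbol $\xi_i\xi_j/|\xi|^2$ applied to $\tilde f$ (up to sign); this symbol is homogeneous of degree $0$ and smooth away from the origin, so by the Mikhlin--H\"ormander multiplier theorem (equivalently, by $L^s$-boundedness of the double Riesz transforms) one gets $\partial_i\xi_j\in L^s(\R^d)$ with $\norm{\nabla\xi}_{L^s}\le C(d,s)\norm{f}_{L^s}$ for every $s\in(1,\infty)$. Combined with $\xi\in L^q$ and $q>s$ this yields $\xi\in W^{1,s}$ on bounded subsets of $D$, which is all that is used later. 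I do not expect any genuine obstacle here: the only points needing care are bookkeeping the exponents in the weak Young inequality and invoking the Calder\'on--Zygmund bound for the second derivatives of the Newtonian potential, both of which are textbook facts (e.g.\ Lieb--Loss, Stein). The smooth boundary of $D$ plays no role in this route, since $f$ is extended by zero; it would matter only if one demanded a solution compatible with boundary data, which is not required.
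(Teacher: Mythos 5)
Your proposal is correct and follows essentially the same route as the paper: well-definedness and $\nabla\cdot\xi=f$ from the fundamental solution property of $\Gamma$ (the paper cites Theorem 6.21 in Lieb--Loss), the bound \eqref{divab} from the weak Young inequality with $|\nabla\Gamma|\in L^{d/(d-1)}_w$, and the $W^{1,s}$ regularity from the Calder\'on--Zygmund estimate for second derivatives of the Newtonian potential (the paper cites Theorem 9.9 in Gilbarg--Trudinger, which is the same fact you obtain via the Riesz-transform/multiplier argument). Your remark that the smooth boundary is irrelevant once $f$ is extended by zero, and that only local $W^{1,s}$ integrability is needed downstream, is a fair and accurate reading of how the lemma is used.
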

\begin{proof}
The fact that $\xi$ is well-defined and equation \eqref{div} follows directly from Poisson's equation for distributions e.g. Theorem 6.21 in \cite{LieLos01}.
The fact that $  \xi \in W^{1,s}(D)$ follows e.g. from Theorem 9.9 and the remark at the end of its proof in \cite{GilTru01}.
Equation \eqref{divab} is a result of the Generalized Young inequality and 
\begin{align*}
 \norm {| \nabla \Gamma |}_{L^r_\mathrm{w}} \leq C \norm{ \frac{1}{|x|^{d-1}} }_{L^r_\mathrm{w}}  \leq C(d) 
\end{align*}
with $r=\frac{d}{d-1}$. Since
\begin{align*}
 \norm{ |\xi| }_{L^q(D)} \leq \norm{C \frac{1}{|x|^{d-1}}*|f|}_{L^q(D)}\leq C \norm{\frac{1}{|x|^{d-1}}}_{L^r_\mathrm{w}(\R^3)} \norm{f}_{L^s(D)} \leq C(d)\norm{f}_{L^s(D)} 
\end{align*}
with $\frac{1}{q}=\frac{1}{s}+\frac{1}{r}-1=\frac{1}{s}-\frac{1}{d}$ for  $1<q,s <\infty $.
\end{proof}

\begin{cor}\label{cor:xi}
 Let $A1'$ hold for $w^{\epsi,s}$  and define $\xi $ as the vector field from Lemma\,\ref{lem:div}
\begin{align*}
 \xi(r) :=  \int_{\tilde \Omega}  - \nabla  \Gamma(r-r_1)  w^{\epsi,s}(r_1) \D r_1,
\end{align*}
then
\begin{align}\label{equ:corxi}
 \norm{|\varphi|^2* \xi^2 }_{L^\infty(\R^3)} \lesssim  (1+ \norm{\varphi}_{ L^\infty(\Omega) })^2.
\end{align}

\end{cor}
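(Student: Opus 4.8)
The plan is to chain three ingredients: Lemma~\ref{lem:div} to control $\xi$ in a Sobolev-type $L^q$ space, the generalised Young inequality for the convolution against $|\varphi|^2$, and $L^p$-interpolation for $\varphi$ using $\norm{\varphi}_{L^2(\Omega)}=1$. First I would apply Lemma~\ref{lem:div} with $D=\tilde\Omega$, $d=3$ and $f=w^{\epsi,s}$ (admissible since $s\in(\tfrac65,2)\subset(1,3)$ and A1$'$ gives $w^{\epsi,s}\in L^s(\tilde\Omega)$): by \eqref{divab},
\begin{align*}
 \norm{\xi}_{L^q(\tilde\Omega)}\lesssim\norm{w^{\epsi,s}}_{L^s(\tilde\Omega)}\lesssim 1,\qquad \tfrac1q=\tfrac1s-\tfrac13,
\end{align*}
the last bound being assumption A1$'$, which is uniform in $\epsi$. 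Since $s\in(\tfrac65,2)$ one has $q\in(2,6)$, hence $q/2\in(1,3)$, and squaring gives $\norm{\xi^2}_{L^{q/2}(\tilde\Omega)}=\norm{\xi}_{L^q(\tilde\Omega)}^2\lesssim 1$.

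Next I would write the convolution estimate via Young's inequality with the conjugate exponent $p:=\tfrac{q}{q-2}$ of $q/2$ (so that $\tfrac1p+\tfrac2q=1$), extending $|\varphi|^2$ by zero outside $\Omega$ and $\xi^2$ by zero outside $\tilde\Omega$:
\begin{align*}
 \norm{|\varphi|^2*\xi^2}_{L^\infty(\R^3)}\le\norm{|\varphi|^2}_{L^p(\Omega)}\,\norm{\xi^2}_{L^{q/2}(\tilde\Omega)}\lesssim\norm{\varphi}_{L^{2p}(\Omega)}^2.
\end{align*}
Because $2p=\tfrac{2q}{q-2}\in(3,\infty)$ and $\norm{\varphi}_{L^2(\Omega)}=1$ (recall $\varphi=\Phi\chi$ with $\norm{\Phi}_{\LzOf}=\norm{\chi}_{\LzOc}=1$), $L^p$-interpolation between $L^2(\Omega)$ and $L^\infty(\Omega)$ yields, with $\lambda=2/q\in(0,1)$,
\begin{align*}
 \norm{\varphi}_{L^{2p}(\Omega)}\le\norm{\varphi}_{L^2(\Omega)}^{1-\lambda}\norm{\varphi}_{L^\infty(\Omega)}^{\lambda}\le 1+\norm{\varphi}_{L^\infty(\Omega)}.
\end{align*}
Combining the last two displays gives $\norm{|\varphi|^2*\xi^2}_{L^\infty(\R^3)}\lesssim(1+\norm{\varphi}_{L^\infty(\Omega)})^2$, which is \eqref{equ:corxi}.

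I do not expect a genuine obstacle here; the argument is essentially bookkeeping with Young/Sobolev exponents. The one place where the hypotheses are really used is the strict inequality $q>2$, equivalently $s>s_0=\tfrac65$, which is what keeps $p<\infty$ so that the interpolation does not degenerate — this is precisely why A1$'$ excludes the endpoint $s=\tfrac65$. A minor technical check is that $\tilde\Omega$ is an admissible domain for Lemma~\ref{lem:div}: this is immediate for confinement in one direction, where $\tilde\Omega=\Omega_\mathrm{f}\times[c-d,d-c]$ has flat boundary, and follows from the smoothness of $\partial\Omega'$ in the two-direction case.
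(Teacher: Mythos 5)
Your proposal is correct and follows essentially the same route as the paper: Lemma~\ref{lem:div}\,\eqref{divab} to bound $\norm{\xi}_{L^q}$ with $1/q=1/s-1/3$, then Young's inequality and $L^2$--$L^\infty$ interpolation for $\varphi$ (which the paper packages via Corollary~\ref{cor:op}). Your explicit exponent bookkeeping and the remark on why $s>6/5$ is needed match the paper's argument.
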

\begin{proof}
 
\begin{align*}
 \norm{|\varphi|^2* \xi^2 }_{L^\infty(\Omega)} & \stackrel{\mathclap{\ref{cor:op}}}{ \lesssim} \norm{\xi^2}_{L^r(\tilde \Omega)} (1+ \norm{\varphi}_{\LiO })^2 \leq \norm{ |\xi| }^2_{L^q(\Omega)}(1+ \norm{\varphi}_{ \LiO })^2 \\
&\weq{\eqref{divab}}{ \lesssim} \norm{w^{\epsi,s}}_{L^p(\Omega)}(1+ \norm{\varphi}_{\LiO })^2\\
& \weq{A1'}{\lesssim} (1+ \norm{\varphi}^2_{ \LiO })
\end{align*}
with $1/(2r)=1/q=1/s-1/3 $.
\end{proof}

Now we can estimate the second term of \eqref{equ:energy1q2q}. This is done in the next lemma.

\begin{lem}\label{lem:intbyparts}
 \begin{align*}
 | \langle  \psi, p_1 p_2 w^\epsi_{12} q_1q_2 \psi \rangle | &
\lesssim \norm{ \varphi}_{H^1 \cap L^\infty} (\beta+\frac{1}{N})+(1+ \norm{\varphi}_{L^\infty} ) \sqrt{\beta+\frac{1}{N}} \norm{\nabla_1 q_1 \psi}
 \end{align*}

\end{lem}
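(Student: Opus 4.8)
The plan is to split $w^\epsi_{12} = w^{\epsi,s}_{12} + w^{\epsi,\infty}_{12}$ and treat the bounded part trivially: since $\|w^{\epsi,\infty}_{12}\|_{\mathrm{Op}} \le \|w^\epsi_\infty\|_{\LiOt} \lesssim 1$, a Cauchy--Schwarz split $\|p_1 p_2 \psi\| \cdot \|q_1 q_2 \psi\|$ together with Lemma\,\ref{lem:weights}(c) to balance weights gives a bound $\lesssim \beta + N^{-1}$, which is absorbed into the claimed right-hand side. So the work is entirely in the singular part. There I would write $w^{\epsi,s} = \nabla \cdot \xi$ using Lemma\,\ref{lem:div} and Corollary\,\ref{cor:xi}, i.e. $\xi$ is the vector field with $\nabla\xi = w^{\epsi,s}$ and $\||\varphi|^2 * \xi^2\|_{L^\infty} \lesssim (1+\|\varphi\|_{L^\infty})^2$.

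The next step is integration by parts in the variable $r_1$ (the $\nabla$ falls on $\xi_{12}=\xi(r_1-r_2)$, so $\nabla_1 \xi_{12} = w^{\epsi,s}_{12}$):
\begin{align*}
\llangle \psi, p_1 p_2 w^{\epsi,s}_{12} q_1 q_2 \psi \rrangle
= - \llangle \psi, p_1 p_2 \, \xi_{12}\cdot \nabla_1 (q_1 q_2 \psi) \rrangle
- \llangle (\nabla_1 p_1) p_2 \psi, p_1 p_2\, \xi_{12}\, q_1 q_2 \psi \rrangle,
\end{align*}
where in the first term I use that $\nabla_1$ commutes with $p_2, q_2$ and with the constant-in-$r_1$ operator structure, and in the second I move $\nabla_1$ onto the left $p_1$ (this is legitimate because $p_1$ has a smooth kernel, $\nabla p_1 = |\nabla\varphi\rangle\langle\varphi|$, with $\|\nabla_1 p_1\|$-type factors controlled by $\|\varphi\|_{H^1}$). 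For the first term I write $\nabla_1 q_1 q_2 \psi$ and note $q_1 q_2 = q_1 q_2 (\,\cdot\,)$; using $q_2 = \mathbf{1} - p_2$ and discarding/absorbing, the key is to produce $\|\nabla_1 q_1 \psi\|$ as a factor while the operator $p_2 \xi_{12} p_2$ (or $\xi_{12} p_1$, after an application of Lemma\,\ref{lem:young}(b)-type bound with $\xi \in L^q$) has operator norm $\lesssim 1 + \|\varphi\|_{L^\infty}$ via Corollary\,\ref{cor:xi}. The other factor $\|p_1 p_2 (\text{or } q_2) \psi\|$, after weight-shifting with Lemma\,\ref{lem:weights}(c) and using $\langle\psi,\widehat{\tau_j n}\psi\rangle \le \beta + N^{-1/2}$, contributes $\sqrt{\beta + N^{-1}}$. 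The second (boundary) term has no $\nabla q$; it is bounded by $\|\nabla_1 p_1\|\,\|\xi_{12} p_1\|_{\mathrm{Op}} \cdot \|p_2\psi\|\,\|q_1 q_2\psi\|$-style estimates yielding $\lesssim \|\varphi\|_{H^1\cap L^\infty}(\beta + N^{-1})$ after the same weight bookkeeping.

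Collecting, the singular part gives
\begin{align*}
|\llangle \psi, p_1 p_2 w^{\epsi,s}_{12} q_1 q_2 \psi \rrangle|
\lesssim \|\varphi\|_{H^1\cap L^\infty}(\beta + \tfrac1N) + (1+\|\varphi\|_{L^\infty})\sqrt{\beta+\tfrac1N}\,\|\nabla_1 q_1 \psi\|,
\end{align*}
and adding the bounded part (absorbed into the first summand) yields exactly the statement. The main obstacle I expect is the careful bookkeeping of the hatted weight operators $\widehat{n}^{\pm 1/2}$, $\widehat{\tau_j n}$ through the integration by parts: one must insert $\widehat n^{-1/2}\widehat n^{1/2}$ next to the $q_1$'s, commute the half-power weights past $w^{\epsi,s}_{12}$ and past $\nabla_1$ (both of which act nontrivially only on coordinates $1,2$, so Lemma\,\ref{lem:weights}(c) applies with a shift by $1$), and verify that $\nabla_1$ genuinely commutes with $\widehat{\tau_1 n}$ up to the controlled shift — getting these shifts and the resulting $\beta + N^{-1/2}$ versus $\beta + N^{-1}$ right is where the technical care lies. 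A secondary subtlety is justifying the integration by parts rigorously given only $w^{\epsi,s}\in L^s$, $s\in(6/5,2)$: this is handled by Lemma\,\ref{lem:div}, which guarantees $\xi \in W^{1,s}_{\mathrm{loc}}$ with $\nabla\xi = w^{\epsi,s}$ in the distributional sense, and by the fact that $q_1\psi \in H^1$ from the Energy Lemma's setup (or a density argument on $D(H_N^\epsi)$).
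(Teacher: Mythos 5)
Your proposal follows essentially the same route as the paper's proof: split $w^\epsi$ into bounded and singular parts, handle the bounded part by weight-shifting plus Cauchy--Schwarz, write $w^{\epsi,s}=\nabla\cdot\xi$ via Lemma\,\ref{lem:div} and Corollary\,\ref{cor:xi}, integrate by parts (with no boundary terms since $\xi\in W^{1,s}$ and the relevant vectors lie in $H^1_0$), and bound the two resulting terms exactly as you describe --- the $\nabla p_1$ term by $\norm{\varphi}_{H^1\cap L^\infty}(\beta+\tfrac1N)$ and the $\nabla_1 q_1\psi$ term by $(1+\norm{\varphi}_{L^\infty})\sqrt{\beta+\tfrac1N}\,\norm{\nabla_1 q_1\psi}$. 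The only cosmetic difference is that the paper uses the full-power weights $\widehat n\,\widehat n^{-1}$ (shifted to $\widehat{\tau_2 n}$) rather than half powers in this lemma, and obtains $\beta+\tfrac1N$ (not $\beta+N^{-1/2}$) from $\sqrt{\alpha+\tfrac2N}\sqrt{\alpha}$ --- precisely the bookkeeping you flag as the delicate point.
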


\begin{proof}
 First we write $w^\epsi= w^{\epsi,\infty}+w^{\epsi,s} $. This splitting gives two terms
\begin{align}\label{equ:enppwqq}
 | \llangle  \psi, p_1 p_2 w^\epsi_{12} q_1q_2 \psi \rrangle | \leq | \llangle  \psi, p_1 p_2 w^{\epsi,\infty}_{12} q_1q_2 \psi \rrangle |+ | \llangle  \psi, p_1 p_2 w^{\epsi,s}_{12} q_1q_2 \psi \rrangle |,
 \end{align}
 where the first can be estimated directly
\begin{align*}
  | \llangle  \psi, p_1 p_2 w^{\epsi,\infty}_{12} q_1q_2 \psi \rrangle | &= | \llangle  \psi, p_1 p_2 w^{\epsi,\infty}_{12} \widehat n \widehat n^{-1} q_1q_2 \psi \rrangle |\\
& \stackrel{\mathclap{ \ref{lem:weights}}}{=}| \llangle  \psi, p_1 p_2 \widehat {\tau_2 n} w^{\epsi,\infty}_{12}  \widehat n^{-1} q_1q_2 \psi \rrangle |\\
&\stackrel{}{\leq} \norm{ w^{\epsi,\infty}_{12} }_\infty \sqrt{\llangle \psi, p_1 p_2 \widehat {\tau_2 n}^2 \psi \rrangle  } \sqrt{\llangle \psi, \widehat n^{-2}  q_1 q_2 \psi \rrangle}\\
&\stackrel{\mathclap{\ref{lem:weights}}}{\leq}  \norm{ w^{\epsi,\infty}_{12} }_\infty \sqrt{\alpha+ \frac{2}{N} }\sqrt{\frac{N}{N-1} \alpha }\\
&\lesssim 
 \beta+ \frac{1}{N} .
\end{align*}
For the second term of \eqref{equ:enppwqq} we use Lemma\,\ref{lem:div} and write $w^{\epsi}_s $ as the divergence of a vector field $\xi^\epsi$ and estimate this with the help of Corollary\,\ref{cor:xi}.
%
%
%
For the following estimates we  suppress the $\epsi$-dependents 
for better readability.
\begin{align*}
  | \llangle  \psi, p_1 p_2 w^{s}_{12} q_1q_2 \psi \rrangle | &= | \llangle  \psi, p_1 p_2 w^{s}_{12}  \widehat n \widehat n^{-1}  q_1q_2 \psi \rrangle |\\
  &\stackrel{\ref{lem:weights}}{=}| \llangle  \psi, p_1 p_2 \widehat {\tau_2 n} w^{s}_{12}   \widehat n^{-1}  q_1q_2 \psi \rrangle |\\
&\stackrel{\ref{lem:div}}{=}| \llangle  \psi, p_1 p_2 \widehat {\tau_2 n} (\nabla_1^\nu \xi^\nu )_{12}  \widehat n^{-1}  q_1q_2 \psi \rrangle |,
\end{align*}
where we sum over $\nu =1,2,3$.
Now we integrate by parts which is possible since $\xi^\epsi \in W^{1,s}(\tilde \Omega)$, 
 $ p_1 p_2 \widehat {\tau_2 n} \psi  \in H^1_0(\Omega) $ and $ \widehat n^{-1}  q_1q_2 \psi  \in H^1_0(\Omega)$. This also implies that there are no boundary terms. 
\begin{align}\label{equ:ibp}
  \llangle  \psi, p_1 p_2 \widehat {\tau_2 n} (\nabla_1^\nu \xi^\nu )_{12}  \widehat n^{-1}  q_1q_2 \psi \rrangle |&\leq 
|\llangle \nabla_1^\nu   p_1 p_2 \widehat {\tau_2 n} \psi,   \xi^\nu_{12}  \widehat n^{-1}  q_1q_2 \psi \rrangle |\nonumber \\
&\quad+ |  \llangle  p_1 p_2 \widehat {\tau_2 n} \psi,   \xi^\nu_{12} \nabla_1^\nu \widehat n^{-1}  q_1q_2 \psi \rrangle |
\end{align}
%
%
%
The first term can be estimated by
\begin{align}\label{equ:2p2qpart1}
 |\llangle  \xi^\nu_{12} (\nabla_1^\nu   p_1) p_2 \widehat {\tau_2 n} \psi,    \widehat n^{-1}  q_1q_2 \psi \rrangle | 
&\leq {\llangle \underbrace{ (\nabla_1^\nu   p_1) \widehat {\tau_2 n} \psi}_{\eta}, \underbrace{p_2   \xi^\nu_{12} \xi^{\mu}_{12}p_2}_{A} \underbrace{ (\nabla_1^\mu   p_1)  
 \widehat {\tau_2 n} \psi}_{\eta} \rrangle}^\frac{1}{2} \nonumber \\
& \quad \times \norm{ \widehat n^{-1}  q_1q_2 \psi }.
\end{align}

A formal way to deal with this is to define $\mathscr{F}:=L^2(\R^{3N}) \oplus L^2(\R^{3N}) \oplus L^2(\R^{3N})$. So the first part is 
\begin{align*}
 \langle \eta, A \eta  \rangle_{\mathscr{F}}^\frac{1}{2}\leq \norm{ \eta }_\mathscr{F} \norm{A}_{\mathcal{L}(\mathscr{F})}^\frac{1}{2}.
\end{align*}
%
Here  
\begin{align}\label{equ:normA}
 \norm{A}_{\mathcal{L}(\mathscr{F})}= \norm{( |\varphi|^2*\xi^2)(r_1)}_\infty
\end{align}
since an operator of the form $v v^t$, where $v$ is a vector, has the operator norm $v^2$ and the entries of $A$ are $ |\varphi|^2*\xi_i \xi_j$.
The vector $\eta$ has the norm 
\begin{align*}
 \norm{\eta}_\mathscr{F}^2&= \sum_{\mu=1}^3 \llangle \eta_\mu, \eta_\mu \rrangle\\
& = \sum_{\mu=1}^3 \int_\Omega \overline{ \nabla^\mu \varphi(r)} \nabla^\mu \varphi(r)\, \D r \, \llangle \widehat {\tau_2 n} \psi, p_1 \widehat {\tau_2 n} \psi  \rrangle \\ 
&=\norm{\nabla \varphi}^2 \norm{\widehat{\tau_2 n}\psi}^2  \numberthis \label{equ:normeta}.
\end{align*}
The right-hand side of \eqref{equ:2p2qpart1} can be bounded with the help of equation \eqref{equ:normA} and  \eqref{equ:normeta} by
\begin{align*}
 \langle \eta, A \eta  \rangle_{\mathscr{F}}^\frac{1}{2} \norm{ \widehat n^{-1}  q_1q_2 \psi} &\leq
 \norm{ |\varphi|^2 * \xi^2}^\frac{1}{2}_\infty  \norm{\nabla \varphi}_{L^2} \norm{\widehat{\tau_2 n}\psi}  \norm{ \widehat n^{-1}  q_1q_2 \psi }\\
&\stackrel{\ref{lem:weights} }{\lesssim} \norm{ |\varphi|^2 * \xi^2}^\frac{1}{2}_\infty  \norm{\nabla \varphi}_{L^2}  \sqrt{\alpha+\frac{2}{N}} \sqrt{\alpha}\\
&\stackrel{\mathclap{ \eqref{equ:corxi}}}{\lesssim} \norm{ \varphi}_{H^1 \cap L^\infty} (\beta+\frac{1}{N}) .
\end{align*}
Now there is only the second term of \eqref{equ:ibp} left to be estimated. We claim that
\begin{align} \label{equ:2p2qpart2}
 |  \llangle  p_1 p_2 \widehat {\tau_2 n} \psi,   \xi^\nu_{12} \nabla_1^\nu \widehat n^{-1}  q_1q_2 \psi \rrangle |  
&= |   \llangle  \underbrace{ \xi^\nu_{12}  p_1 p_2 \widehat {\tau_2 n} \psi}_{\eta},  \underbrace{  \nabla_1^\nu \widehat n^{-1}  q_1q_2 \psi }_{\kappa}\rrangle | \notag\\
&\leq \norm{\eta} \norm{\kappa} \notag \\
&\lesssim   (1+ \norm{\varphi}_{L^\infty} ) \sqrt{\alpha+\frac{1}{N}} \norm{\nabla_1 q_1 \psi}.
\end{align}
This holds since 
\begin{align*}
 \norm{\eta}^2 &= \llangle  \widehat {\tau_2 n} \psi , p_1 p_2 \xi^2_{12} p_1 p_2   \widehat {\tau_2 n} \psi \rrangle \\
&\stackrel{\mathclap{ \eqref{equ:pwp}}}{=} \llangle  \widehat {\tau_2 n} \psi , p_1 ( |\varphi|^2* \xi^2_{})(x_1) p_1  p_2   \widehat {\tau_2 n} \psi \rrangle \\
&\leq  \norm{ |\varphi|^2 * \xi^2}_\infty \norm{\widehat {\tau_2 n} \psi}^2 \\
&\stackrel{\mathclap{ \eqref{equ:corxi}}}{\lesssim} (1+ \norm{\varphi}_{L^\infty} )^2(\alpha+\frac{1}{N} ) \label{equ:eta}\numberthis
\end{align*}
and $\kappa$ is estimated by introducing $1=p_1+q_1$ to use Lemma\,\ref{lem:weights}. We only present the calculation for $p_1 \kappa$; $q_1 \kappa$ follows in the same manner. 
\begin{align*}
\norm{p_1 \kappa}^2 &= \norm{ p_1 \nabla_1 \widehat n^{-1}  q_1q_2 \psi}^2 \stackrel{\ref{lem:weights}}{=} \norm{ p_1 q_2 \widehat{ \tau_1 n^{-1}} \nabla_1 q_1 \psi}^2\\
&\leq \llangle \nabla^\nu_1 q_1 \psi,  q_2 \widehat{ \tau_1 n^{-2}} \nabla^\nu_1 q_1 \psi \rrangle \\
&= \llangle \nabla^\nu_1 q_1 \psi,  \frac{1}{N-1}\sum_{i=2}^N q_i \widehat{ \tau_1 n^{-2}} \nabla^\nu_1 q_1 \psi \rrangle\\
&\leq \llangle \nabla^\nu_1 q_1 \psi,  \frac{1}{N-1}\sum_{i=1}^N q_i \widehat{ \tau_1 n^{-2}} \nabla^\nu_1 q_1 \psi \rrangle\\
&\stackrel{\mathclap{ \ref{lem:weights}}}{=}\llangle \nabla^\nu_1 q_1 \psi,  \frac{N}{N-1} \widehat {n^2} \widehat{ \tau_1 n^{-2}} \nabla^\nu_1 q_1 \psi \rrangle\\
& \lesssim  \llangle \nabla^\nu_1 q_1 \psi,   \nabla^\nu_1 q_1 \psi \rrangle\\
&\lesssim  \norm{\nabla_1 q_1 \psi}^2 \numberthis \label{equ:kappa}
\end{align*}

\end{proof}

\section{Controlling the Derivative of $\beta$}

We use the same basic idea as for the  proof of Theorem\,\ref{thm:thm1}. We start again by calculating the
derivative of the functional.
\begin{lem}\label{beta.}
 \begin{align*}
  |\frac{d}{dt} \beta| \leq 2 |\mathrm{I}| + 2 |\mathrm{II}| + |\mathrm{III}|,
 \end{align*}
where 
\begin{align*}
 \mathrm{I}&:=\llangle \psi, p_1 p_2  [ (N-1)  w_{12}^\epsi-N w_1^\varphi- N w_2^\varphi , \widehat n] p_1q_2 \psi \rrangle\\
\mathrm{II}&:=\llangle \psi, p_1 q_2  [ (N-1)  w_{12}^\epsi-N w_1^\varphi- N w_2^\varphi , \widehat n] q_1q_2 \psi \rrangle\\
\mathrm{III}&:=  \llangle \psi, p_1 p_2  [ (N-1)  w_{12}^\epsi-N w_1^\varphi- N w_2^\varphi , \widehat n] q_1q_2 \psi \rrangle
\end{align*}
with $w^\varphi_i:=( w^0*|\varphi|^2)(r_i) = (w^0*(|\Phi|^2 |\chi|^2))(r_i)= ( w^0*|\Phi|^2)(x_i)  $.
\end{lem}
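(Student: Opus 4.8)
The plan is to mimic the computation of $\partial_t\alpha$ from Lemma~\ref{lem:ausrechnen}, but now for $\beta=\llangle\psi,\widehat n\,\psi\rrangle$. First I would note that $\beta\in C^1(\R)$: by Lemma~\ref{hat.}(a) each $P_{k,N}(t)$ is $C^1$, hence so is $\widehat n$, and $\psi(t)\in C^1(\R,\mathcal H^N)$ by assumption A2'. Differentiating,
\begin{align*}
 \frac{d}{dt}\beta = \llangle\dot\psi,\widehat n\,\psi\rrangle + \llangle\psi,\widehat n\,\dot\psi\rrangle + \llangle\psi,(\partial_t\widehat n)\psi\rrangle
 = \im\llangle\psi,[H_N^\epsi,\widehat n]\psi\rrangle - \im\llangle\psi,[H^\Phi,\widehat n]\psi\rrangle,
\end{align*}
using Lemma~\ref{hat.}(c), $\im\partial_t\widehat n=[H^\Phi,\widehat n]$. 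So $\frac{d}{dt}\beta=\im\llangle\psi,[H_N^\epsi-H^\Phi,\widehat n]\psi\rrangle$ where $H^\Phi=\sum_i h_i^\Phi$ and $h^\Phi=-\Delta_x+w^0*|\Phi|^2$.

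The next step is to identify which pieces of $H_N^\epsi-H^\Phi$ fail to commute with $\widehat n$. The kinetic terms $-\Delta_{x_i}$ cancel between $H_N^\epsi$ and $H^\Phi$; the confinement terms $-\epsi^{-2}\Delta_{y_i}$ commute with $\widehat n$ by Lemma~\ref{hat.}(b) (since $\chi$ is an eigenfunction). What remains is the interaction $\frac{1}{N-1}\sum_{i<j}w^\epsi_{ij}$ minus the mean field $\sum_i w^\varphi_i$. Using the symmetry of $\psi$ one reduces the interaction sum to $\binom N2\cdot\frac{1}{N-1}=\frac N2$ copies of $w^\epsi_{12}$ and the mean-field sum to $N$ copies of $w^\varphi_1$, so that modulo the factor $\tfrac12$ the relevant operator is $(N-1)w^\epsi_{12}-Nw_1^\varphi-Nw_2^\varphi$ sandwiched suitably; concretely, writing $Z:=(N-1)w^\epsi_{12}-Nw_1^\varphi-Nw_2^\varphi$, symmetrization in particles $1,2$ gives $\frac{d}{dt}\beta=\tfrac{\im}{2}\,\llangle\psi,[Z,\widehat n]\psi\rrangle$ up to a harmless relabeling. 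Then I insert $\id=p_1+q_1$ and $\id=p_2+q_2$ on both sides of the commutator. Terms with $p_1p_2$ on one side and $p_1p_2$ on the other contribute nothing to the commutator (they cancel), and the purely off-diagonal $q_1q_2 Z p_1p_2$-type terms together with their adjoints produce, after taking imaginary parts and using self-adjointness under particle exchange to kill the $p_1q_2 Z q_1p_2$ piece (exactly as in Lemma~\ref{lem:ausrechnen}), precisely the three terms $\mathrm I,\mathrm{II},\mathrm{III}$ with the stated coefficients $2,2,1$.

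The one subtlety — and the main obstacle — is that $[H_N^\epsi,\widehat n]$ is \emph{not} obviously $\mathcal O(1)$: unlike $q_1$, the operator $\widehat n$ involves all $N$ projections, so the commutator $[w^\epsi_{ij},\widehat n]$ for each of the $\binom N2$ pairs must be controlled uniformly. This is where Lemma~\ref{lem:weights}(c) enters: a two-body operator $Q_jTQ_k$ intertwines $\widehat n$ with a \emph{shifted} weight $\widehat{\tau_{j-k}n}$, and the difference $\widehat n-\widehat{\tau_{\pm1}n}$ acting on $q_1\psi$ is $\mathcal O(N^{-1})$ by Lemma~\ref{lem:qs\&N}(b); this is exactly the mechanism that makes the factor $(N-1)$ in front of $w^\epsi_{12}$ legitimate rather than divergent. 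So in carrying out the $p$-$q$ insertions I would keep the weights $\widehat n$ attached symbolically and only at the very end move them through the two-body operator via Lemma~\ref{lem:weights}(c); the resulting commutator $[\,\cdot\,,\widehat n]=\widehat n\cdot-\cdot\,\widehat n$ then collapses, after using part (a) of that lemma to commute $\widehat n$ past the $p_i$'s and $q_i$'s appropriately, into the three displayed expressions. Finally one takes absolute values of the three imaginary parts to obtain the claimed inequality $|\frac{d}{dt}\beta|\le 2|\mathrm I|+2|\mathrm{II}|+|\mathrm{III}|$.
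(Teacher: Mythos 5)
Your route is the same as the paper's: differentiate $\beta=\llangle\psi,\widehat n\,\psi\rrangle$ via Lemma~\ref{hat.}, reduce by symmetry to the two--body commutator $[Z,\widehat n]$ with $Z:=(N-1)w^\epsi_{12}-Nw_1^\varphi-Nw_2^\varphi$, insert $\id=(p_1+q_1)(p_2+q_2)$ on both sides, discard the ``diagonal'' terms, and collect the remainder by symmetry into $\mathrm{I},\mathrm{II},\mathrm{III}$ with coefficients $2,2,1$. There is, however, one justification in your second paragraph that is wrong as stated, even though the conclusion survives.

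You claim the $p_1q_2\,[Z,\widehat n]\,q_1p_2$ piece is killed ``by self-adjointness under particle exchange, exactly as in Lemma~\ref{lem:ausrechnen}.'' That argument worked for $\alpha$ because the sandwiched operator there was $W^\epsi_{12}$ itself, which is self-adjoint and exchange-symmetric, so the expectation $\llangle\psi,p_1q_2W^\epsi_{12}q_1p_2\psi\rrangle$ is real and its imaginary part vanishes. Here the sandwiched operator is the commutator $[Z,\widehat n]$, which is \emph{anti}-self-adjoint; the same exchange computation shows that $\llangle\psi,p_1q_2[Z,\widehat n]q_1p_2\psi\rrangle$ is purely imaginary, so taking the imaginary part does \emph{not} kill it. What actually kills this term --- and simultaneously the terms $p_1q_2[Z,\widehat n]p_1q_2$, $q_1p_2[Z,\widehat n]q_1p_2$ and $q_1q_2[Z,\widehat n]q_1q_2$, which your second paragraph does not account for at all --- is Lemma~\ref{lem:weights}(\ref{lem:weightsc}): for $Q_j\,[Z,\widehat n]\,Q_k$ one gets $Q_jZQ_k(\widehat n-\widehat{\tau_{j-k}n})$, which vanishes identically whenever the number of $q$'s on the two sides is equal, since then the shift is $\tau_0$. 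You do invoke precisely this mechanism in your final paragraph, so the fix is only to drop the particle-exchange justification and let the shift argument dispose of all equal-$q$-count terms at once; with that correction the proof matches the paper's.
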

Since we now use the weight $\sqrt{\frac{k}{N}}$ in contrast to $\frac{k}{N}$ the terms $\mathrm{I}-\mathrm{III}$ now look a bit different than in Theorem\,\ref{thm:thm1} but are essentially the same.
It is more important we can estimate them in a similar way.

\begin{lem}\label{lem:estimating3}

\begin{enumerate}
 \item \label{lem:estimating3I}

 \begin{align} \label{equ:estimating3I}
  |\mathrm{I}| \lesssim (f(\epsi)+\frac{1}{N})\norm{\varphi}_{L^2(\Omega) \cap L^\infty(\Omega)}^2
 \end{align}

\item \label{lem:estimating3II}
\begin{align}\label{equ:estimating3II}
 |\mathrm{II}|& \lesssim \norm{\varphi}_{L^2(\Omega) \cap L^\infty(\Omega) } (\beta+\norm{ \nabla_1 q_1 \psi}^2)
\end{align}
 
\item  \label{lem:estimating3III}
\begin{align} \label{equ:estimating3III}
  |\mathrm{III}|&\lesssim\norm{ \varphi}_{H^1(\Omega) \cap L^\infty (\Omega)}^3(\beta+N^\eta)+ \norm{\varphi}_{\infty }\norm{\nabla_1 q_1 \psi}^2,
\end{align}
where $\eta=-\frac{s/s_0-1}{2s/s_0-s}=-\frac{5s-6}{4s}  $.
\end{enumerate}
\end{lem}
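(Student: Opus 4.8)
The plan is to prove the three estimates of Lemma \ref{lem:estimating3} one by one, following the blueprint of Lemma \ref{lem:aabschaetzen} in Chapter \ref{chap:thm1} but now with the machinery of hatted weights from Chapter \ref{chap:meaofconv} and the Energy Lemma \ref{lem:energy}. Throughout, the commutator $[\,\cdot\,,\widehat n]$ is handled via Lemma \ref{lem:weights}(c): moving a hatted weight past an operator $Q_j T Q_k$ shifts its argument, so that $p_1 p_2 A\, \widehat n\, p_1 q_2 - p_1 p_2\, \widehat n\, A\, p_1 q_2 = p_1 p_2 A (\widehat n - \widehat{\tau_{\pm 1} n}) p_1 q_2$ type expressions appear, which are then controlled by Lemma \ref{lem:qs&N}(b) giving a gain of $N^{-1}$, or split so that one factor $\widehat n$ or $\widehat n^{1/2}$ hits a $q$ and becomes comparable to $\sqrt{\beta}$. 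The quantity $(N-1)w^\epsi_{12} - N w_1^\varphi - N w_2^\varphi$ should be reorganized, using the symmetry of $\psi$ and the bookkeeping $\frac{1}{N-1}\sum_{j\ge 2} = \text{(one term)}$, into the combination $w^\epsi_{12} - w_1^\varphi$ plus lower-order pieces, exactly as $W^\epsi_{12}$ appeared in Lemma \ref{lem:ausrechnen}.

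For part \ref{lem:estimating3I} the mean field cancels the full interaction: sandwiching between $p_2$'s and using \eqref{equ:pwp} replaces $p_2 w^\epsi_{12} p_2$ by $p_2 (w^\epsi * |\varphi|^2)(r_1)$, whose difference from $w^0 * |\varphi|^2$ is $O(f(\epsi))$ in $L^\infty$ by Young's inequality and assumption A1$'$ (as in \eqref{equ:p2.I.2}); the commutator with $\widehat n$ then produces the extra $N^{-1}$ via Lemma \ref{lem:qs&N}(b), yielding the bound $(f(\epsi)+N^{-1})\norm{\varphi}^2_{L^2\cap L^\infty}$. For part \ref{lem:estimating3II}, the term $\mathrm{II}$ has a $q_1 q_2$ on the right, so splitting $w^\epsi = w^{\epsi,\infty} + w^{\epsi,s}$ and writing $w^{\epsi,s}_{12} = (\nabla_1 \xi)_{12}$ via Lemma \ref{lem:div}, then integrating by parts exactly as in Lemma \ref{lem:intbyparts}, produces the factor $\norm{\nabla_1 q_1 \psi}$ (times $\sqrt{\beta}$ or $\beta$) together with terms controlled by Corollary \ref{cor:xi}; the bounded part $w^{\epsi,\infty}$ is handled directly with $\norm{p_2 w^{\epsi,\infty}_{12}p_2}_{\mathrm{Op}}$ and Lemma \ref{lem:weights}. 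This gives $\norm{\varphi}_{L^2\cap L^\infty}(\beta + \norm{\nabla_1 q_1 \psi}^2)$.

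Part \ref{lem:estimating3III} is the main obstacle. Here $\mathrm{III}$ has $p_1 p_2$ on the left and $q_1 q_2$ on the right, so both the "swap a $q$ for a $p$" trick (as in Lemma \ref{trick2q}/\ref{lem:aabschaetzen2}) and the vector-field/integration-by-parts argument are needed, and one must be careful that the singular exponent $s \in (6/5,2)$ forces a Hölder/interpolation balancing that produces the stated $\eta = -\frac{5s-6}{4s}$. The plan is: write $w^\epsi = w^{\epsi,\infty}+w^{\epsi,s}$; for the bounded part use symmetry to generate $\frac{1}{N-1}\sum_j$, apply Lemma \ref{trick2q}-type estimates to turn $\sum_j q_j w^{\epsi,\infty}_{1j} p_1 p_2 \psi$ into $(\beta + N^{-1})$ with the $\widehat n$-commutator giving an $N^{-1}$ via Lemma \ref{lem:qs&N}; for the singular part, set $w^{\epsi,s}_{12} = (\nabla_1 \xi^\epsi)_{12}$, integrate by parts moving $\nabla_1$ onto either the $p_1$ factor (smooth, bounded by $\norm{\nabla \varphi}$) or onto $\widehat n^{-1} q_1 q_2 \psi$ (bounded by $\norm{\nabla_1 q_1\psi}$ as in \eqref{equ:kappa}), and control the remaining $\xi^\epsi$-dependent operator norms by $\norm{|\varphi|^2 * \xi^2}_\infty^{1/2} \lesssim 1 + \norm{\varphi}_{L^\infty}$ via Corollary \ref{cor:xi}. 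The delicate point is that one factor of $\widehat n^{-1}$ acting on $q_1 q_2 \psi$ rather than $q_1 \psi$ loses a power, so the naive bound is $N^{1/2}$-lossy; recovering the sharp rate requires interpolating the $L^s$ bound against the $L^1$ convergence $\norm{w^\epsi_s - w^0_s}_{L^1} \le f(\epsi)$ and optimizing the split of $w^\epsi_s$ into a part with small $L^1$ norm and a part with controlled $L^s$ norm, which is precisely where the exponent $\eta=-\frac{s/s_0-1}{2s/s_0-s}$ with $s_0 = 6/5$ emerges. Finally, collecting the three estimates, inserting them into Lemma \ref{beta.}, using the Energy Lemma \ref{lem:energy} to absorb $\norm{\nabla_1 q_1 \psi}^2$ into $(E^\psi - E^\varphi) + \norm{\varphi}^2_{H^2\cap L^\infty}(\beta + N^{-1/2} + f(\epsi))$, and applying the Grönwall Lemma \ref{lem:gron} yields Theorem \ref{thm:thm2}.
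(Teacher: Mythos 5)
Parts \ref{lem:estimating3I} and \ref{lem:estimating3II} of your proposal match the paper's proof in essence: for $\mathrm{I}$ the mean-field cancellation via \eqref{equ:pwp} combined with the weight difference $\mu=N(n-\tau_{-1}n)\leq n^{-1}$ yields $f(\epsi)$ plus the $N^{-1}$ from the mismatch of the prefactors $(N-1)$ and $N$, and for $\mathrm{II}$ the splitting $w^\epsi=w^{\epsi,\infty}+w^{\epsi,s}$, the vector field $\xi$ from Lemma \ref{lem:div}, integration by parts as in Lemma \ref{lem:intbyparts} and Corollary \ref{cor:xi} are exactly the arguments used.

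For term $\mathrm{III}$, however, your proposed mechanism for extracting the rate $N^{\eta}$ is not the one that works. You correctly locate the obstruction in the ``diagonal'' part of the double sum $\sum_{i,j}\llangle\psi,p_1p_j w^{\epsi,s}_{1j}\widehat\mu_1 q_jq_i w^{\epsi,s}_{1i}p_1p_i\psi\rrangle$, which needs a bound on $\norm{p_1(w^{\epsi,s})^2p_1}_{\mathrm{Op}}$, i.e.\ an $L^v$-control of $w^{\epsi,s}$ for some $v\geq 2$ that A1$'$ does not provide. But your fix --- splitting $w^{\epsi,s}$ into a part with small $L^1$ norm and a part with controlled $L^s$ norm, interpolated against $\norm{w^\epsi_s-w^0_s}_{L^1}\leq f(\epsi)$ --- is the wrong decomposition: the $L^1$ convergence $f(\epsi)$ plays no role anywhere in the estimate of $\mathrm{III}$, and a small $L^1$ norm helps neither the diagonal term (which needs $L^2$) nor the divergence step (which needs the vector field in $L^2$). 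The paper instead cuts the potential by \emph{height}: $w^{\epsi,s}=w^{s,1}+w^{s,2}$ with $w^{s,1}=w^{\epsi,s}\id_{\{|w^{\epsi,s}|>c\}}$ and $c=N^\vartheta$. Then $\norm{w^{s,1}}_{L^{s_0}}\leq c^{1-s/s_0}\norm{w^{\epsi,s}}_{L^s}^{s/s_0}$ is small because $1-s/s_0<0$ (and $s_0=6/5$ is forced by the requirement $1/q=1/s_0-1/3=1/2$, so that the vector field $\xi^1$ with $\nabla\xi^1=w^{s,1}$ lies in $L^2$ and Corollary \ref{cor:xi} applies); this tall part is handled by integration by parts as in Lemma \ref{lem:intbyparts} and contributes $c^{2-2s/s_0}+\beta+\norm{\nabla_1q_1\psi}^2$. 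The short part satisfies $\norm{w^{s,2}}_{L^2}\leq c^{1-s/2}\norm{w^{\epsi,s}}_{L^s}^{s/2}$, which grows with $c$ but enters only the diagonal term with an extra factor $N^{-1/2}$, giving $N^{-1/2}c^{2-s}+\beta$ exactly as in Lemma \ref{lem:estimating3alt}. Balancing $c^{2-2s/s_0}$ against $N^{-1/2}c^{2-s}$ is what produces $\eta=-\frac{s/s_0-1}{2s/s_0-s}$; an interpolation between $L^1$ and $L^s$ cannot produce an exponent containing $s_0$. The rest of your outline (reassembling via Lemma \ref{beta.}, the Energy Lemma \ref{lem:energy} and Grönwall) is correct.
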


%
%

The estimation of the term $\mathrm{III}$ is the most laborious out of the three terms. It can be shortened substantially if an additional assumption on $w^\epsi$ holds.   

\begin{lem} \label{lem:estimating3alt}
Let 
\begin{align}\label{equ:ass+w}
{ \norm{w^\epsi}_{L^v(\tilde \Omega)}} \leq \tilde f(\epsi) 
\end{align}
 for a $v \in (2,\infty) $, then
\begin{align}\label{equ:estimating3alt}
 |\mathrm{III}| \lesssim (1+\norm{\varphi}_{L^\infty(\Omega)})^2(\beta+ N^{-1/ 2} \tilde f(\epsi)^2+N^{-1/2}).
\end{align}
\begin{rem}
Lemma\,\ref{lem:estimating3alt} is only meaningful if $\epsi$ can be chosen to depend on $N$ such that $N^{-1/ 2} \tilde f(\epsi(N))^2 \stackrel{N \rightarrow \infty}{\longrightarrow }0$.
Although the rate of convergence of the estimate \eqref{equ:estimating3alt} is always equal or slower than the rate in \eqref{equ:estimating3III} we state its proof since it is a byproduct of the proof
of Lemma\,\ref{lem:estimating3}.\ref{lem:estimating3III} and it illustrates the used techniques nicely.
 
\end{rem}

\end{lem}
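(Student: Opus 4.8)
The plan is to bound $\mathrm{III}$ by essentially the same chain of steps used in Lemma\,\ref{lem:intbyparts}, but exploiting the extra integrability assumption \eqref{equ:ass+w} on $w^\epsi$ to avoid splitting $w^\epsi$ into a bounded and a singular part and, crucially, to avoid the vector field $\xi$ altogether. Since $w^\epsi \in L^v(\tilde\Omega)$ for some $v>2$, we directly have $w^\epsi\in L^2(\tilde\Omega)$ with norm controlled by $\tilde f(\epsi)$ (up to a constant depending only on the finite measure of $\tilde\Omega$ in the confined directions; in the unconfined directions one uses that $w$ has effectively compact support, or more precisely that the relevant expectation values only see $w$ on $\tilde\Omega$). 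So the operators $w^\epsi_{12}p_1$ and $p_2 w^\epsi_{12} p_2$ are bounded with norms controlled via Lemma\,\ref{lem:young}(b),(c) by $\tilde f(\epsi)\norm{\varphi}_{L^\infty}$ and $\tilde f(\epsi)^2 \norm{\varphi}_{L^\infty}^2$ respectively.

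First I would expand the commutator: $\mathrm{III} = \llangle \psi, p_1p_2 (N-1)w^\epsi_{12}\,[\,\cdot\,] \rrangle$-type terms where the mean-field pieces $N w_i^\varphi$ either commute with $\widehat n$ in the relevant combination or, acting between $p_2$ and $q_2$, vanish because $p_2 w_1^\varphi q_2 = 0$ when $w_1^\varphi$ depends only on $x_1$. Thus $\mathrm{III}$ reduces, after using Lemma\,\ref{lem:weights}(c) to move the weight operators through $w^\epsi_{12}$ (picking up shifts $\widehat{\tau_j n}$), to a term of the shape
\begin{align*}
 |\mathrm{III}| \lesssim N \bigl| \llangle \psi, p_1 p_2 \widehat{\tau_2 n}\, w^\epsi_{12}\, \bigl(\widehat{\tau_1 n} - \widehat n\bigr)\, q_1 q_2 \psi \rrangle \bigr| + (\text{similar terms}),
\end{align*}
where the difference of shifted weights is of order $N^{-1}$ on the range of the $q$'s by Lemma\,\ref{lem:qs&N}(b); that is the standard $[H_N,\widehat n]=\mathcal{O}(1)$ mechanism, so the prefactor $N$ is absorbed. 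What remains is to estimate $\llangle \psi, p_1 p_2 w^\epsi_{12} \widehat n^{-1} q_1 q_2 \psi\rrangle$-type expressions (after inserting $\widehat n \widehat n^{-1}$ and commuting). I would then split $\widehat n^{-1} q_1 q_2 = \widehat n^{-1} q_1 (p_2 + q_2) q_2$ is not needed; rather insert $p_1 p_2 w^\epsi_{12}\widehat n^{-1} q_1 q_2$, bound with Cauchy--Schwarz by $\norm{p_2 w^\epsi_{12} p_2}_{\mathrm{Op}}^{1/2}$-type factors or more simply
\begin{align*}
 \bigl| \llangle \psi, p_1 p_2 \widehat{\tau_2 n} w^\epsi_{12} \widehat n^{-1} q_1 q_2 \psi \rrangle \bigr| \leq \norm{p_2 w^\epsi_{12} p_2}_{\mathrm{Op}}^{1/2} \norm{ w^\epsi_{12} p_1 }_{\mathrm{Op}}^{?}\cdots
\end{align*}
— more cleanly: write $w^\epsi_{12} = (w^\epsi_{12} p_1)$ is not self-adjoint, so instead use $|\llangle \psi,p_1p_2 \widehat{\tau_2 n} w^\epsi_{12}\widehat n^{-1}q_1q_2\psi\rrangle| \le \norm{\widehat{\tau_2 n} p_2 w^\epsi_{12} p_2 \widehat{\tau_2 n}}$-free version: bound by $\norm{p_1 w^\epsi_{12}}_{\mathrm{Op}}\,\norm{\widehat{\tau_2 n}\psi}\,\norm{\widehat n^{-1}q_1 q_2\psi}$, with $\norm{\widehat n^{-1} q_1 q_2 \psi}^2 \le \tfrac{N}{N-1}\llangle\psi,\widehat n^{-2}\widehat n^4\psi\rrangle = \tfrac{N}{N-1}\beta$ by Lemma\,\ref{lem:weights}(b),(c), and $\norm{\widehat{\tau_2 n}\psi}^2 \le \beta + 2/N$. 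This yields a bound of order $\tilde f(\epsi)\norm{\varphi}_{L^\infty}(\beta + N^{-1})$; the extra $N^{-1/2}\tilde f(\epsi)^2$ appears in the diagonal ($l=j$) contribution of the double-sum estimate analogous to Lemma\,\ref{trick2q}, where one pair of $q$'s is missing and one pays $\norm{p_2 w^\epsi_{12} p_2}_{\mathrm{Op}}\lesssim \tilde f(\epsi)^2\norm{\varphi}_{L^\infty}^2$ against a factor $N^{-1/2}$ from $\sqrt{1/N}$.

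The main obstacle is bookkeeping rather than conceptual: correctly tracking which weight shifts ($\widehat{\tau_1 n}$, $\widehat{\tau_2 n}$, $\widehat n^{-1}$) land where after each application of Lemma\,\ref{lem:weights}(c), and checking that every time a factor $N$ appears from $(N-1)w^\epsi_{12}$ or from the symmetrized sum $\tfrac{1}{N-1}\sum_j$, it is matched either by the $N^{-1}$ gain from the weight difference (Lemma\,\ref{lem:qs&N}) or by an honest $q_i$ that can be traded for $\widehat n^2$. I would present the proof by pointing out that it is literally the proof of Lemma\,\ref{lem:estimating3}.\ref{lem:estimating3III} with $w^{\epsi,s}$ replaced by $w^\epsi$ and the step "write $w^{\epsi,s}$ as $\nabla\xi$ and integrate by parts" simply omitted — since now $w^\epsi\in L^2$ directly, no integration by parts and hence no appearance of $\norm{\nabla_1 q_1\psi}$ is needed, which is why the bound \eqref{equ:estimating3alt} is cleaner (no energy term) but carries the worse $\epsi$-dependence through $\tilde f(\epsi)^2$. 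Collecting the pieces gives
\begin{align*}
 |\mathrm{III}| \lesssim (1+\norm{\varphi}_{L^\infty(\Omega)})^2\bigl(\beta + N^{-1/2}\tilde f(\epsi)^2 + N^{-1/2}\bigr),
\end{align*}
which is the claim.
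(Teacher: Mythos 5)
Your skeleton matches the paper's: expand the commutator, observe that both mean--field terms drop out of $\mathrm{III}$, trade the prefactor $N$ against the weight difference $N(\widehat n-\widehat{\tau_{-2}n})=\widehat\mu_1\lesssim \widehat n^{-1}$, split off the harmless $w^{\epsi,\infty}$ part, and then handle $w^{\epsi,s}$ by symmetrizing into $\tfrac{1}{N-1}\sum_j$ and splitting the resulting double sum into diagonal and off--diagonal pieces, with the $L^v$, $v>2$, assumption entering only in the diagonal piece at the price of $N^{-1/2}\tilde f(\epsi)^2$. That is exactly the paper's route, and your observation that no integration by parts (hence no $\norm{\nabla_1 q_1\psi}$) is needed is correct.

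There is, however, a genuine gap in how you treat the \emph{leading} (off--diagonal) contribution. Your ``cleaner'' direct Cauchy--Schwarz bound $\norm{w^\epsi_{12}p_1}_{\mathrm{Op}}\norm{\widehat{\tau_2 n}\psi}\norm{\widehat n^{-1}q_1q_2\psi}\lesssim \tilde f(\epsi)\norm{\varphi}_{L^\infty}(\beta+N^{-1})$, and likewise a literal application of Lemma\,\ref{trick2q} (whose off--diagonal term is estimated by $\norm{w_{12}^\epsi p_1}_{\mathrm{Op}}\norm{w_{13}^\epsi p_1}_{\mathrm{Op}}\lesssim\tilde f(\epsi)^2$), both leave a factor $\tilde f(\epsi)$ or $\tilde f(\epsi)^2$ multiplying $\beta$. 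That is strictly weaker than \eqref{equ:estimating3alt}, where the coefficient of $\beta$ depends only on the $A1'$ norms of $w$; since $\tilde f(\epsi)$ may diverge as $\epsi\to0$ (cf.\ Example\,\ref{exp:coul2}, $\tilde f(\epsi)=\log\epsi^{-1}$), a bound of the form $\tilde f(\epsi)\beta$ would make the Grönwall constant blow up and the lemma would not close. The missing step is the paper's regrouping of the off--diagonal summand: insert $q_1=1-p_1$ in the middle, write $w^{\epsi,s}_{1j}w^{\epsi,s}_{1i}=\sqrt{w^{\epsi,s}_{1j}}\sqrt{w^{\epsi,s}_{1i}}\cdot\sqrt{w^{\epsi,s}_{1j}}\sqrt{w^{\epsi,s}_{1i}}$ and regroup so that only $\norm{p_1|w^{\epsi,s}_{12}|p_1}_{\mathrm{Op}}$ appears, which by Corollary\,\ref{cor:op} is $\lesssim(1+\norm{\varphi}_{L^\infty})^2$ \emph{uniformly in $\epsi$} (it only sees $\norm{w^{\epsi}_s}_{L^s}$ via Young's inequality). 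Two smaller points: your claim that $L^v(\tilde\Omega)\hookrightarrow L^2(\tilde\Omega)$ is unjustified ($\tilde\Omega$ has infinite measure and $A1'$ assumes no compact support); the paper avoids this by bounding the diagonal term via $\norm{(w^{\epsi,s})^2*|\varphi|^2}_{L^\infty}\leq\norm{w^{\epsi,s}}_{L^v}^2\norm{\varphi}_{L^{2(v/2)'}}^2$ directly. And the operator norm that carries $\tilde f(\epsi)^2$ in the diagonal term is $\norm{p_1(w^{\epsi,s}_{12})^2p_1}_{\mathrm{Op}}$, not $\norm{p_2 w^\epsi_{12}p_2}_{\mathrm{Op}}$ (the latter is again an $L^1$-type quantity controlled without $\tilde f$).
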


\begin{exmp}\label{exp:coul2}
For the Coulomb interaction with confinement in one direction the additional assumption required for Lemma\,\ref{equ:estimating3alt} holds if $\epsi$ is chosen to depend on $N$ as any negative power 
since 
\begin{align*}
 \norm{\frac{1}{\sqrt{x^2+\epsi^2 y^2}}}_{L^2(\tilde \Omega)+L^\infty(\tilde \Omega) } \lesssim \log{\epsi^{-1}}.
\end{align*}
For the calculation of this rate see Appendix\,\ref{app:coul}.
\end{exmp}
\begin{proof}[Proof of Theorem\,\ref{thm:thm2}]
 If we combine Lemma\,\ref{lem:estimating3} with the Energy Lemma we can bound $\beta$ by 
\begin{align*}
 \dot \beta \lesssim  \norm{\varphi}_{H^2(\Omega)\cap L^\infty(\Omega)}^3 \big( E^\psi- E^\phi+ \beta+ f(\epsi)+N^\eta \big).
\end{align*}
Now the Grönwall Lemma\,\ref{lem:gron} yields the claimed result.
\end{proof}
For the rest of this chapter we do not write the underlying sets of the function spaces and write $\norm{\cdot}$ and $\langle \cdot, \cdot \rangle $ 
for the $L^2$-norm and scalar product on the appropriate set.
\begin{proof}[Proof of Lemma\,\ref{beta.}]
Because of Lemma\,\ref{hat.} $\beta \in C^1(\R,\R)$. Thus we can calculate
 \begin{align*}
  \partial_t \beta&= \partial_t \llangle \psi, \widehat n \psi \rrangle \stackrel{\ref{hat.}}{ =} \im \llangle \psi, [H^\epsi_N-H^\Phi, \widehat n] \psi \rrangle\\
&\stackrel{\mathclap{\ref{hat.}}}{ =}  \im \llangle \psi,  [ \frac{1}{N}\sum_{i< j} w^\epsi_{ij}-\sum_{i} w_i^\varphi, \widehat n] \psi \rrangle\\
& = \frac{\im}{2} \llangle \psi,  [ (N-1)  w^\epsi_{12}-N w_1^\varphi- N w_2^\varphi , \widehat n] \psi \rrangle\\
&= \frac{\im}{2} \llangle \psi,(p_1+q_1)(p_2+q_2)  [ (N-1)  w^\epsi_{12}-N w_1^\varphi- N w_2^\varphi , \widehat n](p_1+q_1)(p_2+q_2) \psi \rrangle.
\end{align*}
As a result of the Lemma\,\ref{lem:weights}(\ref{lem:weightsc}) all terms with the same number of $p$ and $q$ on each side of the commutator vanish. Therefore we find 
\begin{align*}
 &\frac{\im}{2} \llangle \psi,(p_1+q_1)(p_2+q_2)  [ (N-1)  w^\epsi_{12}-N w_1^\varphi- N w_2^\varphi , \widehat n](p_1+q_1)(p_2+q_2) \psi \rrangle \\
\\
&=\frac{\im}{2} \llangle \psi, p_1 p_2  [ (N-1)  w^\epsi_{12}-N w_1^\varphi- N w_2^\varphi , \widehat n] (p_1q_2+q_1p_2 +q_1q_2) \psi \rrangle\\
&\quad+\frac{\im}{2} \llangle \psi, (p_1 q_2+ q_1 p_2)  [ (N-1)  w^\epsi_{12}-N w_1^\varphi- N w_2^\varphi , \widehat n](p_1p_2+q_1q_2) \psi \rrangle\\
&\quad+\frac{\im}{2} \llangle \psi, q_1 q_2  [ (N-1)  w^\epsi_{12}-N w_1^\varphi- N w_2^\varphi , \widehat n](p_1p_2+ q_1p_2 + p_1q_2) \psi \rrangle\\
\\
&\weq{\mathrm{sym.}}{=}\,\im \llangle \psi, p_1 p_2  [ (N-1)  w^\epsi_{12}-N w_1^\varphi- N w_2^\varphi , \widehat n] p_1q_2 \psi \rrangle + c.c.\\
&\quad+\im \llangle \psi, p_1 q_2  [ (N-1)  w^\epsi_{12}-N w_1^\varphi- N w_2^\varphi , \widehat n] q_1q_2 \psi \rrangle + c.c.\\
&\quad+\frac{\im}{2}  \llangle \psi, p_1 p_2  [ (N-1)  w^\epsi_{12}-N w_1^\varphi- N w_2^\varphi , \widehat n] q_1q_2 \psi \rrangle +c.c. \\
\\
&= \im \mathrm I+c.c.+ \im \mathrm {II}+c.c.+ \frac{\im}{2} \mathrm {III}+c.c.\\
\\
&= - 2 \Im \mathrm I- 2 \Im  \mathrm {II} -  \Im  \mathrm {III}.
\end{align*}

\end{proof}

\begin{proof}[Proof of Lemma\,\ref{lem:estimating3}.\ref{lem:estimating3I}]
In this term the mean filed cancels the full interaction
\begin{align*}
 |\mathrm I|&= |\llangle \psi, p_1 p_2  [ (N-1)  w^\epsi_{12}- N w_2^\varphi , \widehat n] p_1q_2 \psi \rrangle|\\
&\stackrel{\mathclap{ \eqref{equ:pwp}}}{= } \, |\llangle  \psi, p_1 p_2 [(N-1)(w^\epsi*|\varphi|^2)(r_2)-N (w^0*|\varphi|^2)(r_2), \widehat  n ]q_2   \psi \rrangle  |\\
&\weq{\ref{lem:weights}} {=} |\llangle  \psi, p_1 p_2 (N-1)\Big((w^\epsi*|\varphi|^2)(r_2)-N (w^0*|\varphi|^2)(r_2)\Big) (\widehat  n- \widehat  {\tau_{-1} n}  ) q_2   \psi \rrangle  |.
\end{align*}
If we define
\begin{align} \label{equ:mu}
 \mu:= N(  n-  {\tau_{-1} n} ) = \sqrt{N}(\sqrt{k}-\sqrt{k-1})= \frac{\sqrt{N}}{\sqrt{k}+\sqrt{k-1}}\leq \frac{\sqrt{N}}{\sqrt{k}}= n^{-1} 
\end{align}
we can write $|\mathrm I|$ as
\begin{align*}
 \frac{1}{N} &|\llangle  \psi, p_1 p_2 (N-1)\Big( (w^\epsi*|\varphi|^2)(r_2)-N (w^0*|\varphi|^2)(r_2) \Big) \widehat \mu  q_2   \psi \rrangle  |\\
&\leq \big (\norm{(w^\epsi*|\varphi|^2 -w^0*|\varphi|^2)}_\infty+ \frac{1}{N} \norm{w^\epsi*|\varphi|^2}_\infty  \big) \sqrt{\llangle \psi, \widehat \mu^2 q_2 \psi \rrangle}\\
&\leq \big (\norm{(w^\epsi*|\varphi|^2 -w^0*|\varphi|^2)}_\infty+ \frac{1}{N} \norm{w^\epsi*|\varphi|^2}_\infty  \big) \sqrt{\llangle \psi, n^{-2} q_2 \psi \rrangle}\\
&\stackrel{\mathclap{\ref{cor:op} ,A1'}}{ \lesssim }(f(\epsi)+\frac{1}{N})\norm{\varphi}_{L^2 \cap L^\infty}^2.
\end{align*}
%
%
\end{proof}

\begin{proof}[Proof of Lemma\,\ref{lem:estimating3}.\ref{lem:estimating3II}]
 \begin{align*}\label{equ:p1II.1}
|\mathrm{II}|&=  |\llangle \psi, p_1 q_2  [ (N-1)  w^\epsi_{12}-N w_1^\varphi- N w_2^\varphi , \widehat n] q_1q_2 \psi \rrangle| \\
&=|\llangle \psi, p_1 q_2  [ (N-1)  w^\epsi_{12}-N w_1^\varphi, \widehat n] q_1q_2 \psi \rrangle|\\
&\leq |\llangle \psi, p_1 q_2  [ (N-1)  w^\epsi_{12}, \widehat n] q_1q_2 \psi \rrangle|+|\llangle \psi, p_1 q_2 [ N w_1^\varphi, \widehat n] q_1q_2 \psi \rrangle| \numberthis
 \end{align*}
The second term of \eqref{equ:p1II.1} can be estimated by
\begin{align*}\label{equ:p1II.4}
 |\llangle \psi, p_1 q_2 [ N w_1^\varphi, \widehat n] q_1q_2 \psi \rrangle|&\weq{\ref{lem:weights}}{ =}||\llangle \psi, p_1   w_1^\varphi N(\widehat n- \widehat{\tau_{-1}n} )  q_1 q_2 \psi \rrangle| \\
&\weq{\ref{equ:mu}}{ =}|\llangle q_2 \psi, p_1     w_1^\varphi  \widehat \mu  q_1 q_2 \psi \rrangle|\\
&\leq \norm{q_2 \psi} \norm{w^\varphi}_\infty \sqrt{\llangle \psi, \mu^2 q_1 q_2 \psi \rrangle}\\
&\weq{\ref{lem:weights},\ref{cor:op}}{ \lesssim}  \norm{\varphi}_{L^2 \cap L^\infty} \alpha \numberthis .
\end{align*}
The first term of \eqref{equ:p1II.1}  is controlled by 
\begin{align*}\label{equ:p1II.2}
 |\llangle \psi, p_1 q_2  [ (N-1)  w^\epsi_{12}, \widehat n] q_1q_2 \psi \rrangle| &\leq  |\llangle \psi, p_1 q_2   w^\epsi_{12} \widehat{ \mu}  q_1q_2 \psi \rrangle|\\
&=|\llangle \psi, p_1 q_2  ( w^{\epsi,p}_{12}+ w^{\epsi, \infty}_{12})  \widehat{ \mu}  q_1q_2 \psi \rrangle|\\
&\leq |\llangle \psi, p_1 q_2  w^{\epsi,p}_{12}   \widehat{ \mu}  q_1q_2 \psi \rrangle|+|\llangle \psi, p_1 q_2  w^{\epsi, \infty}_{12}  \widehat{ \mu}  q_1q_2 \psi \rrangle| \numberthis.
\end{align*}
The second summand of \eqref{equ:p1II.2} can be estimated by 
\begin{align}\label{equ:p1II.5}
|\llangle \psi, p_1 q_2  w^{\epsi,\infty}_{12}  \widehat{ \mu}  q_1q_2 \psi \rrangle| \weq{\ref{lem:weights}}{ \leq} 2 \norm{ w^{\epsi,\infty}}_\infty \alpha \weq{A1'}{\lesssim} \alpha.
\end{align}
For the first summand of \eqref{equ:p1II.2} we use the idea of writing $w$ as a divergence of a vector field as introduced in Lemma\,\ref{lem:div} and estimate the terms as in Lemma\,\ref{lem:intbyparts}. 
To be in the exact same setting as in Lemma\,\ref{lem:div} we changed the labeling of particle $1$ and $2$.
\begin{align*}\label{equ:p1II.3}
 |\llangle \psi, p_2 q_1  w^{\epsi,s}_{12}   \widehat{ \mu}  q_1q_2 \psi \rrangle|  & \weq{\ref{lem:div}}{=} |\llangle \psi, p_2 q_1  (\nabla_1^\nu \xi^\nu)_{12}   \widehat{ \mu}  q_1q_2 \psi \rrangle|\\
&\weq{\ref{equ:ibp}}{ \leq }|\llangle  \nabla_1^\nu q_1 \psi, p_2    \xi^\nu_{12}   \widehat{ \mu}  q_1q_2 \psi \rrangle| +|\llangle \psi, p_2 q_1   \xi^\nu_{12}   \nabla_1^\nu \widehat{ \mu}  q_1q_2 \psi \rrangle| \numberthis
\end{align*}
The first term of the sum \eqref{equ:p1II.3} is smaller than 
\begin{align*}\label{equ:p1II.6}
 |\llangle     \xi^\nu_{12} p_2 \nabla_1^\nu q_1 \psi,    \widehat{ \mu}  q_1q_2 \psi \rrangle| &\leq  \sqrt{ \llangle \nabla_1^\nu q_1 \psi, p_2\xi^\nu_{12} \xi^\iota_{12}  p_2  \nabla_1^\iota q_1 \psi \rrangle} 
\sqrt{\llangle \psi, \mu^2 q_1q_2  \psi \rrangle}\\
&\weq{\eqref{equ:normA}} {\leq}    \norm{\nabla_1^\nu q_1 \psi} \norm{|\varphi|^2* \xi^2 }^\frac{1}{2}_\infty \sqrt{ \alpha} \\
&\weq{\ref{cor:xi}}{ \lesssim} \norm{\varphi}_{L^2 \cap L^\infty }(\alpha+\norm{\nabla_1 q_1 \psi}^2 ).\numberthis
\end{align*}
%
%
%
We deal with the second term of \eqref{equ:p1II.3} as before
\begin{align*}\label{equ:p1II.7}
 |\llangle \underbrace{  \xi^\nu_{12} p_2 q_1  \psi}_{\eta},     \underbrace{  \nabla_1^\nu \widehat{ \mu}  q_1q_2 \psi}_{\kappa} \rrangle| &\leq \norm{\eta } \norm{\kappa}\\
& \lesssim \norm{\varphi}_{L^2 \cap L^\infty } \sqrt{\alpha} \norm{ \nabla_1 q_1 \psi}\\
&\lesssim \norm{\varphi}_{L^2 \cap L^\infty } (\alpha+\norm{ \nabla_1 q_1 \psi}^2) \numberthis.
\end{align*}
Since similar to equation \eqref{equ:eta} we have
\begin{align*}
  \norm{\eta} \lesssim \norm{\varphi}_{L^2 \cap L^\infty } \sqrt{\alpha}
\end{align*}
and similar to equation \eqref{equ:kappa} we have
\begin{align*}
 \norm{ \kappa} \lesssim  \norm{ \nabla_1 q_1 \psi}.
\end{align*}
Now the bound for $|\mathrm{II}|$ follows from collecting all the different bounds from equations \eqref{equ:p1II.4},\eqref{equ:p1II.5},\eqref{equ:p1II.6} and \eqref{equ:p1II.7}.

\end{proof}

We are left with proving the estimates of term $\mathrm{III}$. We start with the part Lemma\,\ref{lem:estimating3}.\ref{lem:estimating3III} and\, \ref{lem:estimating3alt} have in common and 
continue with the easier proof for Lemma\, \ref{lem:estimating3alt} which will give an blueprint for the following proof of Lemma\,\ref{lem:estimating3}.\ref{lem:estimating3III}.
\begin{proof}[Proof of Lemma\,\ref{lem:estimating3alt} and Lemma\,\ref{lem:estimating3}.\ref{lem:estimating3III} ]
 Both mean field terms in term III do not contribute since for both of them a $p$ acts on a $q$ in the same coordinate. 
\begin{align*}\label{equ:p1III.1}
 \mathrm{III}&= |\llangle \psi, p_1 p_2  [ (N-1)  w^\epsi_{12}-N w_1^\varphi- N w_2^\varphi , \widehat n] q_1q_2 \psi \rrangle|\\
 &= | \llangle \psi, p_1 p_2  [ (N-1)  w_{12} , \widehat n] q_1q_2 \psi \rrangle|\\
 &\weq{\ref{lem:weights}} {=}|\llangle \psi, p_1 p_2   w_{12} \underbrace{ N( \widehat n - \widehat \tau_{-2} n) }_{=:\mu_1}  q_1q_2 \psi \rrangle|\\
&\leq |\llangle \psi, p_1 p_2   w_{12}^{\epsi,\infty} \mu_1  q_1q_2 \psi \rrangle|+|\llangle \psi, p_1 p_2   w_{12}^{\epsi,s} \mu_1  q_1q_2 \psi \rrangle|,\numberthis
\end{align*}
where $\mu_1$
\begin{align}\label{equ:mu1}
 \mu_1= N \big (\frac{\sqrt{k}}{\sqrt{N}}-\frac{\sqrt{k-2}}{\sqrt{N}} \big ) = \frac{2 \sqrt{N}}{\sqrt{k}+\sqrt{k-2}} \leq 2 \frac{\sqrt{N}}{\sqrt{k}} = 2n^{-1} \quad \forall k \geq 2.
\end{align}
The $w^{\epsi,\infty}$ part of \eqref{equ:p1III.1} does not pose any problems and can be estimated by
\begin{align*}
 |\llangle \psi, p_1 p_2   w_{12}^\infty \mu_1  q_1q_2 \psi \rrangle|&=|\llangle \psi, p_1 p_2   w_{12}^\infty  \widehat n^{\frac{1}{2}} \widehat n^{-\frac{1}{2}} \widehat \mu_1   q_1q_2 \psi \rrangle|\\
&\weq{\ref{lem:weights}}{ =}| \llangle \widehat {\tau_2 n}^{\frac{1}{2}} \psi, p_1 p_2   w_{12}^\infty   \widehat n^{-\frac{1}{2}} \widehat \mu_1   q_1q_2 \psi \rrangle|\\
&\leq  \norm{w_{12}^\infty}_\infty  \sqrt{\llangle \psi, \widehat{ \tau_2 n } \psi \rrangle}\sqrt {\llangle \psi , \widehat n^{-1} \widehat \mu_1^2 q_1 q_2 \psi \rrangle   } \\
&\weq{\ref{lem:weights} }{ \lesssim }  \norm{w_{12}^\infty}_\infty  (\beta+ \frac{1}{\sqrt{N}})^{\frac{1}{2}} \sqrt{\beta}  \\
&\weq{A1'}{ \lesssim}  (\beta+ \frac{1}{\sqrt{N}}) .
\end{align*}
However, the second summand of \eqref{equ:p1III.1} is more complicated to handle. The leading part of it could be dealt with the same methods as in the proof of Theorem\,\ref{thm:thm1}. 
The problem which occurs is that the resulting subleading term which is of order $N^{-1}$ can only be bound by these methods if we have control of $\norm{w^\epsi}_v$ for a $v \geq 2$. If this condition holds
we can use the same idea as in Lemma\,\ref{lem:aabschaetzen}.\ref{lem:aabschaetzen2}. The different presentation of the proof here only arises from the different weight and from the intention 
to reuse the calculation for the proof of Lemma\,\ref{lem:estimating3}.\ref{lem:estimating3III}.

\begin{proof}[Proof of Lemma\, \ref{lem:estimating3alt}]
We split $\mu_1=\mu_1^\frac{1}{2}\mu_1^\frac{1}{2} $ and rewrite the second summand of \eqref{equ:p1III.1} as
\begin{align*}\label{equ:p1III.2}
 |\llangle \psi, p_1 p_2   w_{12}^{\epsi,s}  \widehat \mu_1  q_1q_2 \psi \rrangle|
 &= \frac{1}{N-1}|\llangle \psi, \sum_{j=2}^N p_1 p_j   w_{1j}^{\epsi,s}  \widehat \mu_1^\frac{1}{2} \widehat \mu_1^\frac{1}{2}   q_1q_j \psi \rrangle|\\
&\leq \frac{1}{N-1} \norm{\widehat \mu_1^\frac{1}{2}   q_1 \psi} \sqrt{\sum_{j,i=2}^N \llangle \psi, p_1 p_j   w_{1j}^{\epsi,s} \widehat \mu_1    q_j q_i  w_{1i}^{\epsi,s} p_1 p_i \psi \rrangle }. \numberthis
\end{align*}
Since 
\begin{align*}
 \norm{\widehat \mu_1^\frac{1}{2}   q_1 \psi}^2 \stackrel{\ref{lem:weights},\eqref{equ:mu1}}{ \leq} \llangle \psi, \widehat n^{-1} \widehat n^2 \psi \rrangle  = \beta
\end{align*}
we can estimate
\begin{align}\label{equ:2p2qohney}
  |\llangle \psi, p_1 p_2   w_{12}^{\epsi,s}  \widehat \mu_1  q_1q_2 \psi \rrangle| \leq \frac{\sqrt{\beta}}{N-1} \sqrt{A+B},
\end{align}
where $A$ is the "off-diagonal" term of the sum
\begin{align*}
 A&:=\sum_{2\leq j\neq i \leq N} | \llangle \psi, p_1 p_j   w_{1j}^{\epsi,s} \widehat \mu_1    q_j q_i q_1  w_{1i}^{\epsi,s} p_1 p_i \psi \rrangle|\\
\end{align*}
and $B$ the "diagonal" term
\begin{align*}
B&:=\sum_{i=2}^N | \llangle \psi, p_1 p_i   w_{1i}^{\epsi,s}  \widehat \mu_1    q_i  q_1 w_{1i}^{\epsi,s} p_1 p_i \psi \rrangle|.
\end{align*}
We continue by estimating $B$ 
\begin{align}\label{equ:Bohne}
 B &\leq \norm{\widehat \mu_1  q_1}_\Op \sum_{i=2}^N \norm{w_{1i}^{\epsi,s} p_1 p_i \psi }^2 \notag\\
&\leq N^\frac{1}{2} \sum_{i=2}^N \norm{w_{1i}^{\epsi,s} p_1 p_i \psi }^2 \notag\\
 &= N^\frac{1}{2} \sum_{i=2}^N \langle \psi, p_1 p_i   (w_{1i}^{\epsi,s})^2 p_1 p_i \psi \rangle \notag\\
&\leq  N^{\frac{3}{2}} \norm{p_1  (w_{1i}^{\epsi,s})^2 p_1} \notag\\
&= N^{\frac{3}{2}} \norm{ (w^{\epsi,s})^2*|\varphi|^2 }_\infty \notag\\
&\leq N^{\frac{3}{2}} \norm{w^{\epsi,s}}^2_v (1+\norm{\varphi}_\infty)^2\notag \\
&\weq{\eqref{equ:ass+w}}{\leq} N^\frac{3}{2} \tilde f(\epsi)^2 (1+\norm{\varphi}_\infty)^2.
\end{align}
For A we find 
\begin{align}\label{equ:Aohne}
A&= \sum_{2\leq j\neq i \leq N} |\llangle \psi, p_1 p_j   w_{1j}^{\epsi,s} \widehat \mu_1    q_j q_i q_1  w_{1i}^{\epsi,s} p_1 p_i \psi \rrangle |\notag \\
&=\sum_{2\leq j\neq i \leq N} |\llangle \psi, p_1 p_j  q_i \widehat {\tau_2 \mu}_1^\frac{1}{2}  w_{1j}^{\epsi,s}      q_1  w_{1i}^{\epsi,s} \widehat {\tau_2 \mu}_1^\frac{1}{2}  q_jp_1 p_i \psi \rrangle|.
\end{align}
In the last equation we write $q_1 = 1-p_1$  
and after using the triangle inequality for the emerged sum, $A$ can be estimated by two terms called $A_1$ and  $A_2$. In the next steps we use for negative $w$ any branch of the complex square root and symmetry to find
\begin{align*}
 |A_1| &= \sum_{2\leq j\neq i \leq N}| \llangle \psi, p_1 p_j  q_i  \widehat {\tau_2 \mu}_1^\frac{1}{2}  w_{1j}^{\epsi,s}  
    w_{1i}^{\epsi,s} \widehat {\tau_2 \mu}_1^\frac{1}{2}  q_jp_1 p_i \psi \rrangle|\\
&\leq  \sum_{2\leq j\neq i \leq N}| \llangle \psi, p_1 p_j  q_i \widehat {\tau_2 \mu}_1^\frac{1}{2} \sqrt{ w_{1j}^{\epsi,s}}  
    \sqrt{w_{1i}^{\epsi,s}} \sqrt{ w_{1j}^{\epsi,s}}   \sqrt{w_{1i}^{\epsi,s}} \widehat {\tau_2 \mu}_1^\frac{1}{2}  q_jp_1 p_i \psi \rrangle| \\
& \leq  \sum_{2\leq j\neq i \leq N} \norm{  \sqrt{w_{1j}^{\epsi,s}} \sqrt{w_{1i}^{\epsi,s}} \widehat {\tau_2 \mu}_1^\frac{1}{2}  q_jp_1 p_i \psi}^2\\
& = \sum_{2\leq j\neq i \leq N}\norm{ \sqrt{w_{1i}^{\epsi,s}}p_i  \sqrt{w_{1j}^{\epsi,s}} p_1  \widehat {\tau_2 \mu}_1^\frac{1}{2}  q_j  \psi}^2\\
&\leq N^2 \norm{p_1 |w^{\epsi,s}_{12}|p_1}^2 \llangle \psi, \widehat {\tau_2 \mu}_1 q_1 \psi \rrangle \\
& \weq{\ref{lem:weights}}{ \leq} \norm{p_1 |w^{\epsi,s}_{12}|p_1}^2 \beta\\
& \weq{\ref{cor:op}}{ \lesssim} N^2 (1+\norm{\varphi}_\infty)^4 \beta.
\end{align*}
We estimate $A_2$
\begin{align*}
|A_2| &=   \sum_{2\leq j\neq i \leq N} \llangle \psi, p_1 p_j  q_i  \widehat {\tau_2 \mu}_1^\frac{1}{2}  w_{1j}^{\epsi,s}  p_1
    w_{1i}^{\epsi,s}  \widehat {\tau_2 \mu}_1^\frac{1}{2}  q_jp_1 p_i \psi \rrangle\\
&\leq N^2 \norm{p_1 w^{\epsi,p} p_1  }_\mathrm{Op}^2 \beta \\
& \weq{\ref{cor:op}}{ \lesssim} N^2 (1+\norm{\varphi}_\infty)^4 \beta.
\end{align*}
Collecting the estimates for $A$ and $B$ we have

\begin{align*}
 |\llangle \psi, p_1 p_2   w_{12}^{\epsi,s}   \widehat \mu_1  q_1q_2 \psi \rrangle|&\lesssim
 \frac{\sqrt{\beta}}{N-1}\sqrt{N^\frac{3}{2} \tilde f(\epsi)^2 (1+\norm{\varphi}_\infty)^2  +N^2 (1+\norm{\varphi}_{L^\infty})^4 \beta}\\
&\lesssim\sqrt{\beta}\sqrt{N^{-\frac{1}{2}} \tilde f(\epsi)^2 (1+\norm{\varphi}_\infty)^2  + (1+\norm{\varphi}_{L^\infty})^4 \beta}\\
&\lesssim  N^{-1/2} \tilde f(\epsi)^2(1+\norm{\varphi}_\infty)^2+(1+\norm{\varphi}_{L^\infty})^2\beta. \numberthis \label{equ:ppwqq+a}
\end{align*}
This ends the proof of Lemma\,\ref{lem:estimating3alt}.
\end{proof}

\textit{Proof of the remaining part of Lemma\,\ref{lem:estimating3}.\ref{lem:estimating3III}}
Without the possibility of the estimate in \eqref{equ:Bohne} the idea is to use an $N$-dependent splitting of the potential. This separates the singularities from the rest in a suitable way
to exploit the fact that only the subleading term poses problems in the calculation and combine this with the different scaling behaviors of $L^p$-norms for different $p$.
The splitting of $w_{12}^{\epsi,s}$ which does the trick is 
\begin{align*}
 w^{\epsi,s}=w^{s,1}+w^{s,2}:= w^{\epsi,s} \id_{\{|w^s|>c\}}+w^{\epsi,s} \id_{\{|w^s| \leq c\}},
\end{align*}
where $c$ is a positive $N$-dependent constant which we fix later by optimization of the convergence rates. 
In the following we will neglect the dependence of $w^s$ on $\epsi$.
Now we have for $s_0 < s < 2 $
\begin{align*}
 \norm{w^{s,1}}^{s_0}_{s_0}= \int |w^{s,1}|^{s_0} \D x = \int |w^{s}|^s |w^{s}|^{s_0-s} \id_{\{|w^s|>c\}} \D x \leq c^{s_0-s}\int |w^{s}|^s \id_{\{|w^s|>c\}} \D x \\
\leq c^{s_0-s} \int |w^{s}|^s \D x = c^{s_0-s} \norm{w^s}_s^s
\end{align*}
and  
\begin{align*}
 \norm{w^{s,2}}^{2}_{2}= \int |w^{s,2}|^{2} \D x = \int |w^{s}|^s |w^{s}|^{2-s} \id_{\{|w^s|<c\}} \D x \leq c^{2-s}\int |w^{s}|^s \id_{\{|w^s|<c\}} \D x \\
\leq c^{2-s} \int |w^{s}|^s \D x = c^{2-s} \norm{w^s}_s^s.
\end{align*}
Thus
\begin{align}
 \norm{w^{s,1}}_{s_0} & \leq c^{1-\frac{s}{s_0}} \norm{w^s}_s^\frac{s}{s_0} \label{equ:wp1} \\ 
 \norm{w^{s,2}}_2  &\leq c^{1-\frac{s}{2}} \norm{w^s}_s^{\frac{s}{2}}\label{equ:wp2}.
\end{align}
Now the idea becomes more obvious since if we set $c=N^\vartheta$ the $L^{s_0}$-norm of $ w^{s,1}$ becomes small for large $N$ because $1-s/s_0 < 0$. On the other hand the $L^2$-norm of $w^{s,2}$ will
diverge with some power of $N$ but sine we only need the $L^2$-norm of $w^{s,2}$ in the subleading part we can control this as long as $N^{-1/2} c^{2-s}= {\scriptstyle{ \mathcal{O}}}(1)$.       
This enables us to treat the part with $w^{s,1}$ by writing it as a divergence and then use integration by parts as done before. We define $\nabla \xi^j=w^{s,j}$.
Now we are in the same setting as in Lemma\,\ref{lem:intbyparts} and go through the same estimation process.
\begin{align*}
 |\llangle \psi, p_1 p_2   w_{12}^{s,1} \widehat \mu_1  q_1q_2 \psi \rrangle| &= |\llangle \psi,  p_1 p_2  \nabla^\nu_1 \xi_{12}^{1,\nu}\widehat \mu_1  q_1q_2 \psi \rrangle| \\
&\leq  |\llangle  \xi_{12}^{1,\nu} p_2 \nabla^\nu_1  p_1   \psi,  \widehat  \mu_1  q_1q_2 \psi \rrangle| +
|\llangle   p_1 p_2  \psi,   \xi_{12}^{1,\nu}   \nabla^\nu_1 \widehat \mu_1  q_1q_2 \psi \rrangle| \numberthis \label{equ:ws1}
\end{align*}
The first term is estimated by
\begin{align*}
 |\llangle  \xi_{12}^{1,\nu} p_2 \nabla^\nu_1  p_1   \psi,   \widehat \mu_1  q_1q_2 \psi \rrangle| &\leq \sqrt{\llangle \nabla^\nu_1 p_1 \psi, p_2 \xi_{12}^{1,\nu} \xi_{12}^{1,\iota} \  p_2 \nabla_1^\iota p_1 \psi  \rrangle} \norm{ \widehat \mu_1  q_1q_2 \psi}\\
&\lesssim \norm{ |\varphi|^2 * (\xi^1)^2}^\frac{1}{2}_\infty  \norm{\nabla \varphi}  \norm{ \widehat n^{-1}  q_1q_2 \psi }\\
& \lesssim \norm{ |\xi^1| }_2  \norm{ \varphi}_{\infty} \norm{\nabla \varphi} \sqrt{\alpha} \\
&\weq{\ref{lem:div}}{ \lesssim}  \norm{ w^{s,1} }_{s_0}  \norm{\nabla \varphi} \norm{\varphi}_\infty \sqrt{\alpha}\\
&\weq{\eqref{equ:wp1}}{ \lesssim}   \norm{\nabla \varphi} \norm{\varphi}_\infty \norm{w^s}_{s}^{\frac{s}{s_0}}c^{1-\frac{1s}{s_0}} \sqrt{\alpha}\\
&\weq{A1'}{ \lesssim }  \norm{\nabla \varphi} \norm{\varphi}_\infty (c^{2-\frac{2s}{s_0}}+\alpha ),
\end{align*}
where we refer to the proof  Lemma\,\ref{lem:intbyparts} for the step from the first to the second line. 
The second term is estimated similar to equation \eqref{equ:2p2qpart2}
\begin{align*}
 |\llangle    \xi_{12}^{1,\nu}   p_1 p_2  \psi,   \nabla^\nu_1 \widehat \mu_1  q_1q_2 \psi\rrangle| 
&\leq \sqrt{\llangle  \psi, p_1 p_2 (\xi_{12}^{1})^2 p_1 p_2 \psi \rrangle } \norm{\nabla_1^\nu \widehat \mu_1 q_1 q_2 \psi}\\
&\lesssim  \norm{ |\xi^1| }_{2} \norm{\varphi}_{\infty }  \norm{\nabla_1 q_1 \psi}\\
 &\weq{\ref{lem:div}}{ \lesssim}  \norm{w^{s,1}}_{s_0} \norm{\varphi}_{\infty }  \norm{\nabla_1 q_1 \psi}\\
 &\weq{\eqref{equ:wp1}}{ \lesssim}    \norm{\varphi}_{\infty } \norm{w^s}_{s}^{\frac{s}{s_0}} c^{1-\frac{s}{s_0}} \norm{\nabla_1 q_1 \psi}\\
&\weq{A1'}{ \lesssim }  \norm{\varphi}_{\infty } (c^{2-\frac{2s}{s_0}}+ \norm{\nabla_1 q_1 \psi}^2).
\end{align*}
Collecting both estimates we find for the right-hand side of \eqref{equ:ws1}
\begin{align}\label{equ:2p2qw1}
 |\llangle \psi, p_1 p_2   w_{12}^{s,1} \widehat \mu_1  q_1q_2 \psi \rrangle| &\lesssim  \norm{\nabla \varphi} \norm{\varphi}_\infty (c^{2-\frac{2p}{p_0}}+\beta ) 
+\norm{\varphi}_{L^\infty } (c^{2-\frac{2p}{p_0}}+ \norm{\nabla_1 q_1 \psi}^2)\notag \\
&\leq  \norm{\varphi}_{\infty }\big(\norm{ \nabla \varphi} \beta+ \norm{ \varphi}_{H^1}c^{2-\frac{2s}{s_0}}+\norm{\nabla_1 q_1 \psi}^2\big).
\end{align}
Now we come to the term $|\llangle \psi, p_1 p_2   w_{12}^{s,2} \widehat \mu_1  q_1q_2 \psi \rrangle|$. This term can be dealt with the help of Lemma\,\ref{lem:estimating3alt}.
The only difference is that $ \norm{w^{s,2}}_2$ is bounded by $ c^{1-\frac{s}{2}} \norm{w^s}_s^{\frac{s}{2}}$ instead of $\norm{w^{\epsi,s}}_v$ being bounded by  $\tilde f(\epsi)$. Thus we find

\begin{align*}
 |\llangle \psi, p_1 p_2   w_{12}^{s,2}   \widehat \mu_1  q_1q_2 \psi \rrangle|& \weq{\eqref{equ:ppwqq+a}}{ \lesssim}
 \frac{\sqrt{\beta}}{N-1}\sqrt{N^\frac{3}{2}c^{2-s} (1+\norm{\varphi}_\infty)^2  +N^2 (1+\norm{\varphi}_{L^\infty})^4 \beta}\\
&\lesssim  \sqrt{\beta} \sqrt{N^{-\frac{1}{2}}c^{2-s} (1+\norm{\varphi}_\infty)^2  + (1+\norm{\varphi}_{L^\infty})^4 \beta} \\
&\lesssim  N^{-1/2} c^{2-s}(1+\norm{\varphi}_\infty)^2+(1+\norm{\varphi}_{L^\infty})^2\beta \numberthis \label{equ:ppws2qq}.
\end{align*} 
Putting this together with \eqref{equ:2p2qw1} we can optimize in $\vartheta$ when setting $c=N^\vartheta$ 
\begin{align*}
|\llangle \psi, p_1 p_2   w_{12}^{\epsi,s}  \widehat \mu_1  q_1q_2 \psi \rrangle|& \lesssim
  \norm{\varphi}_{\infty }\big(\norm{ \nabla \varphi} \beta+ \norm{ \varphi}_{H^1}c^{2-\frac{2s}{s_0}}+\norm{\nabla_1 q_1 \psi}^2\big)\\
\quad&+N^{-1/2} c^{2-s}(1+\norm{\varphi}_\infty)^2+(1+\norm{\varphi}_{L^\infty})^2\beta\\
&\lesssim \norm{ \varphi}_{H^1\cap L^\infty }^3(\beta+N^\eta)+ \norm{\varphi}_{\infty }\norm{\nabla_1 q_1 \psi}^2
\end{align*}
with 
\begin{align}\label{equ:etagr}
 \eta=-\frac{s/s_0-1}{2s/s_0-s}=-\frac{5s-6}{4s} 
\end{align}
This finishes the proof of Lemma\, \ref{lem:estimating3alt}.
\end{proof}
 By introducing yet another splitting the estimate of \eqref{equ:ppws2qq} can be improved slightly. We defer this to Appendix\,\ref{app:ratebes}.

\include{GP_Energie_epsi}

\section{Controlling the Derivative of $\tilde \beta$}
Now we come to the main part of the proof which has the well-known structure. The additional term IV stems from the introduction of the external potential V.
For the ease of representation and to be in the same setting as in the proof of Theorem\,\ref{thm:thm2} we define 
\begin{align}\label{equ:potgps}
w^{\epsi, \theta,N} := N W^{\epsi, \theta,N}= (N^{-1} \epsi^2)^{-3\theta} \epsi^2 w \Big ( (N^{-1} \epsi^2)^{-\theta}  \big( (x_i-x_j),\epsi(y_i-y_j) \big  ) \Big).
\end{align}

\begin{lem}\label{lem:beta.g}
 \begin{align*}
  |\frac{d}{dt} \tilde \beta| \leq  2|\mathrm{I}| +  |\mathrm{II}| +  2|\mathrm{III}|+ |\mathrm{IV}|,
 \end{align*}
where 
\begin{align*}
 \mathrm{I}&:=\llangle \psi, p_1 p_2  [ (N-1)  w^{\epsi, \theta,N}_{12}- N b|\Phi|^2(x_2) ,\widehat n ] p_1q_2 \psi \rrangle\\
\mathrm{II}&:= \llangle \psi, p_1 p_2  [ (N-1)  w^{\epsi, \theta,N}_{12}, \widehat n] q_1q_2 \psi \rrangle\\ 
\mathrm{III}&:= \llangle \psi, p_1 q_2  [ (N-1)  w^{\epsi, \theta,N}_{12}-N  b|\Phi|^2(x_1) , \widehat n] q_1q_2 \psi \rrangle\\
\mathrm{IV}&:= |\llangle \psi, \dot V(x_1,\epsi y_1) \psi \rrangle - \langle \Phi, \dot V(x_1,0) \Phi \rangle_{\LzOf}| \\
	     &\quad  + 2 \llangle \psi, p_1  N [   V(x_1,\epsi y_1)-V(x_1,0) , \widehat n] q_1 \psi \rrangle.
\end{align*}

\end{lem}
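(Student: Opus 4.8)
The plan is to differentiate $\tilde\beta^{\epsi,N}(t)=\beta^{\epsi,N}(t)+|E^\psi(t)-E^\varphi(t)|$ term by term and reorganise, following the template of Lemma\,\ref{beta.}; the only genuinely new feature compared with Theorem\,\ref{thm:thm2} is the time dependence of $V$, which is exactly what produces the extra term $\mathrm{IV}$. First I would record the regularity needed for the derivative to exist: by B2 the potential is $C^2$ with the stated support properties, $\psi^\epsi_N(t)\in C^1(\R,\mathscr H^N)$ with values in $D(H^\epsi_N)$, and $\Phi(t)$ is a regular solution of the NLS (Appendix\,\ref{app:regsol}); together with Lemma\,\ref{hat.} this gives $\beta\in C^1$, hence $E^\psi,E^\varphi\in C^1$, so that $\frac{d}{dt}\tilde\beta\le|\frac{d}{dt}\beta|+|\frac{d}{dt}(E^\psi-E^\varphi)|$ and it suffices to control the two pieces separately.

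For $\frac{d}{dt}\beta$ I would use Lemma\,\ref{hat.}(c) together with the Schrödinger equation to write $\partial_t\beta=\im\llangle \psi,[H^\epsi_N-H^\Phi,\widehat n]\psi \rrangle$, where $H^\Phi=\sum_i h^\Phi_i$ with $h^\Phi=-\Delta_x+V(t,x,0)+b|\Phi|^2$ the generator of the evolution of $\varphi=\Phi\chi_0$ (here one uses $\im\partial_t\chi_0=0$ and that $-\epsi^{-2}\Delta_y$ acts as the constant $\epsi^{-2}E_0$ on $\chi_0$). The one-particle operators $-\Delta_{x_i}$ cancel and $-\epsi^{-2}\Delta_{y_i}$ commutes with $\widehat n$ by Lemma\,\ref{hat.}(b); using permutation symmetry of $\psi$ exactly as in the passage to the $w_{12}$-term in Lemma\,\ref{beta.}, and writing $w^{\epsi,\theta,N}=NW^{\epsi,\theta,N}$ as in \eqref{equ:potgps}, what remains is
\begin{align*}
\partial_t\beta=\tfrac{\im}{2}\llangle \psi,[(N-1)w^{\epsi,\theta,N}_{12}-Nb|\Phi|^2(x_1)-Nb|\Phi|^2(x_2),\widehat n]\psi \rrangle+\im N\llangle \psi,[V(t,x_1,\epsi y_1)-V(t,x_1,0),\widehat n]\psi \rrangle .
\end{align*}
I would then insert $\id=(p_1+q_1)(p_2+q_2)$ on both sides of the first commutator (and $\id=p_1+q_1$ on both sides of the second). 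By Lemma\,\ref{lem:weights}(\ref{lem:weightsc}) every contribution with the same number of $q$'s on each side has $\widehat n$ commuting through unchanged and hence drops out of the commutator; the surviving "off-diagonal" contributions, after pairing with their complex conjugates and using the $1\leftrightarrow2$ symmetry, assemble into $2\mathrm I$, $\mathrm{II}$, $2\mathrm{III}$ and the commutator piece of $\mathrm{IV}$ with the stated coefficients — the bookkeeping is identical to Lemma\,\ref{beta.}, only with $\widehat n$ in place of $\frac kN$ and particles $1,2$ relabelled between $\mathrm{II}$ and $\mathrm{III}$; in the $p_1p_2[\,\cdot\,]q_1q_2$-type term the mean-field parts drop because a $p$ faces a $q$ in each coordinate, leaving just $w^{\epsi,\theta,N}_{12}$ as in $\mathrm{II}$. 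The shift operators $\widehat{\tau_j n}$ generated along the way are harmless and are absorbed later.

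For $\frac{d}{dt}(E^\psi-E^\varphi)$ I would argue as in Lemma\,\ref{lem:constE}: in $\frac{d}{dt}E^\psi=\frac1N(\llangle \dot\psi,H^\epsi_N\psi \rrangle+\llangle \psi,\dot H^\epsi_N\psi \rrangle+\llangle \psi,H^\epsi_N\dot\psi \rrangle)$ the first and third terms cancel by self-adjointness of $H^\epsi_N$ and $\im\dot\psi=H^\epsi_N\psi$, leaving $\frac1N\llangle \psi,\sum_i\dot V(t,x_i,\epsi y_i)\psi \rrangle=\llangle \psi,\dot V(t,x_1,\epsi y_1)\psi \rrangle$ by symmetry. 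For $E^\varphi$ the same structure applies; the $\partial_t\Phi$-contributions cancel by the NLS equation just as the $\partial_t(w^0\ast|\Phi|^2)$ terms did in Lemma\,\ref{lem:constE} (the factor $\tfrac12$ in front of $b|\Phi|^2$ being matched by the continuity-equation identity for $\partial_t|\Phi|^2$), so that only $\langle\varphi,\dot V(t,x,0)\varphi\rangle=\langle\Phi,\dot V(t,x,0)\Phi\rangle_{\LzOf}$ survives, using $\|\chi_0\|_{\LzOc}=1$. Hence $\frac{d}{dt}(E^\psi-E^\varphi)=\llangle \psi,\dot V(t,x_1,\epsi y_1)\psi \rrangle-\langle\Phi,\dot V(t,x_1,0)\Phi\rangle_{\LzOf}$, which is precisely the first summand of $\mathrm{IV}$; combining the two pieces yields the claimed inequality. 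The main obstacle is bookkeeping rather than analysis: one must check that $\im\partial_t\widehat n=[H^\Phi,\widehat n]$ holds with exactly the right $h^\Phi$, so that after the kinetic and confined-direction cancellations precisely the pair interaction minus the NLS nonlinearity and the difference $V(t,x_1,\epsi y_1)-V(t,x_1,0)$ remain, and that in the $p/q$ expansion the diagonal terms genuinely vanish (via Lemma\,\ref{lem:weights}(\ref{lem:weightsc})) while the off-diagonal ones combine into $\mathrm I$–$\mathrm{IV}$ with the advertised constants; the secondary delicate point is the cancellation in $\frac{d}{dt}E^\varphi$, where the time dependence of $V$ must be separated cleanly from the self-cancelling nonlinear contribution.
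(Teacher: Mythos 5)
Your proposal is correct and follows exactly the route the paper intends: the paper's own proof of this lemma is just the remark that terms I--III are obtained as in Lemma\,\ref{beta.} and that IV collects the time derivative of $|E^\psi-E^\varphi|$ together with the commutator of the external-potential difference $V(x_1,\epsi y_1)-V(x_1,0)$, which is precisely the decomposition you carry out (including the correct coefficients $2,1,2$ attached to the $p_1p_2\!\cdots p_1q_2$, $p_1p_2\!\cdots q_1q_2$ and $p_1q_2\!\cdots q_1q_2$ terms, and the vanishing of the mean-field part in II). Your writeup is in fact more detailed than the paper's, and the two delicate points you flag — that $\im\partial_t\widehat n=[H^\Phi,\widehat n]$ holds with the NLS generator $h^\Phi=-\Delta_x+V(t,x,0)+b|\Phi|^2$, and that only $\dot V$ survives in $\tfrac{d}{dt}(E^\psi-E^\varphi)$ — are exactly the ones the paper implicitly relies on.
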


\begin{lem}\label{lem:3termeg}
\begin{enumerate}
 \item \label{lem:3.1g}
\begin{align*}
 |\mathrm{I}|&\lesssim N^{-2 \theta}\epsi^{4\theta-2}\norm{\Delta |\varphi|^2}_\LzO \norm{\varphi}_\LiO+ N^\frac{1}{2} \epsi \norm{\varphi}_\LiO^2 
\end{align*}

 \item  \label{lem:3.2g}
For $\delta> 0$
\begin{align*}
 |\mathrm{II}|&\lesssim \norm{\varphi}_{\LiO}^2 \tilde \beta + N^{-\frac{1}{2}} N^{\frac{3\theta}{2}}N^{\frac{\delta}{4}}\epsi^{-3\theta+1}  \norm{\varphi}_{\LiO} +
 N^{-\frac{\delta}{2}} \norm{\varphi}_{\LiO}^2
\end{align*}

 \item \label{lem:3.3g}
\begin{align*}
 |\mathrm{III}|&\lesssim \norm{\varphi}_{H^2(\Omega)\cap L^\infty(\Omega)} \norm{\chi}_{\LiOc}^2 \big(\tilde \beta +\epsi^4 (N\epsi^{-2})^{3\theta}+  f(N,\epsi) \big)\\&\qquad + 
 \norm{\chi}_{\LiOc}^2   \norm{V}_\LiO^{1/2} \beta
\end{align*}

\item \label{lem:3.4g}
\begin{align*}
 |\mathrm{IV}| &\lesssim  \norm{\dot V}_\LiO \tilde \beta+ \epsi
\end{align*}

\end{enumerate}

\end{lem}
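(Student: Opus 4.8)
I would run the same Grönwall machinery as for Theorem~\ref{thm:thm2}: the derivative of $\tilde\beta$ has already been split in Lemma~\ref{lem:beta.g} into $\mathrm{I}$--$\mathrm{IV}$, and each term is to be brought into a shape in which the weight $\widehat n$ occurs only through shifted copies $\widehat{\tau_j n}$, these are extracted by Lemma~\ref{lem:weights}, and the remaining two-particle operator is estimated either after a $p_2\cdots p_2$ sandwich (which turns $w^{\epsi,\theta,N}_{12}$ into the mean field $w^{\epsi,\theta,N}*|\varphi|^2$), or --- when the strong concentration of the kernel forbids this --- after writing $w^{\epsi,\theta,N}=\nabla\xi$ (Lemma~\ref{lem:div}) and integrating by parts in the style of Lemma~\ref{lem:intbyparts}, paying with $\norm{\nabla_1 q_1\psi}$, which is itself controlled by the energy estimate of the preceding section (the analogue of Lemma~\ref{lem:energy}). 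The one genuinely new ingredient throughout is that every $L^p$-norm of $w^{\epsi,\theta,N}$ now carries explicit powers of $N$ and $\epsi$ coming from \eqref{equ:potgps} --- e.g. $\norm{w^{\epsi,\theta,N}}_{\LeOt}\lesssim1$ after the change of variables, while $\norm{w^{\epsi,\theta,N}}_{\LzOt}\sim N^{3\theta/2}\epsi^{1-3\theta}$ --- and these must be carried all the way to the final bound.

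\textbf{Term~I.} As in Lemma~\ref{lem:aabschaetzen}.\ref{lem:aabschaetzen1} and Lemma~\ref{lem:estimating3}.\ref{lem:estimating3I}, I would compute $p_2 w^{\epsi,\theta,N}_{12}p_2=p_2\,(w^{\epsi,\theta,N}*|\varphi|^2)(r_1)$ and $p_2\,b|\Phi|^2(x_2)\,p_2=b(\textstyle\int|\Phi|^4)\,p_2$; after pulling the weight through, $\mathrm{I}$ is bounded by $\norm{(w^{\epsi,\theta,N}*|\varphi|^2)(r_1)-b|\Phi|^2(x_1)}_{\LiO}$ plus a subleading remainder coming from the $(N-1)$-versus-$N$ mismatch together with $\widehat n-\widehat{\tau_{-1}n}\lesssim\widehat n^{-1}/N$. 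For the first quantity I would integrate out the free direction (using $\int w^{\epsi,\theta,N}=\int_{\R^3}w$) and Taylor-expand $|\Phi|^2$ and $|\chi_0|^2$ about the diagonal: the zeroth order term is precisely $b|\Phi|^2(x_1)$ with $b=\int_{\R^3}w\int|\chi_0|^4$, the linear term vanishes by radiality of $w$, and the quadratic remainder is $(\text{kernel width})^2\,\norm{\Delta|\varphi|^2}_{\LzO}\norm{\varphi}_{\LiO}$, the dominant width being that in the confined directions, whose square is $(N^{-1}\epsi^2)^{2\theta}\epsi^{-2}=N^{-2\theta}\epsi^{4\theta-2}$ (finite because $\nu<\theta/(1-2\theta)$). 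The subleading remainder, estimated through Lemma~\ref{lem:young} and $\norm{w^{\epsi,\theta,N}}_{\LzOt}\sim N^{3\theta/2}\epsi^{1-3\theta}$ against the $N^{-1}$, is bounded by $N^{1/2}\epsi\,\norm{\varphi}^2_{\LiO}$ in the admissible parameter range; this gives part~\ref{lem:3.1g}.

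\textbf{Terms~II and~III.} Term~$\mathrm{II}$ ($p_1p_2[\cdot,\widehat n]q_1q_2$) is the analogue of the leading part of $\mathrm{III}$ in Lemma~\ref{lem:estimating3}: pull out the weight to reach $|\llangle\psi,p_1p_2 w^{\epsi,\theta,N}_{12}\widehat\mu_1 q_1q_2\psi\rrangle|$ with $\widehat\mu_1\lesssim\widehat n^{-1}$, split $w^{\epsi,\theta,N}$ at an $N$-dependent threshold (exactly as in the proof of Lemma~\ref{lem:estimating3}.\ref{lem:estimating3III}), write the large part as a divergence and integrate by parts, and treat the small part by the ``diagonal/off-diagonal'' decomposition $A+B$ of Lemma~\ref{lem:estimating3alt}; the off-diagonal part produces $\norm{\varphi}_{\LiO}^2\tilde\beta$, the diagonal part the subleading $N^{-1/2}\norm{w^{\epsi,\theta,N}_{\text{small}}}_{\LzOt}$, and optimizing the threshold against an auxiliary exponent $\delta$ yields the $N^{\delta/4}$ and $N^{-\delta/2}$ factors of part~\ref{lem:3.2g}. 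Term~$\mathrm{III}$ ($p_1q_2[\cdot,\widehat n]q_1q_2$) is the hard one: split off the NLS part $Nb|\Phi|^2(x_1)$, write the interaction part as a divergence and integrate by parts as in Lemma~\ref{lem:intbyparts}, and bound $\norm{\nabla_1 q_1\psi}$ by the Gross--Pitaevskii energy estimate of the previous section --- this is exactly where the terms $\epsi^4(N\epsi^{-2})^{3\theta}$, $f(N,\epsi)$ and the $\norm{V}^{1/2}_{\LiO}\beta$ contribution (from the $V$-dependent part of that energy estimate) appear, giving part~\ref{lem:3.3g}.

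\textbf{Term~IV and the main obstacle.} For $\mathrm{IV}$ I would split the $\dot V$-difference as $\llangle\psi,(\dot V(x_1,\epsi y_1)-\dot V(x_1,0))\psi\rrangle+(\llangle\psi,\dot V(x_1,0)\psi\rrangle-\langle\Phi,\dot V(x_1,0)\Phi\rangle)$: the first is $O(\epsi)$ since $\partial_y\dot V\in C_c$ and $|y_1|$ is bounded on $\Omega_c$, and the second, using that $\dot V(x_1,0)$ does not depend on $y_1$ and $\norm{\chi_0}=1$, collapses after a $p_1+q_1$ insertion and a weight manipulation to terms $\lesssim\norm{\dot V}_{\LiO}\alpha\le\norm{\dot V}_{\LiO}\tilde\beta$; the commutator $\llangle\psi,p_1 N[V(x_1,\epsi y_1)-V(x_1,0),\widehat n]q_1\psi\rrangle$ is handled like $\mathrm{I}$ of Lemma~\ref{lem:estimating3}, using $N(\widehat n-\widehat{\tau_{-1}n})\lesssim\widehat n^{-1}$, $\norm{\widehat n^{-1}q_1\psi}\le1$ and $\norm{V(x_1,\epsi y_1)-V(x_1,0)}_{\mathrm{Op}}\lesssim\epsi$, which yields part~\ref{lem:3.4g}. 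The real difficulty, as the explicit exponents already signal, is not any single estimate but the simultaneous smallness of the competing terms $N^{-2\theta}\epsi^{4\theta-2}$, $N^{1/2}\epsi$, $N^{-1/2}N^{3\theta/2}N^{\delta/4}\epsi^{1-3\theta}$ and $\epsi^4(N\epsi^{-2})^{3\theta}$: each is small only on part of the parameter plane, so making them all small is what forces the coupling $\epsi(N)=N^{-\nu}$ with $\tfrac12<\nu<\tfrac{\theta}{1-2\theta}$ and what produces the piecewise rate $\eta(\theta)$; concretely, term~$\mathrm{III}$ together with the energy estimate feeding it --- and the tuning of the $N$-dependent splitting against the constraints from $\mathrm{I}$ and $\mathrm{II}$ --- is where this balancing is most delicate.
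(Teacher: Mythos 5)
Your overall architecture (weight shifting, mean-field cancellation, diagonal/off-diagonal decomposition, integration by parts paid for by $\norm{\nabla_1 q_1\psi}$, final balancing of the competing powers of $N$ and $\epsi$) matches the paper's, and your treatment of term IV is essentially the paper's. But the proposal is missing the one ingredient that is genuinely new in this chapter and that actually produces two of the stated error terms: the control of transverse excitations via energy conservation (assumption B3 and $\nu>1/2$). In term I the quantity you propose to bound, $\norm{(w^{\epsi,\theta,N}*|\varphi|^2)(r_1)-b|\Phi|^2(x_1)}_{\LiO}$, is \emph{not} small: the zeroth Taylor order of the convolution is $\big(\int w\big)\,|\Phi(x_1)|^2|\chi(y_1)|^2$, whereas $b|\Phi|^2(x_1)=\big(\int w\int|\chi|^4\big)|\Phi(x_1)|^2$, and these differ at order one in sup norm. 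The cancellation only occurs after splitting $q_2=q_2^{\chi}+p_2^{\chi}q_2^{\Phi}$: on the $p_2^{\chi}q_2^{\Phi}$ component the $|\chi|^2$-averages restore the $\int|\chi|^4$ factor (the paper's $\tilde b\,\delta(r_1-r_2)$ computation followed by the Poisson-equation/compact-support argument, which is the rigorous version of your Taylor expansion and does give $N^{-2\theta}\epsi^{4\theta-2}\norm{\Delta|\varphi|^2}_{\LzO}\norm{\varphi}_{\LiO}$), while on the $q_2^{\chi}$ component the full interaction survives uncancelled and the bound $N^{1/2}\epsi\norm{\varphi}^2_{\LiO}$ comes from $\llangle\psi,\sum_j P^{\chi^\perp}_{j,N}\psi\rrangle\lesssim N\epsi^2$. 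Your attribution of the $N^{1/2}\epsi$ term to the $(N-1)$-vs-$N$ mismatch combined with $\norm{w^{\epsi,\theta,N}}_{L^2}$ produces $N^{-1+3\theta/2}\epsi^{1-3\theta}$, which happens to be dominated by $N^{1/2}\epsi$ in the admissible range but is not the origin of that term; without the excitation estimate your argument for part 1 does not close. The same estimate ($\norm{q^{\chi}\psi}\lesssim\epsi$) is the source of the $\epsi^4(N\epsi^{-2})^{3\theta}$ terms in part 3.

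Two further steps would fail as described. For term II the interaction is bounded with compact support (B1), so splitting the potential by its size and writing the large part as a divergence --- the Theorem~\ref{thm:thm2} device for $L^s$ singularities --- is vacuous here and no integration by parts is used; in the paper the exponent $\delta$ enters through a cutoff in the number of bad particles, $\mu=\mu^a+\mu^b$ with $\mu^a$ supported on $k<N^{1-\delta}$, and the factors $N^{-\delta/2}$ and $N^{\delta/4}$ come from $n(k)\le N^{-\delta/2}$ resp.\ $\mu(k)^{1/2}\le N^{\delta/4}$ on the respective supports, combined with the diagonal/off-diagonal estimate you correctly identify. For term III the three-dimensional vector field of Lemma~\ref{lem:div} is not usable, because the $L^{s}$-norms of $w^{\epsi,\theta,N}$ that would control $\norm{|\varphi|^2*\xi^2}_{\LiO}$ diverge in this scaling; the paper instead first integrates out the confined directions against $|\chi(y_1)|^2|\chi(y_2)|^2$ and integrates by parts in the single free variable against the one-dimensional antiderivative $\tilde W^{\epsi,\theta,N}$, whose $L^\infty$-norm is bounded by $\norm{\chi}^2_{\LiOc}\norm{w}_{L^1}$ uniformly in $N$ and $\epsi$ --- this is precisely where the $\norm{\chi}^2_{\LiOc}$ prefactors in the stated bound come from and has no analogue in your proposal.
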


\begin{proof}[Proof of Theorem\,\ref{thm:thm3}]
 The Lemmas\,\ref{lem:beta.g} and\,\ref{lem:3termeg} together with the Grönwall argument 
prove Theorem\,\ref{thm:thm3}.
If $\theta\in (\frac{1}{4},\frac{1}{3}) $ and $\epsi=N^{-\nu}$ for $\nu \in (\frac{1}{2}, \frac{\theta}{1-2\theta} )$, then all error terms converge to zero.
The optimal rate is $N^{-\eta(\theta)}$ with
\begin{align}\label{equ:rate}
 \eta(\theta) = \begin{cases}
       \frac{4\theta-1}{3-4\theta} \qquad &\mathrm{for}\; \theta \in (\frac{1}{4}, \frac{7}{24}]\\
	\frac{1-3\theta}{4-9\theta}  \qquad &\mathrm{for}\; \theta \in (\frac{7}{24} ,\frac{1}{3})
      \end{cases}
\end{align}
which follows by optimization of $\delta$ and $\nu$. 
 
\end{proof}

\begin{rem}
 For $ \theta \in (\frac{1}{4}, \frac{7}{24}]$ the optimal rate in \eqref{equ:rate} is determined by the terms $N^{-2 \theta}\epsi^{4\theta-2}$ and $ N^\frac{1}{2} \epsi$.
For $\theta \in (\frac{7}{24} ,\frac{1}{3})$ the optimal rate is determined by $N^{-2 \theta}\epsi^{4\theta-2}$, $N^{-\frac{1}{2}} N^{\frac{3\theta}{2}}N^{\frac{\delta}{4}}\epsi^{-3\theta+1}$ and $N^{\frac{-\delta}{2}}$.
For $\theta= \frac{7}{24}$ all four terms have the value $\frac{-1}{11}$ if we choose $\delta=\frac{2}{11}$ and $\nu=\frac{13}{22}$. 
\end{rem}

\section{Proofs of the Lemmas}

In order to keep the notation in these proofs to a minimum we do not write, whenever it does not lead to confusion, the underlying sets of the function spaces and write $\norm{\cdot}$ and $\langle \cdot, \cdot \rangle $ 
for the $L^2$-norm and scalar product on the appropriate set.

\begin{proof}[Proof of Lemma \ref{lem:beta.g}]

Compare Lemma\,\ref{beta.} for the terms I-III.
The term IV stems from the change of $\beta$. The first summand of IV is the time derivative of $|E^\Psi-E^\varphi|$ and the second
summand arises from the different external potentials in the Hamiltonians of $\psi$ and $\varphi$.  
\end{proof}


\begin{proof}[Proof of Lemma\;\ref{lem:3termeg}.\ref{lem:3.1g}  ]

The term I is small due to the cancellation of $b|\Phi|^2 $ and the full interaction. Before one can see this cancellation we have to separate this term into a part which stays in the ground state of the 
confined direction and the orthogonal complement. 
To this end we use the projections 
\begin{align*}
p_j^\chi&:=1 \otimes |\chi(y_j) \rangle \langle \chi(y_j) | &  q^\chi_j&:=1-p^\chi_j \\
p_j^\Phi&:= |\Phi(x_j) \rangle \langle \Phi(x_j) | \otimes 1 &  q^\Phi_j&:=1-p^\Phi_j  \numberthis \label{equ:defpqx}.
\end{align*}
With this projections we can rewrite 
\begin{align}\label{equ:qinqx}
q_j= 1- p_j = 1 - p^\Phi_j p^\chi_j = (1-p_j^\chi)+ (1-p_j^\Phi)p_j^\chi=q_j^\chi+q_j^\phi p_j^\chi.
\end{align}
For later use we note that for any function $f: \Omega_{\mathrm{f}} \rightarrow \C $
\begin{align} \label{equ:pfqchi}
 p_2 f(x_2) q_2^\chi=0.
\end{align}
Now with \eqref{equ:qinqx} and Lemma\,\ref{lem:weights} 
\begin{align*}
 |I|&= |\llangle \psi, p_1 p_2  [ (N-1)  w^{\epsi,\theta,N}_{12}- N b|\Phi|^2(x_2) , \widehat {n}] p_1q_2 \psi \rrangle|\\
& \weq{\ref{lem:weights}}{=} |\llangle  \psi, p_1 p_2 \Big((N-1)w^{\epsi,\theta,N}_{12}-N b|\Phi|^2(x_2)\Big) ({\widehat { {n}}}- { \widehat  {\tau_{-1} n}}  ) p_1 q_2   \psi \rrangle  |\\
&\weq{\eqref{equ:qinqx}}{=} |\llangle  \psi, p_1 p_2 \Big((N-1)w^{\epsi,\theta,N}_{12}-N b|\Phi|^2(x_2)\Big) ({\widehat { {n}}}- { \widehat  {\tau_{-1} n}}  ) p_1 ( p_2^\chi q_2^\phi+ q_2^\chi )    \psi \rrangle  |\\
&= |\llangle  \psi, p_1 p_2 \Big((N-1)w^{\epsi,\theta,N}_{12}-N b \delta(x_1-x_2) \Big) (\widehat{n}- \widehat{\tau_{-1} n}  ) p_1 ( p_2^\chi q_2^\phi+ q_2^\chi )    \psi \rrangle  | \numberthis \label{equ:I.1},
\end{align*}
where we use the idea of $\eqref{equ:pwp}$ to write $|\Phi|^2(x_2)$ as $\delta(x_1-x_2)$.
The cancellations can be obtained by viewing the difference of both interactions as a right-hand side
of Poisson's equation. 
To this end we define 
\begin{align*}
 \tilde b := \frac{b}{  \int_{\Omega_{\mathrm{c}}} |\chi|^4(y) \, \D y} = \int_{\R^3} w \, \D r .
\end{align*}
 Now we can rewrite the $\delta$ distribution
\begin{align*}
p_1 p_2  b &\delta(x_1-x_2) p_1 q_2^\Phi p_2^\chi  
\\& \, \weq{\eqref{equ:defpqx}}{=} \, p_1 p_2 b \delta(x_1-x_2) \langle \chi(y_1) \chi(y_2),  \chi(y_1) \chi(y_2) \rangle_{L^2(\Omega_{\mathrm{c},y_1} \times \Omega_{\mathrm{c},y_2})} p_1 q_2^\Phi p_2^\chi \\
&=p_1 p_2  b \delta(x_1-x_2) \langle \chi(y_1) \chi(y_2), \frac{ \delta(y_1-y_2)}{\norm{\chi^4}_{L^1(\Omega_\mathrm{c})}}  \chi(y_1) \chi(y_2) \rangle_{L^2(\Omega_{\mathrm{c},y_1} 
\times \Omega_{\mathrm{c},y_2})} p_1 q_2^\Phi p_2^\chi \\
&\weq{\eqref{equ:defpqx}}{=} p_1 p_2 \tilde b \delta(r_1-r_2) p_1 q_2^\Phi p_2^\chi \numberthis \label{equ:delta}.
\end{align*}
This term together with the full interaction will turn out to be small.
Entering the above calculation in I we get

\begin{align*}
|\mathrm I|&= |\llangle  \psi, p_1 p_2 \Big((N-1)w^{\epsi,\theta,N}_{12}-N b \delta(x_1-x_2) \Big) (\widehat  {n}- \widehat  {\tau_{-1} n}  ) p_1 q_2   \psi \rrangle  |\\
\\
&\weq{\eqref{equ:qinqx}}{ \leq}|\llangle  \psi, p_1 p_2 \Big(Nw^{\epsi,\theta,N}_{12}-N b \delta(x_1-x_2) \Big) (\widehat  {n}- \widehat  {\tau_{-1} n}  ) p_1 p_2^\chi q_2^\Phi  \psi \rrangle  |\\
&\quad+|\llangle  \psi, p_1 p_2 \Big(N w^{\epsi,\theta,N}_{12}-N b \delta(x_1-x_2) \Big) (\widehat  {n}- \widehat  {\tau_{-1} n}  ) p_1 q_2^\chi  \psi \rrangle  |\\
&\quad+|\llangle  \psi, p_1 p_2 w^{\epsi,\theta,N}_{12} (\widehat  {n}- \widehat  {\tau_{-1} n}  ) p_1 q_2   \psi \rrangle  |\\
\\
&\weq{\eqref{equ:pfqchi}}{ \leq} |\llangle  \psi, p_1 p_2 \Big(N w^{\epsi,\theta,N}_{12}-N b \delta(x_1-x_2) \Big) (\widehat  {n}- \widehat  {\tau_{-1} n}  ) p_1 p_2^\chi q_2^\Phi  \psi \rrangle  |\\
&\quad+|\llangle  \psi, p_1 p_2  N w^{\epsi,\theta,N}_{12}  (\widehat  {n}- \widehat  {\tau_{-1} n}  ) p_1  q_2^\chi   \psi \rrangle  |\\
&\quad+|\llangle  \psi, p_1 p_2 w^{\epsi,\theta,N}_{12} (\widehat  {n}- \widehat  {\tau_1 n}  ) p_1 q_2   \psi \rrangle  \\
\\
\begin{split}
&\weq{\eqref{equ:delta}}{ \leq} |\llangle  \psi, p_1 p_2 N \Big(w^{\epsi,\theta,N}_{12}- \tilde b \delta(r_1-r_2) \Big) (\widehat  {n}- \widehat  {\tau_1 n}  )  p_1 p_2^\chi q_2^\Phi   \psi \rrangle  |\\
&\quad+|\llangle  \psi, p_1 p_2 N w^{\epsi,\theta,N}_{12} (\widehat  {n}- \widehat  {\tau_{-1} n}  ) p_1 q_2^\chi   \psi \rrangle  |\\
& \quad +|\llangle  \psi, p_1 p_2 w^{\epsi,\theta,N}_{12} (\widehat  {n}- \widehat  {\tau_1 n}  ) p_1 q_2   \psi \rrangle  |.
\end{split} \numberthis \label{equ:I.2g}
\end{align*}
To estimate the first summand we first collect some properties of the difference 
\begin{align*}
  w^{\epsi,\theta,N}(r)- \tilde b \delta(r) \weq{\eqref{equ:potgps}}{=} ({N}\epsi^{-2})^{3 \theta} \epsi^2 w\Big( (N \epsi^{-2})^\theta(x, \epsi y)\Big)- \tilde  b \delta(x,y).
\end{align*}
This illustrates the scaling of the first line of \eqref{equ:I.2g}. 
We regard the above expression as a right-hand side of Poisson's equation for a function $f$. The idea is to use Newton's theorem to deduce that $f$ has compact support. 
However, to use Newton's theorem we need rotational symmetry. 
Because of that we define $\tilde  f^{\theta, \epsi} :\R^3 \rightarrow \R $ in the unscaled coordinates $ y'=\epsi y $ by 
\begin{align}\label{equ:defftilde}
\Delta \tilde f^{\theta, \epsi}(x,y') = ({N}\epsi^{-2})^{3 \theta} \epsi^2  w( (N \epsi^{-2})^\theta(x,y'))- \epsi^2 \tilde b \delta((x,y'))
\end{align}
and the same function in the scaled coordinates by
\begin{align}\label{equ:fthetay}
 f^{\theta, \epsi}(x, y): = \tilde  f^{\theta, \epsi}(x,y').
\end{align}
Since $w$ has compact support and $\tilde b = \int w \D r$ we find after scaling $\tilde  x= (N \epsi^{-2})^\theta x $ and $\tilde y= (N \epsi^{-2})^\theta y'$
\begin{align*}
 &\int_{\R^3} ({N}\epsi^{-2})^{3 \theta}   w\Big( (N \epsi^{-2})^\theta(x,y')\Big)- \tilde b \delta(x,y') \D x \D y'\\
&=\int_{\supp w}  w(\tilde x,\tilde y)- \tilde b \delta(\tilde  x, \tilde y) \D \tilde x \D \tilde y=0.
\end{align*}
Thus, we can indeed choose $\tilde f$ such that it has compact support. Using the definition \eqref{equ:defftilde} we find the following scaling behavior 
\begin{align*}
  \tilde f^{\theta,\epsi}=& ({N}\epsi^{-2})^{ \theta} \epsi^2 \tilde f\Big( (N \epsi^{-2})^\theta(x,y')\Big).
\end{align*}
Since $\tilde f$ is solution of Poisson's equation $ \tilde f \in L^1_\mathrm{loc}(\R^3)$. This implies together with the compact support $\tilde f \in L^1(\R^3)$. So the $L^1(\R^3)$-norm of $ \tilde f^{\theta,\epsi}$ scales like
\begin{align}\label{equ:scaleftilde}
\norm{\tilde f^{\theta,\epsi}}_{L^1(\R^3)}=& \epsi^2 (N \epsi^{-2})^{-2 \theta} \norm{ \tilde f}_{L^1(\R^3)}= \epsi^2 (N \epsi^{-2})^{-2 \theta} \norm{\tilde f}_{L^1(\R^3)}.
\end{align}
It follows that the scaling of $f^{\theta,\epsi}$ is such that
\begin{align}
 \norm{f^{\theta,\epsi}}_{L^1(\R^3)} \weq{\eqref{equ:fthetay}}{=} \,\frac{1}{\epsi^2} \norm{\tilde f^{\theta,\epsi}}_{L^1(\R^3)} \weq{\eqref{equ:scaleftilde}}{=} \,
(N \epsi^{-2})^{-2 \theta} \norm{\tilde f}_{L^1(\R^3)} \; \, \weq{|\supp{\tilde f}| \leq C}{ \lesssim} \, \quad
(N \epsi^{-2})^{-2 \theta}.\label{equ:estfus}
\end{align}
This is the ingredient with which we can estimate the first summand of \eqref{equ:I.2g}. Let $\Delta^\epsi:= \Delta_x +\frac{1}{\epsi^2} \Delta_y $, then 
\begin{align*}
 |&\llangle  \psi, p_1 p_2 N \Big(w^{\epsi,\theta,N}_{12}- \tilde b \delta(r_1-r_2) \Big) (\widehat  {n}- \widehat  {\tau_{-1} n}  ) p_1 q_2^\Phi p_2^\chi   \psi \rrangle  | \\
&\weq{\eqref{equ:defftilde}}{=}\, N |\llangle  \psi, p_1 p_2 \Delta^\epsi_1 \tilde f^{\theta,\epsi}((x_1-x_2),\epsi (y_1-y_2)) (\widehat  {n}- \widehat  {\tau_{-1} n}  ) p_1q_2^\Phi p_2^\chi   \psi \rrangle  |\\
&\weq{\eqref{equ:fthetay}}{ =}\, N |\llangle  \psi, p_1 p_2 (\Delta^\epsi  f^{\theta,\epsi}* |\varphi|^2)(r_2) (\widehat  {n}- \widehat  {\tau_{-1} n}  ) p_1 q_2^\Phi p_2^\chi   \psi \rrangle  |\\
&= N  |\llangle  \psi, p_1 p_2 ( f^{\theta,\epsi}* \Delta^\epsi |\varphi|^2)(r_2) (\widehat  {n}- \widehat  {\tau_{-1} n}  ) p_1q_2^\Phi p_2^\chi  \psi \rrangle  |\\
&\leq  \norm{ ( f^{\theta,\epsi}* \Delta^\epsi |\varphi|^2)(r_2) p_2 }_{\mathrm{Op}}  \norm{ N (\widehat  {n}- \widehat  {\tau_{-1} n}  ) q_2^\Phi p_2^\chi   \psi} \\
&\weq{\ref{lem:weights}}{\leq}  \norm{ ( f^{\theta,\epsi}* \Delta^\epsi |\varphi|^2)(r_2) p_2 }_{\mathrm{Op}}  
\weq{\ref{lem:young}}{ \leq} \norm{f^{\theta,\epsi}* \Delta^\epsi |\varphi|^2} \norm{\varphi}_\infty\\
& \weq{\ref{lem:young}} {\leq}  \norm{f^{\theta,\epsi}}_1 \norm{ \Delta^\epsi |\varphi|^2} \norm{\varphi}_\infty
\weq{\eqref{equ:estfus}}{\leq}\,  (N \epsi^{-2})^{-2 \theta} \epsi^{-2} \norm{\Delta |\varphi|^2} \norm{\varphi}_\infty   \\
&\leq N^{-2\theta }\epsi^{4\theta-2}\norm{\Delta |\varphi|^2} \norm{\varphi}_\infty, \numberthis \label{equ:3p1qgp}
\end{align*}
where Lemma\,\ref{lem:weights} holds for $q^\Phi q^\chi $ since $q^\Phi q^\chi \leq q  $ in the sense of operators.  

The second summand of \eqref{equ:I.2g} 
is estimated by
\begin{align*}
 |\llangle \psi, p_1 p_2& N w^{\epsi,\theta,N}_{12} (\widehat  {n}- \widehat  {\tau_{-1} n}  ) p_1 q_2^\chi \psi \rrangle  |
\leq \norm{p_1 w^{\epsi,\theta,N}_{12} p_1 }_\mathrm{Op} \llangle \psi,   (N(\widehat  {n}- \widehat  {\tau_{-1} n}  ))^2 q_2^\chi  \psi \rrangle^\frac{1}{2} \\
&\weq{\ref{lem:young}}{\lesssim} \norm{\varphi}^2_\infty \llangle \psi,   \sum_{k=0}^{N} N^2 \frac{(\sqrt k-\sqrt{k-1})^2 }{N}  P_{k,N} \sum_{j=1}^N \frac{j}{N} P^{\chi^\perp}_{j,N}   \psi \rrangle^\frac{1}{2} \\
&\weq{ \substack{ \eqref{equ:defP} \\ \eqref{equ:defpchi}}}{ =}\, \norm{\varphi}^2_\infty \llangle \psi,   \sum_{k=1}^{N}  \sum_{j=1}^k  N^2 \frac{(\sqrt k-\sqrt{k-1})^2 }{N} \frac{j}{N}  P_{k,N} P^{\chi^\perp}_{j,N}   \psi \rrangle^\frac{1}{2} \\
&\lesssim \norm{\varphi}^2_\infty \llangle \psi,   \sum_{k=1}^{N}  \sum_{j=1}^k   \frac{j}{k}  P_{k,N}  P^{\chi^\perp}_{j,N}   \psi \rrangle^\frac{1}{2} \\
&\lesssim \norm{\varphi}^2_\infty \llangle \psi,   \sum_{k=1}^{N}  \sum_{j=1}^N   P_{k,N}  P^{\chi^\perp}_{j,N}   \psi \rrangle^\frac{1}{2} \\
&= \norm{\varphi}^2_\infty \llangle \psi,   \sum_{j=1}^N    P^{\chi^\perp}_{j,N}   \psi \rrangle^\frac{1}{2} \weq{\ref{cor:energyepsi}}{ \lesssim} \norm{\varphi}^2_\infty N^{1/2}\epsi.
\end{align*}
%
%
%
%
%
For the third summand $|\langle  \psi, p_1 p_2 w^{\epsi,\theta,N}_{12} (\widehat  n- \widehat  {\tau_{-1} n}  ) p_1 q_2   \psi \rangle  |$ of \eqref{equ:I.2g} we again use Lemma\,\ref{lem:young} and \ref{lem:qs&N}
and the fact that the $L^1$-norm of $w^{\epsi,\theta,N}$ is bounded to find
\begin{align*}
 |\llangle  \psi, p_1 p_2 w^{\epsi,\theta,N}_{12} (\widehat  {n}- \widehat  {\tau_{-1} n}  ) p_1 q_2   \psi \rrangle  | \leq N^{-1} \norm{\varphi}^2_\infty.
\end{align*}

\end{proof}

\begin{proof}[Proof of Lemma\;\ref{lem:3termeg}.\ref{lem:3.2g}  ]

We first note that for any function $f$

\begin{align}
 \norm{\sum_{j=2}^N q_j w^{\epsi,\theta,N}_{12}  \widehat f  p_1 p_j \psi }^2 
\lesssim N^2 \norm{\varphi}_{\infty}^4 \norm{ \widehat f \widehat n \psi }^2 + N N^{3\theta} \epsi^{-6\theta+2} \norm{\varphi}_{\infty}^2 \sup_{1 \leq k \leq N} |f(k,N)|^2.
 \label{equ:II.6gp}
\end{align}
To prove this we split the right-hand side of \eqref{equ:II.6gp} into the "diagonal" and the "off-diagonal" term and find

\begin{align*}
 \norm{\sum_{j=2}^N q_j w^{\epsi,\theta,N}_{12}  \widehat f  p_1 p_j \psi }^2
&= \sum_{j,k=2}^N \llangle \psi, p_1 p_l \widehat f  w^{\epsi,\theta,N}_{1l}  q_l q_j w^{\epsi,\theta,N}_{1j} 
 \widehat f  p_1 p_j \psi  \rrangle\\
\begin{split}
&\leq \sum_{2 \leq  j < k \leq N} \llangle \psi, q_j p_1 p_l \widehat f  w^{\epsi,\theta,N}_{1l}    w^{\epsi,\theta,N}_{1j} q_l \widehat f  p_1 p_j \psi  \rrangle\\
&\quad+(N-1) \norm{w^{\epsi,\theta,N}_{12}  \widehat f  p_1 p_2 \psi   }^2 .
 \end{split} \numberthis \label{equ:II.3gp}
\end{align*}
The first summand of \eqref{equ:II.3gp} is bounded by
\begin{align*}
 &(N-1)(N-2)\llangle \psi, q_2 p_1 p_3 \widehat f  w^{\epsi,\theta,N}_{13}    w^{\epsi,\theta,N}_{12} q_3 \widehat f  p_1 p_2 \psi  \rrangle\\
&\leq N^2 \norm { \sqrt{ w^{\epsi,\theta,N}_{13}}    \sqrt{ w^{\epsi,\theta,N}_{12}} q_3 \widehat f  p_1 p_2 \psi}^2\\
&\leq N^2 \norm{   \sqrt{ w^{\epsi,\theta,N}_{12}} p_2 \sqrt{ w^{\epsi,\theta,N}_{13}} p_1\widehat f q_3 \psi } ^2 \\
&\leq N^2 \norm{  \sqrt{ w^{\epsi,\theta,N}_{12}} p_2  }^4_\mathrm{Op} \norm{ \widehat f q_3 \psi }^2\\
&\weq{\ref{lem:weights}}{ \leq} N^2 \norm{p_1 w^{\epsi,\theta,N}_{12} p_1  }_\mathrm{Op}^2 \norm{ \widehat f \widehat n \psi }^2\\
&\weq{\ref{lem:young}}{\lesssim} N^2 \norm{\varphi}_{\infty}^4 \norm{ \widehat f \widehat n \psi }^2. \numberthis \label{equ:II.4gp}
\end{align*}
The second summand of \eqref{equ:II.3gp} is bounded by
\begin{align*}
&N \llangle \psi p_1 p_2 \widehat f (w^{\epsi,\theta,N}_{12})^2 \widehat f  p_1 p_2 \psi  \rrangle \\
&\leq N   \norm{p_1 (w^{\epsi,\theta,N}_{12})^2 p_1 }_\mathrm{Op} \norm{\widehat f }^2_\mathrm{Op}\\
& \weq{\ref{lem:young}}{ \lesssim } N N^{3\theta} \epsi^{-6\theta+2} \norm{\varphi}_{\infty}^2 \sup_{1 \leq k \leq N} |f(k,N)|^2   \numberthis \label{equ:II.5gp}
\end{align*}
since 
\begin{align}\label{equ:wt2norm}
 \norm{w^{\epsi,\theta,N}}_2^2 \lesssim (\frac{N}{\epsi^2})^{3 \theta} \epsi^2.
\end{align}
Putting \eqref{equ:II.4gp} and \eqref{equ:II.5gp} together proves \eqref{equ:II.6gp}.
%
%
To apply $\eqref{equ:II.6gp}$ to II we define for any function $f: \{0,\dots, N \} \rightarrow \R^+$ and $\delta> 0$
\begin{align*}
 f^a(k):= \begin{cases}
           f(k)\quad &\mathrm{for} \quad k < N^{1-\delta} \\
	   0  &\mathrm{for} \quad  k\geq N^{1-\delta} 
          \end{cases}         \numberthis \label{equ:splitting}                                                                                                        
\end{align*}
and $f^b:=f-f^a $. Furthermore we define 
\begin{align}\label{equ:mugp}
 \mu:= (N-1)(  n-  {\tau_{-2} n} )
\leq \frac{ 2\sqrt{N}}{\sqrt{k}+\sqrt{k-2}}\leq \frac{\sqrt{N}}{\sqrt{k}}= n^{-1} \quad \forall k \geq 2
\end{align}
and estimate II by 

\begin{align*}
 |\mathrm{II}|&= N \big| \llangle \psi, p_1 p_2  [ (N-1)  w^{\epsi,\theta,N}_{12},  \widehat{n}] q_1q_2 \psi \rrangle\big|\\
&\weq{\ref{lem:weights}}{ =}  N \big |\llangle  \psi, p_1 p_2 (N-1)w^{\epsi,\theta,N}_{12} ( { \widehat  {n}}- { \widehat  {\tau_{-2} n}}  ) q_1 q_2   \psi \rrangle \big|   \\
&=  \big |\llangle  \psi, p_1 p_2 w^{\epsi,\theta,N}_{12}  \widehat \mu   q_1 q_2   \psi \rrangle  \big| \\
\begin{split}
&\weq{\eqref{equ:splitting}}{ \leq} \, \big |\llangle  \psi, p_1 p_2 w^{\epsi,\theta,N}_{12}  \widehat \mu^a   q_1 q_2   \psi \rrangle \big|  \\
&\quad +  \big |\llangle  \psi, p_1 p_2 w^{\epsi,\theta,N}_{12}  \widehat \mu^b   q_1 q_2   \psi \rrangle  \big|. 
\end{split} \numberthis \label{equ:II.-1}
\end{align*}
We define the constant function  $g:\{0,\dots,N\} \rightarrow 1$ hence $\mu^a=\mu^a g^a$. Inserting this in the first factor of \eqref{equ:II.-1} we get
\begin{align*}
 \big |\llangle  \psi,  p_1 p_2 w^{\epsi,\theta,N}_{12}  \widehat \mu^a \widehat g^a   q_1 q_2   \psi \rrangle \big|&\weq{\ref{lem:weights}}{ =}\big |\llangle  \psi,  \widehat {\tau_2 g^a}   p_1 p_2 w^{\epsi,\theta,N}_{12}  \widehat \mu^a   q_1 q_2   \psi \rrangle \big| \\
& = \frac{1}{N}  \big |\llangle  \psi, \sum_{j=2}^N \widehat {\tau_2 g^a}  p_1 p_j w^{\epsi,\theta,N}_{1j}  \widehat \mu^a   q_1 q_j   \psi \rrangle \big|\\
&\leq \frac{1}{N}  \norm{\sum_{j=2}^N q_j w^{\epsi,\theta,N}_{12}  \widehat {\tau_2 g^a}  p_1 p_j \psi } \norm{ \widehat \mu^a   q_1 \psi}.
\end{align*}
Since $ \norm{ \widehat \mu^a   q_1 \psi} \weq{\ref{lem:weights}}{ \leq} 1$ and in view of \eqref{equ:II.6gp} this can be estimated by

\begin{align*}
 &\frac{1}{N} \big( N \norm{\varphi}_{\infty}^2 \norm{ \widehat g_2^a \widehat n \psi } + N^\frac{1}{2} N^{\frac{3\theta}{2}} \epsi^{-3\theta+1} \norm{\varphi}_{\infty} \sup_{1 \leq k \leq N} |g^a_2(k)| \big)\\
&\lesssim N^{-\frac{\delta}{2}} \norm{\varphi}_{\infty}^2+ N^{\frac{-1}{2}}N^{\frac{3\theta}{2}} \epsi^{-3\theta+1} \norm{\varphi}_{\infty}. \numberthis \label{equ:II.-2}
\end{align*}
The second summand of \eqref{equ:II.-1} can be estimated in the following way

\begin{align*}
\big |\llangle  \psi, p_1 p_2 w^{\epsi,\theta,N}_{12}  \widehat \mu^b   q_1 q_2   \psi \rrangle  \big|  &\weq{\ref{lem:weights}}{=} |\llangle  \psi, p_1 p_2  (\widehat {\tau_2 \mu}^b  )^\frac{1}{2} w^{\epsi,\theta,N}_{12} (\widehat \mu^b  )^\frac{1}{2}  q_1 q_2  \psi \rrangle  |\\
&= \frac{1}{N} |\sum_{j=2}^N  \llangle  \psi,   (\widehat{ \tau_2 \mu}^b )^\frac{1}{2} p_1 p_j w^{\epsi,\theta,N}_{1j}  q_1 q_j (\widehat \mu^b  )^\frac{1}{2}   \psi \rrangle |\\
&\weq{\ref{lem:weights}}{ \lesssim}  \frac{1}{N} \norm{ (\widehat \mu^b  )^\frac{1}{2} q_1 \psi}\norm{\sum_{j=2}^N  q_j  w^{\epsi,\theta,N}_{1j} ( \widehat { \tau_2 \mu}^b  )^\frac{1}{2}  p_1 p_j \psi } .\numberthis \label{equ:II.1}
\end{align*}
The first factor of \eqref{equ:II.1} is estimated by
\begin{align*}
 \norm{ (\widehat \mu^b  )^\frac{1}{2} q_1 \psi}^2=\llangle \psi, \widehat \mu^b   q_1 \psi \rrangle  \weq{\ref{lem:weights}}{ \leq} \beta. \numberthis \label{equ:II.2}
\end{align*}
For the second factor we use \eqref{equ:II.6gp}. Since 
\begin{align}\label{equ:muab}
 \sup_{1\leq k \leq N } (\mu(k)^b)^{1/2} \; \weq{\substack{\eqref{equ:splitting}\\ \eqref{equ:mugp}}}{ \leq} \; N^{\frac{\delta}{4}}
\end{align}
and
 
\begin{align*}
 \norm{(\widehat { \tau_2 \mu}^b  )^\frac{1}{2} \widehat n \psi}^2
\lesssim \beta
\end{align*}
we get
\begin{align*}
 \big |\llangle  \psi, p_1 p_2 & w^{\epsi,\theta,N}_{12}  \widehat \mu^b   q_1 q_2   \psi \rrangle  \big|\\
 &\weq{ \eqref{equ:II.6gp}} { \lesssim} \sqrt \beta  \Big ( \norm{\varphi}_{\infty}^2 \norm{ (\widehat \mu^b )^\frac{1}{2} \widehat n \psi } + N^{-\frac{1}{2}} N^{\frac{3\theta}{2}} 
\epsi^{-3\theta+1} \norm{\varphi}_{\infty} \sup_{1 \leq k \leq N} |(\mu(k)^b)^{1/2}|\Big )\\
&\weq{\eqref{equ:muab}}{ \leq}  \norm{\varphi}_{\infty}^2 \beta + N^{-\frac{1}{2}} N^{\frac{3\theta}{2}}N^{\frac{\delta}{4}}\epsi^{-3\theta+1}  \norm{\varphi}_{\infty}, \numberthis  \label{equ:werwei}
\end{align*}
where we refrain from taking the square of the second term which results in slower convergence rates but simplifies the next calculation. 
Combining \eqref{equ:werwei} with the estimate \eqref{equ:II.-2} and inserting them in \eqref{equ:II.-1} yields the claimed result
\begin{align*}
 |\mathrm{II}| &\lesssim \norm{\varphi}_{\infty}^2 \alpha + N^{-\frac{1}{2}} N^{\frac{3\theta}{2}}N^{\frac{\delta}{4}}\epsi^{-3\theta+1}  \norm{\varphi}_{\infty} +
 N^{-\frac{\delta}{2}} \norm{\varphi}_{\infty}^2+ N^{\frac{-1}{2}}N^{\frac{3\theta}{2}} \epsi^{-3\theta+1} \norm{\varphi}_{\infty}\\
&\lesssim \norm{\varphi}_{\infty}^2 \alpha + N^{-\frac{1}{2}} N^{\frac{3\theta}{2}}N^{\frac{\delta}{4}}\epsi^{-3\theta+1}  \norm{\varphi}_{\infty} +
 N^{-\frac{\delta}{2}} \norm{\varphi}_{\infty}^2.
\end{align*}
The optimal $\delta$ and therefore the optimal convergence rate of this term depends on $\theta$ and $\nu$.
For fixed $\theta$ and $\nu$ the optimal $\delta$ can be found by setting $N^{-\frac{1}{2}} N^{\frac{3\theta}{2}}N^{\frac{\delta}{4}}\epsi^{-3\theta+1} \sim  N^{-\frac{\delta}{2}}$ 
 under the constraint $0 < \delta$. Such a $\delta$ exists for $\theta \in (\frac{1}{4},\frac{1}{3})$.
 \end{proof}

\begin{proof}[Proof of Lemma\;\ref{lem:3termeg}.\ref{lem:3.3g}  ]
For this term we can use the abundance of $q$s to extract terms with 
enough negative power of $N$ to get convergence. We will use the function 
\begin{align}\label{equ:mu3}
 \mu:= N(  n-  {\tau_{-1} n} ) = \sqrt{N}(\sqrt{k}-\sqrt{k-1})= \frac{\sqrt{N}}{\sqrt{k}+\sqrt{k-1}}\leq \frac{\sqrt{N}}{\sqrt{k}}= n^{-1} \quad \forall k \geq 1.
\end{align}
We begin with the usual simplifications
\begin{align*}
 |\mathrm{III}|&= |\llangle \psi, p_1 q_2  [ (N-1)  w^{\epsi,\theta,N}_{12}- N b|\Phi|^2(x_1) ,  \widehat  n] q_1q_2 \psi \rrangle|\\
&\weq{\ref{lem:weights}}{=} |\llangle  \psi, p_1 q_2 \Big((N-1)w^{\epsi,\theta,N}_{12}-N b|\Phi|^2(x_1)\Big) (\widehat  n- \widehat  {\tau_{-1} n}  ) q_1 q_2   \psi \rrangle  |\\
&\weq{\eqref{equ:mu3}}{ =}\; |\llangle  \psi, p_1  q_2  \Big( \frac{N}{N-1} w^{\epsi,\theta,N}_{12}- b|\Phi|^2(x_1)\Big) \widehat \mu q_1 q_2   \psi \rrangle  |\\
\begin{split}
&\lesssim |\llangle  \psi, p_1  q_2  w^{\epsi,\theta,N}_{12} \widehat \mu q_1 q_2   \psi \rrangle  |\\
&\quad +|\llangle  \psi, p_1  q_2 b|\Phi|^2(x_1) \widehat \mu q_1 q_2   \psi \rrangle  |.\\
\end{split} \numberthis \label{equ:III.1gp}
\end{align*}
The second term of \eqref{equ:III.1gp} can be estimated by
\begin{align*}
 |\llangle  \psi, p_1  q_2 b|\Phi|^2(x_1) \widehat \mu q_1 q_2   \psi \rrangle  | \weq{\ref{lem:young}} {\lesssim}  \norm {q_2 \psi} \norm{\widehat \mu q_1 q_2 \psi} \weq{\ref{lem:weights}}{\leq} \beta.
\end{align*}
For the first term of \eqref{equ:III.1gp} we use $q=q^\chi + p^\chi q^\Phi$ to obtain four terms
\begin{align*}
|\llangle  \psi, p_1  q_2  w^{\epsi,\theta,N}_{12} \widehat \mu q_1 q_2   \psi \rrangle  | &\leq
|\llangle  \psi, p_1  p_2^\chi q_2^\Phi  w^{\epsi,\theta,N}_{12} \widehat \mu p_1^\chi q_1^\Phi p_2^\chi q_2^\Phi   \psi \rrangle  |\\
&\quad+|\llangle  \psi, p_1  q_2^\chi  w^{\epsi,\theta,N}_{12} \widehat \mu q_1 q_2   \psi \rrangle  |\\
&\quad+|\llangle  \psi, p_1  q_2  w^{\epsi,\theta,N}_{12} \widehat \mu  q_1^\chi q_2  \psi \rrangle  |\\
&\quad +|\llangle  \psi, p_1  q_2  w^{\epsi,\theta,N}_{12} \widehat \mu q_1 q_2^\chi   \psi \rrangle  |.
 \numberthis \label{equ:III.2gp}
\end{align*}
All terms but the first are easy to handle.
The second term of \eqref{equ:III.2gp} can be estimated by
\begin{align*}
 |\llangle  \psi, p_1  q_2^\chi  w^{\epsi,\theta,N}_{12} \widehat \mu q_1 q_2   \psi \rrangle  | \leq \norm{ q_2^\chi \psi} \norm{w^{\epsi,\theta,N}_{12}p_1}_\mathrm{Op} \norm{\widehat \mu q_1 q_2 \Psi}\\
\lesssim \epsi (N\epsi^{-2})^\frac{3\theta}{2} \epsi \sqrt{\beta} \norm{\varphi}_\infty \leq  \norm{\varphi}_\infty \big( \beta + \epsi^4 (N\epsi^{-2})^{3\theta}\big), \numberthis \label{equ:III.8gp}
\end{align*}
where we used Lemmas\,\ref{lem:weights},\,\ref{lem:young} and \ref{lem:energyepsi} and equation \eqref{equ:wt2norm} in the second step.    
The third and the fourth term of \eqref{equ:III.2gp} can be estimated in the same way if we use $q^\chi \leq q$. Hence we find 
\begin{align*}
 |\llangle  \psi, p_1  q_2  w^{\epsi,\theta,N}_{12} \widehat \mu  q_1^\chi q_2  \psi \rrangle  |,|\llangle  \psi, p_1  q_2  w^{\epsi,\theta,N}_{12} \widehat \mu  q_1 q_2^\chi  \psi \rrangle  | 
\lesssim \norm{\varphi}_\infty \big(  \beta + \epsi^4 (N\epsi^{-2})^{3\theta} \big) \numberthis \label{equ:III.7gp}.
\end{align*}
%
For the first term of \eqref{equ:III.2gp} we have to use a different approach. Here we know that the potential only acts on the function $\chi$ in the confined direction. Thus,
 we can integrate the potential explicitly in this direction
\begin{align*}
 &|\llangle  \psi, p_1  p_2^\chi q_2^\Phi  w^{\epsi,\theta,N}_{12} \widehat \mu p_1^\chi q_1^\Phi p_2^\chi q_2^\Phi   \psi \rrangle  |\\
\begin{split}
&=  |\llangle  \psi, p_1  p_2^\chi q_2^\Phi \int_{\Omega_c} \int_{\Omega_c} ( N\epsi^{-2})^{3 \theta} \epsi^2  
w\Big ( (N\epsi^{-2})^\theta \big(x_1-x_2,\epsi(y_1-y_2)\big) \Big) \\
&\quad \times |\chi(y_1)|^2 |\chi(y_2)|^2 \D y_1 \D y_2  p_1^\chi q_1^\Phi p_2^\chi q_2^\Phi  \widehat \mu   \psi \rrangle  |.\\
\end{split}\numberthis \label{equ:wxxxx}
\end{align*}
 For short notation we define the function 
\begin{align*}
 \tilde w^{\epsi,\theta,N}(x_1-x_2):&= \int_{\Omega_c} \int_{\Omega_c} ( N\epsi^{-2})^{3 \beta} \epsi^2  w\Big ((N\epsi^{-2})^\beta \big(x_1-x_2,\epsi(y_1-y_2)\big) \Big)\\
&\quad \times |\chi(y_1)|^2 |\chi(y_2)|^2 \D y_1 \D y_2 
\end{align*}
and since it lives in one dimension we can explicitly define its anti-derivative   
\begin{align*}
\tilde W^{\epsi,\theta,N}(x_1-x_2):= \int_{-\infty}^{x_1-x_2} \tilde  w^{\epsi,\theta,N}(x) \D x.
\end{align*}
The next step is to estimate the operator norm of the multiplication operator $\tilde W^{\epsi,\theta,N}$ by scaling arguments.
 Set $\tilde x = (N\epsi^{-2})^\theta x $, $\tilde y = \epsi (N\epsi^{-2})^\theta y $ and $\tilde \Omega_c=  \epsi (N\epsi^{-2})^\theta \Omega_c $, so

\begin{align*}
 &\norm{\tilde W^{\epsi,\theta,N}(x_1-x_2)}_{\infty}= \sup_{x_1,x_2 \in \R}   \int_{-\infty}^{x_1-x_2}  \tilde w^{\epsi,\theta,N}(x) \D x \\
&= \sup_{x_1,x_2 \in \R} \int_{-\infty}^{x_1-x_2}  \int_{\Omega_c} \int_{\Omega_c} ( N\epsi^{-2})^{3 \theta} \epsi^2  w\Big ((N\epsi^{-2})^\theta \big(x,\epsi(y_1-y_2)\big) \Big)\\
&\phantom{ \sup_{x_1,x_2 \in \R} \int_{-\infty}^{x_1-x_2}  \int_{\Omega_c} \int_{\Omega_c} ( N\epsi^{-2})^{3 \theta} \epsi^2  w\Big ((N\epsi^{-2})^\theta  \qquad} \times |\chi(y_1)|^2 
|\chi(y_2)|^2 \D y_1 \D y_2 \D x \\
&= \sup_{x_1,x_2 \in \R} \int_{-\infty}^{(N\epsi^{-2})^\theta(x_1-x_2)}  \int_{\Omega_c} \int_{\Omega_c} ( N\epsi^{-2})^{2 \theta} \epsi^2  w\Big (\tilde x,(N\epsi^{-2})^\theta \epsi(y_1-y_2) \Big) \\
&\phantom{ \sup_{x_1,x_2 \in \R} \int_{-\infty}^{x_1-x_2}  \int_{\Omega_c} \int_{\Omega_c} ( N\epsi^{-2})^{3 \theta} \epsi^2  w\Big ((N\epsi^{-2})^\theta  \qquad}  \times
|\chi(y_1)|^2 |\chi(y_2)|^2 \D y_1 \D y_2 \D  \tilde x \\
&\leq  \int_{-\infty}^{\infty} \int_{ \Omega_c} \int_{ \Omega_c} ( N\epsi^{-2})^{2 \theta} \epsi^2  w\Big (\tilde x,(N\epsi^{-2})^\theta \epsi(y_1-y_2) \Big) 
|\chi(y_1)|^2 |\chi(y_2)|^2 \D y_1 \D y_2 \D  \tilde x \\
&= \int_{-\infty}^{\infty} \int_{\tilde \Omega_c} \int_{\tilde \Omega_c} ( N\epsi^{-2})^{- 2 \theta} \epsi^{-2}  w\Big ( \tilde x, \tilde y_1- \tilde y_2 \Big)
 |\chi( \frac{\tilde y_1}{\epsi (N\epsi^{-2})^{\theta}})|^2  \\
&\phantom{ \sup_{x_1,x_2 \in \R} \int_{-\infty}^{x_1-x_2}  \int_{\Omega_c} \int_{\Omega_c} ( N\epsi^{-2})^{3 \theta} \epsi^2  w\Big ((N\epsi^{-2})^\theta  \qquad}  \times |\chi( \frac{\tilde y_2}{\epsi (N\epsi^{-2})^\theta})|^2 \D \tilde y_1 \D \tilde y_2  \D \tilde x\\
&\leq   \sup_{\tilde y_2} |\chi( \frac{\tilde y_2}{\epsi (N\epsi^{-2})^\theta})|^2 \int_{-\infty}^{\infty} \int_{ \R^2} \int_{ \R^2} 
( N\epsi^{-2})^{- 2 \theta} \epsi^{-2}  w\Big ( \tilde x,\tilde  y_1-\tilde y_2 \Big) \\
&\phantom{ \sup_{x_1,x_2 \in \R} \int_{-\infty}^{x_1-x_2}  \int_{\Omega_c} \int_{\Omega_c} ( N\epsi^{-2})^{3 \theta} \epsi^2  w\Big ((N\epsi^{-2})^\theta  \qquad}\times  |\chi( \frac{\tilde y_1}{\epsi (N\epsi^{-2})^{\theta}})|^2 \D \tilde y_1 \D \tilde y_2  \D \tilde x\\
& \leq \norm{ \chi}_\infty^2 \int_{-\infty}^{\infty} \int_{ \R^2} w\Big ( \tilde x, \tilde y \Big)
  \D \tilde y         \int_{\R^2} ( N\epsi^{-2})^{- 2 \theta} \epsi^{-2}   |\chi( \frac{\tilde y_1}{\epsi (N\epsi^{-2})^{\theta}})|^2   \D \tilde y   \D \tilde x \\
&\leq \norm{ \chi}_\infty^2  \norm{w}_1 \numberthis \label{equ:III.3gp},
\end{align*}
where the last step holds since $\chi $ is normed.
%
%
To use this estimate for $\tilde W$ we rewrite term III by integrating by parts
\begin{align*}
 |\llangle  \psi, p_1  p_2^\chi q_2^\Phi  w^{\epsi,\theta,N}_{12} \widehat \mu p_1^\chi q_1^\Phi p_2^\chi q_2^\Phi   \psi \rrangle  |\, & \weq{\eqref{equ:wxxxx}}{=}\, 
 |\llangle  \psi, p_1  p_2^\chi q_2^\Phi   \frac{d}{dx_1} \tilde W^{\epsi,\theta,N}(x_1-x_2) \widehat \mu p_1^\chi q_1^\Phi p_2^\chi q_2^\Phi   \psi \rrangle  |\\
&\leq |\llangle  \psi,  p_2^\chi q_2^\Phi  ( \frac{d}{dx_1} p_1) \tilde  W^{\theta,\epsi}(x_1-x_2) \widehat \mu p_1^\chi q_1^\Phi p_2^\chi q_2^\Phi   \psi \rrangle  |\\
&\, + |\llangle  \psi, p_1  p_2^\chi q_2^\Phi \tilde  W^{\epsi,\theta,N}(x_1-x_2) \frac{d}{dx_1} \widehat \mu p_1^\chi q_1^\Phi p_2^\chi q_2^\Phi   \psi \rrangle  |. \numberthis \label{equ:III.4gp}
\end{align*}
The first term of equation \eqref{equ:III.4gp} is bounded by
\begin{align*}
& |\llangle  \psi,  p_2^\chi q_2^\Phi  ( \frac{d}{dx_1} p_1) \tilde  W^{\epsi,\epsi,N}(x_1-x_2) \widehat \mu p_1^\chi q_1^\Phi p_2^\chi q_2^\Phi   \psi \rrangle  | \\
&\leq \norm{p_2^\chi q_2^\Phi \psi}\norm{\frac{d}{dx}p}_\mathrm{Op} \norm{ \tilde W^{\epsi,\theta,N}}_\infty \norm{\widehat \mu p_1^\chi q_1^\Phi p_2^\chi q_2^\Phi   \psi}\\
&\weq{\ref{lem:weights},\eqref{equ:III.3gp}}{ \lesssim} \; \sqrt{ \beta} \norm{\frac{d}{dx} \Phi } \norm{ \chi}_\infty^2 \sqrt{ \beta}
 \leq \norm{ \Phi }_{H^1}  \norm{ \chi}_\infty^2  \beta  \numberthis \label{equ:III.6gp}.
\end{align*}
The second term of \eqref{equ:III.4gp} we estimate by
\begin{align*}
 |\llangle  \psi, p_1  p_2^\chi q_2^\Phi \tilde   W^{\epsi,\theta,N}(x_1-x_2) \frac{d}{dx_1} \widehat \mu p_1^\chi q_1^\Phi p_2^\chi q_2^\Phi   \psi \rrangle  |\\
\weq{ \ref{lem:weights},\eqref{equ:III.3gp} }{ \lesssim }\; \sqrt{\beta}  \norm{ \chi}_\infty^2  \norm{\frac{d}{dx_1} \widehat \mu p_1^\chi q_1^\Phi p_2^\chi q_2^\Phi   \psi } \numberthis \label{equ:III.5gp}.
\end{align*}
To bound the last term we note 
\begin{align*}
 \norm{\frac{d}{dx_1} \widehat \mu p_1^\chi q_1^\Phi p_2^\chi q_2^\Phi   \psi }^2  \lesssim  \norm{\frac{d}{dx_1}p_1^\chi q_1^\Phi \psi}^2,
\end{align*}
where the proof follows exactly the same pattern as the one for $\kappa$ in equation \eqref{equ:kappa}.
We continue by bounding the right-hand side
\begin{align*}
 \norm{\frac{d}{dx_1}p_1^\chi q_1^\Phi \psi}^2 &\leq \norm{\frac{d}{dx_1}p_1^\chi q_1^\Phi \psi}^2+ \norm{\frac{d}{dx_1} q_1^\chi \psi}^2 \\
&= \llangle \psi, p_1^\chi q_1^\Phi \frac{-d^2}{dx_1^2}p_1^\chi q_1^\Phi \psi  \rrangle + \llangle \psi, q_1^\chi \frac{-d^2}{dx_1^2} q_1^\chi \psi  \rrangle\\
&= \llangle \psi,( p_1^\chi q_1^\Phi +q_1^\chi )\frac{-d^2}{dx_1^2}(p_1^\chi q_1^\Phi +q_1^\chi) \psi  \rrangle\\
&=\llangle \psi,q_1 \frac{-d^2}{dx_1^2}q_1 \psi  \rrangle =  \norm{\frac{d}{dx_1}q_1 \psi}^2 \leq \norm{\nabla q_1 \psi}^2.
\end{align*}
Finally this estimate together with the Energy Lemma 
\begin{align*}
 \norm{\nabla q_1 \psi}^2 \leq  \norm{\varphi}_{H^2\cap L^\infty}^2( \tilde \beta+\frac{1}{\sqrt{N}}+f(N,\epsi))+\norm{V}_\LiO \beta
\end{align*}
 leads to 
\begin{align}\label{equ:estlast}
 \norm{\frac{d}{dx_1} \widehat \mu p_1^\chi q_1^\Phi p_2^\chi q_2^\Phi   \psi }^2 \leq  \norm{\varphi}_{H^2\cap L^\infty}^2(\tilde \beta+\frac{1}{\sqrt{N}}+f(N,\epsi))+\norm{V}_\LiO \beta.
\end{align}
%
Inserting \eqref{equ:estlast} into \eqref{equ:III.5gp} results in 
\begin{align*}
 |\llangle  \psi, p_1  p_2^\chi q_2^\Phi & \tilde  W^{\epsi,\theta,N}(x_1-x_2) \frac{d}{dx_1} \widehat \mu p_1^\chi q_1^\Phi p_2^\chi q_2^\Phi   \psi \rrangle  |\\
&\lesssim \sqrt{ \beta}  \norm{ \chi}_\infty^2  \Big( \norm{\varphi}^2_{H^2\cap L^\infty}(\tilde \beta+\frac{1}{\sqrt{N}}+f(N,\epsi)) +\norm{V}_\LiO \beta \Big)^{\frac{1}{2}}\\
&\leq  \norm{\varphi}_{H^2\cap L^\infty}  \norm{ \chi}_\infty^2 (\tilde \beta +\frac{1}{\sqrt{N}}+f(N,\epsi) ) + \norm{ \chi}_\infty^2 \norm{V}_\LiO^{1/2} \beta .
\end{align*}
Combining this estimate with \eqref{equ:III.8gp},\eqref{equ:III.7gp} and  \eqref{equ:III.6gp} finishes this part of the lemma.

\end{proof}

\begin{proof}[Proof of Lemma\;\ref{lem:3termeg}.\ref{lem:3.3g}  ]

For both summands in IV we expand the potential around $y_1=0$. The assumption B2 guarantees that in both cases the error is a bounded operator.
Therefore, we can write
\begin{align*}
 \dot V(x_1,\epsi y_1)= \dot V(x_1,0)+ \epsi R \qquad  V(x_1,\epsi y_1)=  V(x_1,0)+ \epsi \tilde R
\end{align*}
with $\norm{R}_\mathrm{Op}, \|\tilde R \| _\mathrm{Op} \leq C$. Thus we find for the second part of IV 
\begin{align*}
 2 |\llangle \psi, p_1  N& [   V(x_1,\epsi y_1)-V(x_1,0) , \widehat n] q_1 \psi \rrangle| \\&= 2| \llangle \psi, p_1  N [   V(x_1,0)+\epsi  R -V(x_1,0) , \widehat n] q_1 \psi \rrangle| \\
&= 2 |\llangle \psi, p_1  N  \epsi R  (\widehat{n}-\widehat{\tau_{-1} n}) q_1 \psi \rrangle| \\
&\lesssim  \epsi  \norm{ N (\widehat{n}-\widehat{\tau_{-1} n}) q_1 \psi } \weq{\ref{lem:qs&N} }{\leq} \epsi.
\end{align*}
For the first part of IV we note that for $f \in L^\infty(\Omega_\mathrm{f})$
\begin{align}\label{equ:einteilchenop}
 |\llangle \psi, f(x_1) \psi \rrangle - \langle \Phi, f(x) \Phi \rangle| \lesssim \norm{f}_\infty \beta.
\end{align}
Thus we can estimate 
\begin{align*}
 |\llangle \psi, \dot V(x_1,\epsi y_1) \psi \rrangle - \langle \Phi, \dot V(x_1,0) \Phi \rangle| &=  |\llangle  \psi, (\dot V(x_1,0)+ \epsi R) \psi \rrangle  - \langle \Phi, \dot V(x_1,0) \Phi \rangle|\\
&\lesssim |\llangle  \psi, (\dot V(x_1,0)\psi \rrangle - \langle \Phi, \dot V(x_1,0) \Phi \rangle| +\epsi\\
& \weq{ \eqref{equ:einteilchenop}}{ \lesssim}\; \norm{\dot V(\cdot,0)}_\infty \beta +\epsi.
\end{align*}
Equation \eqref{equ:einteilchenop} holds since
\begin{align*}
  |\llangle  \psi, f(x_1) \psi \rrangle  - \langle \Phi, f(x) \Phi \rangle|&= |\llangle  \psi, p_1 f(x_1) p_1 \psi \rrangle  - \langle \Phi, f(x) \Phi \rangle
+ \llangle  \psi, q_1 f(x_1) p_1 \psi \rrangle \\ &\quad+\llangle  \psi, p_1 f(x_1) q_1 \psi \rrangle + \llangle  \psi, q_1 f(x_1) q_1 \psi \rrangle |  \\
&\leq (1-\norm{p_1 \psi}^2)\langle \Phi, f(x) \Phi \rangle\\ &\quad+ 2 |\llangle  \psi, \widehat n^{1/2} p_1 f(x_1) \widehat n^{-1/2} q_1 \psi \rrangle | 
+ \norm{f}_\infty \beta\\
&\weq{\ref{lem:weights}}{ \lesssim} \norm{f}_\infty \beta.
\end{align*}

\end{proof}

\section{Proof of Lemma\,\ref{lem:energygp}}

%
As the ideas in this proof are the same as in Lemma\,\ref{lem:energy} we stay very brief here and give little extra explanation.
Let $\tilde h$ be defined as in Lemma\,\ref{lem:Energyh}.
From Section\,\ref{sec:pen} we know
\begin{align}\label{equ:nablaqpsig}
 \norm{\nabla_1 q_1 \psi }^2 \leq  \norm{\sqrt{\tilde h_1} q_1 \psi }^2 + E_0\, \beta 
\end{align}
%
and
\begin{align}\label{equ:h_1q_1g}
 \norm{\sqrt{ \tilde h_1} q_1 \psi }^2 \leq  \norm{\sqrt {\tilde h_1}(1-p_1 p_2 )\psi}^2 + \norm{\nabla \Phi}^2  {\beta}
\end{align}
hence we bound  $\norm{\sqrt {\tilde h_1}(1-p_1 p_2 )\psi}^2$ to prove Lemma\,\ref{lem:energygp}.

\begin{lem}\label{lem:h1g}
 \begin{align*}
  \llangle \psi, (1-p_1p_2 ) \tilde h_1 (1-p_1p_2) \psi \rrangle 
&\lesssim 
\norm{\varphi}_{H^2\cap L^\infty}^2(\tilde \beta+\frac{1}{\sqrt{N}}+f(N,\epsi)) + \norm{V}_\LiO \beta
 \end{align*}
with
\begin{align*}
 f(N,\epsi)= \max(N^{-2\theta} \epsi^{4\theta-2}, N^{-1+3 \theta} \epsi^{-6\theta+2}).
\end{align*}
\end{lem}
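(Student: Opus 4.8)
The plan is to reproduce the argument of Lemma~\ref{lem:h1}, adapting it to the NLS data: the Hartree nonlinearity $w^0\ast|\Phi|^2$ is replaced by $b|\Phi|^2$, the rescaled interaction $w^\epsi_{12}$ by $w^{\epsi,\theta,N}_{12}$, and the one-particle operator now carries the external field $V$. Since $\tilde\beta=\beta+|E^\psi-E^\varphi|$ already contains the energy difference (so that $E^\psi-E^\varphi$ will be absorbed into the $\norm{\varphi}^2_{H^2\cap L^\infty}\tilde\beta$ term at the end), the first step is to write $E^\psi-E^\varphi$ out. Using $h^\epsi_1=\tilde h_1+\epsi^{-2}E_0+V(x_1,\epsi y_1)$, the identity $\langle\chi,-\epsi^{-2}\Delta_y\chi\rangle=\epsi^{-2}E_0$, $\norm{\Phi}=\norm{\chi}=1$ and $\varphi=\Phi\chi$, the $\epsi^{-2}E_0$-terms cancel and one obtains
\begin{align*}
 E^\psi-E^\varphi={}&\llangle\psi,\tilde h_1\psi\rrangle+\llangle\psi,V(x_1,\epsi y_1)\psi\rrangle+\tfrac{N-1}{2N}\llangle\psi,w^{\epsi,\theta,N}_{12}\psi\rrangle\\
 &-\langle\Phi,-\Delta_x\Phi\rangle-\langle\Phi,V(x,0)\Phi\rangle-\tfrac{1}{2}\langle\Phi,b|\Phi|^2\Phi\rangle .
\end{align*}
Inserting $\id=p_1p_2+(1-p_1p_2)$ on both sides of $\tilde h_1$ and of $w^{\epsi,\theta,N}_{12}$ and isolating $\llangle\psi,(1-p_1p_2)\tilde h_1(1-p_1p_2)\psi\rrangle$ gives a decomposition completely parallel to \eqref{equ:hp}, the only genuinely new ingredient being the extra pair $-\llangle\psi,V(x_1,\epsi y_1)\psi\rrangle+\langle\Phi,V(x,0)\Phi\rangle$. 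One then bounds the resulting lines one by one.

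The lines built only out of $\tilde h_1$ are handled exactly as in Lemma~\ref{lem:h1}: the diagonal one equals $\langle\Phi,-\Delta_x\Phi\rangle\bigl(1-\llangle\psi,p_1p_2\psi\rrangle\bigr)\lesssim\norm{\Phi}^2_{H^1}\alpha\lesssim\norm{\varphi}^2_{H^1}\beta$, and the two mixed ones are bounded, after inserting $\widehat n^{\mp1/2}$ and using Lemma~\ref{lem:weights} and \eqref{equ:htildep}, by $\norm{\Phi}_{H^2}(\beta+N^{-1/2})$. For the new external-potential pair one Taylor expands $V$ around $y_1=0$: assumption B2 gives $V(x_1,\epsi y_1)=V(x_1,0)+\epsi\tilde R$ with $\norm{\tilde R}_{\mathrm{Op}}\leq C$, so it remains to control $|\llangle\psi,V(x_1,0)\psi\rrangle-\langle\Phi,V(x,0)\Phi\rangle|$, which by the one-particle estimate \eqref{equ:einteilchenop} is $\lesssim\norm{V}_\LiO\beta$; hence the pair contributes $\lesssim\norm{V}_\LiO\beta+\epsi$.

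The interaction lines are the heart of the matter and the place where $f(N,\epsi)$ is born; here one recycles the estimates for $w^{\epsi,\theta,N}$ collected in the preceding pages and in the proof of Lemma~\ref{lem:3termeg}.\ref{lem:3.1g}. One writes $w^{\epsi,\theta,N}-\tilde b\,\delta$ (with $\tilde b=\int w$) as $\Delta^\epsi f^{\theta,\epsi}$, $\Delta^\epsi:=\Delta_x+\epsi^{-2}\Delta_y$, where $f^{\theta,\epsi}$ has compact support and $\norm{f^{\theta,\epsi}}_{L^1}\lesssim(N\epsi^{-2})^{-2\theta}$ (cf.\ \eqref{equ:estfus}). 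This turns $p_2w^{\epsi,\theta,N}_{12}p_2=p_2\bigl(w^{\epsi,\theta,N}\ast|\varphi|^2\bigr)(r_1)$ into $p_2\,b|\Phi|^2(x_1)$ modulo a remainder $\norm{f^{\theta,\epsi}\ast\Delta^\epsi|\varphi|^2}\norm{\varphi}_\infty\lesssim N^{-2\theta}\epsi^{4\theta-2}\norm{\Delta|\varphi|^2}\norm{\varphi}_\infty$, which together with $\langle\Phi,\tfrac{1}{2}b|\Phi|^2\Phi\rangle$ disposes of the diagonal interaction line. The $q_1p_2$ interaction line is killed directly, $\lesssim\norm{p_2w^{\epsi,\theta,N}_{12}p_2}_{\mathrm{Op}}(\beta+N^{-1/2})\lesssim\norm{\varphi}^2_\infty(\beta+N^{-1/2})$, using $\norm{w^{\epsi,\theta,N}}_{L^1}\lesssim1$ and the weight shifts of Lemma~\ref{lem:weights}. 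The $q_1q_2$ interaction line is the delicate one: after the $\delta$-replacement, the $\Delta^\epsi f^{\theta,\epsi}$ part is treated by integration by parts against $\varphi$'s derivatives (at the price of a $\norm{\nabla_1q_1\psi}$ term, absorbed below), while the $\delta$-like part is controlled via the $p\!\leftrightarrow\!q$ symmetry swap of Lemma~\ref{trick2q} together with $\norm{w^{\epsi,\theta,N}}^2_{L^2}\lesssim(N\epsi^{-2})^{3\theta}\epsi^2$ (\eqref{equ:wt2norm}) and an $N$-dependent splitting $w^{\epsi,\theta,N}=w^1+w^2$ as in the proof of Lemma~\ref{lem:estimating3}.\ref{lem:estimating3III}, so that only the subleading contribution $N^{-1}\norm{w^{\epsi,\theta,N}}^2_{L^2}\sim N^{-1+3\theta}\epsi^{-6\theta+2}$ survives. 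Finally the all-$q$ interaction line $-\tfrac{N-1}{2N}\llangle\psi,(1-p_1p_2)w^{\epsi,\theta,N}_{12}(1-p_1p_2)\psi\rrangle\leq0$ is discarded using $w\geq0$ (assumption B1).

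Collecting the lines yields a bound of the form $\llangle\psi,(1-p_1p_2)\tilde h_1(1-p_1p_2)\psi\rrangle\lesssim\norm{\varphi}^2_{H^2\cap L^\infty}(\tilde\beta+N^{-1/2}+f(N,\epsi))+\norm{V}_\LiO\beta+\norm{\varphi}_{L^2\cap L^\infty}\sqrt{\beta+N^{-1}}\,\norm{\nabla_1q_1\psi}$; the last term is absorbed into the left-hand side exactly as in the proof of Lemma~\ref{lem:energy}, using \eqref{equ:nablaqpsig}--\eqref{equ:h_1q_1g} (which bound $\norm{\nabla_1q_1\psi}^2$ by $\llangle\psi,(1-p_1p_2)\tilde h_1(1-p_1p_2)\psi\rrangle+\mathcal{O}(\beta)$) and the elementary implication $x^2\leq C(R+ax)\Rightarrow x^2\leq2CR+C^2a^2$. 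This gives the statement of Lemma~\ref{lem:h1g} with no residual $\norm{\nabla_1q_1\psi}$, and Lemma~\ref{lem:energygp} then follows at once from \eqref{equ:nablaqpsig}--\eqref{equ:h_1q_1g}. The main obstacle is precisely the $q_1q_2$ interaction line: a crude $L^2$-bound through $\norm{w^{\epsi,\theta,N}_{12}p_1}_{\mathrm{Op}}$ would leave a factor $\bigl((N\epsi^{-2})^{3\theta}\epsi^2\bigr)^{1/2}\alpha$ which does not tend to $0$ in the admissible parameter range, so one must combine the $\delta$-replacement with the $N$-dependent potential splitting and carefully balance the two competing powers $N^{-2\theta}\epsi^{4\theta-2}$ and $N^{-1+3\theta}\epsi^{-6\theta+2}$ --- it is exactly this balancing that forces $\theta\in(\tfrac14,\tfrac13)$ and $\tfrac12<\nu<\tfrac{\theta}{1-2\theta}$, under which both parts of $f(N,\epsi)$ vanish.
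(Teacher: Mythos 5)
Your overall architecture --- expanding $E^\psi-E^\varphi$ with the extra $V$-pair, isolating $\llangle\psi,(1-p_1p_2)\tilde h_1(1-p_1p_2)\psi\rrangle$, and estimating line by line --- is exactly the paper's, and your treatment of the kinetic lines, of the external-potential pair (Taylor expansion in $\epsi y_1$ plus the one-particle estimate \eqref{equ:einteilchenop}), of the diagonal interaction line via $f^{\theta,\epsi}$ with $\norm{f^{\theta,\epsi}}_{L^1}\lesssim(N\epsi^{-2})^{-2\theta}$, of the $q_1p_2$ line, and of the discarded all-$q$ line all match the paper.

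The one place where you diverge, and where your route as written would break, is the line $p_1p_2\,w^{\epsi,\theta,N}_{12}\,q_1q_2$. You propose to first replace $w^{\epsi,\theta,N}$ by $\tilde b\,\delta+\Delta^\epsi f^{\theta,\epsi}$ there, integrate the $\Delta^\epsi f^{\theta,\epsi}$ part by parts, and control the ``$\delta$-like part'' by the swap of Lemma~\ref{trick2q}. But the identity $p_2w_{12}p_2=(w\ast|\varphi|^2)(r_1)\,p_2$ requires a $p_2$ on \emph{both} sides; in the $q_1q_2$ term there is no such pairing, so the piece $p_1p_2\,\delta(r_1-r_2)\,q_1q_2\psi$ cannot be bounded by any of the operator-norm lemmas (one would need $\delta\in L^2$), and your subsequent appeal to $\norm{w^{\epsi,\theta,N}}_{L^2}$ for that piece is inconsistent with having split the $\delta$ off in the first place. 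The paper does not split at all here: since B1 puts $w^{\epsi,\theta,N}\in L^2$, it applies the symmetry/Cauchy--Schwarz argument of Lemma~\ref{trick2q} directly to the full interaction, the off-diagonal term giving $N^2\norm{\varphi}_\infty^4\beta$ and the diagonal term $N\norm{\varphi}_\infty^2\norm{w^{\epsi,\theta,N}}_{L^2}^2\lesssim N^{1+3\theta}\epsi^{-6\theta+2}\norm{\varphi}_\infty^2$, whence $\tfrac{1}{N-1}\sqrt{\beta}\sqrt{A+B}\lesssim\norm{\varphi}_\infty^2\beta+N^{-1+3\theta}\epsi^{-6\theta+2}$. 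No integration by parts and no $N$-dependent splitting are needed, and consequently no residual $\norm{\nabla_1q_1\psi}$ term appears: the absorption step $x^2\le C(R+ax)\Rightarrow x^2\le 2CR+C^2a^2$ that you import from the Hartree case is superfluous in Lemma~\ref{lem:h1g}, which is precisely why its statement carries no gradient term. If you drop the $\delta$-split and the integration by parts in that one line and use the direct $L^2$ argument, your proof coincides with the paper's and the claimed bound, including both competing powers in $f(N,\epsi)$, follows.
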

%
%
%
With \eqref{equ:nablaqpsig} and \eqref{equ:h_1q_1g} Lemma\,\ref{lem:h1g} proves
\begin{align*}
 \norm{\nabla_1 q_1 \psi }^2  \lesssim 
 \norm{\varphi}_{H^2\cap L^\infty}^2(\tilde \beta+\frac{1}{\sqrt{N}}+f(\epsi)) + \norm{V}_\LiO \beta
\end{align*}
which is  Lemma\,\ref{lem:energygp}.
All that is left to do is to show the bound of Lemma\,\ref{lem:h1g}.

\begin{proof}[Proof of Lemma\,\ref{lem:h1g} ]
%

After rearranging the energy difference $E^\psi-E^\Phi$  we arrive at the same lengthy equation as in \eqref{equ:hp} with an additional term from the time dependent external potential V.
\begin{align}\label{hpg}
\llangle \psi , (1-p_1p_2 ) &\tilde h_1 (1-p_1p_2) \psi  \rrangle\\
&= E^\psi- E^\phi \notag \\
&\quad- \llangle \psi , p_1p_2 \tilde h_1 p_1p_2 \psi  \rrangle+ \langle \varphi, - \Delta - \frac{1}{\epsi^2} (\Delta_y+E_0) \varphi \rangle \notag \\
&\quad-\llangle \psi , (1-p_1p_2 )\tilde h_1 p_1p_2 \psi  \rrangle-\llangle \psi , p_1p_2 \tilde  h_1 (1-p_1p_2) \psi  \rrangle \notag\\
&\quad-\frac{N-1}{2N} \llangle  \psi, p_1 p_2 w^{\epsi,\theta,N}_{12} p_1 p_2 \psi  \rrangle + \langle \Phi, \frac{1}{2} (b*|\Phi|^2) \Phi \rangle \notag\\
&\quad-\frac{N-1}{2N} \Big( \llangle  \psi,(1- p_1 p_2) w^{\epsi,\theta,N}_{12} p_1 p_2 \psi  \rrangle+ \llangle  \psi, p_1 p_2 w^{\epsi,\theta,N}_{12}(1- p_1 p_2) \psi  \rrangle \Big) \notag \\
&\quad -\frac{N-1}{2N} \llangle  \psi,(1- p_1 p_2) w^{\epsi,\theta,N}_{12}(1- p_1 p_2) \psi  \rrangle \notag \\
&\quad -\llangle \psi , V(x_1,\epsi y_1) \psi  \rrangle - \langle \Phi, V(x,0) \Phi \rangle . 
\end{align}
After estimating the terms line by line we obtain the claimed estimate
 \begin{align*}
  \llangle \psi , (1-p_1p_2 )& \tilde h_1 (1-p_1p_2) \psi  \rrangle\\
 &\lesssim \big( E^\psi- E^\phi \big)\\
& +\norm{\Phi}_{H^1}^2 \beta\\
&+ \norm{ \Phi}_{H^2}(\beta + \frac{1}{\sqrt N})\\
&+N^{-2\beta} \epsi^{4\beta-2} \norm{ \Delta |\varphi|^2} \norm{\varphi}_\infty+ N^{-1}\norm{\varphi}_\infty^2+ \norm{\Phi}_\infty^2 \alpha \\
&+\norm{\varphi}^2_\infty \beta+ N^{-1+3 \beta} \epsi^{-6\beta+2}\\
&+\norm{V(\cdot,0)}_{L^\infty}
 \beta +\epsi\\
&\lesssim \norm{\varphi}_{H^2\cap L^\infty}^2(\tilde \beta+\frac{1}{\sqrt{N}}+f(N,\epsi)) +\norm{V(\cdot,0)}_{L^\infty} \beta .  \numberthis \label{equ:enfertig}
 \end{align*}
The line-by-line approximation turns out to be a little bit simpler than before 
but some estimates have to be adjusted. 
We do not have to estimate the first line.

\textit{Line 2}. 
\begin{align*}
 | \langle \varphi, \tilde h_1 \varphi \rangle- \llangle \psi , p_1p_2  \tilde h_1 p_1p_2 \psi  \rrangle  |&=| \langle \varphi, \tilde h_1 \varphi \rangle- \langle \varphi, \tilde h_1 \varphi \rangle \llangle \psi , p_1p_2  \psi  \rrangle  | \\
&= \langle \varphi, \tilde h_1 \varphi \rangle |\llangle \psi , (1-p_1p_2 )\psi  \rrangle|\\
& \weq{\eqref{equ:htildep} }{ =}\langle \Phi,-\Delta \Phi \rangle |\llangle \psi , (p_1q_2 +q_1p_2+ q_1q_2) \psi  \rrangle|\\
& \weq{\ref{lem:weights}}{  \lesssim }\norm{\Phi}^2_{H^1} \beta 
\end{align*}
\textit{Line 3}. 
\begin{align*} 
-\langle \psi, (1-p_1p_2 ) \tilde h_1 p_1p_2 \psi \rangle-\langle \psi, p_1p_2  \tilde h_1 (1-p_1p_2) \psi \rangle\
\end{align*}
is bounded in absolute value by 
\begin{align*}
 2  |\llangle \psi, (1-p_1p_2 ) \tilde h_1 p_1p_2 \psi \rrangle| 
&\weq{\eqref{equ:line3}}{ \lesssim}  \norm{ \Phi}_{H^2}(\beta + \frac{1}{\sqrt N}).
\end{align*}
\textit{Line 4}.
We first note that 
\begin{align}\label{equ:phib}
 |\langle &\Phi, \frac{1}{2} (b|\Phi|^2) \Phi \rangle- \langle \psi,  p_1 p_2 \frac{1}{2} (b|\Phi|^2) p_1 p_2 \psi \rangle | \lesssim \norm{\Phi}_\infty^2 \beta
\end{align}
since 
\begin{align*}
 |\langle \Phi, \frac{1}{2} &(b|\Phi|^2) \Phi \rangle- \langle \psi,  p_1 p_2 \frac{1}{2} (a|\Phi|^2) p_1 p_2 \psi \rangle |\\
&= |\langle \Phi, \frac{1}{2} (b|\Phi|^2) \Phi \rangle- \langle \varphi, \frac{1}{2} (a|\Phi|^2)  \varphi \rangle  \rangle   \langle \psi,  p_1 p_2 \psi \rangle |\\
&= |\langle \Phi, \frac{1}{2} (b|\Phi|^2) \Phi \rangle- \langle \Phi, \frac{1}{2} (a|\Phi|^2)  \Phi \rangle  \rangle   \langle \psi,  p_1 p_2 \psi \rangle |\\
&=  |\langle \Phi, \frac{1}{2} (b|\Phi|^2) \Phi \rangle||\langle \psi, (1-p_1p_2) \psi \rangle| \\
&\lesssim \norm{\Phi}_\infty^2 \beta.
\end{align*}
Hence,
\begin{align*}
 |\langle &\Phi, \frac{1}{2} (b|\Phi|^2) \Phi \rangle- \frac{N-1}{2N} \llangle  \psi, p_1 p_2 w^{\epsi,\theta,N}_{12} p_1 p_2 \psi \rrangle  |\\
 &\weq{\eqref{equ:phib}}{ \lesssim } \, \frac{1}{2} |\llangle \psi,  p_1 p_2 (b|\Phi|^2) p_1 p_2 \psi \rrangle  - (1+\frac{1}{N}) \llangle  \psi, p_1 p_2 w^{\epsi,\theta,N}_{12} p_1 p_2 \psi \rrangle| + \norm{\Phi}_\infty^2 \beta\\ 
&\leq \frac{1}{2}  |\llangle \psi,  p_1 p_2 (b|\Phi|^2-  w^{\epsi,\theta,N}_{12})  p_1 p_2 \psi \rrangle|  +\frac{1}{N} |\llangle  \psi, p_1 p_2 w^{\epsi,\theta,N}_{12} p_1 p_2 \psi \rrangle|  + \norm{\Phi}_\infty^2 \beta\\
&\lesssim  N^{-2\theta} \epsi^{4\theta-2} \norm{ \Delta |\varphi|^2} \norm{\varphi}_\infty+ N^{-1}\norm{\varphi}_\infty^2+ \norm{\Phi}_\infty^2 \beta,
\end{align*}
where we used the estimate from equation \eqref{equ:3p1qgp} for the first summand and Lemma\,\ref{lem:young} for the second summand. \\
%
\textit{Line 5}.
Is bounded in absolute value by 
\begin{align*}
|\llangle  \psi, p_1 p_2 w^{\epsi,\theta,N}_{12}(1- p_1 p_2) \psi \rrangle| &= |\llangle  \psi, p_1 p_2 w^{\epsi,\theta,N}_{12}(q_1p_2+ p_1q_2+ q_1q_2) \psi \rrangle| \\
&\leq 2| \llangle  \psi, p_1 p_2 w^{\epsi,\theta,N}_{12} q_1p_2 \psi \rrangle|+| \llangle  \psi, p_1 p_2 w^{\epsi,\theta,N}_{12} q_1q_2 \psi \rrangle |.
\end{align*}
The first term is bounded by 
\begin{align*}
 | \llangle  \psi, p_1 p_2 w^{\epsi,\theta,N}_{12} q_1p_2 \psi \rrangle|&= | \llangle  \psi, p_1 p_2 w^{\epsi,\theta,N }_{12} \widehat n^{-\frac{1}{2}} \widehat n^\frac{1}{2}   q_1 p_2  \psi \rrangle|\\
&\weq{\ref{lem:weights}}{ =}| \llangle  \psi, p_1 p_2 \widehat {\tau_1 n}^\frac{1}{2} w^{\epsi,\theta,N }_{12}  \widehat n^{-\frac{1}{2}}   q_1 p_2 \psi \rrangle|\\
&\leq \norm{p_2 w^{\epsi,\theta,N }_{12} p_2}_{\mathrm{Op}} \norm{\widehat {\tau_1 n}^\frac{1}{2} \psi} \norm{\widehat n^{-\frac{1}{2}}   q_1 \psi}\\
&\weq{\ref{lem:weights},\ref{lem:young} }{ \lesssim}  \norm{\varphi}_\infty^2 (\beta+\frac{1}{\sqrt N}) .
\end{align*}
For the second term we use a slightly altered version of Lemma\,\ref{lem:estimating3alt}.
So in the first step we use symmetry to write
\begin{align*}
 |\llangle  \psi, p_1 p_2   w^{\epsi,\theta,N}_{12}  q_1 q_2   \psi \rrangle  |
&= \frac{1}{N-1} |\sum_{j=2}^N  \llangle  \psi,   p_1 p_j w^{\epsi,\theta,N}_{1j}  q_1 q_j    \psi \rrangle |\\
&\leq \frac{1}{N-1} \norm{    q_1 \psi}\norm{\sum_{j=2}^N  q_j  w^{\epsi,\theta,N}_{1j}   p_1 p_j \psi } \\
&\leq \frac{1}{N-1} \sqrt{\beta }\norm{\sum_{j=2}^N  q_j  w^{\epsi,\theta,N}_{1j}   p_1 p_j \psi } \numberthis \label{equ:II.1g}.
\end{align*}
Now the second factor of \eqref{equ:II.1g} is split in the "diagonal" term and "off-diagonal" term 
\begin{align*}
  \norm{\sum_{j=2}^N  q_j w^{\epsi,\theta,N}_{12}    p_1 p_j  \psi }^2
&=  \sum_{j,k=2}^N \llangle \psi, p_1 p_l   w^{\epsi,\theta,N}_{1l}  q_l q_j w^{\epsi,\theta,N}_{1j} 
   p_1 p_j \psi  \rrangle \\
\begin{split}
& \leq \sum_{2 \leq  j < k \leq N} \llangle \psi, p_j^\chi q_j^\Phi p_1 p_l   w^{\epsi,\theta,N}_{1l}    w^{\epsi,\theta,N}_{1j} p_l^\chi q_l^\Phi  p_1 p_j \psi  \rrangle\\
&\quad+ (N-1) \norm{w^{\epsi,\theta,N}_{12}  p_1 p_2 \psi   }^2. 
 \end{split} \numberthis \label{equ:II.3g}
\end{align*}
The first summand of \eqref{equ:II.3g} is bounded by
\begin{align*}
 &(N-1)(N-2)\langle \psi,q_2 p_1 p_3    w^{\epsi,\theta,N}_{13}    w^{\epsi,\theta,N}_{12} q_3  p_1 p_2 \psi  \rangle\\
&\leq N^2 \norm { \sqrt{ w^{\epsi,\theta,N}_{13}}    \sqrt{ w^{\epsi,\theta,N}_{12}} q_3 p_1 p_2 \psi}^2\\
&\leq N^2 \norm{   \sqrt{ w^{\epsi,\theta,N}_{12}} p_2 \sqrt{ w^{\epsi,\theta,N}_{13}} p_1 q_3 \psi } ^2 \\
&\leq N^2 \norm{  \sqrt{ w^{\epsi,\theta,N}_{12}} p_2  }^4_\mathrm{Op} \norm{ q_3 \psi }^2\\
&\leq N^2 \norm{p_2 w^{\epsi,\theta,N}_{12} p_2  }_\mathrm{Op}^2  \beta\\
&\weq{\ref{lem:young}}{ \lesssim} N^2 \norm{\varphi}_{\infty}^4 \beta \numberthis \label{equ:II.4g}.
\end{align*}
The second summand of \eqref{equ:II.3g} is bounded by
\begin{align*}
&N \langle \psi, p_1 p_2   (w^{\epsi,\theta,N}_{12})^2  p_1 p_2 \psi  \rangle \\
&\leq N   \norm{p_1 (w^{\epsi,\theta,N}_{12})^2 p_1 }_\mathrm{Op}\\
& \weq{\ref{lem:young}}{ \leq} N \norm{\varphi}^2_\infty \norm{w^{\epsi,\theta,N}}_2^2 \\
& =  N^{1+3\theta} \epsi^{-6\theta+2}  \norm{\varphi}^2_\infty \numberthis \label{equ:II.5g}
\end{align*}
since $\norm{w^{\epsi,\theta,N}}_2^2 \lesssim (\frac{N}{\epsi^2})^{3 \theta} \epsi^2 $.
Now putting \eqref{equ:II.4g} and \eqref{equ:II.5g} together we find
\begin{align*}
 \norm{\sum_{j=2}^N q_j w^{\epsi,\theta,N}_{12}   p_1 p_j \psi }^2 \lesssim  N^2 \norm{\varphi}^4_\infty \beta +  N^{1+3\theta} \epsi^{-6\theta+2}\norm{\varphi}^2_\infty.
\end{align*}
Inserting this in \eqref{equ:II.1g} yields the claimed result
\begin{align*}
|\langle  \psi, p_1 p_2   w^{\epsi,\theta,N}_{12}  q_1 q_2   \psi \rangle  | &\lesssim \frac{1}{N} \sqrt{ \beta } \sqrt{ N^2 \norm{\varphi}^4_\infty \beta + \norm{\varphi}^2_\infty N^{1+3\theta} \epsi^{-6\theta+2} }\\
&\lesssim \norm{\varphi}^2_\infty  \sqrt{\beta}  \sqrt{ \beta+\norm{\varphi}_\infty^{-2} N^{-1+3 \theta} \epsi^{-6\theta+2}} \\
&\leq \norm{\varphi}^2_\infty \beta+ N^{-1+3 \theta} \epsi^{-6\theta+2}.
\end{align*}

\textit{Line 6}.
The interaction is nonnegative so we have 
\begin{align*}
 -\frac{N-1}{2N} \langle  \psi,(1- p_1 p_2) w^{\epsi,\theta,N}_{12}(1- p_1 p_2) \psi \rangle \ \leq 0.
\end{align*}

\textit{Line 6}.
With the methods used in the proof of Lemma\,\ref{lem:3termeg}.\ref{lem:3.4g} we find
\begin{align*}
 |\llangle \psi , V(x_1,\epsi y_1) \psi  \rrangle - \langle \Phi, V(x,0) \Phi \rangle | \lesssim  \norm{V(\cdot,0)}_{L^\infty}
 \beta +\epsi.
\end{align*}

\end{proof}

\appendix

\chapter{Properties of the Solutions to the Considered Equations }\label{app:regsol}

In this section we summarize the well-known results for the regularity of solutions to the considered equations. These results ensure that
the estimates of Theorems\,\ref{thm:thm1}-\ref{thm:thm3} are meaningful.

\section{Properties of the Solution to the N-particle Equation}

The assumptions on the $N$-particle Hamiltonian $H_N$ are for all cases, even with time dependent external potential, such that $H_N$ generates a unitary time evolution on $D(H_N)$.
Thus for solutions $\psi$ of the Schrödinger equation we have global existence and conservation of the $L^2$-norm and without a time depending external potential conservation of energy.  

\section{Properties of the Solutions to the One-particle Equations} 

The questions of well-posedness, global existence and conservation laws for the Hartree and NLS/Gross-Pitaevskii equation in our setting are well understood.
The standard way of deriving the claimed results follows in two steps. The first step is to prove local existence of solutions by approximating by the free evolution for example with the help of 
variation of constants formula. 
The second step is extending the local solutions with the
help of conservation laws to global solutions. We only state the results of the properties we use. For an overview on this topic see for example the book of Tao \cite{Tao06} and literature therein.

\subsection{The Hartree Equation}

\begin{lem}
For $\Phi(x,t): \R^n \times \R \rightarrow \C $ and $n \in {1,2}$ consider the Cauchy-Problem for the Hartree equation  
\begin{align}\label{hatreecauchy}
      \begin{cases}
      \im \partial_t \Phi(x,t)= -\Delta \Phi(x,t)+ (w*|\Phi|^2)(x,t) \Phi(x,t)\\
      \Phi(x,0)=\Phi_0,\\
    \end{cases} 
\end{align}
where $w $ is spherically symmetric and $w=w_1+w_2$ with $ w_1 \in L^{p_1} $ and $w_2 \in L^\infty $, where $p_1 > 1$.

\begin{enumerate}
  \item For $\Phi_0 \in H^1(\R^n)$ the Cauchy-Problem has a unique weak solution $\Phi(x,t) \in C_b(\R,H^1(\R^n))  $ with $\norm{\Phi_0}_2=\norm{\Phi_t}_2=1 $ 
and $\norm{\Phi_0}_{H^1}=\norm{\Phi_t}_{H^1}$ for all $t\in \R^+$.
\item If $\Phi_0 \in H^k(\R^n)$ for $k \in \N$, $k>2$ then the solution of \eqref{hatreecauchy} is in $ C_b(\R,H^1) \cap C(\R,H^k) \cap C^1(\R,H^{k-2})$.
 \end{enumerate}
\end{lem}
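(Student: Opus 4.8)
The plan is the classical two-step strategy for semilinear Schrödinger equations: local well-posedness in $H^1$ by a fixed-point argument on the Duhamel formulation, then promotion to global solutions by means of the conservation laws, and finally propagation of higher Sobolev regularity. First I would rewrite \eqref{hatreecauchy} as the integral equation
\begin{align*}
 \Phi(t) = \E^{\I t \Delta}\Phi_0 - \I \int_0^t \E^{\I(t-s)\Delta} \big( (w*|\Phi(s)|^2)\Phi(s) \big) \D s
\end{align*}
and run a contraction in $C([-T,T], H^1(\R^n))$ for small $T$. Writing $w = w_1 + w_2$, Young's inequality gives $\norm{w_1*|\Phi|^2}_{L^\infty} \leq \norm{w_1}_{L^{p_1}} \norm{\Phi}_{L^{2p_1'}}^2$, and since $p_1 > 1$ we have $p_1' < \infty$, so for $n \in \{1,2\}$ the Sobolev embedding $H^1(\R^n) \hookrightarrow L^{2p_1'}(\R^n)$ yields $\norm{w*|\Phi|^2}_{L^\infty} \lesssim (\norm{w_1}_{L^{p_1}} + \norm{w_2}_{L^\infty}) \norm{\Phi}_{H^1}^2$. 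Differentiating, $\nabla(w*|\Phi|^2) = w*\nabla|\Phi|^2$ and the same argument bounds $\norm{\nabla(w*|\Phi|^2)}_{L^\infty}$ by $\norm{\Phi}_{H^1}\norm{\nabla\Phi}_{L^2}$ up to constants; hence $\Phi \mapsto (w*|\Phi|^2)\Phi$ maps $H^1$ into $H^1$ and is Lipschitz on bounded sets. As $\E^{\I t\Delta}$ is unitary on $H^1$, the contraction closes and produces a unique maximal $H^1$-solution with continuous dependence on the data; uniqueness on the full interval follows from a Grönwall estimate for the difference of two solutions.

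Next I would establish the conservation of $\norm{\Phi(t)}_{L^2}$ and of the energy $E(\Phi) = \norm{\nabla\Phi}_{L^2}^2 + \tfrac12 \int (w*|\Phi|^2)|\Phi|^2$. These are formal identities, using that $w$ is real and even so that multiplication by $w*|\Phi|^2$ is self-adjoint and the potential term real; they are made rigorous by deriving them for regularized initial data and passing to the limit, or equivalently from the continuity of both functionals along the flow just constructed. Mass conservation is immediate. For the a priori $H^1$ bound, estimate $\big| \int (w*|\Phi|^2)|\Phi|^2 \big| \lesssim (\norm{w_1}_{L^{p_1}} + \norm{w_2}_{L^\infty}) \norm{\Phi}_{L^{q_1}}^2 \norm{\Phi}_{L^{q_2}}^2$ for Hölder-admissible $q_1, q_2 \in [2,\infty)$ and apply Gagliardo--Nirenberg, $\norm{\Phi}_{L^q} \lesssim \norm{\Phi}_{L^2}^{1-\vartheta} \norm{\nabla\Phi}_{L^2}^\vartheta$ with $\vartheta$ strictly less than $1/2$ when $q$ is close to $2$; this lets Young's inequality absorb the potential energy into $\tfrac12 \norm{\nabla\Phi}_{L^2}^2 + C(\norm{\Phi_0}_{L^2})$. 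Thus $\norm{\Phi(t)}_{H^1}$ stays bounded on the maximal interval, which is therefore all of $\R$, giving (1).

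For (2), with $\Phi_0 \in H^k$ and $k > 2$, I would propagate regularity by differentiating the equation: for $|\alpha| \leq k$ one controls $\tfrac{\D}{\D t} \norm{\partial^\alpha \Phi}_{L^2}^2$ using that $H^k(\R^n)$ is a Banach algebra for $k > n/2$ (in particular $k \geq 2$ suffices here) together with $\norm{\partial^\alpha(w*|\Phi|^2)}_{L^\infty} = \norm{w*\partial^\alpha|\Phi|^2}_{L^\infty} \lesssim (\norm{w_1}_{L^{p_1}} + \norm{w_2}_{L^\infty}) \norm{\Phi}_{H^k}^2$. Grönwall's inequality, with its coefficient already controlled by the globally bounded $H^1$-norm, then gives $\sup_{|t| \leq T} \norm{\Phi(t)}_{H^k} \leq C(T, \norm{\Phi_0}_{H^k})$ for every $T$, so the $H^k$-solution is global; continuity $\Phi \in C(\R, H^k)$ follows by the standard argument combining this bound with weak continuity and interpolation, and then $\partial_t\Phi = \I\Delta\Phi - \I(w*|\Phi|^2)\Phi \in C(\R, H^{k-2})$ yields $\Phi \in C^1(\R, H^{k-2})$.

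The step I expect to be the main obstacle is the a priori $H^1$ bound in the general (not necessarily defocusing) situation: one must verify that the Gagliardo--Nirenberg exponents produced by the splitting $w = w_1 + w_2$ with $w_1 \in L^{p_1}$, $p_1 > 1$, stay strictly below the threshold at which the potential energy could balance the kinetic energy. This is precisely where $p_1 > 1$ and $n \in \{1,2\}$ are genuinely used --- they pin the Hartree nonlinearity to the mass-subcritical regime --- whereas making the conservation laws rigorous at $H^1$ regularity is a secondary technical matter settled by the usual density/approximation argument.
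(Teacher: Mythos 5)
Your strategy is sound and is, in substance, the classical one. Note first that the paper does not actually prove this lemma: its "proof" is the single line citing Proposition 2.2 and Theorem 3.1 of Ginibre--Velo \cite{GinVel80}, and the argument in that reference is precisely the scheme you describe --- Duhamel formulation, contraction in $H^1$, conservation of mass and energy, Gagliardo--Nirenberg to absorb the potential energy, and propagation of $H^k$ regularity via tame estimates and Gr\"onwall. You have also correctly isolated where $p_1>1$ and $n\in\{1,2\}$ enter: they make the Gagliardo--Nirenberg exponent (which comes out to $\vartheta=n/(2p_1)<1$ for the relevant Lebesgue index $2p_1'$) small enough that the potential energy is subquadratic in $\norm{\nabla\Phi}_{L^2}$ and can be absorbed into the kinetic term. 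One caveat about the literal statement: the lemma asserts $\norm{\Phi_0}_{H^1}=\norm{\Phi_t}_{H^1}$, which is not what the conservation laws yield for a genuine Hartree nonlinearity; what is true, and what your argument delivers, is a uniform bound on $\norm{\Phi(t)}_{H^1}$, which is all the paper ever uses.

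Two of your intermediate estimates are stated too strongly, though both are repairable by a standard redistribution of exponents. You claim $\norm{\nabla(w*|\Phi|^2)}_{L^\infty}\lesssim\norm{\Phi}_{H^1}\norm{\nabla\Phi}_{L^2}$ ``by the same argument''; Young's inequality requires $\nabla|\Phi|^2=2\,\mathrm{Re}(\bar\Phi\nabla\Phi)\in L^{p_1'}$, and since $\nabla\Phi$ is only known to lie in $L^2$ this forces $p_1'<2$, i.e.\ $p_1>2$. For $1<p_1\le 2$ the $L^\infty$ bound on $w_1*\nabla|\Phi|^2$ is simply false in general, but you do not need it: estimate the product $\big(w_1*(\bar\Phi\nabla\Phi)\big)\Phi$ directly in $L^2$ by H\"older, placing the convolution in $L^a$ for a suitable finite $a$ via Young and the spare factor of $\Phi$ in $L^b$ with $1/a+1/b=1/2$; this closes for every $p_1>1$ because $\Phi\in L^q$ for all $q<\infty$ in dimensions one and two. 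The same remark applies to your bound on $\norm{\partial^\alpha(w*|\Phi|^2)}_{L^\infty}$ at top order $|\alpha|=k$ in part (2), where the worst term $\bar\Phi\,\partial^\alpha\Phi$ again lies only in $L^c$ for $c<2$; the Gr\"onwall argument still goes through once the $L^\infty$ norm is kept on the lower-order factors.
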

\begin{proof}
 1. and 2. are Proposition 2.2 and Theorem 3.1 in \cite{GinVel80}.
\end{proof}

\subsection{The Gross-Pitaevskii/NLS  Equation}

\begin{lem}
For $\Phi(x,t): \R^n \times \R \rightarrow \C $ and $n \in {1,2}$ consider the Cauchy problem for the Gross-Pitaevskii equation  
\begin{align}\label{gpcauchy}
      \begin{cases}
      \im \partial_t \Phi(x,t)= -\Delta \Phi(x,t)+ |\Phi|^2 \Phi(x,t)\\
      \Phi(x,0)=\Phi_0.\\
    \end{cases} 
\end{align}

\begin{enumerate}
  \item For  $\Phi_0 \in H^1({\R^n})$ the Cauchy-Problem has a unique weak solution $\Phi(x,t) \in C_b(\R,H^1(\R^n))  $ with $\norm{\Phi_0}_2=\norm{\Phi_t}=1 $ for all $t\in \R^+$.
\item If $\Phi_0 \in H^2({\R^n})$ the solution of \eqref{gpcauchy} is in $ C_b(\R,H^1) \cap C(\R,H^2) \cap C^1(\R,L^2)$.
 \end{enumerate}
\end{lem}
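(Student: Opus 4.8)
The plan is to follow the standard two-step scheme for semilinear Schrödinger equations, as presented in \cite{Tao06}: first establish local well-posedness in $H^1(\R^n)$ by a contraction argument based on Strichartz estimates, and then upgrade the local solution to a global one using the conservation of mass and energy.

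For the local theory I would rewrite \eqref{gpcauchy} in Duhamel form
\begin{align*}
 \Phi(t)= \E^{\im t \Delta} \Phi_0 - \im \int_0^t \E^{\im (t-s) \Delta} \big( |\Phi(s)|^2 \Phi(s) \big) \, \D s
\end{align*}
and run a fixed point in $C([-T,T],H^1(\R^n)) \cap L^q([-T,T],W^{1,r}(\R^n))$ for a Strichartz-admissible pair $(q,r)$. Since for $n\in\{1,2\}$ the cubic nonlinearity is $H^1$-subcritical, such a pair can be chosen so that, by Hölder and Sobolev embedding, $|\Phi|^2\Phi - |\Psi|^2\Psi$ is controlled in the dual Strichartz norm by $(\norm{\Phi}^2+\norm{\Psi}^2)\norm{\Phi-\Psi}$ in the above space, with a gain of a positive power of $T$. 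This makes the Duhamel map a contraction for $T$ small depending only on $\norm{\Phi_0}_{H^1}$, yielding a unique maximal solution $\Phi\in C([0,T_{\max}),H^1(\R^n))$ together with the blow-up alternative: either $T_{\max}=\infty$ or $\norm{\Phi(t)}_{H^1}\to\infty$ as $t\uparrow T_{\max}$.

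The second step is an a priori bound on $\norm{\Phi(t)}_{H^1}$. Pairing the equation with $\bar\Phi$ and taking imaginary parts gives conservation of mass, $\norm{\Phi(t)}_{L^2}=\norm{\Phi_0}_{L^2}=1$; pairing with $\partial_t\bar\Phi$ and taking real parts gives conservation of the energy
\begin{align*}
 E[\Phi(t)] = \tfrac12 \norm{\nabla \Phi(t)}_{L^2}^2 + \tfrac14 \norm{\Phi(t)}_{L^4}^4 = E[\Phi_0].
\end{align*}
Because the equation is defocusing, both terms are nonnegative, so $\norm{\nabla\Phi(t)}_{L^2}^2 \le 2E[\Phi_0]$; together with mass conservation this bounds $\norm{\Phi(t)}_{H^1}$ uniformly in time, excluding finite-time blow-up and producing the global solution $\Phi\in C_b(\R,H^1(\R^n))$ of part 1. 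The formal conservation-law manipulations for merely $H^1$ data are justified by approximating $\Phi_0$ by smooth data in $H^1$ and passing to the limit using the continuous dependence from the local theory.

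For part 2, if $\Phi_0\in H^2(\R^n)$, I would rerun the same contraction in $C([-T,T],H^2)\cap L^q([-T,T],W^{2,r})$, using that $H^2(\R^n)$ is a Banach algebra for $n\le 3$ to estimate $|\Phi|^2\Phi$ in $H^2$; the existence time still depends only on $\norm{\Phi_0}_{H^1}$, so the uniform $H^1$ bound from the energy propagates the $H^2$ regularity to all times, giving $\Phi\in C_b(\R,H^1)\cap C(\R,H^2)$. Reading off the equation, $\partial_t\Phi = \im(\Delta\Phi - |\Phi|^2\Phi)\in C(\R,L^2)$, i.e. $\Phi\in C^1(\R,L^2)$. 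The point to be careful about is the $L^2$-critical case $n=2$: one must ensure the local existence time genuinely depends only on $\norm{\Phi_0}_{H^1}$ and not on the profile of the data, which is precisely why the fixed point is set up in the subcritical space $H^1$ rather than in $L^2$; with this choice the argument is uniform in $n\in\{1,2\}$, and the whole scheme is identical to (and in fact simpler than) the one sketched for the Hartree equation above.
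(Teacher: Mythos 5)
Your proposal is correct and follows exactly the two-step scheme the paper itself describes at the start of Appendix A (local existence via the variation-of-constants formula and a contraction argument, then globalization via mass and energy conservation) and otherwise delegates to the cited references (Proposition 3.1 of \cite{BenOliSch12} and \cite{Tao06}); the details you supply — subcriticality of the cubic nonlinearity for $n\in\{1,2\}$, the defocusing sign giving the uniform $H^1$ bound, and persistence of $H^2$ regularity with $\partial_t\Phi=\im(\Delta\Phi-|\Phi|^2\Phi)\in C(\R,L^2)$ — are the standard ones behind those results. No gaps.
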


These results are summarized in Proposition 3.1 of \cite{BenOliSch12} for the more complicated case $n=3$.
In the case $n=1$ there are even stronger results. For $k\in \N$ let $\Phi_0 \in H^k({\R})$ then $\norm{\Phi(t)}_{H^k} \leq \norm{\Phi(0)}_{H^k} \;\forall t$. This follows from 
exercise 3.36 in  \cite{Tao06}.

\subsection{Eigenfunctions of the Laplacian on a Bounded Domain }

Last we summarize the well-known results for the boundary-value problem 
\begin{align*}
 \begin{cases}
  L w = \lambda w \qquad &\mathrm{in}\, U\\
w=0  &\mathrm{ on}\, \partial U,
 \end{cases}
\end{align*}
 where $U$ is open and bounded, $L$ is a uniform elliptic, symmetric operator with smooth coefficients which are elements of $ C^\infty(U)$.
See for example  \cite{Eva10} for the following facts.
\begin{enumerate} 
 \item  The eigenvalues $\{ \lambda_k \}_{k=1}^\infty$ of $L$ can be ordered such that
\begin{align*}
 0 < \lambda_1 < \lambda_2 \leq \lambda_3 \leq \dots
\end{align*}
\item
There exists an orthonormal basis $\{ w_k\}_{k=1}^\infty$ of $L^2(U)$, where $w_k \in C^\infty(U)$ is an eigenfunction with eigenvalue $\lambda_k$ for each $k$.
Furthermore, for smooth $\partial U $ we have $w_k \in C^\infty(\bar U)$.
  
\end{enumerate}

\chapter{Estimates for the Coulomb potential}\label{app:coul}



In this section we show that the assumptions of Theorem\,\ref{thm:thm2} hold for 
\begin{align*}
 w=\frac{1}{|r|} \qquad w^0=\frac{1}{|x|},
\end{align*}
if we have confinement in one direction.
For the ease of the calculation we set $\tilde  \Omega_{\mathrm{c}}=[-1,1]$. However, the following calculation holds for arbitrary intervals allowed by the assumptions.
We decompose the potentials in a part with the singularity and a bounded part

\begin{align*}
 w_s=\frac{1}{|r|} \chi_{\{B_1(0)\times [-1,1]\}} \qquad  w_\infty=\frac{1}{|r|} \chi_{\{B_1(0)^C\times [-1,1]\}},
\end{align*}
where $\chi$ denotes, only in this section, the indicator function.
The function $w^0$ is understood as the constant function $1$ in the $y$-direction
\begin{align*}
 w^0_s=\frac{1}{|x|} \chi_{\{B_1(0)\times [-1,1] \}} \qquad  w_\infty^0=\frac{1}{|x|} \chi_{\{B_1(0)^C \times [-1,1] \}}.
\end{align*}

\section{Approximation for Example\,\ref{exp:coul} }
\subsection{Convergence of $|r^\epsi|^{-1}$ to $|x|^{-1}$}
We first show that in the sense of assumption A1' $\frac{1}{|r^\epsi|}$ is approximated by $\frac{1}{|x|}$.
With the definition of 
$L^1(\tilde \Omega)+L^\infty(\tilde \Omega)$
 we have
\begin{align*}
& \norm{w^\epsi-w^0}_{L^1(\Omega_{\mathrm{f}}\times \tilde \Omega_{\mathrm{c} })+L^\infty(\Omega_{\mathrm{f}}\times \tilde \Omega_{\mathrm{c} })} 
=  \norm{w^\epsi_s-w^0_s}_{L^1(\Omega_{\mathrm{f}}\times \tilde \Omega_{\mathrm{c} })}+\norm{w^\epsi_\infty-w^0_\infty}_{L^\infty(\Omega_{\mathrm{f}}\times \tilde \Omega_{\mathrm{c} })} .
\end{align*}
We first approximate the $L^\infty$ part
\begin{align*}
 \norm{\frac{1}{\sqrt{x^2+\epsi^2 y^2}}-\frac{1}{|x|}}_{L^\infty(B_1(0)^C \times [-1,1]) } =  \norm{\frac{1}{\sqrt{r^2+\epsi^2 y^2}}-\frac{1}{r}}_{L^\infty((1,\infty) \times [-1,1]) }\\
=\norm{\frac{r-\sqrt{r^2+\epsi^2 y^2}}{r\sqrt{r^2+\epsi^2 y^2}}}_{L^\infty((1,\infty) \times [-1,1]) }.
\end{align*}
After a Tayler expansion we find $r\sqrt{1+\epsi^2 \frac{y^2}{r^2}} = r(1+\theta \frac{\epsi^2 y^2}{r^2} )$ for a $\theta \in [0,1]$. Thus we obtain

\begin{align*}
 \norm{\frac{1}{\sqrt{x^2+\epsi^2 y^2}}-\frac{1}{|x|}}_{L^\infty(B^C_1(0) \times [-1,1]) }=\norm{\frac{\theta \epsi^2 y^2}{r^2 \sqrt{r^2+\epsi^2 y^2}}}_{L^\infty((1,\infty) \times [-1,1]) }\leq \epsi^2. \\
\end{align*}
For the $L^1$-part we can solve the integral directly
\begin{align*}
  &\norm{\frac{1}{\sqrt{x^2+\epsi^2 y^2}}-\frac{1}{|x|}}_{L^1(B_1(0) \times [-1,1]) }= \int_{B_1(0)}\int_{-1}^1 |\frac{1}{\sqrt{x^2+\epsi^2 y^2}}-\frac{1}{|x|}| \D x \D y\\
&\qquad=2 \pi \int_0^1\int_{-1}^1 |\frac{1}{\sqrt{r^2+\epsi^2 y^2}}-\frac{1}{r}| r \D r \D y =2 \pi \int_0^1\int_{-1}^1 1-\frac{r}{\sqrt{r^2+\epsi^2 y^2}}  \D r \D y \\
&\qquad= 4 \pi \int_0^1\int_{0}^1 1-\frac{r}{\sqrt{r^2+\epsi^2 y^2}}  \D r \D y=4 \pi(1 +\int_{0}^1 (\epsi y - \sqrt{1+\epsi^2 y^2} ) \D y\\
&\qquad=1+\frac{\epsi}{2}-\bigg[\frac{1}{2}y\sqrt{\epsi^2 y^2+1}+ \frac{\sinh^{-1}(\epsi y)}{2 \epsi}\bigg]_0^1=1+\frac{\epsi}{2}-\frac{1}{2}\sqrt{\epsi^2+1}-\frac{\sinh^{-1}(\epsi )}{2 \epsi}\\
&\qquad=1+\frac{\epsi}{2}-{(\frac{1}{2}+\frac{1}{4}\epsi^2+\dots)}-\frac{1}{2\epsi}(\epsi-\frac{1}{6}\epsi^3+\dots)=\frac{\epsi}{2}+\mathcal{O}(\epsi^2).
\end{align*}
Putting both estimates together we have
\begin{align*}
 \norm{\frac{1}{|r^\epsi|}-\frac{1}{|x|}}_{L^1(\Omega_{\mathrm{f}}\times \tilde \Omega_{\mathrm{c} })+L^\infty(\Omega_{\mathrm{f}}\times \tilde \Omega_{\mathrm{c} })} \lesssim \epsi.
\end{align*}

\subsection{Uniform Bound for $|r^\epsi|^{-p}$ for $p < 2$}
We  consider $\frac{1}{|r^\epsi|}$ on ${L^p(\Omega_{\mathrm{f}}\times \tilde \Omega_{\mathrm{c} })+L^\infty(\Omega_{\mathrm{f}}\times \tilde \Omega_{\mathrm{c} }})$.
The $L^\infty$-part does not pose any problems. The singularity can be estimated for $p <2$ by
\begin{align*}
\int_{B_1(0)} \int_{-1}^1  \frac{1}{(x^2+\epsi^2  y^2)^{\frac{p}{2}}} \D x \D y \leq  \int_{B_1(0)} \int_{-1}^1  \frac{1}{(x^2)^{\frac{p}{2}}} \D x \D y\\
= 4 \pi \int_{0}^1 \frac{1}{r^p} r \D r = 4 \pi \int_{0}^1 r^{1-p} \D r = 4 \pi [r^{2-p}]_0^1 = C.
\end{align*}
This estimate is sharp in $p$ in the sense that for $p=2$ it does not work since
\begin{align*}
\int_{B_1(0)} \int_{-1}^1  \frac{1}{(x^2+\epsi^2  y^2)} \D x \D y \leq 2 \int_{0}^1 r^{-1} \D r
\end{align*}
 does diverge. 
%

\section{Bound for Example\,\ref{exp:coul2} }

The logarithmic divergence of $ |r^\epsi|^{-2} $ follows from estimating

\begin{align*}
 \int_{B_1(0)} \int_{-1}^1  \frac{1}{x^2+\epsi^2  y^2} \D x \D y &\leq  \int_{B^\epsi_1(0)} \int_{-1}^1   \frac{1}{x^2+\epsi^2  y^2}  \D x \D y\\
 & \qquad+  \int_{B_1(0)\setminus  B^\epsi_1(0)} \int_{-1}^1   \frac{1}{x^2+\epsi^2  y^2}  \D x \D y\\
&\leq \frac{1}{\epsi} \int_{B^\epsi_1(0)} \int_{-\epsi}^\epsi   \frac{1}{x^2+  y^2}  \D x \D y+
  \int_{B_1(0)\setminus  B^\epsi_1(0)} \int_{-1}^1   \frac{1}{x^2}  \D x \D y\\
&\lesssim \frac{1}{\epsi} \int_{B^\epsi_1(0)}   \frac{1}{r^2}  \D (r, \theta,\varphi) + \int_\epsi^1 \frac{1}{r } \D r \\
&\lesssim  \frac{1}{\epsi} \int_0^\epsi \frac{1}{r^2} r^2 \D r - \log \epsi \lesssim 1+ \log \epsi^{-1}.
\end{align*}

%


%
%

\chapter{Improvement of the Convergence of Theorem\,\ref{thm:thm2}}\label{app:ratebes}

We can slightly improve the rate of convergence of equation \eqref{equ:estimating3III} by improving the estimate of \eqref{equ:ppws2qq}. We use the same idea as in the proof of Lemma\,\ref{lem:3termeg}.\ref{lem:3.2g}.
Therefore, we split this term in a part, where at least a "few particles" of $ \psi$ are in the state $p$ and the complement.
This helps since the diagonal term and the off diagonal term arising in the estimation can be treated differently.
With the split we can distinguish the behavior of the terms beforehand and estimate them accordingly. Hence we gain a tiny bit of convergence speed in the estimation process.  
  
We define the same splitting as in \eqref{equ:splitting}. However, to use the estimates from the proof of Lemma\,\ref{lem:estimating3alt} we implement the splitting in a different way.
Define $\varUpsilon^1(k)= \id_{\{k \leq N^{1-\delta}\}} $ and $\varUpsilon^2(k):=\id - \varUpsilon^1(k) $. We rewrite the term on the left-hand side of  \eqref{equ:ppws2qq} 
\begin{align}\label{equ:2p2qmity}
 |\llangle \psi, p_1 p_2   w_{12}^{s,2} \widehat \mu_1  q_1q_2 \psi \rrangle| &= |\llangle \psi, p_1 p_2   w_{12}^{s,2}(\widehat \varUpsilon_1 + \widehat \varUpsilon_2)  \widehat \mu_1  q_1q_2 \psi \rrangle| \notag \\
&\leq  |\llangle \psi, p_1 p_2   w_{12}^{s,2} \widehat \varUpsilon_1  \widehat \mu_1  q_1q_2 \psi \rrangle|+ |\llangle \psi, p_1 p_2   w_{12}^{s,2} \widehat \varUpsilon_2  \widehat \mu_1  q_1q_2 \psi \rrangle|.
\end{align}
We start with estimating  $ |\llangle \psi, p_1 p_2   w_{12}^{s,2} \widehat \varUpsilon_1  \widehat \mu_1  q_1q_2 \psi \rrangle|$. Here we have cut the parts with too many bad particles 
so we can squeeze out an $N$ to some power of ${-\delta}$, hence we do not have to try to get a $\beta$. Except of writing $\mu^1=\mu^\frac{1}{2} \mu^\frac{1}{2}$ and bringing one of them on the other side of the
interaction the calculation stays exactly the same as in Lemma\,\ref{lem:estimating3alt}, so we only give a rough sketch of the proof here. 
\begin{align*}
  |\llangle \psi, p_1 p_2   w_{12}^{s,2} \widehat \varUpsilon_1  \widehat \mu_1  q_1q_2 \psi \rrangle| &= \frac{1}{N-1} |\llangle \psi, \sum_{j=2}^N p_1 p_j   w_{1j}^{s,2} \widehat \varUpsilon_1  \widehat \mu_1  q_1q_j \psi \rrangle|\\
& =\frac{1}{N-1} |\llangle \psi, \sum_{j=2}^N  p_1 p_j \widehat{\tau_2 \varUpsilon_1}  w_{1j}^{s,2}   \widehat \mu_1  q_1q_j \psi \rrangle|\\
&\leq \frac{1}{N-1} \norm{\widehat \mu_1 q_1 \psi} \sqrt{\sum_{i,j=2}^N\llangle \psi,  p_1 p_j  \widehat{\tau_2 \varUpsilon_1}   w_{1j}^{s,2} q_j q_i w_{1i}^{s,2}   p_1 p_i  \widehat{\tau_2 \varUpsilon_1} \psi  \rrangle} \\
\end{align*}
Since $ \norm{\widehat \mu_1 q_1 \psi} \leq 1$ similarly to \eqref{equ:2p2qohney}
\begin{align*}
 |\llangle \psi, p_1 p_2   w_{12}^{s,2} \widehat \varUpsilon_1  \widehat \mu_1  q_1q_2 \psi \rrangle| \leq \frac{1}{N-1}\sqrt{A+B},
\end{align*}
where 
\begin{align*}
 A&:= \sum_{2\leq i \neq j \leq N} \llangle \psi,  p_1 p_j  \widehat{\tau_2 \varUpsilon_1}   w_{1j}^{s,2} q_j q_i w_{1i}^{s,2}   p_1 p_i  \widehat{\tau_2 \varUpsilon_1} \psi  \rrangle\\
B&:= \sum_{i=2}^N \llangle \psi,  p_1 p_i  \widehat{\tau_2 \varUpsilon_1}   w_{1i}^{s,2} q_i q_i w_{1i}^{s,2}   p_1 p_i  \widehat{\tau_2 \varUpsilon_1} \psi  \rrangle.
\end{align*}
We do not use the cutoff here. 
 With \eqref{equ:wp2} and similarly to \eqref{equ:Bohne} 
we get
\begin{align}\label{equ:Bmit}
 B &\lesssim N c^{2-s }\norm{\varphi}_\infty^2.
\end{align}
Since there is no $q_1$ in the middle of the term $A$ as in \eqref{equ:Aohne} we can estimate it directly and get as before 
\begin{align*}
 A&\lesssim N^2 \norm{w^{\epsi,s}}_s^2 (1+\norm{\varphi}_\infty)^4  \llangle \psi, q_1 \widehat{\tau_2 \varUpsilon_1} \psi \rrangle \\
&\lesssim N^2 (1+\norm{\varphi}_\infty)^4  \llangle \psi, \widehat{\tau_2 \varUpsilon_1} \widehat n^2 \psi \rrangle,
\end{align*}
where we have $\varUpsilon_1$ still left in the expression.
Since $\varUpsilon_1= \id_{\{k \leq N^{1-\delta}\}}$ we get
\begin{align*}
 {\tau_2 \varUpsilon_1} n^2 \leq N^{-\delta}
\end{align*}
and obtain
\begin{align}\label{equ:Amit}
 |A|\lesssim N^{2-\delta}\norm{\varphi}^4_{L^\infty \cap  L^2}.
\end{align}
Collecting the estimates \eqref{equ:Bmit} and \eqref{equ:Amit}
\begin{align}\label{equ:2p2qmity1}
  |\llangle \psi, p_1 p_2   w_{12}^{s,2} \widehat \varUpsilon_1  \widehat \mu_1  q_1q_2 \psi \rrangle| &\lesssim \frac{1}{N} \sqrt{A+B} \notag \\
&\leq  N^{-\frac{\delta}{2}}\norm{\varphi}^2_{L^\infty \cap  L^2} +  c^{1-\frac{s}{2}} N^{-\frac{1}{2}}\norm{\varphi}_\infty.
\end{align}
\\
The second part of \eqref{equ:2p2qmity}
\begin{align*}
 |\llangle \psi, p_1 p_2   w_{12}^{s,2} \widehat \varUpsilon_2  \widehat \mu_1  q_1q_2 \psi \rrangle|
\end{align*}
is dealt with splitting $\mu_1=\mu_1^\frac{1}{2}\mu_1^\frac{1}{2} $ to be able to get a $\beta$.
As in \eqref{equ:2p2qohney}

\begin{align*}
  |\llangle \psi, p_1 p_2   w_{12}^{s,2} \widehat \varUpsilon_2  \widehat \mu_1  q_1q_2 \psi \rrangle| \leq \frac{\sqrt{\beta}}{N-1} \sqrt{A+B}
\end{align*}
with the same splitting as before
\begin{align*}
 A&:=\sum_{2\leq j\neq i \leq N} \llangle \psi, p_1 p_j   w_{1j}^{s,2} \widehat \varUpsilon_2 \widehat \mu_1    q_j q_i q_1  w_{1i}^{s,2} p_1 p_i \psi \rrangle\\
B&:=\sum_{i=2}^N \llangle \psi, p_1 p_i   w_{1i}^{s,2} \widehat \varUpsilon_2 \widehat \mu_1    q_i  q_1 w_{1i}^{s,2} p_1 p_i \psi \rrangle.
\end{align*}
With
\begin{align*}
 \varUpsilon_2(k)= \id_{\{ k > N^{1-\delta} \}}
\end{align*}
we find $\varUpsilon_2 \mu_1 \leq \varUpsilon_2 n^{-1} \leq N^\frac{\delta}{2}  $. Hence
\begin{align*}
 \norm{\widehat \varUpsilon_2 \widehat \mu_1    q_i q_1}_\Op \leq N^\frac{\delta}{2}
\end{align*}
and $B$ can be estimated similar to \eqref{equ:Bohne} by 
\begin{align*}
 B &\lesssim N^{1+\delta/2} c^{2-s }\norm{\varphi}_\infty^2,
\end{align*}
whereas there only appears an $N^{\delta/2}$ and not $N^\frac{1}{2}$.
The term A can be estimated exactly like the A of \eqref{equ:Aohne}, the $\varUpsilon_2$ does not help here and can be neglected 
\begin{align*}
A \lesssim  N^2 (1+\norm{\varphi}^4_\infty) \beta.
\end{align*}
Putting the estimates  $ |\llangle \psi, p_1 p_2   w_{12}^{s,2} \widehat \varUpsilon_2  \widehat \mu_1  q_1q_2 \psi \rrangle| $ together
\begin{align*}
 |\llangle \psi, p_1 p_2   w_{12}^{s,2} \widehat \varUpsilon_2  \widehat \mu_1  q_1q_2 \psi \rrangle|&\lesssim
 \frac{\sqrt{\beta}}{N-1}\sqrt{ N^{1+\frac{\delta}{2}} c^{2-s} \norm{\varphi}^2_\infty+ N^2 \norm{\varphi}^4_{L^\infty \cap  L^2} \beta}\\
 &\lesssim  N^{-1+\frac{\delta}{2}} c^{2-s}  \norm{\varphi}^2_\infty + \norm{\varphi}^2_{L^\infty \cap  L^2} \beta.
\end{align*}
Hence, this implies with \eqref{equ:2p2qmity1} for the equation \eqref{equ:2p2qmity} 
\begin{align*}
  |\llangle \psi, p_1 p_2   w_{12}^{s,2} \widehat \mu_1  q_1q_2 \psi \rrangle| \lesssim
 N^{-\frac{\delta}{2}}\norm{\varphi}^2_{L^\infty \cap  L^2} +  c^{1-\frac{s}{2}} N^{-\frac{1}{2}}\norm{\varphi}_\infty\\
+N^{-1+\frac{\delta}{2}} c^{2-s}  \norm{\varphi}^2_\infty + \norm{\varphi}^2_{L^\infty \cap  L^2} \beta \numberthis \label{equ:letzt} .
\end{align*}
Finally we use \eqref{equ:letzt} and \eqref{equ:2p2qw1} to obtain the improved estimate of \eqref{equ:estimating3III}
\begin{align*}
 \mathrm{III}& \lesssim 
 N^{-\frac{\delta}{2}}\norm{\varphi}^2_{L^\infty \cap  L^2} +  c^{1-\frac{s}{2}} N^{-\frac{1}{2}}\norm{\varphi}_\infty+ 
N^{-1+\frac{\delta}{2}} c^{2-s}  \norm{\varphi}^2_\infty + \norm{\varphi}^2_{L^\infty \cap  L^2} \beta\\
&\quad+\norm{\varphi}_{\infty }\big(\norm{ \nabla \varphi} \beta+ \norm{ \varphi}_{H^1}c^{2-\frac{2s}{s_0}}+\norm{\nabla_1 q_1 \psi}^2\big).
\end{align*}
After setting $c=N^\vartheta$ and optimizing $\delta$ and $\vartheta$ we find 
\begin{align*}
  \mathrm{III} \lesssim \norm{ \varphi}_{H^1\cap L^\infty }^3(\beta+N^\eta)+ \norm{\varphi}_{\infty }\norm{\nabla_1 q_1 \psi}^2
\end{align*}
with
\begin{align*}
 \eta= -\frac{s/s_0-1}{2s/s_0-s/2-1}
\end{align*}
which is slightly better than the $\eta$ given in equation \eqref{equ:etagr}.


\printglossary[type=is]

\bibliography{Doklit}
\bibliographystyle{alphanum}

\end{document}